\documentclass{article}
\pdfoutput=1
\usepackage{fullpage}
\usepackage[all=normal,bibliography=tight]{savetrees}
\usepackage{graphicx}
\usepackage{xspace}
\usepackage{url}
\usepackage{amssymb}
\usepackage{latexsym}
\usepackage{amsthm}
\usepackage{amsmath}
\usepackage{amstext}
\usepackage{times}
\usepackage{amstext}
\usepackage{calc}
\usepackage{amsopn}
\usepackage{eucal}
\usepackage{latexsym}
\usepackage{subfig}
\usepackage{tabularx}
\usepackage{todonotes}
\usepackage{multirow}
\usepackage{enumerate}
\usepackage{skull}
\usepackage{tikz}
\usepackage{marvosym}



\newcommand{\cP}{\ensuremath{\mathcal{P}}}

\newcommand{\eps}{\varepsilon}

\newcommand{\ett}{\mathfrak{t}}
\newcommand{\numcc}{s}
\newcommand{\numbcc}{s}

\newcommand{\weight}{\omega}


\newcommand{\cutsize}{k}
\newcommand{\bterms}{T_b}
\newcommand{\instance}{\ensuremath{\mathcal{I}}}
\newcommand{\sol}{\mathtt{sol}}
\newcommand{\bigcc}{\mathtt{big}}

\newcommand{\Psifamily}{\mathbf{PSI}}

\newcommand{\newinst}[1]{\widehat{#1}}
\newcommand{\domain}{\mathrm{dom}}


\newcommand{\randfamily}{\ensuremath{\mathcal{F}}}



\newcommand{\defproblemu}[3]{
  \vspace{1mm}
\noindent\fbox{
  \begin{minipage}{0.94\textwidth}
  #1 \\
  {\bf{Input:}} #2  \\
  {\bf{Question:}} #3
  \end{minipage}
  }
  \vspace{1mm}
}
\newcommand{\defparproblemu}[4]{
  \vspace{1mm}
\noindent\fbox{
  \begin{minipage}{0.94\textwidth}
  \begin{tabular*}{\textwidth}{@{\extracolsep{\fill}}lr} #1 & {\bf{Parameter:}} #3 \\ \end{tabular*}
  {\bf{Input:}} #2  \\
  {\bf{Question:}} #4
  \end{minipage}
  }
  \vspace{1mm}
}
\newcommand{\defproblemoutput}[3]{
  \vspace{1mm}
\noindent\fbox{
  \begin{minipage}{0.94\textwidth}
  #1 \\
  {\bf{Input:}} #2  \\
  {\bf{Output:}} #3
  \end{minipage}
  }
  \vspace{1mm}
}

\newcommand{\edges}{\delta}

\newcommand{\rel}{{\mathcal{R}}}

\newcommand{\fvs}{{\sc Feedback Vertex Set}\xspace}

\newcommand{\oct}{{\sc Odd Cycle Transversal}\xspace}

\newcommand{\bsteinercut}{{\sc{Border Steiner Cut}}\xspace}

\newcommand{\bnmwcu}{{\sc{Border N-MWCU}}\xspace}
\newcommand{\multicut}{{\sc{Multicut}}\xspace}

\newcommand{\mwc}{{\sc{Multiway Cut}}\xspace}

\newcommand{\nmwcu}{{\sc{Node Multiway Cut-Uncut}}\xspace}
\newcommand{\nmwcushort}{{\sc{N-MWCU}}\xspace}

\newcommand{\kwaycut}{$k$-{\sc{Way Cut}}\xspace}
\newcommand{\steinercut}{{\sc{Steiner Cut}}\xspace}
\newcommand{\ulcgen}{{\sc{Unique Label Cover}}\xspace}
\newcommand{\eulc}{{\sc{Edge Unique Label Cover}}\xspace}
\newcommand{\beulc}{{\sc{Border E-ULC}}\xspace}
\newcommand{\ulc}{{\sc{Node Unique Label Cover}}\xspace}
\newcommand{\bulc}{{\sc{Border N-ULC}}\xspace}
\newcommand{\edgeulc}{{\sc{Edge Unique Label Cover}}\xspace}
\newcommand{\mclique}{{\sc{Multicolored Clique}}\xspace}
\newcommand{\wulc}{{\sc{W-E-ULC}}\xspace}
\newcommand{\bwulc}{{\sc{Border W-E-ULC}}\xspace}

\newcommand{\ugc}{{\sc{Unique Games Conjecture}}\xspace}

\newtheorem{theorem}{Theorem}[section]
\newtheorem{lemma}[theorem]{Lemma}

\theoremstyle{definition}
\newtheorem{definition}[theorem]{Definition}

\newtheorem{step}{Step}[section]
\newtheorem{algrule}{Rule}[section]

\newcommand{\TheTitle}{Designing FPT algorithms for cut problems using randomized contractions}

\title{{\TheTitle}\thanks{A preliminary version of this work was presented at FOCS 2012~\cite{ChitnisCHPP12}.}}
\date{}
  \author{
  Rajesh Chitnis
  \thanks{
	Department of Computer Science, University of Maryland at College Park, USA
	(\texttt{rchitnis@cs.umd.edu}).
  Supported in part by NSF CAREER award 1053605, ONR YIP award N000141110662, DARPA/AFRL award FA8650-11-1-7162 and a University of Maryland Research and
Scholarship Award (RASA).
  }
  \and
  Marek Cygan 
  \thanks{
    Institute of Informatics, University of Warsaw, Poland
      (\texttt{cygan@mimuw.edu.pl}).
  Supported in part by NSF CAREER award 1053605, ONR YIP award N000141110662, DARPA/AFRL award FA8650-11-1-7162, University of Maryland Research and
Scholarship Award (RASA)
  NCN grant N206567140, Foundation for Polish Science.
  }
  \and
  MohammadTaghi Hajiaghayi
  \thanks{
	Department of Computer Science, University of Maryland at College Park, USA
	(\texttt{hajiagha@cs.umd.edu}).
  Supported in part by NSF CAREER award 1053605, ONR YIP award N000141110662, DARPA/AFRL award FA8650-11-1-7162 and a University of Maryland Research and
Scholarship Award (RASA).
  }
  \and
  Marcin Pilipczuk 
  \thanks{
    Institute of Informatics, University of Warsaw, Poland
      (\texttt{malcin@mimuw.edu.pl}).
    Partially supported by NCN grant N206567140 and Foundation for Polish Science.
  }
  \and
  Micha\l{} Pilipczuk
  \thanks{
    Institute of Informatics, University of Warsaw, Poland, (\texttt{michal.pilipczuk@mimuw.edu.pl}).
The research of this author, leading to these results, was done when the author was affiliated with the University of Bergen, Norway, and
 has received funding from the European Research Council under the European Union's Seventh Framework Programme (FP/2007-2013) / ERC Grant Agreement n. 267959.
  }
  }

\begin{document}
\maketitle

\begin{abstract}
We introduce a new technique for designing fixed-parameter algorithms for cut problems, called {\emph{randomized
contractions}}. We apply our framework to obtain the first FPT algorithm for the \ulcgen problem and new FPT algorithms with
exponential speed up for the \steinercut and \nmwcu problems. More precisely, we show the following:
\begin{itemize}
\item We prove that the parameterized version of the \ulcgen problem, which is the base of the \ugc, can be solved in
    $2^{O(k^2\log |\Sigma|)}n^4\log n$ deterministic time (even in the stronger, vertex-deletion variant) where $k$ is the
    number of unsatisfied edges and $|\Sigma|$ is the size of the alphabet. As a consequence, we show that one can in
    polynomial time solve instances of {\sc Unique Games} where the number of edges allowed not to be satisfied
    is upper bounded by $O(\sqrt{\log n})$ to optimality, which improves over the trivial $O(1)$ upper bound.
\item We prove that the \steinercut problem can be solved in $2^{O(k^2\log k)}n^4\log n$ deterministic time and \linebreak
    $\tilde{O}(2^{O(k^2\log k)}n^2)$ randomized time where $k$ is the size of the cutset. This result improves the double
    exponential running time of the recent work of Kawarabayashi and Thorup (FOCS'11).
\item We show how to combine considering `cut' and `uncut' constraints at the same time. More precisely, we define a
    robust problem \nmwcu that can serve as an abstraction of introducing uncut constraints, and show that it admits an
    algorithm running in $2^{O(k^2\log k)}n^4\log n$ deterministic time where $k$ is the size of the cutset. To the best
    of our knowledge, the only known way of tackling uncut constraints was via the approach of Marx, O'Sullivan and Razgon
    (STACS'10, ACM Trans. Alg. 2013), which yields algorithms with double exponential running time.
\end{itemize}
An interesting aspect of our algorithms is that they can handle positive real weights.
\end{abstract}

\section{Introduction}

\noindent{\emph{Graph cut problems}} is a class of problems where, given a graph, one is asked to find a {\emph{cutset}} of minimum size whose removal makes the graph satisfy a global separation property. The motivation of studying graph cut problems stems from the fundamental minimum cut problem, where the goal is to separate two terminals from each other by removing the least possible number of vertices or edges, depending on the variant. Even though the minimum cut problem can be solved in polynomial time, many of its natural generalizations become NP-hard. Moreover, many problems, whose classical definitions do not resemble cut formulations, after choosing an appropriate combinatorial viewpoint show deep links with finding minimum separators; the most important examples are \fvs and \oct. 

Therefore, circumventing NP-hardness of fundamental graph cut problems, like \mwc (given a graph with a set of terminals, separate the terminals from each other using minimum size cutset) or \multicut (given a graph with a set of terminal pairs, separate terminals in the pairs using minimum size cutset), became an important algorithmic challenge. It is then no surprise that graph cut problems were studied intensively from the point of view of approximation; cf.~\cite{approx-8,approx-9,approx-5,approx-11,approx-12,approx-6,approx-7,approx-1,approx-2,approx-3,approx-10}.

In this paper we address a different paradigm of tackling NP-hard problems, that is, {\emph{fixed-parameter tractability}} (FPT). Recall that in the parameterized complexity setting an instance of the problem comes with an additional integer $k$, called the {\emph{parameter}}, which intuitively measures the hardness of the instance. The goal is to devise an algorithm solving the problem with running time of form $f(k)n^c$, where $f$ is some computable function and $c$ is a fixed constant. In other words, for every fixed parameter the algorithm has to work in polynomial time, where the degree of the polynomial is independent of the parameter. Algorithms with such a running time guarantee are called {\emph{fixed-parameter}} algorithms, and if a problem admits one, then we say that it is {\emph{fixed-parameter tractable}}. For a more detailed introduction to fixed-parameter tractability we address an interested reader to the recent monographs~\cite{our-book,DF13}.

Graph separation problems in the context of parameterized complexity were probably first considered in the seminal work of Marx \cite{marx:multiway-journal}. Marx established fixed-parameterized complexity of \mwc parameterized by the size of the cutset and \multicut parameterized by the size of the cutset plus the number of terminal pairs. The main tool introduced by Marx is the notion of an \emph{important separator}, which later turned out to be the core ingredient of parameterized algorithms
for, e.g., \textsc{Directed Feedback Vertex Set}~\cite{directed-fvs} or \textsc{Almost 2-SAT}~\cite{a2sat}. In the last decade, the graph separation problems become one of the most intensively studied subareas of parameterized complexity, leading to the development
of various interesting techniques, such as shadow removal~\cite{marx:multicut} and its generalizations to directed graphs~\cite{dir-mwc}, treewidth reduction~\cite{small-treewidth-marx}, and branching guided by an LP or ($k$-)submodular CSP relaxation~\cite{our-lp,magnus}.

We introduce a new technique, called {\emph{randomized contractions}}, of constructing fixed-parameter algorithms for graph cut problems. 
In this introduction, we first give an overview of this technique and our results, and then provide a discussion and comparison with other known techniques.

\subsection{Our techniques}

On high level, the technique of \emph{randomized contractions} is based on a WIN/WIN approach, introduced by Kawarabayashi and Thorup~\cite{kt11}, and also used by Grohe and Marx in their algorithm to test the topological minor relation~\cite{gm-topminor}.
The WIN/WIN approach can be described as follows: either we find a well-balanced separation of small order, whose one side can be simplified by a recursive call, or the graph admits a highly-connected structure, which can be used to identify the solution.
The main novelty of this paper is the way these steps are executed: we show that a well-balanced separation can be easily and efficiently found using the color coding technique introduced by Alon et al.~\cite{color-coding},
and the color coding technique also greatly helps in exhibiting the solution in the presence of the high-connected structure.

Recall that the main idea of the color coding technique, originally introduced to solve some special cases of the \textsc{Subgraph Isomorphism} problem,
is to color the graph at random and ensure that with high probability the solution gets sufficiently highlighted to be recognizable quickly. It has now become a classical tool in the parameterized complexity toolbox.
At heart of our results lies an observation that it can be also used to highlight either a well-balanced separation or a structure of the solution in a highly-connected graph.
Our usage of the color coding technique, especially in the search for a well-balanced separation, resembles the algorithm of Karger~\cite{karger-min-cut} that finds a minimum cut in a graph in near-linear time by contracting random edges; this inspiration gave the name to our technique.

Although the intuition behind color coding is of probabilistic nature, the algorithms obtained using this approach can be derandomized using the technique of {\emph{splitters}} of Naor et al. \cite{naor-schulman-srinivasan-derandom}. In fact, we find it more convenient to present our algorithms already in the derandomized version, so in spite of the name of the technique there will be no randomization at all; instead we use the following abstraction:

\begin{lemma}\label{lem:random}
Given a set $U$ of size $n$, and integers $0 \leq a,b \leq n$, one
can in time $2^{O(\min(a,b) \log (a+b))} n \log n$
construct a family $\randfamily$ of at most $2^{O(\min(a,b) \log (a+b))} \log n$
subsets of $U$, such that the following holds:
for any sets $A,B \subseteq U$, $A \cap B = \emptyset$, $|A|\leq a$, $|B|\leq b$,
there exists a set $S \in \randfamily$ with $A \subseteq S$ and $B \cap S = \emptyset$.
\end{lemma}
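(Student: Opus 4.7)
My plan is to prove the lemma in two stages: first to obtain the size bound probabilistically, then to derandomize using the splitter framework of Naor, Schulman, and Srinivasan.

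Assume without loss of generality that $a \leq b$. Sample candidate subsets of $U$ independently by including each element of $U$ with probability $p = a/(a+b)$. For any fixed disjoint $A, B \subseteq U$ with $|A| \leq a$ and $|B| \leq b$, the probability that a sampled subset $S$ satisfies $A \subseteq S$ and $B \cap S = \emptyset$ is exactly $p^{|A|}(1-p)^{|B|} \geq a^a b^b/(a+b)^{a+b}$, which by Stirling is $\Omega\bigl(1/(\binom{a+b}{a} \cdot \mathrm{poly}(a+b))\bigr)$. Since the number of candidate pairs is at most $n^{a+b}$ and $\binom{a+b}{a} = 2^{O(\min(a,b) \log(a+b))}$, a standard union bound guarantees that $N = 2^{O(\min(a,b)\log(a+b))} \log n$ independent samples yield a family $\randfamily$ meeting the requirement with positive probability, which already establishes the target size.

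For the deterministic construction I would combine a hash family $\mathcal{H}$ of functions $h\colon U \to [a+b]$ with an enumeration of all subsets $T \subseteq [a+b]$ of size at most $a$, letting the candidate members of $\randfamily$ be the pre-images $h^{-1}(T)$. Provided that $\mathcal{H}$ has the separating property that for every disjoint pair $(A, B)$ of the required sizes some $h \in \mathcal{H}$ satisfies $h(A) \cap h(B) = \emptyset$, taking $T = h(A)$ (which has size at most $a$) produces a set $h^{-1}(T)$ containing $A$ and disjoint from $B$, so correctness follows. The enumeration over $T$ contributes a factor of $\binom{a+b}{a} \cdot (a+1) = 2^{O(\min(a,b)\log(a+b))}$ that exactly matches the probabilistic bound, so it remains to build $\mathcal{H}$ of size only $\mathrm{poly}(a+b) \cdot \log n$ in the target time. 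The principal obstacle lies precisely here: a naive $(n, a+b)$-perfect hash family or $(n, a+b)$-universal set would blow up by a further $2^{a+b}$ factor, which is prohibitive in the unbalanced regime $a \ll b$. To avoid this I would invoke the balanced $(n, k, \ell)$-splitter constructions of Naor, Schulman and Srinivasan, tuning the parameters so that $h$ must merely separate $A$ from $B$ rather than be fully injective on $A \cup B$, and composing such splitters in their $k$-restriction framework. Combined with the $\binom{a+b}{a}$ enumeration, this gives both the family size and the construction time $O(2^{O(\min(a,b)\log(a+b))} n\log n)$ stated in the lemma.
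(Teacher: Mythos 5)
Your overall architecture — a probabilistic sanity check followed by a splitter-based derandomization with an enumeration of size-$\leq a$ target subsets — matches the paper's. The gap lies in the specific derandomization plan. You target functions $h\colon U \to [a+b]$ and ask for a family $\mathcal{H}$ of size $\mathrm{poly}(a+b)\cdot\log n$ with the separating property that for every disjoint pair $(A,B)$ some $h$ satisfies $h(A)\cap h(B)=\emptyset$. No such family exists at that size: for a uniformly random $h\colon U\to[a+b]$ and a fixed worst-case pair with $|A|=a\leq b=|B|$, the probability of separation is only $2^{-\Theta(a)}$ (after $h(A)$ occupies $\Theta(a)$ of the $a+b$ slots, each element of $B$ avoids it with probability bounded away from $1$), so any separating family must have size $2^{\Omega(a)}$. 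Thus the factorization you propose — small separating family times $\binom{a+b}{a}$ enumeration — cannot deliver the bound.

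The resolution, which you come close to but talk yourself out of, is to enlarge the range. You correctly note that an $(n,a+b)$-perfect hash family with range exactly $a+b$ or an $(n,a+b)$-universal set would cost a $2^{\Theta(a+b)}$ factor, but this is not the relevant object. The paper uses an $(n,c,c^2)$-splitter with $c=\min(a+b,n)$: a family of functions $f\colon U\to[c^2]$ of size only $O(c^6\log c\cdot\log n)=\mathrm{poly}(a+b)\cdot\log n$ such that some $f$ is injective on any fixed set of $c$ elements (the squared range is exactly what keeps the family polynomial). One then pads $A,B$ to $A',B'$ with $|A'|+|B'|=c$, finds $f$ injective on $A'\cup B'$, and takes $S=f^{-1}(f(A'))$; the enumeration is over size-$a$ subsets of $[c^2]$, giving $\binom{(a+b)^2}{a}=2^{O(\min(a,b)\log(a+b))}$, which is the same order as your $\binom{a+b}{a}$ target. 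In short: use injectivity on $A\cup B$ with a quadratically larger range rather than mere separation with range $a+b$; the former has a cheap off-the-shelf construction and the latter does not.
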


Our approach is most natural for edge-deletion problems; however, we can also extend it to node-deletion variants.
For the node deletion problems, however, the situation is more
complicated and we need to define two kinds of separations. 
Only when the graph does not have both kinds of separations, we get enough structure to
solve the problem with other methods.
Moreover, one needs to be much more careful in this final case, as we obtain much weaker structural properties of the graph.

\subsection{Our results}

We use the technique of randomized contractions to provide the first fixed-parameter algorithm solving an important problem in parameterized complexity,
and moreover we show how our approach can be applied to reduce the time complexity of the best known algorithms from double exponential to single exponential for some problems already known to be FPT.

\subsubsection{Unique Label Cover}
In the \ulcgen problem we are given an undirected graph $G$, where each edge $uv=e\in E(G)$ is
associated with a permutation $\psi_{e,u}$ of a constant size alphabet $\Sigma$. The goal is to construct a labeling $\Psi: V(G)\to \Sigma$ 
maximizing the number of satisfied edge constraints, that is, edges for which $(\Psi(u),\Psi(v)) \in \psi_{uv,u}$ holds.
At the first glance \ulcgen does not seem related to the previously mentioned cut problems, 
however it is not hard to show that the node deletion version of \ulcgen is
a generalization of {\sc Group Feedback Vertex Set} problem~\cite{guillemot-2011},
and hence of {\sc Odd Cycle Transversal}, {\sc Feedback Vertex Set},
as well as \mwc.

The optimization version of \ulcgen is the subject of the 
very extensively studied \ugc, proposed by Khot~\cite{ugc} in 2002, which is used as a hardness assumption for
showing several tight inapproximability results.
The \ugc states that for every sufficiently small $\varepsilon, \delta > 0$,
there exists an alphabet size $|\Sigma|(\varepsilon,\delta)$, such that given an instance $(G,\Sigma,(\psi_{e,v})_{e \in E(G),v \in e})$ it is NP-hard to distinguish between the cases $|OPT|\le \delta|E(G)| $ and $|OPT| \ge (1-\eps)|E(G)|$. In 2010 Arora et al.~\cite{arora-ug} presented a breakthrough subexponential time algorithm, which in $2^{O(|\Sigma|n^\varepsilon)}$ 
running time satisfies $(1-\eps)|E(G)|$ edge constraints, assuming the given instance satisfies $|OPT| \geq (1-\eps^c) |E(G)|$.
We refer the reader to a recent survey of Khot~\cite{khot-survey} for more detailed discussion on the \ugc.

Since all the edge constraints are permutations, fixing a label for one vertex
gives only one possibility for each of its neighbors, assuming we want to satisfy all the edges.
For this reason we can verify in polynomial time, whether $OPT = |E(G)|$.
In this paper we show that we can efficiently solve the \ulcgen problem,
assuming almost all the edges are to be satisfied.
In particular, we design a fixed parameter algorithm for \ulc, which is a generalization of \eulc.

\defproblemu{\ulc}{An undirected graph $G$, a finite alphabet $\Sigma$ of size $s$,
 an integer $\cutsize$, and for each edge $e \in E(G)$ and each of its endpoints $v$ a permutation $\psi_{e,v}$ of $\Sigma$, such that
if $e = uv$ then $\psi_{e,u} = \psi_{e,v}^{-1}$.}{Does there exist
a set $X \subseteq V(G)$ of size at most $\cutsize$ and a function
$\Psi: V(G) \setminus X \to \Sigma$ such that
for any $uv \in E(G \setminus X)$ we have $(\Psi(u),\Psi(v)) \in \psi_{uv,u}$?}


\begin{theorem}
\label{thm:ulc-main}
There is an $O(2^{O(\cutsize^2 \log s)} n^4 \log n)$ time algorithm
solving \ulc.
\end{theorem}

To justify our parameterization, we would like to note that there is a long line of polynomial time approximation algorithms 
designed for instances of \ulcgen, with currently best by Charikar et al.~\cite{charikar-ug}, working under the assumption $|OPT| \ge (1-\varepsilon)|E(G)|$, 
and where the alphabet is of constant size.
Therefore, it is reasonable to assume that only a small number of constraints is not going to be satisfied.
Our results imply that one can in polynomial time verify whether it is possible 
to satisfy $|E(G)|-O(\sqrt{\log n})$ constraints; consequently, we extend the range of instances that can be solved optimally in polynomial time.

Finally, we show that the dependence on the alphabet size in Theorem~\ref{thm:ulc-main} is probably necessary, since the problem parameterized by the cutsize only is $W[1]$-hard. Hence, the existence of an algorithm parameterized by the cutsize only would cause $FPT=W[1]$, which is considered implausible. For a more detailed introduction to the hierarchy of parameterized problems and consequences of its collapse, we refer to the books of Downey and Fellows \cite{downey-fellows:book} or of Flum and Grohe \cite{grohe:book}. We consider this result an interesting counterposition of the parameterized status of {\sc Group Feedback Vertex Set}~\cite{our-gfvs},
which is FPT even when the group size is not a parameter.

\begin{theorem}\label{thm:ulc-hardness}
The \eulc problem, and consequently \ulc, is $W[1]$-hard when parameterized by $\cutsize$ only.
\end{theorem}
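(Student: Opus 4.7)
The plan is to give a polynomial-time parameterized reduction from \mclique, which is $W[1]$-hard when parameterized by the number of color classes $k_0$. Given an instance $(H,V_1,\ldots,V_{k_0})$ of \mclique, we construct an \eulc instance whose alphabet $\Sigma$ has size polynomial in $|V(H)|$ and whose cutset parameter $\cutsize$ depends only on $k_0$, with concrete target $\cutsize=\binom{k_0}{2}$. The ``consequently'' part of the statement follows from a standard edge-to-node reformulation: subdivide each edge of the \eulc instance by one fresh node carrying the original permutation on one of its sides and the identity on the other, and replace every original vertex by a sufficiently large clique of identity-edged copies so that the solution is forced to pick only subdivision nodes; this turns an \eulc instance into an equivalent \ulc instance with the same parameter.

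The core of the construction is a \emph{selector} vertex $a_i$ for each color class $V_i$; in the intended labeling, $\Psi(a_i)$ encodes the vertex of $V_i$ chosen as the $i$-th vertex of the clique. The key tool is the observation that a cycle of permutation constraints whose composition around the cycle equals some permutation $\sigma\in\mathrm{Sym}(\Sigma)$ is fully satisfiable (with no cut) precisely at those labels that are fixed points of $\sigma$; by realizing such a cycle through $a_i$ with $\sigma$ having fixed-point set exactly $V_i$, we force $\Psi(a_i)\in V_i$ at zero cost in the cut budget. The second ingredient is, for every pair $i<j$, an \emph{adjacency gadget} $Q_{ij}$ attached to $a_i$ and $a_j$ via identity edges and built from parallel paths through auxiliary vertices. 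Its permutations are arranged so that $Q_{ij}$ admits a labeling compatible with $(\Psi(a_i),\Psi(a_j))=(u,v)$ using exactly one cut inside $Q_{ij}$ if and only if $uv\in E(H)$, and strictly more cuts otherwise; intuitively a single cut ``commits'' the pair to one witness path, and a witness path is consistent with the selector labels only if it corresponds to a real edge of $H$.

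Setting the budget to $\cutsize=\binom{k_0}{2}$ then yields correctness in both directions: a multicolored $k_0$-clique in $H$ translates into a labeling of the selectors together with an admissible cutset spending one edge per adjacency gadget, while conversely any cutset of size at most $\cutsize$ is forced to use exactly one edge per adjacency gadget and hence to label the selectors in a way encoding such a clique. Since the reduction is polynomial in $|V(H)|$ and the resulting parameter depends only on $k_0$, $W[1]$-hardness of \mclique transfers to \eulc parameterized by $\cutsize$, and the node-deletion version \ulc inherits the hardness via the edge-to-node reformulation mentioned above.

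The main obstacle will be the concrete design of the adjacency gadget: a single edge of an \eulc instance represents only a bijection on $\Sigma$, whereas the relation ``adjacent in $H$ between $V_i$ and $V_j$'' is in general not the graph of any permutation. Encoding this relation with a gadget whose cut cost is bounded by one per pair, on an alphabet shared with the selector cycles and without allowing the adjacency gadget to interfere with the fixed-point constraint pinning $\Psi(a_i)$ to $V_i$, will require a careful joint choice of permutations along the parallel paths; once this gadget is in place and its ``exactly one cut iff adjacency'' property is verified by a short case analysis exploiting the rigidity of permutation constraints, the remainder of the argument is straightforward bookkeeping.
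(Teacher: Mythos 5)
Your high-level skeleton (reduce from \mclique, alphabet of size polynomial in $n$, budget a function of $k_0$ alone) matches the paper, but the heart of the reduction --- the adjacency gadget $Q_{ij}$ --- is never constructed, and you flag this yourself as the main obstacle. This is not a detail that follows from the sketch: a single edge encodes a bijection, not the adjacency relation between $V_i$ and $V_j$, and the proposed fix of ``parallel paths such that a single cut commits the pair to one witness path'' does not deliver the claimed semantics. If $Q_{ij}$ consists of $m$ internally disjoint $a_i$--$a_j$ paths, each imposes its own permutation constraint on $(\Psi(a_i),\Psi(a_j))$; cutting one edge disables only one of them, and the $m-1$ surviving constraints still overdetermine the pair, so selecting a single witness genuinely costs $m-1$ cuts, not one. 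In addition, your selector cycle at $a_i$ in isolation only makes $\Psi(a_i)\in V_i$ the \emph{cheapest} option, not a forced one (a single cut breaks the cycle); soundness therefore leans on exactly the tight-budget argument that in turn requires every gadget to always cost at least one cut --- a dependency that cannot be discharged until the gadget is exhibited.

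The paper avoids the difficulty you ran into from the opposite direction. It works in the version of \eulc{} that permits partial permutations and vertex lists, so adjacency between $V_i$ and $V_j$ is encoded by a single cross-edge carrying the partial permutation $\sigma_{ij}=\{((p,q),(q,p)) : v^i_pv^j_q\in E(H)\}$ --- no gadget needed. Lemma~\ref{lem:restricted} then converts this to the restricted case with full permutations and full lists at a cost of multiplying the budget by $\cutsize+2$ and adding $\cutsize+2$ alphabet symbols, which is far cheaper than a per-pair combinatorial gadget. The selection mechanism is also genuinely different: rather than fixed-point forcing at a single selector vertex, the paper uses one long cycle $C_i$ of length $kn$ per class over the alphabet $\{0,\dots,n\}\times\{0,\dots,n\}$ with the decrement constraint $(a,b)\mapsto(a-1,b)$ and vertex lists restricted to $\{0,\dots,n-1\}^2$; the forbidden value $n$ forces a cut in every window of $n$ consecutive edges, so with budget $k^2$ the $k$ cut positions per cycle are pinned and encode the chosen vertex index $c_i$. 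If you want to stay in the full-permutation model from the outset, you need a genuine ``exactly one cut iff edge'' gadget, and the parallel-paths idea as stated does not provide one.
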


\medskip

\subsubsection{Steiner Cut}
Next, we address a robust generalization of both \kwaycut and \mwc problems, namely the \steinercut problem.

\defproblemu{\steinercut}{A graph $G$, a set of terminals $T \subseteq V(G)$,
and integers $\numcc$ and $\cutsize$.}{Does there exist a set $X$ of at most $\cutsize$
edges of $G$, such that in $G \setminus X$ at least $\numcc$ connected components
contain at least one terminal?}

Using our technique we present an FPT algorithm working in $O(2^{O(\cutsize^2 \log \cutsize)} n^4 \log n)$,
where the polynomial factor can be improved to $\tilde{O}(n^2)$ at the cost of our algorithm being randomized.
These results improve the double exponential time complexity of the recent algorithm of Kawarabayashi and Thorup~\cite{kt11}\footnote{In~\cite{kt11} the authors
solve the \kwaycut problem, however a straightforward generalization of their algorithm solves the \steinercut problem as well.}.

\begin{theorem}
\label{thm:steiner-main}
There is a deterministic $O(2^{O(\cutsize^2 \log \cutsize)} n^4 \log n)$ and randomized $\tilde{O}(2^{O(\cutsize^2 \log s)} n^2)$ running time algorithm solving \steinercut.
\end{theorem}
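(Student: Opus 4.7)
The plan is to instantiate the randomized-contractions framework from Section~1.1 on \steinercut, following a WIN/WIN branch between finding a well-balanced small separator (and recursing on one of its sides) and exploiting the forced highly-connected structure when no such separator exists. To make the recursion self-contained I would first lift \steinercut to a slightly more general \emph{annotated} variant, in which part of the graph may be replaced by a gadget that records, for every partition of a small boundary set, the minimum number of edges whose deletion realises that partition of boundary classes into terminal-containing components. The theorem then follows by solving this annotated version; the ordinary \steinercut instance is a special case with empty annotation.

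The first phase asks whether $G$ admits a $(\cutsize,\alpha)$-balanced edge separation, i.e.\ a cut of size at most $\cutsize$ whose two sides each contain at least $\alpha(\cutsize)\cdot |V(G)|$ vertices for a suitable threshold $\alpha(\cutsize)$. To find such a separation I would use Lemma~\ref{lem:random} to enumerate a family of subsets $S$ of $V(G)$ of bounded size and, for each of them, guess a split of $S$ into $A\cup B$ and compute a minimum $A$-$B$ edge cut by standard max-flow; the splitter guarantee ensures that if a balanced $\cutsize$-cut exists, one of the guesses will reveal it. If the phase succeeds, say with sides $(V_1,V_2)$ and boundary edges $F$, I would iterate over all partitions of the endpoints of $F$ lying in $V_2$ (at most $\cutsize^{O(\cutsize)}$ such partitions), for each of them recursively compute, on the side $V_2$, a minimum-size realisation and then contract $V_2$ to a gadget of size bounded by a function of $\cutsize$ that encodes all computed values. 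Because the separation is balanced, the recursion depth is $O(\log n)$ and one obtains a branching factor of $2^{O(\cutsize^2 \log \cutsize)}$.

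If no balanced separation is present, I would argue structurally that for every solution cutset $X$ of size at most $\cutsize$, removing $X$ splits $G$ into at most $\cutsize+1$ components of which at most one can have more than $g(\cutsize)$ vertices, for some explicit function $g$; otherwise two large components would yield a balanced $\cutsize$-cut. Now I would apply Lemma~\ref{lem:random} once more, with parameters $O(g(\cutsize))$ and $O(\cutsize)$, to obtain a family $\randfamily$ of vertex subsets that separates every small component and the boundary of the large component from each other. For each set $S\in\randfamily$ I would contract each colour class induced by $S$ into a single super-vertex, producing a graph with only $O(g(\cutsize))$ vertices on which \steinercut can be solved by brute force; correctness follows because for a suitable $S$ the contraction collapses every small component to a vertex while leaving $X$ intact and separated from the rest of the large component.

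The hardest part will be establishing the structural dichotomy used in the second phase: showing that the non-existence of a balanced $\cutsize$-edge-separation really forces all but one component of $G\setminus X$ to be of size bounded by $g(\cutsize)$. This requires combining a careful choice of the balance constant $\alpha(\cutsize)$ with an exchange argument on minimum $A$-$B$ cuts, so that any candidate large component inside a non-trivial side would itself host a balanced cut of size at most $\cutsize$, contradicting the phase hypothesis. Once this is in place, the overall running time is the product of $2^{O(\cutsize^2 \log \cutsize)}$ splitter-based enumerations, $O(n^2)$ per-level work for max-flow, and $O(\log n)$ recursion depth, giving the deterministic $O(2^{O(\cutsize^2\log\cutsize)}n^4\log n)$ bound; replacing Lemma~\ref{lem:random} by genuine random sampling removes the derandomisation overhead and improves the polynomial factor to $\tilde O(n^2)$, yielding the randomised bound.
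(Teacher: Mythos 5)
Your WIN/WIN template matches the paper's spirit, but the notion of separation you chose breaks the second phase. You define a $(\cutsize,\alpha)$-balanced separation via a \emph{multiplicative} threshold: both sides must contain at least $\alpha(\cutsize)\cdot|V(G)|$ vertices. The paper instead uses an \emph{absolute} threshold: a $(q,\cutsize)$-good edge separation requires both sides to have more than $q=2^{O(\cutsize\log\cutsize)}$ vertices, where $q$ depends only on $\cutsize$, not on $n$. This difference is load-bearing. When no multiplicatively balanced $\cutsize$-cut exists, you learn only that every $\cutsize$-cut leaves a side smaller than $\alpha(\cutsize)\cdot n$; this is a $\Theta(n)$ bound, not a $g(\cutsize)$ bound. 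Your claimed structural dichotomy---that after removing a solution cutset $X$ at most one component of $G\setminus X$ has more than $g(\cutsize)$ vertices---simply does not follow, and no tuning of $\alpha(\cutsize)$ nor exchange argument on minimum cuts can fix this, because two components of size, say, $n^{1/2}$ each do not contradict the absence of a multiplicatively balanced cut when $n$ is large, yet both are far larger than any $g(\cutsize)$. The paper's absolute threshold gives exactly the needed conclusion (Lemma~\ref{lem:high-structure}): if two components of $G\setminus X$ both exceed $q$ vertices, a minimum cut between them has size at most $\cutsize$ and connected sides, yielding a $(q,\cutsize)$-good edge separation. This absolute bound is what makes Lemma~\ref{lem:random} applicable in the high connectivity phase, since you must feed it parameters $a,b$ depending only on $\cutsize$.

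Switching to the absolute threshold sacrifices your claimed $O(\log n)$ recursion depth: the two sides can now be very unbalanced (one side shrinks to size at most $q$ after contraction while the other keeps almost all vertices). The paper handles this by analysing the recurrence $T(n)\leq\max_{n'}\bigl(2^{O(\cutsize^2\log\cutsize)}n^3\log n + T(n') + T(n-n'+q)\bigr)$ directly, yielding $T(n)=O(2^{O(\cutsize^2\log\cutsize)}n^4\log n)$. Your own arithmetic is also off even under your assumptions: $2^{O(\cutsize^2\log\cutsize)}\cdot O(n^2)\cdot O(\log n)$ is $n^2\log n$, not $n^4\log n$. Finally, in the high connectivity phase the contracted graph is not of size $O(g(\cutsize))$---the core vertex $b$ retains arbitrarily many attached components, so you cannot finish by brute force on a bounded-size graph. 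The paper instead argues that each component $B_i'$ of $H\setminus\{b\}$ is either entirely inside or entirely outside the solution, and closes with a polynomial-time dynamic program over those components keyed by the number of deleted edges and the number of terminal-containing components separated off. You should adopt the paper's absolute threshold, its border-problem formulation with border terminals (rather than an ad hoc ``annotated gadget''), and the component-wise DP in the final phase.
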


\subsubsection{Connectivity constraints} We define the following problem as an abstraction of introducing ``cut'' and ``uncut'' constraints at the same time.

\defproblemu{\nmwcu (\nmwcushort)}{A graph $G$ together with a set of terminals $T\subseteq V(G)$, an equivalence relation $\rel$ on the set $T$, and an integer $\cutsize$.}{Does there exist a set $X \subseteq V(G) \setminus T$ of at most $\cutsize$ nonterminals such that for any $u,v \in T$, the vertices $u$ and $v$ belong to the same connected component of $G \setminus X$ if and only if $(u,v) \in \rel$?}

Fixed-parameter tractability of this problem can be derived from the framework of Marx, Razgon, and O'Sullivan~\cite{small-treewidth-marx}, complemented with a reduction of the number of equivalence classes of $\rel$ in flavour of the reduction for \mwc of Razgon~\cite{razgon:mwc-k2-terms}. However, the dependence on $k$ of the running time is double exponential. Using our framework we show the following.

\begin{theorem}
\label{thm:cutuncut-main}
There is an $O(2^{O(\cutsize^2 \log \cutsize)} n^4 \log n)$ time algorithm solving \nmwcu.
\end{theorem}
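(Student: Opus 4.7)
The plan is to build the algorithm for \nmwcushort by combining the four ingredients listed in the introduction into the randomized contractions framework for node-deletion problems. First I would preprocess the instance so that the parameter space truly behaves like $k$, then apply a WIN/WIN dichotomy in the spirit of the edge-deletion sketch from Section~\ref{sec:illustration}, carrying the equivalence relation $\rel$ through every contraction.

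Phase one: preprocessing. The first goal is to bound $|T/\rel|$ by a polynomial in $k$, so that every boundary behaviour on an $O(k)$-separator can be enumerated in $k^{O(k)}$ time. To that end I would apply a Razgon-style reduction from~\cite{razgon:mwc-k2-terms}: iteratively, any $\rel$-class that cannot be split from itself nor merged with another class by a cutset of size at most $k$ can be collapsed to a single representative, which is verifiable in polynomial time by $k$ min-cut computations. The second preprocessing step addresses terminals sharing large neighbourhoods: if two terminals $t, t' \in T$ have more than $k$ common neighbours, then no cutset of size $\le k$ can separate them, so either $(t,t') \in \rel$ and we may safely identify $t$ with $t'$, or $(t,t') \notin \rel$ and the instance is infeasible. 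These two reductions make the instance "clean", in the sense that every nontrivial interaction between $\rel$-classes must pass through a small, genuinely separator-like set.

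Phase two: the dichotomy. For an appropriately chosen constant $c$ and threshold $g(k) = k^{O(k)}$, I would look for a balanced node separation of order at most $ck$ of the cleaned graph, using Lemma~\ref{lem:random} with $a = ck$ and $b = g(k)$ to enumerate candidate separators. As in the node-deletion framework, two variants of balance need to be considered depending on whether the separator touches $T$ and on how $\rel$-classes distribute across it; this is the source of the extra care hinted at in the overview of the technique. If a balanced separation $(V_L, S, V_R)$ is found, I guess $X \cap S$ and the partition of boundary terminals induced by $X$ in $G[V_L \cup S]$ (there are at most $k^{O(k)}$ such guesses by the preprocessing), recursively solve on the smaller side, mark an optimal partial solution together with witness paths realising the uncut constraints across $S$, and then contract each unmarked $\rel$-class-and-boundary-behaviour bucket into a single super-vertex, preserving the remainder of the instance. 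The recursion depth is $O(\log n)$ because each side loses at least a factor of two in size, yielding the $2^{O(k^2 \log k)} n^{O(1)}$ running time.

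Phase three: the unbalanced case. If no balanced separation exists, I argue structurally that, for any hypothetical solution $X$ of size at most $k$, $G \setminus X$ has at most $O(k)$ connected components of size at most $g(k)$ plus at most one large component, and that $N(X)$ is highly connected to that large component. I would then invoke Lemma~\ref{lem:random} once more, this time to obtain deterministically a family $\randfamily$ such that some $S \in \randfamily$ contains $X$ together with the small components and cleanly excludes a dense witness inside the large component. For each $S \in \randfamily$ I contract the complement of $S$ inside each $\rel$-class separately into a super-vertex, producing an instance of size $k^{O(k)}$ on which I can enumerate every subset of size at most $k$ as a candidate $X$ and verify both cut and uncut constraints in polynomial time.

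The main obstacle, in my view, is making the contractions respect the uncut constraints. In a pure cut problem, contracting an unmarked region into a super-vertex is always safe, since it only commits to a connection that was already guaranteed. Here, contracting may erroneously merge two $\rel$-classes that should remain separated, or destroy a required path between terminals of the same class. The fix is exactly the combination of ingredients (iii) and (iv): bounding $|T/\rel|$ and eliminating large shared neighbourhoods ensures that each super-vertex can be labelled by a short signature recording which $\rel$-classes its absorbed terminals belong to, and that this signature space has size $k^{O(k)}$. This is what keeps both the branching factor at separations and the final brute-force subinstance within the $2^{O(k^2 \log k)}$ budget, matching the bound stated in Theorem~\ref{thm:cutuncut-main}.
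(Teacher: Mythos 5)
Your high-level outline (preprocess to bound $|T/\rel|$, then WIN/WIN on node separations, then exploit high connectivity) matches the paper's strategy, and the two preprocessing ideas you name---a Razgon-style reduction to $k^{O(1)}$ equivalence classes and the elimination of terminals with large shared neighbourhoods---are both present in the paper (the former via deleting vertices that have $k+2$ internally disjoint paths to distinct $\rel$-classes, the latter via identification and deletion rules on terminals). However, there are three substantive gaps.

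First, your running-time analysis is wrong. You claim ``the recursion depth is $O(\log n)$ because each side loses at least a factor of two in size,'' but the $(q,k)$-good node separations used here are not balanced in that sense: they only guarantee that \emph{both} sides have more than $q$ non-terminal vertices, for a fixed threshold $q = k^{O(k)}$. One side can have as few as $q+1$ vertices while the other has $n - O(q)$, so the recursion depth can be $\Theta(n)$. The paper obtains the $n^4 \log n$ polynomial factor by writing the recurrence
\[
T(n)\leq \max_{q+1\leq n'\leq n-q-1}\Bigl( 2^{O(k^2\log k)} n^3\log n + T(n'+k) + T(\min(n-1,\,n-n'+q+q'))\Bigr)
\]
and using convexity to show $T(n)=O(2^{O(k^2\log k)} n^4 \log n)$, not a depth argument. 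Without this the claimed polynomial factor is unjustified.

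Second, you omit flower separations. Good node separations alone are not enough in the node-deletion setting: after removing a $k$-set $Z$ the graph could split into an \emph{unbounded} number of small components all attached to a common small interface, and no good node separation would detect this. The paper introduces a second separation type (a $(q,k)$-flower separation) precisely to handle this, and Lemma~\ref{lem:node-no-separation} --- bounding the number of components of $G\setminus Z$ by $(2q+2)(2^k-1)+|T_b|+1$ --- only holds when neither type of separation exists. Your ``two variants of balance depending on whether the separator touches $T$'' does not capture this; the flower case is about many petals sharing a core, not about intersecting $T$.

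Third, your high connectivity phase is both different from the paper's and more costly than needed. You propose contracting the complement of $S$ into super-vertices and then brute-forcing over all subsets of size at most $k$ in a $k^{O(k)}$-sized instance. The paper instead proves (Lemma~\ref{lem:nmwcu-finish}) a direct structural characterization: once an interrogating set $S$ is fixed, any minimal solution must equal $X_b \cup N_G(S(T^\bigcc))$ for $T^\bigcc$ being empty or a single equivalence class of $\rel_b$, so one just tries the $O(k^2)$ possible $T^\bigcc$ and checks each candidate in polynomial time. This is both cleaner and necessary for the running time: the contraction you describe would also require a correctness argument that uncut constraints survive, which is exactly the delicate part your ``signature'' heuristic glosses over.
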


\subsubsection{Weights}

As mentioned in the abstract, our approach generalizes well to the weighted setting, which is not the
case for many other techniques in parameterized complexity such as important separators.
As the level of technical details in all our algorithms is high,
we prove Theorems~\ref{thm:ulc-main},
\ref{thm:steiner-main} and \ref{thm:cutuncut-main}
in the unweighted case (i.e., as they are stated in the introduction).
Then, in Section~\ref{sec:weights}, we discuss extensions to the weighted setting.

\subsubsection{Subsequent usages and extensions}

We would like to mention here a few applications and extensions of our techniques, developed after the extended abstract of our work has been published~\cite{ChitnisCHPP12}.

First, the technique turned out to be useful in a number of other problems.
Lokshtanov during Dagstuhl Seminar 14071 (February 2014) noticed that our technique immediately gives fixed-parameter tractability the \textsc{Vector Connectivity} problem, parameterized by the cutset size only, solving an open problem posed by Milani\v{c};
we refer to the recent work of Kratsch and Sorge~\cite{stefan-vector-conn} for problem definition and a discussion of recent developments. 
Bringman et al.~\cite{ejvl-steinermulticut}, in their study of different parameterizations of \textsc{Steiner Multicut}, noticed that one can use randomized contractions to obtain an FPT algorithm for one of their most natural parameterizations.

Finally, a subset of the current authors together with Lokshtanov and Saurabh~\cite{bisection} developed a way to replace the recursive scheme in our approach with a static tree decomposition,
where every adhesion has bounded size and every bag has properties similar to those dubbed ``highly-connected'' in the description above.
This improvement has led to an FPT algorithm for \textsc{Minimum Bisection}.

\subsection{Discussion of related work}

\noindent\paragraph{Important separators} Perhaps the most fruitful consequence of the early work of Marx~\cite{marx:multiway-journal}
was the introduction of the concept of an {\emph{important separator}}.
Important separators proved to be a robust tool that enable us to capture the bounded-in-parameter character of the family of reasonable cutsets. They also can be naturally extended to the directed setting.
This basic technique has found numerous applications \cite{chen-nmc,directed-fvs,dir-mwc,sfvs,Guillemot11a,clustering-daniels,marx:multiway-journal,a2sat}.

The important separators technique is based on greedy arguments, which unfortunately makes this approach work only in restricted
settings.
Consider, for instance, the ``uncut'' constraint present in the \nmwcu problem, i.e.,
we look for a cutset that separates some pairs of terminals, but is required {\emph{not}} to separate some other pairs.
Any greedy choice of the farthest possible cutset, which is precisely the idea behind the notion of an important separator, can spoil the delicate requirements of the existence of some paths.

Furthermore, the proof of the core property of important separators --- the bound on their number expresses in the parameter only --- relies on
amortization by the increase of the cost of the separation, which makes the argument work only in the unweighted setting (or with small integer weights).
It is unclear whether this notion can lead to parameterized algorithms in the setting with arbitrary (positive) real weights.

\paragraph{Shadow removal}
The fixed-parameter tractability of \multicut parameterized by the cutsize only, after resisting attacks as a long-standing open problem,
was finally resolved in 2011 by Marx and Razgon~\cite{marx:multicut} and, independently, by Bousquet et al. \cite{thomasse:multicut}.
The most important contribution of the work of Marx and Razgon~\cite{marx:multicut} was the introduction of the shadow removal technique,
and intricate blend of the important separators with the color coding technique. 
In some problems (e.g., \multicut) one can argue that a greedy step, in the sense of important separators, is possible, but
one cannot apply it directly, as one does not know one side of the separation. The color coding technique is used to highlight
possible application places.

A subset of the current authors together with Marx~\cite{dir-mwc} showed that, after a delicate transfer of the shadow removal technique to directed graphs,
it almost immediately yields an FPT algorithm for \mwc in directed graphs.
Further usages include~\cite{dsfvs,multicut-in-dags,clustering-daniels}.

On high level, one could say that the shadow removal technique extends the applicability of important separators,
and is used to obtain additional properties of the cutset we are looking for. 
In some sense it is perpendicular to randomized contractions:
On one hand, its applicability is limited due to the need of some greedy reasoning to apply important separators. 
On the other hand, shadow removal seems crucial for most of its applications, especially in directed graphs; in particular, we are unable to handle these applications using randomized contractions.

\paragraph{Treewidth reduction}
The treewidth reduction technique, developed by Marx, O'Sullivan, and Razgon~\cite{small-treewidth-marx}, 
is probably the closest, in terms of the scope of applicability, to randomized contractions.
It essentially states that in an undirected graph $G$ with two terminals $s$ and $t$, all inclusion-wise minimal cuts
between $s$ and $t$ of size at most $k$ live in a part of $G$ of treewidth bounded exponentially in $k$. 
The result is robust in the sense that it allows to include a bounded number of terminal pairs to separate.

This clean structural result allows to bypass the limitations of important separators: similarly as randomized contractions,
it does not require any greedy step (thus handling, e.g., the ``uncut'' contraints) and it easily handles weighted variants.
The most natural problems that can be handled using the treewidth reduction technique, like \nmwcu, usually can be also solved using randomized contractions.

However, we note that there are two shortfalls of treewidth reduction, as compared to randomized contractions. First, it inherently leads
to double-exponential dependency on the parameter: the bound on the treewidth of the ``small cut part'' of $G$ is necessarily exponential,
and on top of that one uses a dynamic programming algorithm whose running time almost always depends at least exponentially on this treewidth.
Second, it requires to specify a bounded number of terminals to start with; hence it is unclear how to use it, e.g., for the \ulcgen problem.

It should be noted that algorithms using the treewidth reduction technique, despite their double-exponential dependency
on the parameter, are usually conceptually much simpler and cleaner than their counterparts obtained using randomized contractions.
This is particularly visible in the case of \textsc{Steiner Multicut}~\cite{ejvl-steinermulticut}.

\paragraph{Branching guided by LP relaxations}
A subset of the current authors, together with Wojtaszczyk~\cite{our-lp}, showed that one can use very strong structural
properties of the LP relaxations of \textsc{Vertex Cover} and \mwc to develop efficient branching algorithms for these problems,
parameterized by the gap above the optimum value of the LP relaxation.
Narayanaswamy et al.~\cite{saket-lp} observed that, in the case of \textsc{Vertex Cover}, one can apply known reduction rules
to improve running time even further. In this manner, quite unexpectedly, they obtained an improvement upon the classic $O(3^k nm)$-time
algorithm for \textsc{Odd Cycle Transversal}~\cite{oct}.
The currently fastest algorithm in this line is due to Lokshtanov et al.~\cite{saket-lp2}.

Wahlstr\"{o}m~\cite{magnus} observed that instead of the very inflexible LP relaxations, one could use ($k$-)submodular relaxation
to a Valued CSP problem, obtaining surprisingly efficient algorithms for a number of problems, including
$|\Sigma|^{2k} n^{O(1)}$-time algorithm \ulcgen.
Subsequently, the dependency on the input size has been improved to linear~\cite{magnus2}. We note that these works~\cite{magnus2,magnus} are subsequent to our work.

The above line of research gave a number of surprisingly efficient algorithms: the running time is
usually single-exponential in the cutsize, and the techniques of~\cite{magnus2} usually give good dependency on the input size. 
On the other hand, to apply them one needs to find a relaxation with strong properties (a $k$-submodular one in most cases),
which is unknown, e.g., for \multicut or \nmwcu.

\paragraph{Limitations of randomized contractions}
This discussion exhibits three limitations of the randomized contractions technique.

First, we do not know how to apply the randomized contractions technique to the \multicut problem without any bound
on the number of terminals; recall that the algorithm of Marx and Razgon~\cite{marx:multicut} makes use of important separators
and shadow removal.
This is mostly due to the fact that our technique, in the recursive step, needs a bound on the
number of possible behaviors on a small cutset, similarly as it is needed to develop a dynamic programming 
algorithm on graphs of bounded treewidth, or to apply the protrusion machinery~\cite{fomin-focs09, fomin-soda12}.
Note that \multicut, in the edge-deletion setting, is NP-hard on trees~\cite{multicut-in-trees}.

Second, our technique is inherently tailored to undirected graphs, whereas both important separators
and shadow removal are well-understood on directed graphs as well. 
It is an interesting question whether one can obtain a convenient structural description of bounded size cuts in directed
graphs, in the spirit of the treewidth reduction technique for undirected graphs~\cite{small-treewidth-marx}. 
A very recent work of one of the authors and Wahlstr\"{o}m~\cite{dirmc-lb} showed that one cannot hope
for bounded treewidth of the underlying undirected graph, but \emph{directed} treewidth can be bounded.
However, the latter is probably insufficient for many algorithmic applications, as~\cite{dirmc-lb} proved also
$W[1]$-hardness of \multicut in directed graphs with only four terminal pairs.

Third, in the case of \ulcgen our technique gives suboptimal running time, both
in terms of the dependency on the parameter and the input size~\cite{magnus2}.
In our approach, we require that $q$, the minimum size of a side in a well-balanced separation, is greater than
the number of possible behaviors of the solution on a cutset (which is usually exponential in the size of the cutset),
and subsequent applications of the color coding technique introduce term $q^k = 2^{\Omega(k^2)}$ to the running time bound.
Furthermore, the recursion scheme, together with multiple needs of finding small cuts, blows up the polynomial factor to quartic.
We conjecture that in the other studied problems, as well as in the case of \textsc{Minumum Bisection}~\cite{bisection},
it is possible to decrease both factors of the running time bound significantly, but possibly using very different techniques.

\subsection{Organization of the paper}

We start with an informal illustration of our technique in Section~\ref{sec:illustration},
using the example of the edge-deletion version of the \ulcgen problem.
We follow the illustration with some formal generic definitions and preliminary
results in Section~\ref{sec:prelims}.
In Sections~\ref{sec:full-ulc}, \ref{sec:full-steiner} and \ref{sec:full-uncut}
we consider~\ulc, \steinercut and \nmwcushort, respectively, proving Theorems~\ref{thm:ulc-main},
\ref{thm:steiner-main} and \ref{thm:cutuncut-main}.
Section~\ref{sec:lower} contains a reduction showing $W[1]$-hardness of the \eulc problem,
when parameterized by the size of the cutset only.
Finally, in Section~\ref{sec:weights} we discuss extensions of our framework to weighted graphs.
As the introduction included an extensive discussion of related work and possible extensions, 
we skip the conclusions section.

\section{Illustration}\label{sec:illustration}

In this section we present the outline of the technique, illustrating it with a running example of the \eulc problem.
Since this section serves as an introduction and illustration, the arguments here are mostly informal. Note that a more general problem, \ulc, is formally proven to be fixed-parameter tractable in Section~\ref{sec:full-ulc}.

\defparproblemu{\eulc}{An undirected (multi)graph $G$, a finite alphabet $\Sigma$ of size $s$,
 an integer $\cutsize$, and for each edge $e \in E(G)$ and each of its endpoints $v$ a permutation $\psi_{e,v}$ of $\Sigma$, such that
if $e = uv$ then $\psi_{e,u} = \psi_{e,v}^{-1}$.}{$\cutsize+s$}{Does there exist
a set $X \subseteq E(G)$ of size at most $\cutsize$ and a function
$\Psi: V(G) \to \Sigma$ such that
for any $uv \in E(G) \setminus X$ we have $(\Psi(u),\Psi(v)) \in \psi_{uv,u}$?}

The permutations $\psi_{e,u}$ are called {\em{constraints}}, the function
$\Psi$ is called a {\em{labeling}} and the set $X$ is the {\em{deletion set}}.

As we consider the edge-deletion version, we use edge
cuts throughout this section. However, as our general framework can be also applied to node-deletion problems, we comment
along the description where additional argumentation is needed in the node-deletion setting.

We assume that the graph given in the input is connected, as it is easy to reduce the problem to considering each connected
component separately. This is true for all the considered problems. Connectivity of the graph will be maintained
during the whole course of the algorithm. Note that this means that the graph after excluding $X$ can have at most $k+1$ connected
components.

The algorithm, at the very high level, closely follows the approach of Kawarabayashi and Thorup~\cite{kt11}.
We distinguish two cases: either the graph has a somewhat balanced separator, or it is highly connected in the following sense:
any cut of bounded size can separate only a very small part of the graph. More formally, we use the following notion of {\em{good edge separation}}.

\begin{definition}
Let $G$ be a connected graph. A partition $(V_1,V_2)$ of $V(G)$ is called a $(q,k)$-{\emph{good edge separation}}, if
\begin{itemize}
\item $|V_1|,|V_2|>q$;
\item $|\edges(V_1,V_2)|\leq k$, where $\edges(V_1,V_2)$ is the set of edges with one endpoint in $V_1$ and second endpoint in $V_2$;
\item $G[V_1]$ and $G[V_2]$ are connected.
\end{itemize}
\end{definition}

In the first phase of the algorithm, named {\emph{recursive understanding}}, we iteratively find a good edge separation and
reduce one of its sides up to the size bounded by a function of the parameter. We use the lower bound on the number of
vertices of either side to ensure that we indeed make some simplification. The applied reduction step needs introducing a more
general problem, in which, intuitively, we have to prepare for every possible behavior on a bounded number of distinguished
vertices of the graph, called {\emph{border terminals}}.

When no good edge separation can be found, by Menger's theorem we know that between every two disjoint connected subgraphs of
size larger than $q$ we can find $k+1$ edge-disjoint paths. Then we proceed to the second phase, named {\emph{high
connectivity phase}}, where we exploit this highly connected structure to identify the solution.

While the structure of the first phase is the same as in~\cite{kt11}, our work differs in two important aspects.
First, using Lemma~\ref{lem:random} we show a simple efficient way to find a balanced separator to recurse.
Second, we show a general methodology how to apply Lemma~\ref{lem:random} again for the second, high-connectivity phase,
to highlight important parts of the graph and find the solution efficiently.

\subsection{Recursive understanding}

First, we show a simple way how to find a good edge separation in the graph.
A full proof of the following lemma can be found in Section~\ref{sec:prelims}.

\begin{lemma}\label{lem:find-separation-ill}
There exists a deterministic algorithm that, given an undirected, connected graph $G$ on $n$ vertices along with integers $q$
and $k$, in time $O(2^{O(\min(q,k)\log(q+k))}n^3\log n)$ either finds a $(q,k)$-good edge separation, or correctly concludes
that no such separation exists.
\end{lemma}
\begin{proof}
Consider a family $\randfamily$ obtained via Lemma \ref{lem:random}
for the universe $U=E(G)$ and integers $a = 2q$ and $b=k$.
Let $(V_1,V_2)$ be a good separation in $G$ and, for $i=1,2$, let $T_i$ be any tree
with $q$ edges that is a subgraph of $G[V_i]$.
By the properties of $\randfamily$, there exists $S \in \randfamily$ such that
$E(T_1),E(T_2) \subseteq S$, but $S \cap E(V_1,V_2) = \emptyset$.
Consider a (multi)graph $G_S$ obtained from $G$ by contracting the edges of $S$ (we preserve multiple edges in the contraction process),
and let $v \in V(G_S)$ be called {\em{heavy}} if more than $q$ vertices of $G$
were contracted onto it. It is easy to see that the good separation $(V_1,V_2)$
corresponds to a cut between two heavy vertices in $G_S$ of size at most $k$;
moreover, any such cut yields a good separation in $G$.
Such a desired cut can be found in polynomial time;
the claimed running time follows if we first apply the sparsifying technique of
Nagamochi and Ibaraki~\cite{nagamochi-ibaraki} and then the classical algorithm
of Ford and Fulkerson to find a minimum cut between each pair of heavy vertices.
We note that, using instead a variant of the classical Karger's algorithm for minimum cut~\cite{karger-min-cut}, the problem can be solved in
$\tilde{O}(2^{O(\min(q,k)\log(q+k))}(|V(G)|+|E(G)|))$ time at the cost of being randomized.
\end{proof}

The general methodology of the proof of Lemma~\ref{lem:find-separation-ill}: to use color coding to pick a set of ``undeletable'' edges that is disjoint 
from the solution, but highlights it, is the main engine of our work. We will see this idea exploited much
more deeply in the high connectivity phase.

Having found a good edge separation we can proceed to simplification of one of the sides. To this end, following~\cite{kt11}, we consider a more
general problem, where the input graph is equipped with a set of {\emph{border terminals}} $\bterms$, whose number is bounded
by a function of the budget for edge deletions. Intuitively, each considered instance of the border problem corresponds to
solving some small part of the graph, which can be adjacent to the remaining part only via a small boundary --- the border
terminals. Our goal in the border version is, for every fixed behavior on the border terminals, to find some minimum size
solution or to conclude that the size of the minimum solution exceeds the given budget. Of course, the definition of behavior is
problem-dependent; therefore, we present this concept on the example of the \eulc problem.

Luckily, the definition in this case is natural and simple. The behavior on the border terminals, whose number
will be bounded by $4k$, is defined as a function $\Psi_b: \bterms \to \Sigma$ expressing the labeling we expect
on the border terminals.
More formally, for an instance of the border problem $\instance_b = (G,\Sigma,\cutsize,(\psi_{e,v})_{e \in E(G), v \in e})$
with border terminals $\bterms$, by
$\mathbb{P}(\instance_b)$ we denote the set of all possible functions $\Psi_b: \bterms \to \Sigma$.
For any $\Psi_b \in \mathbb{P}(\instance_b)$, we say that a pair $(X,\Psi)$ is a solution to $(\instance_b, \Psi_b)$ 
if it is a solution to \eulc on $\instance_b$ (ignoring the border terminals) and, additionally, $\Psi|_{\bterms} = \Psi_b$.
The border problem is defined as follows.

\defproblemoutput{\beulc}{An \eulc instance $\instance = (G,\Sigma,\cutsize,(\psi_{e,v})_{e \in E(G), v \in e})$ with $G$ being connected,
and a set $\bterms \subseteq V(G)$ of size at most $4k$; denote $\instance_b = (\instance, \bterms)$.}{For each $\Psi_b \in \mathbb{P}(\instance_b)$
output a solution $\sol_{\Psi_b} = X_{\Psi_b}$ to $(\instance_b,\Psi_b)$ with $|X_{\Psi_b}|$ minimum possible, or
output $\sol_{\Psi_b} = \bot$ if such a solution does not exist.}

\beulc generalizes \eulc: we may ask for $\bterms=\emptyset$ and take the output for the empty function $\Psi_b$.

Note that, for a \beulc instance $\instance_b$, we have $|\mathbb{P}(\instance_b)| = |\Sigma|^{|\bterms|} \leq s^{4\cutsize}$,
and the total number of edges output for $\instance_b$ is bounded by $\cutsize s^{4\cutsize}$. Define $q = \cutsize s^{4\cutsize}+1$.
In the recursive understanding phase for the \eulc problem we seek for $(q,2\cutsize)$-good separations.
The reason why we allow cuts of size $2\cutsize$ in the recursion, even though the solution is allowed to cut only $\cutsize$ edges,
will become more clear in the high connectivity phase. There, we would like to rely on the fact that any two connected subgraphs of more
than $q$ vertices are connected by at least $2\cutsize+1$ edge-disjoint paths, and the {\em{majority}} of these paths does not intersect the solution
we are looking for.

Assume that, using the algorithm of Lemma~\ref{lem:find-separation-ill}, we have found a $(q,2\cutsize)$-good separation $(V_1,V_2)$ of the graph $G$,
for the input instance $\instance_b$.
As $|\bterms| \leq 4\cutsize$, at least one of the sides contains at most $2\cutsize$ border terminals.
Without loss of generality we assume that $|V_1 \cap \bterms| \leq 2\cutsize$.
Now consider an instance $\newinst{\instance}_b$ that equals $\instance_b$ restricted to vertices $V_1$, with border
terminals $\newinst{\bterms} = (V_1 \cap \bterms) \cup N_G(V_2)$. In other words, we treat all the endpoints of the
cut $\delta(V_1,V_2)$ that lie in $V_1$ as border terminals. Note that, as $|\delta(V_1,V_2)| \leq 2\cutsize$, we have
$|\newinst{\bterms}| \leq 4\cutsize$ and $\newinst{\instance}_b$ is a valid \beulc instance.

Now, recursively solve the instance $\newinst{\instance}_b$, and let $Z$ be the union of all edges that appear in any
of the solutions output for $\newinst{\instance}_b$; note that $|Z| \leq q-1$.
It is not hard to see that, for any $\Psi_b \in \mathbb{P}(\instance_b)$, if there exists a solution
to $(\instance_b,\Psi_b)$, then there exists one that does not delete any edge of $E(G[V_1]) \setminus Z$.
Indeed, for any solution $(X,\Psi)$ to $(\instance_b,\Psi_b)$ one can replace the part of this solution living in $G[V_1]$
with the output to $\newinst{\instance}_b$ that is consistent with the appropriate behavior on the border terminals
$\newinst{\bterms}$, that is, with a solution to $(\newinst{\instance}_b,\Psi|_{\newinst{\bterms}})$.

Thus, all the edges of $E(G[V_1]) \setminus Z$ can be made undeletable.
In most edge-deletion problems, an undeletable edge can be contracted.
However, in the case of \eulc the situation is slightly more involved, as when contracting an edge $uv$
we need also to adjust the constraints on the edges incident to $u$ and $v$, to take into the account how
$\psi_{uv,v}$ translates the label $\Psi(u)$ into $\Psi(v)$ and vice versa.
This issue, together with a need of some reduction rule to reduce superfluous parallel edges,
causes some technical trouble in the formal proof, but does add any real difficulty to the problem.
Hence, in this illustration we simply assume that the undeletable edges may be contracted.

We remark that the operation applied to reduce parts of the graph determined to be undeletable is problem-dependent.
More complex problems, in particular node-deletion versions, may require even more careful simplification rules.

We now note that the assumptions $|Z| \leq q-1$ and $|V_1| > q$ ensure that at least one edge is contracted and we make a progress
due to the recursion step. Even more, we infer that there are only at most $q$ vertices left in $V_1$ after the contraction.
With this observation, we proceed to the estimation of the running time of the algorithm. By Lemma~\ref{lem:find-separation-ill} the time required to
find a $(q,2k)$-good edge separation is $O(2^{O(k\log q)}n^{3}\log n)=O(2^{O(k^2\log s)}n^{3}\log n)$; hence, the total running time is 
$O(2^{{O(k^2\log s)}}n^{4}\log n)$. We note that if $q$, more or less equal the bound on
the number of behaviors on the border terminals, is only a function of $\cutsize$, then we always obtain a running time of
the form $O(g(\cutsize)n^{4}\log n)$ for some function $g$.

\subsection{High connectivity phase}

We are left with the more involved part of our approach, namely, what to do when no $(q,2\cutsize)$-good edge separation is present in
the graph. Note that we can assume that the graph has more than $q(\cutsize+1)$ vertices, as otherwise a brute-force search, which
checks all the subsets of edges of size at most $\cutsize$, runs within the claimed time complexity bound.

The following simple lemma formalizes the structural properties of the graph after removing the solution. Note that this
structure is precisely the gain of the first phase of the algorithm.

\begin{lemma}\label{lem:high-structure-ill}
Let $G$ be a connected graph that admits no $(q,2\cutsize)$-good edge separation. Let $F$ be a set of edges of size at most $\cutsize$, such
that $G\setminus F$ has connected components $C_0,C_1,\ldots,C_\ell$. Then (i) $\ell\leq \cutsize$, and (ii) all the components $C_i$
except at most one contain at most $q$ vertices.

Moreover, for any two connected subgraphs $Z_1$ and $Z_2$ of $G$ that are vertex-disjoint and both of them contain more than $q$ vertices,
there exist at least $2\cutsize+1$ edge-disjoint paths between vertices of $Z_1$ and vertices of $Z_2$.
\end{lemma}

We would like to remark that if we apply the framework
directly to the node-deletion problems, we do not have any bound on $\ell$, i.e., the number of components --- in the
node-deletion setting we need additional tools here.

Fix some behavior on the border terminals $\Psi_b : \bterms \to \Sigma$; we iterate through all of them, which
gives $2^{O(k\log s)}$ overhead to the running time. Assume that there exists a solution $X\subseteq E(G)$ for this particular
choice. Without loss of generality let $X$ be of minimum size. Let $C_0,\ldots,C_\ell$ be components of $G\setminus X$, as in
Lemma~\ref{lem:high-structure-ill}, where $|V(C_i)|\leq q$ for $i=1,2,\ldots,\ell$.
Note that the assumption $|V(G)| > q(\cutsize+1)$ implies that $|V(C_0)| > q$, that is, the connected component of unbounded
size is actually huge. We call $C_0$ the {\emph{big component}}, and other components are {\emph{small components}}.

We now explain the general methodology how to highlight the solution $X$, using Lemma~\ref{lem:random}.
Let $V(X)$ denote the set of endpoints of the edges of $X$.
For every component $C_i$, choose its arbitrary spanning tree
$T_i$. Let $A_1=\bigcup_{i=1}^\ell E(T_i)$ be the set of edges of the spanning trees of small components. As
$\ell\leq \cutsize$, we have that $|A_1|\leq (q-1)\cutsize$. For every vertex $u\in V(X)\cap V(C_0)$ construct an arbitrary subtree $T_0^u$ of $T_0$
such that $u\in V(T_0^u)$ and $|V(T_0^u)|=q+1$, and let $A_2=\bigcup_{u\in V(X)\cap V(C_0)} E(T_0^u)$. We
have that $|V(X)|\leq 2\cutsize$ and hence $|A_2|\leq 2q\cutsize$.

We say that a set $S \subseteq E(G)$ \emph{interrogates} the solution $X$ if $S \cap X = \emptyset$
but $A_1 \cup A_2 \subseteq S$. Note that a family $\randfamily$ constructed by Lemma~\ref{lem:random}
for the universe $E(G)$ and constants $a = (3q-1)\cutsize$ and $b=\cutsize$ contains a set that interrogates $X$.
Hence we may branch into $|\randfamily|$ cases, guessing a set $S$ that interrogates the solution we are looking for. We refer to Figure~\ref{fig_spojnosc1} for an illustration.

\begin{figure}[ht]
\begin{center}
\includegraphics[width=5in]{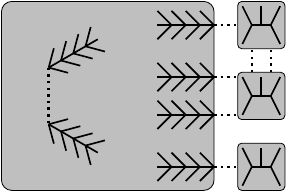}%
\caption{An illustration of the application of Lemma~\ref{lem:random} in the high connectivity phase.
The edges of the solution $X$ are dotted, and the edges required to be in the interrogating
set $S$ are thick.
We require that $S$ contains spanning trees of all small connected components
and large subgraphs attached to endpoints of the edges of $X$ in the big connected component.
Note that it is possible that an edge of the solution has both endpoints in the big connected component.
In this case we require that $S$ contains large subgraphs attached to its both endpoints.
\label{fig_spojnosc1}}
\end{center}
\end{figure}

The set $S$ is our way to highlight the solution $X$.
Note that there are three main properties of an interrogating set:
\begin{enumerate}
\item it is disjoint with the solution;
\item it spans all the small connected components; and
\item it spans a large connected subgraph around each endpoint of an edge of $X$ that belongs to the big connected component.
\end{enumerate}
In the subsequent arguments we will heavily exploit all three properties.
Our goal is to deduce $X$ using its interrogating set $S$; formally, we are going to find a minimum solution
to $(\instance_b, \Psi_b)$ that is additionally interrogated by $S$.

We analyze connected components of the graph $(V(G),S)$. Each such connected component
is called a {\emph{stain}}. A stain is \emph{big} if it contains more than $q$ vertices, and \emph{small} otherwise.
Note that any (unknown to us) connected component $C_i$ for $i \geq 1$ is a small stain, whereas
all big stains are contained in $C_0$.
Let $S^\bigcc$ be the union of vertex sets of all big stains.
The following structural observation greatly limits the number of possible sets $X$ to consider.
\begin{lemma}\label{lem:ill-options}
For any connected component $D$ of $G \setminus S^\bigcc$, exactly one of the following is true:
\begin{enumerate}
\item no edge incident to $D$ is contained in $X$, and $D \subseteq C_0$;
\item $D$ contains no vertex of $C_0$, and the small stains contained in $D$
are in one-to-one correspondence with components $C_i$ of $G \setminus X$ that are contained in $D$.
\end{enumerate}
\end{lemma}
\begin{proof}
If $D$ contains no vertex of $C_0$, the second property in the second point follows from the assumption
that $S$ contains a spanning tree $T_i$ of each connected component $C_i$ for $i \geq 1$, and that
$S$ is disjoint from the solution.

If $D$ contains a vertex of $C_0$, but the first point is not satisfied, then there exists a vertex
$v$ that is both in $D \cap C_0$ and is an endpoint of an edge of $X$. However, then $S$ should contain $T_0^v$
and $v$ belongs to a big stain, a contradiction to the definition of $D$.
\end{proof}
We remark that in the node-deletion setting the situation is a bit more complex, but an equivalent
of Lemma~\ref{lem:ill-options} can still be proven and exploited.
We also refer to Figure~\ref{fig_spojnosc2} for an illustration for Lemma~\ref{lem:ill-options}.

\begin{figure}[ht]
\begin{center}
\includegraphics[width=5in]{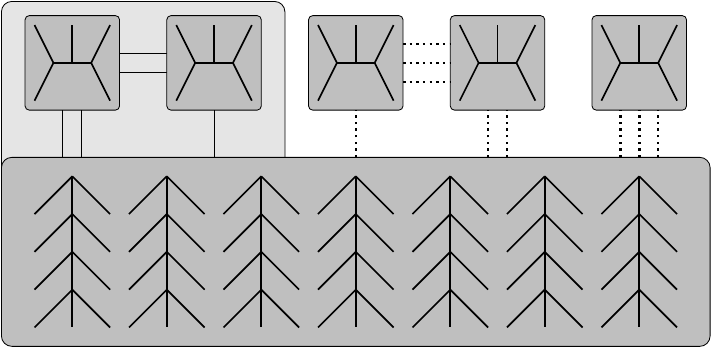}%
\caption{
An illustration of the situation after the set $S$ is guessed in the high connectivity phase:
the bottom half is the union of all big stains, $S^\bigcc$; for each connected component
of $G \setminus S^\bigcc$ we need to decide whether it goes entirely
to $C_0$ (the component on the left) or whether we cut it according to the stains
(the components in the middle and on the right).\label{fig_spojnosc2}}
\end{center}
\end{figure}

The main difficulty of the high connectivity phase is to deduce, for each connected component $D$
of $G \setminus S^\bigcc$, which option of Lemma~\ref{lem:ill-options} is true for $D$.
Once this decision is made, in all problems considered by us it is easy to deduce the entire set $X$.
Let us now illustrate this claim with the running example of \eulc.
We remark that we now work closely in the \eulc setting; the following argumentation is highly problem-dependent.

We need the following observation: as we seek for a solution $X$ disjoint with $S$, if we fix
a label of a vertex $v \in Z$ for some stain $Z$, the constraints on the edges of $S$ in $Z$
propagate the labeling to the whole stain $Z$. 
Here, we heavily rely on the fact that the constraints in the \ulcgen problem are permutations.
A labeling of $Z$ that originated from a labeling
of a single vertex, propagated through the constraints of $S$, is called a \emph{reasonable}
labeling of $Z$. Note that there are at most $|\Sigma|$ reasonable labelings of a single stain.

Thus, if for a connected component $D$ of $G \setminus S^\bigcc$ we know that the second option of Lemma~\ref{lem:ill-options}
is true, for each small stain $Z$ contained in $D$ we may find (by enumerating the reasonable labelings of $Z$) a labeling of $Z$ that minimizes the number
of unsatisfied constraints in $G[Z]$. 
On the other hand, if the first option of Lemma~\ref{lem:ill-options} is true for $D$, then
we may forget $D$ for a moment, solve the problem in the rest of the graph, and extend the obtained labeling of $S^\bigcc$ to $D$.
The last step should be possible, as we decided not to delete any constaint incident to $D$.

Thus, we are left with the quest to decide, for each connected component of $G \setminus S^\bigcc$, which
option of Lemma~\ref{lem:ill-options} to choose.
In \eulc, the main trick in this quest is to correctly label $S^\bigcc$.
Consequently, we now focus on big stains. As in the high connectivity phase our graph does not admit $(q,2\cutsize)$-good separation,
any two big stains $Z_1$ and $Z_2$ are connected by a family $\mathcal{P}$ of at least $2\cutsize+1$ edge-disjoint paths.
Let us focus on one path $P \in \mathcal{P}$ and assume that $P$ is disjoint with the solution $X$.
Using again the fact that all constraints in the \ulcgen{} problem are permutations, we infer that for any label assigned to the first vertex of $P$,
there exists a unique way to label all the vertices of $P$ while satisfying all the constraints on the edges of $P$.

Fix now one of at most $|\Sigma|$ reasonable labelings of $Z_1$; denote it $\Psi_1$.
Assuming $Z_1$ is labeled according to $\Psi_1$, there is a unique way to label the vertices of a path $P \in \mathcal{P}$
assuming $P$ is disjoint with the solution $X$. Moreover, the obtained (unique) label of the endpoint of $P$ yields
a unique reasonable labeling of $Z_2$.
As the \emph{majority} of the paths of $\mathcal{P}$ are disjoint with the solution $X$, the \emph{majority}
of paths of $\mathcal{P}$ should yield \emph{the same} labeling of $Z_2$, given the labeling $\Psi_1$ of $Z_1$.
Consequently, fixing a reasonable labeling on one big stain, provides us with a unique way to label
all other big stains, even without knowing the set $X$.
Therefore, we may branch into at most $|\Sigma|$ ways, guessing the labeling of \emph{all} big stains, that is, of the set $S^\bigcc$.

We remark that the argumentation in the previous paragraph is the sole reason for considering cuts of size $2\cutsize$
instead of only $\cutsize$ in the recursive understanding phase.

Hence, we have obtained a labeling $\Psi^\bigcc$ of $S^\bigcc$. Consider a component $D$ of $G \setminus S^\bigcc$
and assume that the first option of Lemma~\ref{lem:ill-options} is true for $D$. Consequently, the labeling
$\Psi^\bigcc$ can be (uniquely) extended to $D$ without violating any constraint incident to $D$.
Moreover, observe that an implication is true in the other direction as well: if $\Psi^\bigcc$ can be extended
to $D$ without violating any constraint incident to $D$, then we can greedily
choose the first option of Lemma~\ref{lem:ill-options} for $D$, as there is no need to delete any edge
incident to $D$ (assuming labeling $\Psi^\bigcc$).

This finishes the description of high connectivity phase
and the entire algorithm for \eulc.

\section{Preliminaries}\label{sec:prelims}

In this section prepare ground for formal proofs of the theorems stated in the introduction.
We start with setting up the notation, and then we give definitions and preliminary results
on ``good separations'', both in the edge- and node-deletion variants.

\subsection{Notation}

We use standard graph notation. As the definitions vary among the algorithms, we introduce problem-specific notation at the beginning of each corresponding section, describing whether we work on graphs, multigraphs or some other structures. Generally, by a graph we denote the pair $G=(V,E)$ consisting vertex set $V$ and edge set $E$. By $V(G)$ we denote the vertex set of $G$ and by $E(G)$ the edge set. For $F\subseteq E(G)$ by $V(F)$ we denote the set of endpoints of $F$. For $V_1,V_2\subseteq V(G)$, by $\edges(V_1,V_2)$ we denote the set of edges with one endpoint in $V_1$ and second in $V_2$. For $W\subseteq V(G)$, by $G[W]$ we denote the graph induced by $W$. For $u\in V(G)$, by $N(u)$ we denote the neighborhood of $u$, i.e., $N(u)=\{v\ |\ uv\in E(G)\}$, and the closed neighborhood is defined by $N[u]=N(u)\cup \{u\}$. We extend this notion to subsets in the following manner: for $W\subseteq V(G)$, $N[W]=\bigcup_{u\in W} N[u]$, and $N(W)=N[W]\setminus W$. If $X$ is a set of vertices or edges, by $G\setminus X$ we denote the graph $G$ with edges or vertices of $X$ removed.

\subsection{Contractions}

In this section we gather the definitions and simple facts connected to the notion of an {\emph{edge contraction}}. Our definition works in multigraphs.

\begin{definition}
Given a multigraph $G$ and an edge $uv\in E(G)$, {\emph{contraction}} of $uv$ is the operation that yields a new multigraph $G'$ with following properties:
\begin{itemize}
\item $V(G')=V(G)\setminus \{u,v\}\cup \{w_{uv}\}$, where $w_{uv}\notin V(G)$ is a new vertex;
\item $E(G')$ is first constructed from $E(G)$ by deleting all edges $uv$, and then substituting all occurrences of $u$ or $v$ by $w_{uv}$ in all the other edges.
\end{itemize}
\end{definition}

In other words, we preserve multiple edges but delete loops. With contraction of an edge $uv$ we can associate a mapping $\iota_{uv}:V(G)\to V(G')$ by setting $\iota_{uv}(u)=\iota_{uv}(v)=w_{uv}$ and $\iota_{uv}(t)=t$ for all $t\in V(G)\setminus \{u,v\}$. For $w\in V(G)$, we say that vertex $w$ is {\emph{contracted onto}} $\iota_{uv}(w)$. By somewhat abusing the notation we identify all the edges of $E(G')$ with the edges from $E(G)$ in which they originated. By contracting the edge set $S\subseteq E(G)$ we mean consecutively contracting edges of $S$ in an arbitrary order. Note that if some edge already disappeared from the graph because of becoming a loop, we omit this contraction. We usually use $\iota$ to denote the composition of all the mappings $\iota_{uv}$ corresponding to the performed contractions. The following lemma, which can be considered a folklore, implies that the order of performing the contractions does not matter.

\begin{lemma}\label{lem:contract-properties}
Let $G$ be a multigraph, $D\subseteq E(G)$ be a set of edges and $G'$ be the graph obtained by contracting $D$ in an arbitrary order. Then the following holds:
\begin{itemize}
\item $\iota(u)=\iota(v)$ if and only if $u$ and $v$ can be connected via a path consisting of edges from $D$, for $u,v\in V(G)$.
\item $\iota^{-1}(v)$ induces a connected subgraph of $G$, for $v\in V(G')$;
\item $E(G')\subseteq E(G)$;
\item an edge $vw\in E(G)$ is contained also in $E(G')$ if and only if $\iota(v)\neq \iota(w)$;
\item if $X\subseteq V(G')$, then $G'[X]$ is a maximal connected component if and only if $G[\iota^{-1}(X)]$ is;
\item in particular, $G$ is connected if and only if $G'$ is;
\item for every set $F$ such that $D\cap F=\emptyset$, $G'\setminus F$ can be obtained by contracting $D$ in $G\setminus F$.
\end{itemize}
\end{lemma}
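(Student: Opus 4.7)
The plan is to prove the first two bullets first, as they are the structural statements on which the others depend; the remaining bullets will then follow with short arguments. I would proceed by induction on $|D|$, the base case $|D|=0$ being immediate since $\iota$ is the identity and the graph is unchanged.

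For the first bullet, define an equivalence relation $\sim_D$ on $V(G)$ by $u\sim_D v$ iff $u=v$ or there is a path from $u$ to $v$ consisting of edges of $D$. The claim to prove inductively is that the fibers of $\iota$ are precisely the $\sim_D$-classes. If $xy\in D$ is the first edge contracted, producing an intermediate multigraph $G_1$ with map $\iota_{xy}$, then $\iota_{xy}$ merges exactly the classes containing $x$ and $y$, and I would apply the induction hypothesis in $G_1$ to the edge set $D'\subseteq E(G_1)$ consisting of the images of $D\setminus\{xy\}$, noting that $D'$-connectivity in $G_1$ coincides with $\sim_D$-equivalence in $G$ modulo the initial merge. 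A key observation is that since $\sim_D$ depends only on $D$ and not on any ordering of its edges, this also yields independence of the order of contractions \emph{for free}. The second bullet is then immediate: the fiber $\iota^{-1}(v)$ is a $\sim_D$-class, and the edges of $D$ inside it provide a connected spanning subgraph.

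The third and fourth bullets are essentially by construction: the contraction operation only ever deletes edges (those that would become loops, or the contracted edge itself) and renames endpoints, so $E(G')\subseteq E(G)$ is trivial; and an edge $vw\in E(G)$ is eventually turned into a loop and discarded exactly when its two endpoints become identified, i.e., when $\iota(v)=\iota(w)$. The fifth bullet follows by combining bullets one, two, and four: a walk in $G'$ lifts to a walk in $G$ by replacing each vertex with a $D$-path inside its fiber and each edge with the corresponding original edge of $G$, and conversely a walk in $G$ projects onto a walk in $G'$ (possibly with some steps collapsed to the same vertex). Hence $G[\iota^{-1}(X)]$ is a maximal connected component iff $G'[X]$ is. The sixth bullet is the special case $X=V(G')$ of the fifth.

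The last bullet, commuting contraction with deletion when $D\cap F=\emptyset$, is handled by comparing vertex and edge sets directly. Both $G'\setminus F$ and the contraction of $D$ inside $G\setminus F$ have vertex set $V(G)$ quotiented by $\sim_D$ (which depends only on $D$ and so is unaffected by removing $F$), and in both cases the surviving edges are exactly those in $E(G)\setminus D\setminus F$ with endpoints relabelled via $\iota$ and loops discarded. The main obstacle in the entire proof is bookkeeping around the first bullet: rigorously justifying that $\iota$ is well-defined independently of the contraction order, and matching its fibers to $\sim_D$-classes throughout the induction. Once this is pinned down, every other bullet is a one-line corollary.
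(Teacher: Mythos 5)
The paper states this lemma as folklore and gives no proof, so there is no argument of record to compare against; your proposal must stand on its own merits, and it does. The inductive reduction of the first bullet to the order-independent relation $\sim_D$ is the right backbone, and once fibers of $\iota$ are identified with $\sim_D$-classes, the remaining bullets follow exactly as you sketch: bullets three and four are by construction, the component bullet via lifting and projecting walks, and the last bullet by directly comparing which edges of $E(G)\setminus D\setminus F$ survive as non-loops on both sides. One small point worth making explicit in a full write-up: the hypothesis $D\cap F=\emptyset$ only has content when $F\subseteq E(G)$ (for a vertex set $F$ the intersection is vacuously empty, yet the claim can fail if $F$ meets $V(D)$), and your argument correctly treats $F$ as an edge set. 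The bookkeeping you flag around order-independence is indeed the only place requiring care, and your $\sim_D$-based framing resolves it cleanly.
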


From Lemma~\ref{lem:contract-properties} it follows, that given a graph $G=(V,E)$ and the set $D\subseteq E$, in time $O(|V|+|E|)$ we can construct the graph $G'$ obtained by contracting edges of $D$. We simply find connected components of the graph $(V,D)$, construct a new vertex for each of them, and for every edge of $E$ check whether it should be introduced in $G'$, and where.

\subsection{Preliminary results}

We start with a formal proof of Lemma~\ref{lem:random}.

\begin{proof}[Proof of Lemma~\ref{lem:random}]
For $a = 0$ or $b = 0$ the lemma is trivial; assume then $a,b \geq 1$.

We use the standard technique of splitters.
A $(n,r,r^2)$-{\em{splitter}} is a family of functions from $\{1,2,\ldots,n\}$ to
$\{1,2,\ldots,r^2\}$,
such that for any subset $X \subseteq \{1,2,\ldots,n\}$ of size $r$, one of the functions
in the family is injective on $X$. Naor et al. \cite{naor-schulman-srinivasan-derandom}
gave an explicit
construction of an $(n,r,r^2)$-splitter of size $O(r^6 \log r \log n)$
using $O(\textrm{poly}(r)\cdot n \log n)$ time.

Without loss of generality, assume that $a \leq b$ and that $U = \{1,2,\ldots,n\}$. Let $c=\min(a+b,n)$.
We construct a $(n,c,c^2)$-splitter using the algorithm of Naor et al. and,
for each function $f$ in the splitter and for each subset 
$S' \subseteq \{1,2,\ldots,c^2\}$ of size $a$, we put into the family
$\randfamily$ the set $f^{-1}(S') \subseteq U$. Assume now that we have $A,B\subseteq U$ such that $|A|\leq a$ and $|B|\leq b$. Obtain $A'$ and $B'$ by adding arbitrary elements of $U\setminus (A\cup B)$ to $A$ and $B$ so that $|A'|+|B'|=c$. By definition of the splitter, there exists some $f$ in the splitter that is injective on $A'\cup B'$. To finish the proof one needs to observe that if we take $S=f^{-1}(f(A'))$, then $A\subseteq S$ and $B\cap S=\emptyset$.

The time bound and the size of the constructed family $\randfamily$ follow
from the bound on the size of the splitter and the fact that
there are at most $\binom{(a+b)^2}{a} = 2^{O(a \log (a+b))}$ choices for the set $S'$;
note that for fixed $f$ and $S'$, the set $f^{-1}(S')$ can be computed
in linear time.
\end{proof}

A well-known result by Nagamochi and Ibaraki~\cite{nagamochi-ibaraki} states that the graph can be efficiently sparsified while preserving all the essential connectivity.

\begin{lemma}[\cite{nagamochi-ibaraki}]
\label{lem:japanese}
Given an undirected graph $G=(V,E)$ and an integer $k$, in $O(k(|V|+|E|))$ time we can obtain 
a set of edges $E_0 \subseteq E$ of size at most $(k+1)(|V|-1)$,
such that for any edge $uv \in E\setminus E_0$ in the graph
$(V,E_0)$ there are at least $k+1$ edge-disjoint paths between $u$ and $v$.
\end{lemma}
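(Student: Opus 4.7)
The plan is to construct $E_0$ as the union of $k+1$ edge-disjoint spanning forests, extracted iteratively in a greedy fashion. Formally, set $E^{(0)} = E$, and for $i = 1, 2, \ldots, k+1$ let $F_i$ be any spanning forest (i.e., a maximal acyclic subgraph) of the graph $(V, E^{(i-1)})$, and put $E^{(i)} = E^{(i-1)} \setminus F_i$. Finally, define $E_0 = F_1 \cup F_2 \cup \cdots \cup F_{k+1}$. Each $F_i$ can be computed by a single BFS/DFS traversal, so each iteration runs in $O(|V|+|E^{(i-1)}|) = O(|V|+|E|)$, and the total time is $O(k(|V|+|E|))$ as required.

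The size bound is immediate: each $F_i$ is a forest on $|V|$ vertices and therefore has at most $|V|-1$ edges, yielding $|E_0| \leq (k+1)(|V|-1)$.

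The edge-connectivity guarantee is the main content of the lemma, and it follows from exploiting the \emph{maximality} of each $F_i$. Fix any edge $uv \in E \setminus E_0$. For every $i \in \{1, \ldots, k+1\}$ the edge $uv$ belongs to $E^{(i-1)}$ but was not selected into $F_i$; since $F_i$ is a maximal forest of $(V, E^{(i-1)})$, adjoining $uv$ would close a cycle, which means $u$ and $v$ already lie in the same connected component of $F_i$. Hence $F_i$ contains a $uv$-path $P_i$. Because the sets $F_1, \ldots, F_{k+1}$ are pairwise disjoint by construction, the paths $P_1, \ldots, P_{k+1}$ are $k+1$ pairwise edge-disjoint $uv$-paths inside $(V, E_0)$, as desired.

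There is no serious obstacle in this proof: the only conceptually delicate point is the use of maximality to force $u$ and $v$ into the same component of each $F_i$, and the only implementation-level remark is that a spanning forest of a sparse graph can indeed be extracted in linear time (so the running time does not degrade as $E^{(i-1)}$ shrinks). A tighter single-pass implementation via the maximum adjacency (scan-first) ordering of Nagamochi and Ibaraki is known, but it is not needed to meet the stated $O(k(|V|+|E|))$ bound; the straightforward iterated-forest construction above already suffices.
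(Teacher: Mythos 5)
Your proof is correct and follows essentially the same approach as the paper's: iteratively peel off $k+1$ spanning forests and argue that an unchosen edge $uv$ must have its endpoints in the same component of each forest (by maximality of the forest), yielding $k+1$ edge-disjoint $uv$-paths within $E_0$.
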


\begin{proof}
The algorithm performs exactly $k+1$ iterations.
In each iteration it finds a spanning forest $F$ of the graph $G$, adds
all the edges of $F$ to $E_0$ and removes all the edges of $F$ from the graph $G$.

Observe that for any edge $uv$ remaining in the graph $G$, the vertices $u$ and $v$
are in the same connected components in each of the forests found.
Hence in each of those forests we can find a path between $u$ and $v$;
thus, we obtain $\cutsize+1$ edge-disjoint paths between $u$ and $v$.
\end{proof}

\subsection{Good separations in edge-deletion problems}\label{sec:edge-cuts}

For sake of completeness, let us recall the definition of a $(q,k)$-good edge separation.

\begin{definition}
Let $G$ be a connected graph. A partition $(V_1,V_2)$ of $V(G)$ is called a $(q,k)$-{\emph{good edge separation}}, if
\begin{itemize}
\item $|V_1|,|V_2|>q$;
\item $|\edges(V_1,V_2)|\leq k$;
\item $G[V_1]$ and $G[V_2]$ are connected.
\end{itemize}
\end{definition}

We are ready to present proofs of lemmas regarding algorithms finding
good edge separations.

\begin{lemma}\label{lem:find-separation}
There exists a deterministic algorithm that, given an undirected, connected graph $G$ on $n$ vertices along with integers $q$
and $k$, in time $O(2^{O(\min(q,k)\log(q+k))}n^3\log n)$ either finds a $(q,k)$-good edge separation, or correctly concludes
that no such separation exists.
\end{lemma}

\begin{proof}
The algorithm iterates through all the sets from the family $\randfamily$, obtained from Lemma~\ref{lem:random} for universe $U=E(G)$ and constants $a=2q$ and $b=k$. For a set $S\in \randfamily$, we obtain a new graph $H$ by contracting all the edges of $S$. Let $\iota:V(G)\to V(H)$ be the mapping that maps every vertex of $G$ to the vertex it is contracted onto. We say that a vertex $u\in V(H)$ is {\emph{big}} if $|\iota^{-1}(u)|> q$. Now, for every pair of big vertices $u_1,u_2\in V(H)$ we compute some minimum edge cut between $u_1$ and $u_2$ if it is of size at most $k$, or find that it has to have larger size. This can be done in $O(k^2n^3)$ time, since first we can sparsify 
the graph by removing all the edges outside of the set $E_0$ returned by Lemma~\ref{lem:japanese}, and next for each of the $O(n^2)$ pairs of big vertices using the classical algorithm by Ford an Fulkerson in $O(k^2n)$ time find a cut of size at most $k$ if it exists.
Assume that for some pair of big vertices $u_1,u_2$ we have found a minimum edge cut $F_{u_1,u_2}$, of size at most $k$. We claim that $F_{u_1,u_2}$ induces a $(q,k)$-good edge separation of $G$, which can be returned as the output of the algorithm.

Let $v_1\in \iota^{-1}(u_1)$ and $v_2\in \iota^{-1}(u_2)$ be arbitrary vertices. Let $V_1,V_2$ be the sets of vertices reachable from $v_1,v_2$ in $G\setminus F_{u_1,u_2}$, respectively. We claim that $(V_1,V_2)$ is a $(q,k)$-good edge separation of $G$. Firstly, observe that $V_1$ and $V_2$ are disjoint. Otherwise there would be a path from $v_1$ to $v_2$ in $G$ that avoids $F_{u_1,u_2}$, which after applying the contractions would become a path from $u_1$ to $u_2$ in $H$ that avoids $F_{u_1,u_2}$. Secondly, observe that $V_1\cup V_2=V(G)$. It follows from the well-known properties of minimum cuts that in $H\setminus F_{u_1,u_2}$ every vertex is reachable either from $u_1$ or from $u_2$. As graphs $G[\iota^{-1}(u)]$ are connected for $u\in H$, we find that in $G$ every vertex is reachable either from $v_1$ or from $v_2$. Thirdly, observe that $|V_1|,|V_2| > q$, as $\iota^{-1}(u_1)\subseteq V_1$ and $\iota^{-1}(u_2)\subseteq V_2$.

We are left with proving that if the graph admits a $(q,k)$-good edge separation, then for at least one set $S_0\in \randfamily$ we obtain two big vertices that can be separated by an edge cut of size at most $k$. This ensures that if no solution has been found for any $S\in \randfamily$, then the algorithm can safely provide a negative answer. Fix some $(q,k)$-good edge separation $(V_1,V_2)$ and let $T_1,T_2$ be arbitrary subtrees of $G[V_1]$ and $G[V_2]$, respectively, each having exactly $q+1$ vertices. By the choice of family $\randfamily$, there exists $S_0\in\randfamily$ that contains all the edges of $T_1$ and $T_2$, but is disjoint with $\delta(V_1,V_2)$. In the step when $S_0$ is considered, after applying contractions all the vertices of $T_1$ are contracted onto one vertex $u_1$, all the vertices of $T_2$ are contracted onto one vertex $u_2$, but edges from $\delta(V_1,V_2)$ are not being contracted. Hence, we obtain big vertices $u_1,u_2$ that can be separated by an edge cut of size at most $k$.
\end{proof}

\begin{lemma}\label{lem:high-structure}
Let $G$ be a connected graph that admits no $(q,k)$-good edge separation. Let $F$ be a set of edges of size at most $k$, such
that $G\setminus F$ has connected components $C_0,C_1,\ldots,C_\ell$. Then (i) $\ell\leq k$, and (ii) all the components $C_i$
except at most one contain at most $q$ vertices.
\end{lemma}

\begin{proof}
Claim (i) follows directly from the fact, that removing an edge from the graph can increase the number of connected components by at most one. For Claim (ii), observe that if two components had at least $q$ vertices, then $F$ could serve as an edge cut between their vertex sets of size at most $k$. It follows that the minimum edge cut between their vertex sets would also have size bounded by $k$, hence it would induce a $(q,k)$-good edge separation in $G$.
\end{proof}

We now show that we can improve the polynomial factor in the running time of the procedure of Lemma~\ref{lem:find-separation}, at the cost of randomization.

\begin{lemma}
\label{lem:find-separation-fast}
There exists a randomized algorithm that, given an undirected, connected graph $G=(V,E)$ along with integers $q$ and $k$, in time $\tilde{O}(2^{O(\min(q,k)\log(q+k))}(|V|+|E|))$ either finds a $(q,k)$-good edge separation, or correctly concludes that no such separation exists with probability at least $(1-1/|V|^2)$.
\end{lemma}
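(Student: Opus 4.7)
The plan is to keep the outer skeleton of the deterministic algorithm from the proof of Lemma~\ref{lem:find-separation} intact and to replace only its inner all-pairs max-flow computation with a randomized near-linear time cut-finding routine based on Karger's edge-contraction algorithm. I would first invoke Lemma~\ref{lem:random} with $U=E(G)$, $a=2q$ and $b=k$ to obtain $\randfamily$, and for each $S\in\randfamily$ contract the edges of $S$ in $G$ to form a multigraph $H$ in linear time. The correctness argument from Lemma~\ref{lem:find-separation} carries over unchanged: if a $(q,k)$-good edge separation $(V_1,V_2)$ of $G$ exists then, for some $S_0\in\randfamily$, the contraction collapses a $(q+1)$-vertex subtree of $G[V_1]$ and of $G[V_2]$ into two distinct heavy super-vertices $u_1,u_2$ of $H$ while keeping every edge of $\edges(V_1,V_2)$ intact. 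It therefore suffices, per $S$, to devise a randomized $\tilde{O}(|V|+|E|)$-time subroutine that either finds an edge cut of $H$ of size at most $k$ whose two sides each contain a super-vertex of $G$-weight exceeding $q$ or certifies its absence, and then to union-bound over the family.

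For this inner subroutine I would first sparsify $H$ via Lemma~\ref{lem:japanese} with parameter $k$, producing in $O(k(|V|+|E|))$ time an equivalent instance (for all cuts of size at most $k$) with $O(k|V|)$ edges. Then I would run Karger's randomized edge-contraction procedure recursively, in the Karger--Stein style: iteratively pick an edge of the current multigraph uniformly at random (weighted by multiplicity), contract it, continue until $|V(H)|/\sqrt{2}$ super-vertices remain, and recurse twice on the shrunk graph, bottoming out once only $O(q+k)$ super-vertices remain. At the base of the recursion I would enumerate all $2^{O(\min(q,k)\log(q+k))}$ cuts of size at most $k$ by brute force and check each in linear time against the ``heavy-versus-heavy'' condition; the enumeration cost is absorbed by the exponential prefactor of the target running time, while a single full recursive Karger--Stein pass runs in $\tilde{O}(|V|+|E|)$ time using standard data structures.

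The hard part will be controlling the probability that the target cut $\edges(V_1,V_2)$ of $H$ survives the Karger--Stein contractions, because $H$ may well contain strictly smaller global cuts (for instance around low-degree non-heavy super-vertices) that absorb most of the contractions and do not separate heavy super-vertices. My plan to circumvent this is to rely on the standard Karger--Stein bound for a fixed minimum cut between prescribed terminals: for any fixed pair of super-vertices $s,t$ of $H$ and any minimum $s$--$t$ cut of size at most $k$, the cut survives the entire recursion with probability at least $1/\mathrm{poly}(q+k, \log|V|)$, after appropriate handling of forced non-$s$--$t$ merges and after quantifying that after sparsification the target cut is indeed an $s$--$t$ minimum cut for the right guess $S_0$. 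Repeating the randomized recursive contraction $\mathrm{poly}(q+k)\cdot \log^{O(1)}|V|$ times with independent randomness drives the per-$S$ failure probability below $(|V|^{2}\cdot|\randfamily|)^{-1}$ while remaining within $\tilde{O}(|V|+|E|)$ time per $S$, since the $\mathrm{poly}(q+k)$ factor is subsumed by the $2^{O(\min(q,k)\log(q+k))}$ prefactor. A union bound across the $|\randfamily|$ trials then yields the required $1-1/|V|^2$ success rate, and a false negative -- reporting no good separation when one exists -- has probability at most $|V|^{-2}$ by the same amplification.
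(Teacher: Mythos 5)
You correctly identify the central obstacle: after contracting $S$, the resulting graph may contain many small super-vertices, so that $H$ can have global cuts of size far below $k$ around those vertices and the target cut $\delta(V_1,V_2)$ is far from a global minimum cut. But the fix you propose does not work. There is no polynomial-probability guarantee for a Karger--Stein contraction process targeting a \emph{fixed $s$--$t$ minimum cut}. The standard Karger--Stein analysis certifies survival of a global minimum cut (and more generally of $\alpha$-approximate cuts with probability $|V|^{-\Theta(\alpha)}$); if the global minimum cut of $H$ has size $\lambda \ll k$, a fixed $s$--$t$ cut of size $k$ is only a $(k/\lambda)$-approximate cut and can survive with probability as low as $|V|^{-\Theta(k/\lambda)}$, which is super-polynomially small and cannot be boosted within the claimed time. ``Appropriate handling of forced non-$s$--$t$ merges'' does not rescue this; conditioning on never merging $s$ with $t$ does not restore the minimum-degree $\geq k$ lower bound on the current edge count that drives Karger's analysis.

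The paper's proof of Lemma~\ref{lem:find-separation-fast} avoids this entirely by inserting a \emph{second} contraction step that your plan omits: after contracting $S$ to obtain $H'$, it additionally contracts every edge of $H'$ incident to a small super-vertex (one onto which at most $q$ original vertices were mapped). In the resulting graph $H$, every super-vertex is big, so \emph{any} edge cut of $H$ of size at most $k$ separates two sides each containing more than $q$ original vertices, i.e., is a genuine $(q,k)$-good edge separation of $G$. The search for such a cut is then a plain global minimum cut problem, and the standard near-linear Karger guarantee applies.

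This changes what $S_0$ must satisfy. Your argument uses only two $(q+1)$-vertex trees $T_1\subseteq G[V_1]$, $T_2\subseteq G[V_2]$ (the choice from the deterministic Lemma~\ref{lem:find-separation}), which makes two super-vertices big but leaves the endpoints of the remaining edges of $\delta(V_1,V_2)$ small. The second contraction would then merge across $\delta(V_1,V_2)$ and destroy the target cut. Instead, as in the paper, $S_0$ needs to contain, for \emph{each} endpoint $u\in V(\delta(V_1,V_2))$, a $(q+1)$-vertex subtree on $u$'s side of the separation containing $u$, so that every endpoint of $\delta(V_1,V_2)$ becomes big and no edge of the target cut acquires a small endpoint. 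This is the piece of structure that makes the randomized speedup go through and that your proposal lacks.
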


\begin{proof}
Let $(V_1,V_2)$ be a $(q,k)$-good edge separation.
Intuitively, we want to have an edge-contraction process
such that no edge of $\delta(V_1,V_2)$ is contracted and each vertex which remains is big, because
then any cut of size at most $k$ gives a $(q,k)$-good edge separation, which we can find by using Karger's algorithm.
We use Lemma~\ref{lem:random} in a very similar fashion to the proof of Lemma~\ref{lem:find-separation}; however, as a few details
are different, we repeat the entire proof.

The algorithm iterates through all the sets from the family $\randfamily$,
obtained from Lemma~\ref{lem:random} for universe $U=E(G)$ and constants $a=2qk$ and $b=k$. For a set $S\in \randfamily$, we obtain a new graph $H'$ by contracting all the edges of $S$. Let $\iota':V(G)\to V(H')$ be the mapping that maps every vertex of $G$ to the vertex it is contracted onto. We say that a vertex $u'\in V(H')$ is {\emph{big}} if $|\iota'^{-1}(u')| > q$ and {\emph{small}} otherwise.
Let $S'\subseteq E(H')$ be the set of edges of $H'$ having at least one small endpoint.
We construct a graph $H$, by contracting all the edges of $S'$ in $H'$.
Let $\iota:V(G) \to V(H)$ be the mapping from the graph $G$ to the graph $H$.
Note that after contracting all the edges of $S'$ all the vertices are big in
the graph $H$ with respect to $\iota$.
By using Karger's algorithm~\cite{karger-min-cut},
in $\tilde{O}(k\log(qk)(|V|+|E|))$ time we find the minimum cut in the graph $H$
with probability at least $\left(1-\frac{1}{2^{ck\log(qk)}|V|^2\log |V|}\right)$, for some constant $c$.
If the minimum cut found is of size at most $k$, it immediately gives a
$(q,k)$-good edge separation in the graph $G$, since all the vertices of $H$ are big.

We are left with proving that if $G$ admits a $(q,k)$-good edge separation $(V_1,V_2)$, then for at least one set $S_0\in \randfamily$ the graph $H$ contains a cut of size at most $k$, providing
some (possibly different) $(q,k)$-good edge separation.
This ensures that if no solution has been found for any $S\in \randfamily$, then the algorithm can safely provide a negative answer.
For each vertex $u \in N(V_2) \subseteq V_1$ let $T^u$ be an arbitrary subtree of $G[V_1]$ containing the vertex $u$, having exactly $q+1$ vertices.
Similarly, for each vertex $u \in N(V_1)$ let $T^u$ be an arbitrary subtree of $G[V_2]$ containing $u$, having exactly $q+1$ vertices.
By the choice of the family $\randfamily$, there exists $S_0\in\randfamily$ that contains all the edges of $T^u$ for each $u \in V(\delta(V_1,V_2))$, 
but at the same time $S_0$ is disjoint with $\delta(V_1,V_2)$. 
In the step when $S_0$ is considered, after applying contractions, for each $u \in V(\delta(V_1,V_2))$ all the vertices of $T^u$ are contracted onto one vertex $u'$, which is big.
However, the edges from $\delta(V_1,V_2)$ are not being contracted.
Observe, that in the graph $H'$ no edge of $\delta(V_1,V_2)$ has a small endpoint,
and consequently all of the edges of $\delta(V_1,V_2)$ are present in the graph $H$,
and they induce a cut of size at most $k$.

Note that, the algorithm of Karger is used $O(2^{O(k\log(qk))}\log |V|)$ times,
and therefore, by the union bound, if our algorithm does not find a $(q,k)$-good edge separation,
with probability at least $(1-1/|V|^2)$ it does not exist.
\end{proof}

\newcommand{\undelV}{V^\infty}

\subsection{Good separations in node-deletion problems}\label{sec:node-cuts}

As we consider node-deletion problems in most of our results, we need to define
an appropriate variant of good separations; that is the main goal of this section.
In the edge-deletion variant,
we might have assumed that we only consider cuts that separate the graph into exactly
two connected components; this is no longer a case in the node-deletion variant.
Moreover, the applications require us to handle the possibility that some vertices
are undeletable.

It turns our that in the node-deletion problems we need to use two types of separations.
In the first one,
we require that, after removal of the separator, at least two connected components
are large.

\begin{definition}\label{def:node-good}
Let $G$ be a connected graph and $\undelV \subseteq V(G)$ a set of undeletable vertices.
A triple $(Z,V_1,V_2)$ of subsets of $V(G)$ is called
a $(q,\cutsize)$-{\em{good node separation}}, if 
\begin{itemize}
\item $|Z| \leq \cutsize$, 
\item $Z \cap \undelV = \emptyset$,
\item $V_1$ and $V_2$ are vertex sets of two different connected components of $G \setminus Z$; and
\item $|V_1 \setminus \undelV|,|V_2 \setminus \undelV| > q$.
\end{itemize}
\end{definition}

In the second one we require a bunch of connected components with the
same neighbourhood.

\begin{definition}\label{def:flowercut}
Let $G$ be a connected graph, $\undelV \subseteq V(G)$ a set of undeletable vertices, and $\bterms \subseteq V(G)$ a set of border terminals in $G$.
A pair $(Z,(V_i)_{i=1}^\ell)$ is called a $(q,\cutsize)$-{\em{flower separation}} in $G$
(with regard to border terminals $\bterms$), if the following holds:
\begin{itemize}
\item $1 \leq |Z| \leq \cutsize$ and $Z \cap \undelV = \emptyset$; the set $Z$ is the {\em{core}} of the flower separation $(Z,(V_i)_{i=1}^\ell)$;
\item $V_i$ are vertex sets of pairwise different connected components of $G \setminus Z$, each set $V_i$ is a {\em{petal}} of the flower separation $(Z,(V_i)_{i=1}^\ell)$;
\item $V(G) \setminus (Z \cup \bigcup_{i=1}^\ell V_i)$, called a {\em{stalk}}, contains more than $q$ vertices of $V\setminus \undelV$;
\item for each petal $V_i$ we have $V_i \cap \bterms = \emptyset$,
  $|V_i \setminus \undelV| \leq q$ and $N_G(V_i) = Z$;
\item $|(\bigcup_{i=1}^\ell V_i) \setminus \undelV| > q$.
\end{itemize}
\end{definition}

We now show how to detect
the aforementioned separations using Lemma \ref{lem:random},
similarly as it is done in the case of good edge separations.

\begin{lemma}\label{lem:detect-good-node}
Given a connected graph $G$ with undeletable vertices $\undelV \subseteq V(G)$
and integers $q$ and $\cutsize$, one may find in
$O(2^{O(\min(q,\cutsize) \log (q + \cutsize))} n^3 \log n)$ time
a $(q,\cutsize)$-good node separation of $G$, or correctly conclude that no such separation exists.
\end{lemma}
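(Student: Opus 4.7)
The plan is to mimic the proof of Lemma~\ref{lem:find-separation}, adapting the contraction step to the node-deletion setting. I would invoke Lemma~\ref{lem:random} on the universe $U=V(G)$ with parameters $a=2(q+1)$ and $b=\cutsize$, obtaining in $O(2^{O(\min(q,\cutsize)\log(q+\cutsize))} n\log n)$ time a family $\randfamily$ of size $2^{O(\min(q,\cutsize)\log(q+\cutsize))}\log n$. The intended use is that for any $(q,\cutsize)$-good node separation $(Z,V_1,V_2)$ one should be able to exhibit witness sets $W_1\subseteq V_1\setminus\undelV$ and $W_2\subseteq V_2\setminus\undelV$ of size $q+1$ each, together with some $S\in\randfamily$ satisfying $W_1\cup W_2\subseteq S$ and $S\cap Z=\emptyset$; these will play the role of the subtrees $T_1,T_2$ from the edge-deletion proof.

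For every $S\in\randfamily$ the algorithm would form the auxiliary graph $H$ obtained from $G$ by identifying each connected component of $G[S\cup\undelV]$ into a single super-vertex; vertices of $V(G)\setminus(S\cup\undelV)$ remain unchanged (and deletable), while all super-vertices are declared undeletable. A super-vertex originating from a component $C$ would be called \emph{big} if $|C\cap S|>q$, i.e., it absorbed more than $q$ deletable vertices of $G$. The algorithm would then iterate through all big super-vertices $c$ and compute, via a standard max-flow formulation of the vertex cut problem (infinite capacities on undeletable vertices), the minimum vertex cut of size at most $\cutsize$ separating $c$ from all other big super-vertices; this is realized by one max-flow call per $c$ after adjoining an auxiliary undeletable super-sink adjacent to the other big super-vertices. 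Whenever such a cut $Z$ is returned, $Z\subseteq V(G)\setminus\undelV$ has size at most $\cutsize$ and separates two big super-vertices in $H$, and consequently the two components of $G\setminus Z$ containing these super-vertices each contain more than $q$ deletable vertices of $G$; this yields a valid $(q,\cutsize)$-good node separation that the algorithm would output.

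The hard part will be the completeness of the witness construction. Given a hypothetical $(q,\cutsize)$-good node separation $(Z^\star,V_1^\star,V_2^\star)$, the set $V_i^\star\setminus\undelV$ need not induce a connected subgraph of $G$, so one cannot simply pick a connected subtree of $q+1$ deletable vertices as in the edge-deletion argument. The plan is, for each $i\in\{1,2\}$, to consider the auxiliary graph $H_i^\star$ on $V_i^\star\setminus\undelV$ in which two vertices are adjacent iff they are connected in $G[V_i^\star]$ by a path whose intermediate vertices lie entirely in $\undelV$; a short path-tracing argument shows that $H_i^\star$ inherits connectedness from $G[V_i^\star]$. One then takes $W_i$ to be the vertex set of any connected subgraph of $H_i^\star$ of size $q+1$; with this choice, for the set $S\in\randfamily$ with $W_1\cup W_2\subseteq S$ and $S\cap Z^\star=\emptyset$, each $W_i$ lies in a single connected component of $G[S\cup\undelV]$ and collapses onto a big super-vertex of $H$. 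Since $Z^\star$ separates these two big super-vertices in $H$ and is disjoint from all undeletable vertices of $H$, the max-flow step is guaranteed to return a cut of size at most $\cutsize$.

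For the running time, processing a single $S$ costs $O(|V(G)|+|E(G)|)$ time for building $H$, and the number of big super-vertices is bounded by $O(n/q)$ because their supports in $S$ are pairwise disjoint subsets of $S$; each min-cut computation of value at most $\cutsize+1$ runs in $O(\cutsize\cdot n^2)$ time via augmenting paths. Summing over $\randfamily$ then yields the promised bound $O(2^{O(\min(q,\cutsize)\log(q+\cutsize))} n^3\log n)$, with the factor $\cutsize/q$ comfortably absorbed into the exponential prefactor.
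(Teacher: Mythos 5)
Your overall plan — use Lemma~\ref{lem:random} on the deletable vertices, contract the components of $G[S\cup\undelV]$, declare a super-vertex big if it absorbed more than $q$ deletable vertices, and then search for a small node cut among the big super-vertices — matches the paper's proof closely, and your witness construction via the auxiliary graph $H_i^\star$ (equivalent to the paper's choice of subtrees of $G[V_i]$ with $q+1$ deletable vertices) is sound. However, there is a genuine gap in the cut-searching step, and it is not a detail: you separate a single big super-vertex $c$ from \emph{all} the other big super-vertices by one max-flow call with a common super-sink, and in the completeness paragraph you claim that since $Z^\star$ separates $u_1$ from $u_2$ the one-vs-all cut at $u_1$ is guaranteed to have size at most~$\cutsize$. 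That inference is false. The set $S_0$ guaranteed by Lemma~\ref{lem:random} is only constrained to contain the witnesses and avoid $Z^\star$; on the rest of the graph it is arbitrary, and it may create several big super-vertices \emph{inside} $V_1^\star$ (and inside $V_2^\star$) that $Z^\star$ does not separate from $u_1$. Concretely, one can build $G$ from two pieces $B_1$, $B_2$ joined only through a single deletable vertex $v$ with $\cutsize=1$, where each $B_i$ consists of two large dense chunks linked by a matching of $\cutsize+5$ deletable vertices; for a set $S_0$ that contains both chunks of each $B_i$ but misses the matching endpoints, every big super-vertex is ``glued'' to a sibling inside the same $B_i$ by a matching larger than $\cutsize$, and no single super-vertex can be separated from all the others with one deletion, even though the good separation $(\{v\},B_1,B_2)$ exists. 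So the one-vs-all formulation over-constrains the cut, and the completeness proof breaks.

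The paper instead computes pairwise minimum node cuts between all pairs of big super-vertices (via the node-weighted Gomory--Hu tree of Granot and Hassin), which avoids the issue: a good separation only promises a small cut between \emph{some pair} of big super-vertices, and only a pairwise search is guaranteed to find one. The fix for your approach is therefore to replace the one-vs-all max-flow per $c$ with a pairwise search (or a Gomory--Hu tree restricted to big super-vertices), which still fits in the claimed $O(2^{O(\min(q,\cutsize)\log(q+\cutsize))}n^3\log n)$ bound. A smaller stylistic issue: you invoke Lemma~\ref{lem:random} on $U=V(G)$ rather than $V(G)\setminus\undelV$ and define bigness via $|C\cap S|>q$; since $S$ may then contain undeletable vertices, this does not literally count deletable vertices. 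This is harmless if you set $U=V(G)\setminus\undelV$ as the paper does, or define bigness via $|C\setminus\undelV|>q$.
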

\begin{proof}
The algorithm iterates through all the sets from the family $\randfamily$,
obtained from Lemma~\ref{lem:random} for universe $U=V(G) \setminus \undelV$ and constants $a=2q+2$ and $b=\cutsize$.
For a set $S\in \randfamily$, we obtain a new graph $H$ by contracting all the edges between
vertices of $S \cup \undelV$ in $G$. Let $\iota:V(G)\to V(H)$ be the mapping that maps every vertex of $G$
to the vertex it is contracted onto, and let $S' = \iota(S \cup \undelV)$.
We say that a vertex $u\in S'$ is {\emph{big}} if $|\iota^{-1}(u) \setminus \undelV|> q$.

In the graph $H$, we assign weight $\infty$ to all vertices of $S'$ and weight $1$
to all vertices of $V(H) \setminus S'$. In this weighted graph, for every pair
of big vertices $u_1$ and $u_2$, we compute a minimum node cut between $u_1$ and $u_2$
if it is of size at most $\cutsize$, or find that it has to have larger size.
This can be done in $O(\cutsize n^3)$ time using the Gomory-Hu tree extended to node weighted
separations by Granot and Hassin~\cite{granot-hassin}.
That is we can use $|V(H)|-1$ applications of 
the classic Ford-Fulkerson algorithm, each of which consuming $O(\cutsize n^2)$ time,
since after finding $\cutsize+1$ vertex disjoint paths we may stop the algorithm.
Assume that for some pair of big vertices $u_1,u_2$ we have found
a minimum node cut $F_{u_1,u_2}$, of size at most $\cutsize$.
We claim that $F_{u_1,u_2}$ induces a $(q,\cutsize)$-good node separation of $G$,
which can be returned as the output of the algorithm.

Let $v_1\in \iota^{-1}(u_1) \setminus \undelV$ and $v_2\in \iota^{-1}(u_2) \setminus \undelV$
be arbitrary vertices.
Note that $F_{u_1,u_2} \subseteq V(G) \setminus \undelV$,
     as only vertices of $V(H) \setminus S' = V(G) \setminus (S \cup \undelV)$
have finite weights.
Let $V_1,V_2$ be the sets of vertices reachable from $v_1,v_2$ in $G\setminus F_{u_1,u_2}$,
    respectively.
We claim that $(F_{u_1,u_2},V_1,V_2)$ is a $(q,\cutsize)$-good node separation of $G$.
Indeed: $V_1$ and $V_2$ are defined as vertex sets of
two connected components of $G \setminus F_{u_1,u_2}$; moreover, $V_1 \neq V_2$
as $F_{u_1,u_2}$ separates $u_1$ from $u_2$ in $H$, and therefore $v_1$ from $v_2$ in $G$.
Finally, observe that $|V_1 \setminus \undelV|,|V_2 \setminus \undelV|> q$, as $\iota^{-1}(u_1)\subseteq V_1$ and $\iota^{-1}(u_2)\subseteq V_2$.

We are left with proving that if the graph admits a $(q,\cutsize)$-good node separation,
then for at least one set $S_0\in \randfamily$ we obtain two big vertices that can be
separated by a node cut of size at most $\cutsize$.
This ensures that if no solution has been found for any
$S\in \randfamily$,
then the algorithm can safely provide a negative answer.
Fix some $(q,\cutsize)$-good node separation $(Z,V_1,V_2)$ and let $T_1,T_2$
be arbitrary subtrees of $G[V_1]$ and $G[V_2]$, respectively,
each having exactly $q+1$ vertices that are in $V(G) \setminus \undelV$.
As $|Z|\leq \cutsize$, by the choice of family $\randfamily$, there exists $S_0\in\randfamily$
that contains
$(V(T_1) \cup V(T_2)) \setminus \undelV$, but is disjoint with $Z$.
In the step when $S_0$ is considered, after applying contractions
all the vertices of $T_1$ are contracted onto one vertex $u_1$,
all the vertices of $T_2$ are contracted onto one vertex $u_2$,
but vertices of $Z$ get weight $1$.
Hence, we obtain big vertices $u_1,u_2$ that can be separated by a node cut
of size at most $\cutsize$ (note that the algorithm does not necessarily find precisely
    the cut $Z$ in this step).
\end{proof}

\begin{lemma}\label{lem:detect-flower-cut}
Given a connected graph $G$ with undeletable vertices $\undelV \subseteq V(G)$ and border
terminals $\bterms \subseteq V(G)$
and integers $q$ and $\cutsize$, one may find in
$O(2^{O(\min(q,\cutsize) \log (q + \cutsize))} n^3 \log n)$ time
a $(q,k)$-flower separation in $G$ w.r.t. $\bterms$, or correctly conclude that no such flower separation exists.
\end{lemma}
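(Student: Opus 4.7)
The plan is to mimic the color-coding strategy from Lemma~\ref{lem:detect-good-node}, tuned to the asymmetric structure of a flower separation: a single petal has at most $q$ non-undeletable vertices while the core has at most $\cutsize$ vertices, so I color-code exactly these two sets together. Concretely, I invoke Lemma~\ref{lem:random} on the universe $V(G)\setminus \undelV$ with parameters $a=q$ and $b=\cutsize$, obtaining a family $\randfamily$ of size $O(2^{O(\min(q,\cutsize)\log(q+\cutsize))}\log n)$ within the claimed time bound.

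For every $S\in\randfamily$ the algorithm computes the connected components of $G[S\cup\undelV]$; each such component $C$ yields a candidate core $Z_C := N_G(C)$, which is discarded unless $1\leq |Z_C|\leq \cutsize$ and $Z_C\cap \undelV=\emptyset$. Given a candidate core $Z$, I enumerate the components of $G\setminus Z$ and keep those that are petal-eligible: $N_G(C')=Z$, $|C'\setminus\undelV|\leq q$, and $C'\cap \bterms=\emptyset$. Call this collection $\mathcal{C}$. It remains to decide whether some $\mathcal{P}\subseteq\mathcal{C}$ has non-undeletable union size strictly between $q$ and $|V(G)\setminus Z\setminus\undelV|-q$; since every element of $\mathcal{C}$ has non-undeletable size at most $q$, this is a straightforward subset-sum style DP over integers bounded by $n$, running in $O(n^2)$ time. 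If such a $\mathcal{P}$ is found, $(Z,\mathcal{P})$ is output as a valid flower separation; if no candidate $S$ produces one, the algorithm reports that no flower separation exists.

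The correctness direction that requires checking is the forward one: assume $(Z,V_1,\ldots,V_\ell)$ is a flower separation. Since $|(\bigcup V_i)\setminus\undelV|>q\geq 1$, some petal $V_j$ contains a non-undeletable vertex $w^\ast$. Setting $A = V_j\setminus\undelV$ (of size at most $q$) and $B=Z$, which are disjoint because $V_j\cap Z=\emptyset$, Lemma~\ref{lem:random} supplies an $S_0\in\randfamily$ with $A\subseteq S_0$ and $S_0\cap Z=\emptyset$. Then $S_0\cup\undelV\subseteq V(G)\setminus Z$, so the component of $G[S_0\cup\undelV]$ containing $w^\ast$ is contained in the $G\setminus Z$-component of $w^\ast$, namely $V_j$; conversely, $V_j\subseteq S_0\cup\undelV$ and $G[V_j]$ is connected, so equality holds. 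Hence the candidate $Z_C = N_G(V_j) = Z$ is inspected, and the witness $\{V_1,\ldots,V_\ell\}$ certifies that the DP finds some valid selection (possibly a different flower separation sharing the core $Z$, which is acceptable). Validity of every output is immediate from the construction.

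The running time per $S\in\randfamily$ is $O(n^3)$---at most $O(n)$ candidate cores, each handled in $O(n^2)$---yielding the overall bound $O(2^{O(\min(q,\cutsize)\log(q+\cutsize))}n^3\log n)$. The main subtlety is the dynamic programming step: even after guessing the correct core $Z$, I cannot directly recover the original petals, so I must search for a valid partition of $\mathcal{C}$ into a petal side and a stalk contribution, and the fact that each candidate petal has non-undeletable size bounded by $q$ is exactly what keeps this DP polynomial.
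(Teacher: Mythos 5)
Your proof is correct and follows essentially the same route as the paper's: color-code with Lemma~\ref{lem:random} on $V(G)\setminus\undelV$ using $a=q$, $b=\cutsize$, derive candidate cores as $N_G(C)$ for components $C$ of $G[S\cup\undelV]$ (the paper phrases this via contraction and ``interesting vertices'' but it is the same computation), run a subset-sum DP to test whether a given core admits a valid selection of petals, and argue correctness by showing that for a fixed petal $V_j$ there is an $S_0\in\randfamily$ for which $V_j$ is recovered exactly as a component of $G[S_0\cup\undelV]$. The only cosmetic difference is that you filter candidate cores by $Z_C\cap\undelV=\emptyset$ and $|Z_C|\leq\cutsize$ directly rather than via the ``interesting vertex with small $N_H(u)$'' criterion, which is a cleaner but equivalent formulation.
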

\begin{proof}
We first note that, given a set $Z \subseteq V(G)$ of size at most $\cutsize$,
we can in $O(n^2)$ time verify whether there exists a $(q,\cutsize)$-flower separation with $Z$ as the core, that is,
$(Z,(V_i)_{i=1}^\ell)$ for some choice of the family of petals $(V_i)_{i=1}^\ell$.
Indeed, we may simply iterate over connected components of $G \setminus Z$ using a simple dynamic program. 
For each prefix of the sequence of connected components and for each $n'\leq n$ we compute, 
whether some of the components can be chosen to be petals so that the total number of vertices of $V(G)\setminus \undelV$ 
in the petals is equal to $n'$. When we consider the next connected component, if it does not satisfy requirements for a 
petal then we cannot take it as a petal (and we take the value of the cell computed in the last iteration
for the same value of $n'$). However, if it does satisfy these requirements, then we either not take it to be a petal 
(and do the same as previously) or take it (and we take the value of the cell computed in the last iteration
for the value $n'$ decremented by the number of vertices from $V\setminus \undelV$ in the considered component).
There exists a flower separation with $Z$ as the centre if and only if some of the values for $q+1\leq n'\leq |V(G)\setminus (\undelV\cup Z)|-q-1$ is
true in the last iteration. It is trivial to augment the dynamic program with backlinks, so that the flower separation
can be retrieved.

To prove the lemma, we iterate
through all the sets from the family $\randfamily$,
obtained from Lemma~\ref{lem:random} for universe $U=V(G) \setminus \undelV$ and constants $a=q$ and $b=\cutsize$.
For a set $S\in \randfamily$, we obtain a new graph $H$ by contracting all the edges between
vertices of $S \cup \undelV$ in $G$. Let $\iota:V(G)\to V(H)$ be the mapping that maps every vertex of $G$
to the vertex it is contracted onto, and let $S' = \iota(S \cup \undelV)$.
We say that a vertex $u\in S'$ is {\emph{interesting}} if $|\iota^{-1}(u) \setminus \undelV| \leq q$.
For each interesting vertex $u$ with  $|N_H(u)| \leq \cutsize$ we verify
whether there exists a $(q,\cutsize)$-flower separation $(Z,(V_i)_{i=1}^\ell)$ in $G$ w.r.t. $\bterms$
with the core $Z = N_H(u)$; note that $N_H(u) \subseteq V(G)$.
We output such a flower separation if we find one. If no flower separation is found
for any choice of $S$ and $u$, we conclude that no $(q,\cutsize)$-flower separation
exists in $G$ w.r.t. $\bterms$.
The time bound follows from the fact that for each vertex $u$, we can verify whether $u$ is interesting
and compute $N_H(u)$ in $O(n^2)$ time, and then within the same complexity check if $N_H(u)$ is the core of some $(q,\cutsize)$-flower separation.
To finish the proof of the lemma we need to show
that if the algorithm concludes that there is no appropriate flower separation in the graph,
then this conclusion is correct.

To this end, assume that there exists a $(q,\cutsize)$-flower separation $(Z,(V_i)_{i=1}^\ell)$
in $G$ w.r.t. $\bterms$. Note that $|V_1 \setminus \undelV|\leq q$ ($\ell \geq 1$ since $|(\bigcup_{i=1}^\ell V_i) \setminus \undelV| > q$)
and $|Z|\leq \cutsize$, so by the properties of the family $\randfamily$ there exists
a set $S_0 \in \randfamily$ with $(V_1 \setminus \undelV) \subseteq S_0$ and $Z \cap S_0 = \emptyset$.
Recall that $N_G(V_1) = Z$; thus, in the graph $H$ constructed for the set $S_0$
there is a vertex $u \in V(H)$ with $\iota^{-1}(u) = V_1$. Note that
$u$ is an interesting vertex (as $|V_1 \setminus \undelV| \leq q$) and $N_H(u) = Z$ (as $N_G(V_1) = Z$
by the definition of the flower separation).
Therefore the algorithm considers $Z = N_H(u)$ and finds a $(q,\cutsize)$-flower separation
in $G$ w.r.t. $\bterms$.
\end{proof}

We conclude this section with a lemma that shows that if we do not have any good node or flower
separations, then any $\cutsize$-cut not only cannot split the graph into two large components,
  but also cannot split the graph into too many small ones.

\begin{lemma}\label{lem:node-no-separation}
If a connected graph $G$ with undeletable vertices $\undelV \subseteq V(G)$ and border terminals
$\bterms \subseteq V(G)$
does not contain a $(q,\cutsize)$-good node separation or
a $(q,\cutsize)$-flower separation w.r.t. $\bterms$ then,
  for any $Z \subseteq V(G) \setminus \undelV$ of size at most $\cutsize$,
the graph $G \setminus Z$ contains at most $(2q+1)(2^\cutsize-1) + |\bterms| + 1$ connected components containing a vertex of $V \setminus \undelV$,
out of which at most one has more than $q$ vertices not in $\undelV$.
\end{lemma}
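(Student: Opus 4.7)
The plan is to prove both conclusions by contraposition against the absence of good node and flower separations. The bound on big components is immediate: if $G \setminus Z$ had two components $V_1, V_2$ with $|V_1 \setminus \undelV|, |V_2 \setminus \undelV| > q$, then $(Z, V_1, V_2)$ would fulfill every clause of Definition~\ref{def:node-good} (using the assumed $Z \subseteq V(G) \setminus \undelV$ and $|Z| \leq \cutsize$), contradicting the absence of a $(q,\cutsize)$-good node separation.

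For the total count, I partition the components of $G \setminus Z$ containing at least one vertex of $V \setminus \undelV$ into three classes: the unique big component; small components intersecting $\bterms$, of which there are at most $|\bterms|$ since border terminals lie in pairwise distinct components; and small components disjoint from $\bterms$. Since $G$ is connected, when $G \setminus Z$ has more than one component every component $C$ in the third class satisfies $\emptyset \neq N_G(C) \subseteq Z$. I then further subgroup this third class by the value of $N_G(C) = A$, producing at most $2^\cutsize - 1$ groups, and it remains to show that each such group contains at most $2q+2$ components.

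Fix a non-empty $A \subseteq Z$ and call the components of $G \setminus Z$ with $N_G(C) = A$, disjoint from $\bterms$, and of non-undeletable size in $[1,q]$ the \emph{candidates}. A key observation is that each candidate $C$ is also a connected component of $G \setminus A$, since $N_G(C) \cap (V(G) \setminus A) = \emptyset$ prevents $C$ from merging with anything else when only $A$ is removed. Each candidate therefore satisfies every per-petal clause of Definition~\ref{def:flowercut}, so to produce a $(q,\cutsize)$-flower separation with core $A$ it suffices to select a non-empty subfamily $P$ of candidates whose total non-undeletable mass $M_P$ exceeds $q$ and whose complementary mass $M_Q$ among the unchosen candidates also exceeds $q$ (together with the non-candidate mass $N_A \geq 0$ in $V(G) \setminus A$, this secures the stalk condition).

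Assuming for contradiction at least $2q+3$ candidates, I sort them by non-decreasing non-undeletable mass $m_1 \leq \ldots \leq m_K$ and let $P$ correspond to the shortest prefix whose mass first exceeds $q$; writing $m_j$ for the mass of the last candidate added, one has $M_P \in (q, q + m_j]$. Since the preceding prefix sums to at most $q$ and every mass is at least $1$, we get $j \leq q + 1$, so at least $K - q \geq q + 3$ candidates have mass $\geq m_j$, yielding the weighted bound $C_A := \sum_i m_i \geq (q+3) m_j$. A short case split then certifies $M_Q > q$: when $m_j \geq 2$, $M_Q \geq C_A - M_P \geq (q+3) m_j - (q + m_j) = (q+2) m_j \geq 2q + 4 > q$; when $m_j = 1$, the prefix is forced to be $M_P = q+1$ and $M_Q \geq K - (q+1) \geq q + 2 > q$. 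In either case $P$ produces the forbidden flower separation with core $A$, and summing the three classes yields the stated bound $1 + |\bterms| + (2q+2)(2^\cutsize - 1)$. I expect the delicate point to be precisely this conversion of the counting hypothesis $K \geq 2q+3$ into the mass-weighted inequality $C_A \geq (q+3) m_j$: naive counting alone only produces ``$q+1$ per subset'', whereas the weighted refinement recovers the correct prefactor $2q+2$.
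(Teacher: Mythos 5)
Your proof is correct and follows the same high-level decomposition as the paper's: at most one big component (by absence of a good node separation), at most $|\bterms|$ components meeting the border terminals, and the remaining ``nice'' components partitioned by their neighbourhood in $Z$. You also correctly make explicit the fact, used only implicitly in the paper, that a component $C$ of $G\setminus Z$ with $N_G(C)=A\subseteq Z$ is itself a connected component of $G\setminus A$, which is what legitimizes using $A$ as the core of a flower separation.

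Where you diverge is in showing that too many nice components with the same neighbourhood $A$ produce a $(q,\cutsize)$-flower separation. You sort the candidates by non-undeletable mass and pick a prefix whose mass just clears $q$, then certify the stalk with a case split on $m_j$. This is overkill. Since every candidate is a ``nice'' component, it already contains at least one vertex of $V\setminus\undelV$. So once there are $\geq 2q+2$ candidates with neighbourhood exactly $A$, you may simply pick \emph{any} $q+1$ of them as petals: the petal union has $\geq q+1>q$ non-undeletable vertices, and the $\geq q+1$ leftover candidates sitting in the stalk contribute $\geq q+1>q$ non-undeletable vertices. That is precisely the argument in the paper (it picks $q+1$ petals from the too-large group), and it is immediate, with no sorting needed. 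Incidentally, your mass-balanced version also proves a looser per-group bound ($\leq 2q+2$) than the elementary one ($\leq 2q+1$), though both suffice for the stated $(2q+2)(2^{\cutsize}-1)$. There is also a small arithmetic slip: $(q+3)m_j-(q+m_j) = (q+2)m_j - q$, not $(q+2)m_j$; with $m_j\geq 2$ this gives $\geq q+4>q$, so your conclusion survives, but the intermediate $2q+4$ is wrong. None of these points affect correctness; they only concern economy of the argument.
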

\begin{proof}
Let $Z \subseteq V(G) \setminus \undelV$, $|Z| \leq \cutsize$.
First, if there are at least two connected components of $G \setminus Z$ with more than $q$ vertices in $V(G) \setminus \undelV$, then a minimal subset of $Z$ separating these two components would induce a $(q,k)$-good node separation in $G$.
Thus, in $G \setminus Z$ we have at most one connected component with more than $q$ vertices outside $\undelV$ and at most $|\bterms|$ connected components
that contain a vertex from $\bterms$. We denote the remaining connected components containing at least one vertex of $V \setminus \undelV$ as {\em{nice}} ones;
they have at most $q$ vertices outside $\undelV$
each. Let us partition them with respect to their neighbourhood (which is a subset of $Z$).
Note that, if there exists a set $Z' \subseteq Z$, such that at least $2q+2$ nice connected components of $G \setminus Z$ that are adjacent to exactly $Z'$,
then there exists a $(q,k)$-flower separation in $G$ w.r.t. $\bterms$ with core $Z'$
and petals being $q+1$ of aforementioned nice connected components of $G \setminus Z$.
As there are at most $2^\cutsize-1$ nonempty subsets of $Z$, the lemma follows.
\end{proof}

\section{The algorithm for \ulc}\label{sec:full-ulc}

This section is devoted to fixed-parameter tractability
of the \ulc{} problem,
   parameterized by both the size of the cutset and the size of the alphabet.
We solve a bit more general version of the problem,
where we allow arbitrary unary relations and we allow the binary relations to be only partial permutations.
This generalizations appear naturally in our branching and reduction rules.

More formally, for an alphabet $\Sigma$, a binary relation $\psi \subseteq \Sigma \times \Sigma$
is called a {\em{partial permutation}}
if for every $\alpha \in \Sigma$ both $(\{\alpha\} \times \Sigma) \cap \psi$
and $(\Sigma \times \{\alpha\}) \cap \psi$ are of size at most one;
in other words, for every $\alpha \in \Sigma$, at most one value $\beta$ satisfies $\phi(\alpha,\beta)$
and at most one value $\beta'$ satisfies $\phi(\beta',\alpha)$.
For a partial permutation $\psi$, its reverse is defined
as $\psi^{-1} = \{(\beta,\alpha): (\alpha,\beta) \in \psi\}$.
For any two partial permutations $\psi_1,\psi_2$ their composition
is defined as
$$\psi_2 \circ \psi_1 = \{(\alpha,\gamma): \exists_{\beta \in \Sigma} (\alpha,\beta) \in \psi_1 \wedge (\beta,\gamma) \in \psi_2\}.$$
Note that a composition of two partial permutations is a partial permutation itself, and behaves 
in a similar manner as a composition of functions.

It is more convienient notationally to treat the deletion of a vertex as another, special label $\skull$.
Formally, we consider the following problem.

\defparproblemu{\ulc}{An undirected graph $G$, a finite alphabet $\Sigma$ of size $s$,
 an integer $\cutsize$, for each vertex $v \in V(G)$ a set $\phi_v \subseteq \Sigma$ and
 for each edge $e \in E(G)$ and each its endpoint $v$ a partial permutation $\psi_{e,v}$ of $\Sigma$, such that
if $e = uv$ then $\psi_{e,u} = \psi_{e,v}^{-1}$.}{$\cutsize+s$}{Does there exist
a function $\Psi: V(G) \to \Sigma \cup \{\skull\}$ such that
at most $\cutsize$ vertices are assigned value $\skull$,
for every $v \in V(G)$ we have $\Psi(v) \in \phi_v \cup \{\skull\}$ and for every
$uv \in E(G \setminus X)$ we have $\Psi(u) = \skull$, $\Psi(v) = \skull$, or $(\Psi(u),\Psi(v)) \in \psi_{uv,u}$?}

The relations $\psi_{e,u}$ are called {\em{edge constraints}}, the sets $\phi_v$ are called {\em{vertex constraints}}, the function
$\Psi$ is called a {\em{labeling}} and the set $\Psi^{-1}(\skull)$ is the {\em{deletion set}}.
For a vertex $v$ with $\Psi(v) = \skull$, we say that \emph{$v$ is deleted by $\Psi$}.

Before we start, we note that the edge-deletion variant (where we look
for a deletion set being a subset of edges; we are to label all vertices,
but we do not need to satisfy the constraints on the deleted edges)
reduces to the defined above node-deletion variant.

Indeed, first observe that in the \ulc{} problem we can assume that additionally
we are given in the input a set of undeletable vertices $V^\infty \subseteq V(G)$
and we are to find a labeling $\Psi$ that does not delete any vertex of $V^\infty$: we can reduce this
variant to the original one by replacing each undeletable vertex with a clique
on $\cutsize+1$ vertices, with constraints on the edges of the clique being identities.
Second, given an \edgeulc{} instance $(G,\Sigma,\cutsize,(\phi_v)_{v \in V(G)},(\psi_{e,v})_{e \in E(G),v\in e})$,
we can first make all vertices of $G$ undeletable, and then subdivide each edge,
so that the edge constraints on the two halves of the edge $e=uv$ of $G$ compose
to the constraint $\psi_{e,u}$; the new vertices introduced in this operation are kept
deletable.

We remark that the aforementioned reduction blows up the number of vertices and edges
of the input graph, thus the obtained algorithm for \eulc{} has worse
dependency on $n$ than $n^4\log n$ we obtain for the node-deletion variant.
Note that Section~\ref{sec:illustration} contains a sketch of an algorithm
of \eulc{} with $n^4 \log n$ dependency on $n$ in the running time.
As the main result of our work is fixed-parameter tractability of considered problems,
and the proven dependency on $n$ is far from being linear or even quadratic,
we refrain from formally showing an algorithm for \eulc{} with $n^4 \log n$ dependency
on $n$ in the running time.

Note that we may assume that the input graph $G$ in the \ulc{} problem is connected;
otherwise, we may solve the problem on each connected component, for all budgets
between $0$ and $\cutsize$, separately.
During the course of the algorithm, we maintain the connectivity of $G$.
We denote by $n$ the number of vertices of the graph of the currently considered \ulc{}
instance.

We also assume that
the elements of $\Sigma$ can be compared in constant time.

The description of the algorithm consists of a sequence of {\em{steps}}.
Each step is accompanied with some lemmas and a discussion that justifies its correctness and verifies complexity bounds.

\subsection{Labelings}

We first extend the notion of labeling to arbitrary subsets of $V(G)$.
\begin{definition}
Given an instance $\instance = (G,\Sigma,\cutsize,(\phi_v)_{v \in V(G)}, (\psi_{e,v})_{e \in E(G),v \in e})$
and a set $S \subseteq V(G)$, a function $\Psi^S : S \to \Sigma \cup \{\skull\}$ is called a {\em{labeling}}
if it satisfies all constraints on $G[S]$ in $\instance$, that is:
for each $v \in S$ we have $\Psi^S(v) \in \phi_v \cup \{\skull\}$ and for each $uv \in E(G[S])$
we have $\Psi^S(u) = \skull$, $\Psi^S(v) = \skull$, or $(\Psi^S(u),\Psi^S(v)) \in \psi_{uv,u}$.

A labeling is \emph{deletion-free} if it does not assign the value $\skull$ to any vertex.
\end{definition}
For a labeling $\Psi$, by $\domain(\Psi)$ we denote its domain.

The following lemma is a straightforward corollary of the fact that the edge constraints are partial permutations.
\begin{lemma}\label{lem:ulc-propagate-labeling}
Let $\instance = (G,\Sigma,\cutsize,(\phi_v)_{v \in V(G)}, (\psi_{e,v})_{e \in E(G),v \in e})$ be a \ulc{} instance
and let $A \subseteq V(G)$ be an arbitrary subset of the vertex set that induces a connected subgraph of $G$.
Then, for every $v \in A$ and $\alpha \in \Sigma$ there exists at most one deletion-free labeling $\Psi^A: A \to \Sigma$
such that $\Psi^A(v) = \alpha$.
Furthermore, in $O(s|A|^2)$ time one can find such a labeling or correctly conclude that it does not exist.
Consequently, for each set $A \subseteq V(G)$ such that $G[A]$ is connected, there are at most $s$ deletion-free labelings of $A$ and those can be enumerated in $O(s^2|A|^2)$ time.
\end{lemma}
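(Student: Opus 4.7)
The plan is to exploit the partial-permutation structure of the edge constraints: once the label of a single vertex in a connected piece is fixed, the label of every other vertex is forced. Concretely, I would argue that if $\Psi^A(u) = \beta$ is fixed and $uw \in E(G[A])$, then by the partial-permutation property there is at most one $\gamma \in \Sigma$ with $(\beta,\gamma) \in \psi_{uw,u}$; any extension of $\Psi^A$ to $w$ is therefore forced to use this $\gamma$ (or fail, if no such $\gamma$ exists or if $\gamma \notin \phi_w$).

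The algorithm is then a standard propagation: start a BFS (or DFS) in $G[A]$ from $v$, initialize $\Psi^A(v) := \alpha$ after checking $\alpha \in \phi_v$, and maintain a queue of newly labeled vertices. For a vertex $u$ popped from the queue, look at each edge $uw$ with $w \in A$; if $w$ is unlabeled, compute the forced value $\gamma$ via $\psi_{uw,u}$, reject if $\gamma$ does not exist or $\gamma \notin \phi_w$, otherwise set $\Psi^A(w) := \gamma$ and enqueue $w$; if $w$ is already labeled, just verify $(\Psi^A(u),\Psi^A(w)) \in \psi_{uw,u}$ and reject on mismatch. Since $G[A]$ is connected, every vertex is reached, and since each step has at most one valid choice, uniqueness follows immediately; correctness follows because every edge of $G[A]$ is inspected at least once and every vertex constraint is checked at assignment time.

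For the running time, $G[A]$ has at most $O(|A|^2)$ edges, and each edge inspection requires at most an $O(s)$-time lookup in a partial permutation represented as a list of pairs (checking the forced value, or verifying a constraint), giving the claimed $O(s|A|^2)$ bound. The consequence is immediate: pick any $v \in A$, iterate over all $s$ possible starting values $\alpha \in \Sigma$, and run the propagation procedure for each; every labeling of $A$ arises this way since $\Psi^A(v)$ must lie in $\Sigma$, and two different choices of $\alpha$ give different labelings by the uniqueness statement. This yields at most $s$ labelings, enumerated in $s \cdot O(s|A|^2) = O(s^2|A|^2)$ time.

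I do not foresee any real obstacle here: the lemma is essentially a formalization of the observation made in the introduction that fixing the label of one vertex forces the labels of its neighbors. The only points that need a brief word of care are confirming that the ``at most one choice'' property is preserved across the whole BFS (which is purely a matter of the partial-permutation definition rather than any global argument), and being explicit about how the $O(s)$ per-edge cost combines with $O(|A|^2)$ edges to yield the stated complexity.
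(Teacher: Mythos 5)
Your proof is correct and follows essentially the same argument as the paper: a BFS from $v$ in which the partial-permutation property forces each newly reached vertex's label, followed by verification of all constraints, with the ``consequently'' part obtained by iterating over the $s$ possible labels of a fixed vertex. The only cosmetic difference is that you interleave the verification of back-edges and vertex constraints with the propagation, whereas the paper does a single verification pass at the end; the two are equivalent and yield the same $O(s|A|^2)$ bound.
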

\begin{proof}
Note that for every $uw \in E(G[A])$, if $\Psi^A(u)$ is fixed, then there exists at most one value $\Psi^A(w)$ such that
$(\Psi^A(u),\Psi^A(w)) \in \psi_{uw,u}$, and such a value $\Psi^A(w)$ can be found in $O(s)$ time.
The first claim of the lemma follows from the assumption that $G[A]$ is connected: the labeling $\Psi^A$ can be found using a breadth-first search,
and then verified to satisfy all the constraints in $O(s|A|^2)$ time.
For the second claim, we simply iterate over all possible values $\Psi^A(v)$ for one fixed vertex $v \in A$.
\end{proof}

\subsection{Operations on the input graph}

In this section we define two basic operations the algorithm repetitively applies on the graph
and show their key properties.

\begin{definition}\label{def:ulc-update-edge}
Let $\instance = (G,\Sigma,\cutsize,(\phi_v)_{v \in V(G)},(\psi_{e,v})_{e \in E(G),v \in e})$ be a \ulc{} instance,
let $u,v \in V(G)$ and $\psi$ be a partial permutation of $\Sigma$.
By {\em{updating an edge $uv$ with a constraint $\psi$}} we mean the following operation:
if $uv \notin E(G)$, then we add an edge $uv$ to the graph $G$ with constraints
$\psi_{uv,u} = \psi$, $\psi_{uv,v} = \psi^{-1}$;
otherwise, we modify the constraints on the edge $uv$ in $G$ by replacing
$\psi_{uv,u}$ with $\psi_{uv,u} \cap \psi$ and $\psi_{uv,v}$ with $\psi_{uv,v} \cap \psi^{-1}$.
\end{definition}

Informally speaking, updating an edge $uv$ with $\psi$ is equivalent to adding a new edge between $u$ and $v$ with this constraint;
however, we use the definition above to avoid multiple edges in $G$. Note that obviously updating an edge cannot spoil the assumption
of connectivity of $G$. The following lemma is immediate.

\begin{lemma}\label{lem:ulc-update-edge}
Let $\instance'$ be a \ulc{} instance obtained from $\instance$
by updating and edge $uv$ with a constraint $\psi$. Then
$\Psi$ is a solution to $\instance'$ if and only if it is a solution to $\instance$
that satisfies the following additional property:
either $\Psi(u) = \skull$, $\Psi(v) = \skull$, or $(\Psi(u),\Psi(v)) \in \psi$.
\end{lemma}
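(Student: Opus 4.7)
The plan is simply to unpack Definition~\ref{def:ulc-update-edge} and observe that every edge constraint only has to be verified for edges of $G \setminus X$, so any constraint incident to a deleted endpoint is trivially met. I would carry out the two directions at once by showing that the set of constraints imposed by $\instance'$ on a pair $(X,\Psi)$ coincides with the constraints imposed by $\instance$ together with the extra requirement $u \in X \vee v \in X \vee (\Psi(u),\Psi(v)) \in \psi$.

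To do this, I would split on the two cases in the definition of the update. In the first case ($uv \notin E(G)$), the instance $\instance'$ differs from $\instance$ only by the presence of the new edge $uv$ with constraints $\psi_{uv,u}=\psi$ and $\psi_{uv,v}=\psi^{-1}$; the vertex lists $\phi_w$ and all other edge constraints are unchanged, and so are the sets $V(G)=V(G')$. Thus $(X,\Psi)$ satisfies all constraints of $\instance'$ iff it satisfies all constraints of $\instance$ and, additionally, whenever $uv \in E(G'\setminus X)$ (which is exactly when $u,v\notin X$), the pair $(\Psi(u),\Psi(v))$ lies in $\psi$. This is precisely the claimed equivalence.

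In the second case ($uv \in E(G)$), the graphs $G$ and $G'$ coincide, as do all vertex lists and all edge constraints except on $uv$, where $\psi_{uv,u}$ is replaced by $\psi_{uv,u}\cap\psi$ (and symmetrically for $\psi_{uv,v}$). Since $\psi_{uv,u}\cap\psi\subseteq\psi_{uv,u}$, any solution to $\instance'$ is in particular a solution to $\instance$. The additional content of the constraint on $uv$ in $\instance'$ is exactly membership in $\psi$, and again this only needs to hold when $uv \in E(G'\setminus X)$, i.e.\ when $u,v\notin X$. Conversely, a solution $(X,\Psi)$ to $\instance$ satisfying the extra property trivially obeys the strengthened constraint $\psi_{uv,u}\cap\psi$ in the case $u,v\notin X$, and in the complementary case the constraint on $uv$ is irrelevant because $uv\notin E(G'\setminus X)$.

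There is no real obstacle here; the statement is a direct consequence of unfolding definitions, and the only mildly subtle point to make explicit is that edge constraints in \ulc{} only apply to edges of $G\setminus X$, so deleting either endpoint automatically discharges the added or tightened constraint on $uv$. I would state this observation once and then let the two-case analysis above conclude the proof.
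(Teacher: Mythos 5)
Your proof is correct and is essentially the argument the paper has in mind: the paper dismisses this lemma as ``immediate,'' and your two-case unfolding of Definition~\ref{def:ulc-update-edge}, together with the observation that edge constraints in \ulc{} only bind on edges of $G \setminus X$, is exactly the routine verification that justifies that remark.
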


The second operation allows us to remove a vertex that, for some reason, will not be deleted by an optimum solution.
\begin{definition}\label{def:ulc-bypass-vertex}
Let $\instance = (G,\Sigma,\cutsize,(\phi_v)_{v \in V(G)},(\psi_{e,v})_{e \in E(G),v \in e})$ be a \ulc{} instance and $v \in V(G)$.
By {\em{bypassing the vertex $v$}} we mean the following operation:
\begin{enumerate}
\item remove the vertex $v$ with its incident edges from the graph $G$;
\item for each $u \in N_G(v)$ we replace $\phi_u$ with $\phi_u \cap \{\beta: \exists_{\alpha \in \phi_v} (\alpha,\beta) \in \psi_{uv,v}\}$;
\item for each $u_1,u_2 \in N_G(v)$, $u_1 \neq u_2$, we update an edge $u_1u_2$ with a constraint $\psi_{vu_2,v} \circ \psi_{vu_1,u_1}$.
\end{enumerate}
\end{definition}

In the next lemma we formally check that bypassing a vertex has the same meaning as proclaiming it undeletable.

\begin{lemma}\label{lem:ulc-bypass-vertex}
Let $\instance'$ be a \ulc{} instance obtained from an instance $$\instance = (G,\Sigma,\cutsize,(\phi_v)_{v \in V(G)}, (\psi_{e,v})_{e \in E(G),v \in e})$$
by bypassing a vertex $v$ with $\phi_v \neq \emptyset$. Then the following holds:
\begin{itemize}
\item if $\Psi$ is a solution to $\instance'$, then there exists $\alpha \in \Sigma$ such that
$\Psi \cup \{(v,\alpha)\}$ is a solution to $\instance$;
\item if $\Psi$ is a solution to $\instance$ that satisfies $\Psi(v) \neq \skull$, then
$\Psi|_{V(G) \setminus \{v\}}$ is a solution to $\instance'$.
\end{itemize}
\end{lemma}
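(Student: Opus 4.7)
The plan is to prove the two implications separately, with the second (easier) direction first to get the correspondence straight, and then to tackle the first direction using the partial-permutation structure to force consistency of the candidate label for $v$.

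For the second bullet, I would take a solution $(X,\Psi)$ to $\instance$ with $v \notin X$ and simply verify that the constraints added or tightened by the bypass operation in Definition~\ref{def:ulc-bypass-vertex} are automatically satisfied by $\Psi|_{V(G)\setminus\{v\}}$. Since $\Psi(v) \in \phi_v$ and $(\Psi(v),\Psi(u)) \in \psi_{uv,v}$ for each $u \in N_G(v) \setminus X$, the value $\Psi(u)$ lies in the new, restricted $\phi_u$. For any two $u_1,u_2 \in N_G(v)\setminus X$ (where the bypass updates an edge), the intermediate label $\beta = \Psi(v)$ witnesses $(\Psi(u_1),\Psi(u_2)) \in \psi_{vu_2,v} \circ \psi_{vu_1,u_1}$; combined with Lemma~\ref{lem:ulc-update-edge}, this shows that the updated edge is satisfied as well. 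All other constraints are unchanged.

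For the first bullet, the key observation is that partial permutations force uniqueness of the preimage. Given a solution $(X,\Psi)$ to $\instance'$ (note $v \notin V(G')\supseteq X$, so automatically $v \notin X$), I want to define $\alpha \in \phi_v$ and set $\Psi(v)=\alpha$. For each $u \in N_G(v)\setminus X$, the restricted vertex constraint guarantees that $\Psi(u)$ lies in $\{\beta : \exists \alpha \in \phi_v, (\alpha,\beta) \in \psi_{uv,v}\}$, so there is some $\alpha_u \in \phi_v$ with $(\alpha_u,\Psi(u)) \in \psi_{uv,v}$. The main point is that because $\psi_{uv,v}$ is a partial permutation, this $\alpha_u$ is uniquely determined by $\Psi(u)$.

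The heart of the argument is to show that all the $\alpha_u$ coincide. For any pair $u_1,u_2 \in N_G(v)\setminus X$, the edge constraint on $u_1u_2$ updated with $\psi_{vu_2,v}\circ\psi_{vu_1,u_1}$ produces some $\beta$ satisfying both $(\beta,\Psi(u_1))\in\psi_{vu_1,v}$ and $(\beta,\Psi(u_2))\in\psi_{vu_2,v}$; by the uniqueness above, $\beta = \alpha_{u_1} = \alpha_{u_2}$. Hence there is a common $\alpha := \alpha_u$, which we pick (if $N_G(v)\setminus X = \emptyset$ we simply take any element of the nonempty set $\phi_v$). Setting $\Psi(v)=\alpha$ then satisfies the vertex constraint at $v$ and every edge $vu$ for $u \in N_G(v)\setminus X$; all other constraints of $\instance$ are inherited from those of $\instance'$.

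The main obstacle I expect is exactly this consistency step in the first direction: one has to argue that the bypass really does retain enough information by exploiting that $\psi_{uv,v}$, being a partial permutation, admits at most one preimage of any given symbol. Without this, the composed edge constraint would only give a pairwise witness that could vary with the pair, and the extension to a single label $\Psi(v)$ would fail; with it, the argument collapses to a one-line uniqueness observation.
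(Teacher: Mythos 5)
Your proof is correct and follows essentially the same route as the paper's. In both proofs the crucial ingredients are the restricted vertex constraints $\phi_u'$, the composed edge constraint $\psi_{vu_2,v}\circ\psi_{vu_1,u_1}$ added between neighbors of $v$, and the partial-permutation uniqueness of preimages; the paper fixes a single $w\in N_G(v)\setminus X$, defines $\alpha$ from it, and propagates to the other neighbors, while you introduce $\alpha_u$ for each neighbor and show they all coincide — a symmetric restatement of the same argument.
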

\begin{proof}
For the first claim, pick $\alpha$ as follows.
If there exists a neighbor of $v$ whose value in $\Psi$ is not $\skull$, pick any such neighbor $w$, 
set $\alpha$ such that $(\Psi(w),\alpha) \in \psi_{vw,w}$ and $\alpha \in \phi_v$. Note that
such $\alpha$ exists as, by the definition of the bypassing operation, in $\instance'$ the vertex constraint for $w$
are contained in $\{\beta: \exists_{\alpha' \in \phi_v} (\beta,\alpha') \in \psi_{vw,w}\}$.
If such a neighbor $w$ does not exist, pick $\Psi(v)$ to be an arbitrary element of $\phi_v$.

We claim that $\Psi \cup \{(v,\alpha)\}$
is a solution to $\instance$. Clearly, $\Psi \cup \{(v,\alpha)\}$ satisfies all vertex constraints of $V(G)$ as well as all edge
constrains on edges not incident to $v$, as those constrains in $\instance$ are supersets of the corresponding constraints in $\instance'$.
Moreover, the choice of $\alpha$ ensures that $\alpha \in \phi_v$.
We are left with verifying edge constraints $\psi_{uv,v}$ for $u \in N_G(v)$.
If $\Psi(u) = \skull$, then we are done, and if $u=w$, then clearly $(\alpha,\Psi(w)) \in \psi_{vw,v}$ by the choice of $\alpha$. Otherwise, by the definition of the bypassing operation,
$(\Psi(w),\Psi(u)) \in \psi_{uv,v} \circ \psi_{vw,w}$. Since $(\Psi(w),\alpha) \in \psi_{vw,w}$, we infer that $(\alpha,\Psi(u)) \in \psi_{uv,v}$ and the claim is proven.

For the second claim, denote $\alpha = \Psi(v)$. To prove the claim we need to verify that $\Psi|_{V(G) \setminus \{v\}}$ satisfies vertex
constraints on $N_G(v)$ (that may shrink during the bypassing operation) and edge constraints on edges between vertices in $N_G(v)$ (that
are updated during the bypassing operation).
First consider a vertex $u \in N_G(v)$. If $\Psi(u) = \skull$, there is nothing to check, so assume otherwise.
Since $\Psi$ is a solution to $\instance$ and $\Psi(v) \neq \skull$, we have
$\Psi(u) \in \phi_u$, $\alpha \in \phi_v$ and $(\Psi(u),\alpha) \in \psi_{uv,u}$. Thus $\Psi(u) \in \{\beta: \exists_{\alpha' \in \phi_v} (\alpha',\beta) \in \psi_{uv,v}\}$
and $\Psi(u)$ satisfies the vertex constraint at $u$ in $\instance'$.
Second, consider two vertices $u_1,u_2 \in N_G(v)$, $u_1 \neq u_2$ and $\Psi(u_1) \neq \skull$, $\Psi(u_2) \neq \skull$.
Since $\Psi$ is a solution to $\instance$ and $\Psi(v) \neq \skull$,
we have $(\Psi(u_1),\alpha) \in \psi_{vu_1,u_1}$ and $(\alpha,\Psi(u_2)) \in \psi_{vu_2,v}$. Therefore
$(\Psi(u_1),\Psi(u_2)) \in \psi_{vu_2,v} \circ \psi_{vu_1,u_1}$ and the claim is proven.
\end{proof}

During the course of the algorithm we perform bypassing operations multiple times, which can drastically increase the number of edges, even if the graph was sparse in the beginning. Therefore, we measure the complexity of our algorithm only in $n$, the number of vertices, and always use only the trivial quadratic bound on the number of edges.

\subsection{Borders and recursive understanding}

As discussed in the illustration, in the case of the \ulc{} problem, the definition of the border variant is completely natural:
informally speaking, for each vertex on the border, we need to know whether it is deleted
and if not, what label is assigned to it. More formally,
given a \ulc{} instance $\instance = (G,\Sigma,\cutsize,(\phi_v)_{v \in V(G)},(\psi_{e,v})_{e \in E(G),v \in e})$
and a set of border terminals $\bterms \subseteq V(G)$, for every function
$\cP:\bterms \to \Sigma \cup \{\skull\}$, we say that a solution $\Psi$ to $\instance$
{\em{is consistent with}} $\cP$ if $\Psi|_{\bterms} = \cP$.
Let $\mathbb{P}(\instance)$ be the set of all functions from $\bterms$ to $\Sigma \cup \{\skull\}$.
We define the border problem as follows.

\defproblemoutput{\bulc}{A \ulc{} instance
$\instance = (G,\Sigma,\cutsize,(\phi_v)_{v \in V(G)}, (\psi_{e,v})_{e \in E(G),v \in e})$
  with $G$ being connected,
and a set $\bterms \subseteq V(G)$ of size at most $4\cutsize$.}{
For each $\cP \in \mathbb{P}(\instance)$, output a solution $\sol_\cP = \Psi_\cP$
to the instance $\instance$ that is consistent with $\cP$ and deletes (assigns $\skull$)
to minimum possible number of vertices, or output $\sol_\cP = \bot$ if no such solution
exists.}

Note that $|\mathbb{P}(\instance)| \leq (s+1)^{4\cutsize}$ and all output solutions
$\Psi_\cP$ delete at most $\cutsize(s+1)^{4\cutsize}$ different vertices in total.
Let $q = \cutsize(s+1)^{4\cutsize} + 2\cutsize$; if $|V(G)| > q+2\cutsize$, then
there are at least $|V(G)|-q-2\cutsize$ vertices in $G$ that are not in $\bterms$
nor are deleted by any of the output solutions $\Psi_\cP$.

In the next lemma we formalize how a recursive step looks like, and verify its correctness.
The statement and its proof, although technical and notationally quite heavy, is completely standard:
we essentially need to verify that the information carried by the boundary terminals
in the definiton of \bulc{} is sufficient to independently substitute partial solutions
on different sides of a separation.

\begin{lemma}\label{lem:ulc-rekur}
Assume we are given a \bulc{} instance
$$\instance_b = (G,\Sigma,\cutsize,(\phi_v)_{v \in V(G)},(\psi_{e,v})_{e \in E(G),v \in e},\bterms)$$
and two disjoint sets of vertices $Z,\newinst{V} \subseteq V(G)$, such that
$|Z| \leq 2\cutsize$, $N_G(\newinst{V}) \subseteq Z$, $|\newinst{V} \cap \bterms| \leq 2\cutsize$
and the subgraph of $G$ induced by $W := \newinst{V} \cup Z_W$ is connected, where $Z_W := N_G(\newinst{V})$.
Denote $\newinst{\bterms} = (\bterms \cup Z_W) \cap W$ and
$$\newinst{\instance}_b = (G[W], \Sigma,\cutsize,(\phi_v)_{v \in W},(\psi_{e,v})_{e \in E(G[W]),v\in e},\newinst{\bterms}).$$
Then $\newinst{\instance}_b$ is a proper \bulc{} instance.
Moreover, if we denote by 
$(\newinst{\sol}_\cP)_{\cP \in \mathbb{P}(\newinst{\instance}_b)}$ an arbitrary output
to the \bulc{} instance $\newinst{\instance}_b$ and
$$U(\newinst{\instance}_b) = \newinst{\bterms} \cup \bigcup \{\newinst{\Psi}_\cP^{-1}(\skull) : \cP \in \mathbb{P}(\newinst{\instance}_b),
\newinst{\sol}_\cP = \newinst{\Psi}_\cP \neq \bot\},$$
then there exists a correct output $(\sol_\cP)_{\cP \in \mathbb{P}(\instance_b)}$
to the \bulc{} instance $\instance_b$ such that
every vertex that deleted by some solution $\sol_\cP \neq \bot$
belongs to $U(\newinst{\instance}_b)$.
\end{lemma}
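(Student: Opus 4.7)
The plan is to show the two assertions in sequence: well-definedness of $\instance_b^\ast$ and the replaceability of any solution's restriction to $V^\ast$ by a recursively computed one. For well-definedness, I would first check the size bound on $\bterms^\ast$: since $Z_W \subseteq N_G(V^\ast) \subseteq Z$ is disjoint from $V^\ast$, one has $\bterms^\ast = (\bterms \cap V^\ast) \cup Z_W$, so $|\bterms^\ast| \leq 2\cutsize + 2\cutsize = 4\cutsize$. Connectivity of $G[W]$ is assumed, and $\instance_b^\ast$ is just the restriction of constraints to $W$, so it is a valid \bulc{} instance.

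The core of the proof is the replacement argument. Fix $\cP \in \mathbb{P}(\instance_b)$ and let $(X,\Psi)$ be a minimum size solution to $\instance_b$ consistent with $\cP$. I would define $\cP^\ast : \bterms^\ast \to \Sigma \cup \{\skull\}$ by $\cP^\ast(v) = \skull$ if $v \in X$ and $\cP^\ast(v) = \Psi(v)$ otherwise. Observe that $(X \cap W,\Psi|_{W \setminus X})$ is a valid solution to $\instance_b^\ast$ consistent with $\cP^\ast$: all edge and vertex constraints of $G[W]$ are a subset of those of $G$, and the match with $\cP^\ast$ is immediate from the definition. Consequently $\sol_{\cP^\ast}^\ast = (X_{\cP^\ast}^\ast, \Psi_{\cP^\ast}^\ast) \neq \bot$ and in particular $|X_{\cP^\ast}^\ast| \leq |X \cap W|$. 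Both $X \cap W$ and $X_{\cP^\ast}^\ast$ meet $\bterms^\ast \supseteq Z_W$ in exactly $\cP^{\ast-1}(\skull)$, so subtracting gives $|X_{\cP^\ast}^\ast \cap V^\ast| \leq |X \cap V^\ast|$.

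Now I would build the desired replacement $(X',\Psi')$ by splicing: set $X' = (X \setminus V^\ast) \cup (X_{\cP^\ast}^\ast \cap V^\ast)$, and $\Psi'(v) = \Psi_{\cP^\ast}^\ast(v)$ for $v \in V^\ast \setminus X'$ and $\Psi'(v) = \Psi(v)$ for $v \in V(G) \setminus (V^\ast \cup X')$. The size inequality above yields $|X'| \leq |X| \leq \cutsize$. To verify feasibility, I would partition $E(G)$ into three classes: edges inside $V(G) \setminus V^\ast$, edges inside $W$, and edges with exactly one endpoint in $V^\ast$; the third class is empty because $N_G(V^\ast) \subseteq Z_W$. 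Edges of the first class are handled by $(X,\Psi)$ and edges of the second class by $(X_{\cP^\ast}^\ast, \Psi_{\cP^\ast}^\ast)$, and the two labelings glue together consistently on $Z_W \setminus X'$ because both agree with $\cP^\ast|_{Z_W}$ there. Consistency with $\cP$ follows from $\bterms \cap V^\ast \subseteq \bterms^\ast$, on which both $X$ and $X_{\cP^\ast}^\ast$ agree with $\cP^\ast$, while $\Psi|_{\bterms \setminus V^\ast} = \Psi'|_{\bterms \setminus V^\ast}$ by construction. Finally, $X' \cap V^\ast \subseteq X_{\cP^\ast}^\ast \subseteq U(\instance_b^\ast)$ as required.

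The only delicate point I expect is the bookkeeping at the shared boundary $Z_W$: one must carefully argue that the choice of $\cP^\ast$ forces $X$ and $X_{\cP^\ast}^\ast$ to coincide on $Z_W$ (both as a set and in the labels assigned there), which is what makes the splicing step produce a valid solution rather than a pair of locally valid but globally inconsistent labelings. Once this gluing is nailed down, iterating the argument over all $\cP \in \mathbb{P}(\instance_b)$ produces the desired output for $\instance_b$, completing the proof.
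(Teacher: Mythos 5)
Your proposal is correct and follows essentially the same route as the paper: define $\cP^\ast$ on $\bterms^\ast$ from $(X,\Psi)$, use the recursive output for $\cP^\ast$ to splice a replacement over $W$, and verify constraints by noting the two labelings agree on $Z_W$ via $\cP^\ast$. Your $\cP^\ast$ and $(X',\Psi')$ coincide with the paper's definitions once one uses consistency with $\cP$ on $\bterms$, and the boundary bookkeeping you flag is exactly what the paper's case analysis on edges and border terminals pins down.
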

\begin{proof}
The claim that $\newinst{\instance}_b$ is a proper \bulc{} instance follows directly
from the assumptions that $G[W]$ is connected, $|Z_W| \leq |Z| \leq 2\cutsize$
and $|\newinst{V} \cap \bterms| \leq 2\cutsize$. In the rest of the proof we justify the second claim
of the lemma.

Fix $\cP \in \mathbb{P}(\instance_b)$. Assume that there exists a solution
to the instance $\instance_b$ that is consistent with $\cP$; let $\Psi_\cP$
be such a solution with minimum possible number of deleted vertices.
To prove the lemma we need to show
a second solution $\Psi_\cP'$ to $\instance_b$ that deletes no more vertices than $\Psi_\cP$ does, is consistent with $\cP$,
and such that all vertices from $W$ that are deleted by $\Psi_\cP'$ lie in $U(\newinst{\instance}_b)$.

Let $\newinst{\cP}$ be the restriction of $\Psi_\cP$ to $\newinst{\bterms}$.
Note that $\newinst{\cP} \in \mathbb{P}(\newinst{\instance}_b)$
and $\Psi_\cP|_W$, is a solution to $\newinst{\instance}_b$
consistent with $\newinst{\cP}$.
Therefore the output $\newinst{\sol}_{\newinst{\cP}}$ to $\newinst{\instance}_b$ is different than $\bot$;
denote it $\newinst{\Psi}_{\newinst{\cP}}$.
By definition, this output deletes minimum possible number of vertices; in particular
$$|\newinst{\Psi}^{-1}_{\newinst{\cP}}(\skull)| \leq |\Psi^{-1}_\cP(\skull) \cap W|.$$

Define $\Psi'_\cP: V(G) \to \Sigma \cup \{\skull\}$ as follows: $\Psi'_\cP(v) = \newinst{\Psi}_{\newinst{\cP}}(v)$
for $v \in W$ and $\Psi'_\cP(v) = \Psi_\cP(v)$ otherwise.
By the optimality of $\newinst{\Psi}_{\newinst{\cP}}$, the number of vertices deleted
by $\Psi'_\cP$ is not larger than the number of vertices deleted by $\Psi_\cP$. 
Since $\Psi'_\cP$ is a blend of two labelings, it clearly satisfies all vertex constraints.
The fact that $Z_W = N(\newinst{V}) \subseteq \newinst{\bterms}$ ensures that $\Psi'_\cP$, $\newinst{\Psi}_{\newinst{\cP}}$,
and $\Psi_\cP$ agree on $Z_W$ and thus $\Psi'_\cP$  satisfies all edge constraints.
Finally, $\bterms \cap W \subseteq \newinst{\bterms}$ and $\newinst{\cP}$ is a restriction of $\Psi_\cP$, which
in turn is consistent with $\cP$, the labeling $\Psi'_\cP$ is consistent with $\cP$ as well.
This finishes the proof of the lemma.
\end{proof}

Note that in Lemma \ref{lem:ulc-rekur} we have $|U(\newinst{\instance}_b) \cap \newinst{V}|  \leq q$.
We are now ready to present the recursive steps of the algorithm.

\begin{step}\label{step:ulc-rekur}
Assume we are given a \bulc{} instance
$$\instance_b = (G,\Sigma,\cutsize,(\phi_v)_{v \in V(G)}, (\psi_{e,v})_{e \in E(G),v \in e}, \bterms).$$
Invoke first the algorithm of Lemma \ref{lem:detect-good-node} in a search for a $(q,2\cutsize)$-good node separation (with $\undelV = \emptyset$).
If it returns a good node separation $(Z,V_1,V_2)$, let $j \in \{1,2\}$ be such that $|V_j \cap \bterms| \leq 2\cutsize$ and denote $\newinst{Z} = Z$, $\newinst{V} = V_j$.
Otherwise, if it returns that no such good node separation exists in $G$,
invoke the algorithm of Lemma \ref{lem:detect-flower-cut} in a search for a $(q,\cutsize)$-flower separation w.r.t. $\bterms$ (with $\undelV = \emptyset$ again).
If it returns that no such flower separation exists in $G$,
pass the instance $\instance_b$ to the next step. Otherwise, if it returns a flower separation $(Z,(V_i)_{i=1}^\ell)$, denote $\newinst{Z} = Z$ and $\newinst{V} = \bigcup_{i=1}^\ell V_i$.

In the case we have obtained $\newinst{Z}$ and $\newinst{V}$ (either from Lemma \ref{lem:detect-good-node} or Lemma \ref{lem:detect-flower-cut}), 
invoke the algorithm recursively for the \bulc{} instance $\newinst{\instance}_b$ defined as in the statement
of Lemma \ref{lem:ulc-rekur} for separator $\newinst{Z}$ and set $\newinst{V}$, obtaining an output $(\newinst{\sol}_\cP)_{\cP \in \mathbb{P}(\newinst{\instance}_b)}$.
Compute the set $U(\newinst{\instance}_b)$. Bypass (in an arbitrary order) all vertices of $\newinst{V} \setminus U(\newinst{\instance}_b)$ to obtain a new instance $\instance_b'$ (observe that for each bypassed vertex $v$ we have $\phi_v \neq \emptyset$, which is a necessary condition for bypassing). Recall that $\newinst{\bterms}\subseteq U(\newinst{\instance}_b)$, so no border terminal get bypassed.
Restart the algorithm on the new instance $\instance_b'$ and obtain a family of solutions $(\sol_\cP')_{\cP \in \mathbb{P}(\instance_b)}$. For every $\cP \in \mathbb{P}(\instance_b)$, if $\sol_\cP'=\bot$ then output $\sol_\cP=\bot$ as well, while if $\sol_\cP=\Psi_\cP'$ then obtain $\Psi_\cP$ by extending $\Psi_\cP'$ on $U(\newinst{\instance}_b)$ using Lemma~\ref{lem:ulc-propagate-labeling} (we justify that such an extension exists in Lemma~\ref{lem:ulc-rekur-correctness}) and output $\sol_\cP=\Psi_\cP$.
\end{step}

Let us first verify that the application of Lemma \ref{lem:ulc-rekur} is justified. Indeed, by the definitions of
the good node separation and the flower separation, as well as the choice of $\newinst{V}$, we have in both cases $|\newinst{V} \cap \bterms| \leq 2\cutsize$
and that $G[\newinst{V} \cup N_G(\newinst{V})]$ is connected. Moreover, note that the recursive call is applied to the graph with strictly smaller number of vertices
than $G$: in the case of a good node separation, $V_2$ is removed from the graph, and in the case of a flower separation, recall that the definition
of the flower separation requires $Z \cup \bigcup_{i=1}^\ell V_i$ to be a proper subset of $V(G)$.
Finally, in both cases $|\newinst{V}| > q$, and $|\newinst{V} \setminus U(\newinst{\instance}_b)| \geq |\newinst{V}|-q \geq 1$ vertices are bypassed in Step \ref{step:ulc-rekur}.

The following lemma verifies the correctness of Step \ref{step:ulc-rekur}.

\begin{lemma}\label{lem:ulc-rekur-correctness}
Assume we are given a \bulc{} instance
$$\instance_b = (G,\Sigma,\cutsize,(\phi_v)_{v \in V(G)}, (\psi_{e,v})_{e \in E(G),v \in e}, \bterms)$$
on which Step \ref{step:ulc-rekur} is applied, and let $\instance_b'$ be an instance after all bypassing operations of Step \ref{step:ulc-rekur} are applied.
Let $(\sol_\cP')_{\cP \in \mathbb{P}(\instance_b')}$ be a correct output to $\instance_b'$.
Then there exists a correct output $(\sol_\cP)_{\cP \in \mathbb{P}(\instance_b)}$
to $\instance_b$, such that:
\begin{itemize}
\item $\sol_\cP=\bot$ if $\sol_\cP'=\bot$;
\item if $\sol_\cP'=\Psi_\cP'$ then $\Psi_\cP'$ can be consistently extended to $V(G)$
and for every such extension $\Psi_\cP$ is a correct output for $\cP$ in $\instance_b$;
\end{itemize}
\end{lemma}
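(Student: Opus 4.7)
The plan is to view the passage from $\instance_b$ to $\instance_b'$ as a sequence of single bypass operations and to combine Lemma~\ref{lem:ulc-bypass-vertex} with the structural guarantee of Lemma~\ref{lem:ulc-rekur}. Let $Y := V^\ast \setminus U(\instance_b^\ast)$ denote the set of vertices bypassed during Step~\ref{step:ulc-rekur}. By construction $\bterms^\ast \subseteq U(\instance_b^\ast)$, so $Y \cap \bterms = \emptyset$; in particular $\mathbb{P}(\instance_b) = \mathbb{P}(\instance_b')$, and the notion of consistency with a fixed $\cP$ is the same in both instances. I fix such a $\cP$ for the rest of the argument.

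I would first iterate Lemma~\ref{lem:ulc-bypass-vertex} along the bypass sequence (its precondition $\phi_v \neq \emptyset$ is verified inside Step~\ref{step:ulc-rekur}) to obtain two matching correspondences between solutions consistent with $\cP$. In the \emph{lift} direction, any solution $(X, \Psi')$ to $\instance_b'$ consistent with $\cP$ can be extended to a solution $(X, \Psi)$ to $\instance_b$ consistent with $\cP$ by choosing, in the reverse order of the bypasses, an appropriate label for each vertex of $Y$; moreover, every labeling $\Psi_\cP$ on $V(G) \setminus X$ that extends $\Psi'$ and satisfies the original constraints of $\instance_b$ automatically gives a valid pair $(X, \Psi_\cP)$, by the sheer definition of a solution. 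In the \emph{restrict} direction, any solution $(X, \Psi)$ to $\instance_b$ consistent with $\cP$ with $X \cap Y = \emptyset$ restricts to a solution $(X, \Psi|_{V(G) \setminus Y})$ of $\instance_b'$ consistent with $\cP$. Both directions preserve $X$ verbatim, hence its size.

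Combining the two correspondences with Lemma~\ref{lem:ulc-rekur}, which furnishes a correct output of $\instance_b$ in which every deletion set avoids $Y$, the restrict direction shows that the minimum size of a solution to $\instance_b'$ consistent with $\cP$ is at most that of $\instance_b$, while the lift direction gives the reverse inequality; hence the two optima coincide, and $\instance_b'$ is infeasible for $\cP$ if and only if $\instance_b$ is. The two cases of the lemma follow at once: if $\sol_\cP' = \bot$ then outputting $\sol_\cP = \bot$ is correct, and if $\sol_\cP' = (X_\cP', \Psi_\cP')$ then $|X_\cP'|$ also realises the optimum for $\instance_b$ under $\cP$, the lift direction produces at least one consistent extension of $\Psi_\cP'$ to $V(G) \setminus X_\cP'$, and any such extension $\Psi_\cP$ makes $(X_\cP', \Psi_\cP)$ a correct output.

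The main obstacle will be the careful inductive bookkeeping needed to iterate Lemma~\ref{lem:ulc-bypass-vertex}: one must track that after propagating labels through a potentially long chain of bypasses the resulting label of each $v \in Y$ is consistent with the \emph{current}, not the original, neighbours' labels, and simultaneously verify that consistency with $\cP$ and minimality of $|X|$ are preserved at every step. Once this is in place, the ``every consistent extension works'' clause reduces to the tautology that a pair $(X, \Psi)$ is a valid solution precisely when it satisfies the original constraints of $\instance_b$ on $V(G) \setminus X$, so no further labelling choices need to be justified beyond the one produced by the lift.
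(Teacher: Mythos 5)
Your proof is correct and follows essentially the same route as the paper's own (very terse) proof: it iterates Lemma~\ref{lem:ulc-bypass-vertex} along the bypass sequence to obtain the lift/restrict correspondences and then invokes Lemma~\ref{lem:ulc-rekur} to guarantee that some optimal solution for each $\cP$ avoids the bypassed set $Y=V^\ast\setminus U(\instance_b^\ast)$, from which the equality of optima and the feasibility equivalence follow. The only point you might wish to make slightly more explicit when polishing the ``every consistent extension works'' clause is that consistency of $(X_\cP',\Psi_\cP)$ with $\cP$ is automatic because $\bterms\subseteq V(G')$ (no border terminal is bypassed), so $\Psi_\cP|_\bterms = \Psi_\cP'|_\bterms$; but this is a trivial observation and does not constitute a gap.
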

\begin{proof}
The lemma is a straightforward corollary of Lemma \ref{lem:ulc-rekur} and the properties of the bypassing operation described in Lemma \ref{lem:ulc-bypass-vertex}. Lemma \ref{lem:ulc-rekur} ensures us that each vertex of $\newinst{V} \setminus U(\newinst{\instance}_b)$ is omitted by some optimal solution for every $\cP\in \mathbb{P}(\instance_b)$, which enables us to use Lemma \ref{lem:ulc-bypass-vertex}. Note that existence of the extension is asserted by the second claim of Lemma~\ref{lem:ulc-bypass-vertex}.
\end{proof}

We are left with an analysis of the time complexity of Step \ref{step:ulc-rekur}.
The applications of Lemmas \ref{lem:detect-good-node} and \ref{lem:detect-flower-cut} use $O(2^{O(\min(q,2\cutsize) \log(q+2\cutsize))} n^3 \log n) = O(2^{O(\cutsize^2 \log s)} n^3 \log n)$
time. Let $n' = |\newinst{V}|$; the recursive step is applied to the graph with at most $n' + 2\cutsize$ vertices and, after bypassing, there are at most $n-n'+q$ vertices
left. Moreover, each bypassing operation takes $O(sn^2)$ time, the computation of $U(\newinst{\instance}_b)$
takes $O((s+1)^{4\cutsize} n)$ time, and extending the labelings from the trimmed instance takes $O((s+1)^{4\cutsize} sn^2)$ time. The values of $s = |\Sigma|$ and $\cutsize$ do not change in this step.
Therefore, we have the following recursive formula for time complexity as a function of the number of vertices of $G$:
\begin{equation}
T(n)\leq \max_{q+1\leq n'\leq n-2k-1} \Big( 2^{O(\cutsize^2 \log s)} n^3\log n + T(n'+2\cutsize) + T(n-n'+q)   \Big).
\end{equation}
Note that the function $p(t)=t^4\log t$ is convex, so the maximum of the expression is attained at one of the ends. A straightforward inductive check of both of the ends proves that we have indeed the claimed bound on the complexity, i.e., $T(n) = O(2^{O(\cutsize^2 \log s)} n^4 \log n)$.

We conclude this section with a note that Lemma \ref{lem:node-no-separation} asserts that,
   if Step \ref{step:ulc-rekur} is not applicable, then for every set $Z \subseteq V(G)$
  of size at most $\cutsize$, the graph $G \setminus Z$ contains at most
 $t := (2q+1)(2^\cutsize-1) + 4\cutsize + 1$ connected components, out of which at most one
has more than $q$ vertices.

\subsection{Brute force approach}\label{sec:ulc-brute}
If the graph is reduced by Step \ref{step:ulc-rekur} is small,
   the algorithm may apply a straightforward brute-force approach
to the \bulc{} problem. In this section we describe this method formally.
\begin{lemma}\label{lem:ulc-brute-Psi}
Assume we are given a \bulc{} instance $$\instance_b = (G,\Sigma,\cutsize,(\phi_v)_{v \in V(G)}, (\psi_{e,v})_{e \in E(G),v \in e}, \bterms).$$
Let $X \subseteq V(G)$ be a set of size at most $\cutsize$ and let $\cP \in \mathbb{P}(\instance_b)$. Then, in time $O(s^2n^2)$, one can compute a function $\Psi:V(G) \to \Sigma \cup \{\skull\}$
such that $\Psi^{-1}(\skull) = X$ and $\Psi$ is a solution to $\instance_b$ consistent with $\cP$,
or correctly conclude that no such function exists.
\end{lemma}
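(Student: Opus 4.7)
The plan is to reduce the feasibility question to a collection of independent labeling problems on the connected components of $G \setminus X$ and then apply Lemma~\ref{lem:ulc-propagate-labeling} to each component. First I would verify the trivial consistency condition between $X$ and $\cP$: if $X \cap \bterms \neq \cP^{-1}(\skull)$, then no solution consistent with $\cP$ can have $X$ as its deletion set, and we return $\bot$ immediately. Otherwise, we compute the connected components of $G \setminus X$ in $O(n^2)$ time by a breadth-first search.

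For each connected component $C$, I would distinguish two cases. If $C$ contains no vertex of $\bterms \setminus X$, then by Lemma~\ref{lem:ulc-propagate-labeling} the component admits at most $s$ labelings, and we can enumerate them in $O(s^2 |C|^2)$ time; we pick any one if it exists, and return $\bot$ otherwise. If $C$ contains at least one border terminal $v$ with $\cP(v) = \alpha \in \Sigma$, then by Lemma~\ref{lem:ulc-propagate-labeling} there is at most one labeling $\Psi^C$ of $C$ with $\Psi^C(v) = \alpha$; we compute it in $O(s|C|^2)$ time or conclude no such labeling exists. After computing $\Psi^C$, we verify for every other border terminal $w \in C \cap \bterms$ that $\Psi^C(w) = \cP(w)$; if any of these checks fails we return $\bot$.

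The function $\Psi$ is obtained by glueing together the per-component labelings. Correctness of this procedure follows from the fact that edge constraints only exist within connected components of $G \setminus X$, so labelings on distinct components are completely independent, subject only to the constraints fixed by $\cP$ on the border terminals. The consistency of $(X,\Psi)$ with $\cP$ is enforced explicitly in steps above.

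The only non-obvious part is the running time. Summing contributions over all components, we spend $O(s^2 \sum_i |C_i|^2)$ time in Lemma~\ref{lem:ulc-propagate-labeling} applications, which is bounded by $O(s^2 (\sum_i |C_i|)^2) = O(s^2 n^2)$. The additional bookkeeping (finding components, verifying equality with $\cP$ on border terminals) fits within the same bound, yielding the claimed $O(s^2 n^2)$ total time. I do not anticipate any substantial obstacle here; the lemma is essentially a repackaging of Lemma~\ref{lem:ulc-propagate-labeling} applied component by component, with the border constraints from $\cP$ serving as the one fixed value per component required by the propagation procedure.
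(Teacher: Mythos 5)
Your proof is correct and takes essentially the same approach as the paper: compute the connected components of $G \setminus X$, apply Lemma~\ref{lem:ulc-propagate-labeling} to each, and verify consistency with $\cP$ (the only minor difference is that you seed the propagation directly from a border terminal when one is present, rather than enumerating all $\leq s$ labelings and filtering, which is an implementation detail within the same bound).
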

\begin{proof}
We apply Lemma \ref{lem:ulc-propagate-labeling} to the vertex set of every connected component of the graph induced by $A = V(G) \setminus X$. For each output labeling, we verify if it is consistent with $\cP$.
\end{proof}
\begin{lemma}\label{lem:ulc-brute}
A correct output to a \bulc{} instance $$\instance_b = (G,\Sigma,\cutsize,(\phi_v)_{v \in V(G)}, (\psi_{e,v})_{e \in E(G),v \in e}, \bterms)$$
can be computed in $O((s+1)^{4\cutsize} s^2\cutsize n^{\cutsize+2})$ time.
\end{lemma}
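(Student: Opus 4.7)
The plan is a straightforward brute-force search that enumerates, for each $\cP \in \mathbb{P}(\instance_b)$, every possible candidate deletion set $X \subseteq V(G)$ with $|X| \leq \cutsize$ in nondecreasing order of size. For each candidate pair $(X,\cP)$, the algorithm invokes Lemma~\ref{lem:ulc-brute-Psi} to decide in $O(s^2 n^2)$ time whether there exists a labeling $\Psi: V(G) \setminus X \to \Sigma$ such that $(X,\Psi)$ is a solution to $\instance_b$ consistent with $\cP$. For each $\cP$, we output the first such $(X,\Psi)$ encountered, which by the ordering has $|X|$ minimum possible; if no candidate succeeds, we output $\sol_\cP = \bot$.

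Correctness is immediate from the specification of \bulc{} and Lemma~\ref{lem:ulc-brute-Psi}: we enumerate every deletion set of size at most $\cutsize$, so for every $\cP$ we find an optimal consistent solution whenever one exists, or correctly report $\bot$. For the running time, three factors multiply: (i) $|\mathbb{P}(\instance_b)| \leq (s+1)^{4\cutsize}$, since $|\bterms| \leq 4\cutsize$ and each border terminal is assigned a value in $\Sigma \cup \{\skull\}$; (ii) the number of subsets $X \subseteq V(G)$ with $|X| \leq \cutsize$ is $\sum_{i=0}^{\cutsize} \binom{n}{i} \leq (\cutsize + 1) n^{\cutsize}$; and (iii) the $O(s^2 n^2)$ cost of each invocation of Lemma~\ref{lem:ulc-brute-Psi}. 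Multiplying yields the claimed bound $O((s+1)^{4\cutsize} s^2 \cutsize n^{\cutsize+2})$.

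There is essentially no hard step here: this lemma is really just a bookkeeping statement that records a baseline brute-force procedure on which the main algorithm falls back once recursive understanding has shrunk the graph to a size bounded by a function of $\cutsize$ and $s$. The only minor care needed is to enumerate the candidate sets $X$ in order of increasing size (or to iterate over all of them and keep the smallest successful one), so that the output $\sol_\cP$ for each $\cP$ is guaranteed to have minimum possible cardinality, as required by the \bulc{} problem specification.
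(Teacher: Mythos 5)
Your proof is correct and takes essentially the same approach as the paper: iterate over all $\cP \in \mathbb{P}(\instance_b)$ and all deletion sets $X$ of size at most $\cutsize$, invoking Lemma~\ref{lem:ulc-brute-Psi} for each pair, and for each $\cP$ keep a smallest successful $X$. The running-time accounting ($(s+1)^{4\cutsize}$ choices of $\cP$, $(\cutsize+1)n^{\cutsize}$ choices of $X$, and $O(s^2n^2)$ per check) matches the paper exactly.
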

\begin{proof}
We apply Lemma \ref{lem:ulc-brute-Psi} for each $\cP \in \mathbb{P}(\instance_b)$
(there are at most $(s+1)^{4\cutsize}$ choices) and for each deletion set $X \subseteq V(G)$
with  $|X| \leq \cutsize$ (at most $(\cutsize + 1) n^\cutsize$ choices).
\end{proof}
\begin{step}\label{step:ulc-brute}
If $|V(G)| \leq qt + \cutsize$, apply
Lemma \ref{lem:ulc-brute} to find a correct output
to a \bulc{} instance $\instance_b = (G,\Sigma,\cutsize,(\phi_v)_{v \in V(G)}, (\psi_{e,v})_{e \in E(G),v \in e}, \bterms)$.
\end{step}
Recall that $q = (s+1)^{2\cutsize} \cutsize = 2^{O(\cutsize \log s)}$
and $t = (2q+2)(2^\cutsize-1) + 2\cutsize + 1 = 2^{O(\cutsize \log s)}$.
Thus, if Step \ref{step:ulc-brute} is applicable,
its running time is $2^{O(\cutsize^2 \log s)}$.

\subsection{High connectivity phase}\label{sec:ulc-high}

Assume we have a \bulc{} instance $\instance_b = (G,\Sigma,\cutsize,(\phi_v)_{v \in V(G)}, (\psi_{e,v})_{e \in E(G),v \in e}, \bterms)$ where Steps \ref{step:ulc-rekur} and \ref{step:ulc-brute}
are not applicable.
In this section we show how to
exploit high connectivity of the graph implied by Lemma \ref{lem:node-no-separation}
to compute a correct output to $\instance_b$.
To this end, fix $\cP \in \mathbb{P}(\instance_b)$; we focus on
finding the solution $\sol_\cP$. First, let us solve some simple cases.

\begin{step}\label{step:ulc-empty-X}
For each $\cP \in \mathbb{P}(\instance_b)$, verify using Lemma \ref{lem:ulc-brute-Psi}
whether there exists solution $\sol_\cP = \Psi_\cP$ that does not delete any vertex at all.
If yes, output such a solution.
\end{step}

Note that, if $|V(G)|$ is too large for Step \ref{step:ulc-brute} to be applicable,
for every set $Z \subseteq V(G)$ of size at most $\cutsize$, 
the bound on the number of connected components from Lemma \ref{lem:node-no-separation}
implies that there exists exactly one connected component of $G \setminus Z$ with
more than $q$ vertices; denote its vertex set by $\bigcc(Z)$.
We extend this notion to labelings: for a labeling $\Psi$ that deletes at most $\cutsize$ vertices,
we denote $\bigcc(\Psi) = \bigcc(\Psi^{-1}(\skull))$.


\subsubsection{Interrogating sets}

We now use Lemma \ref{lem:random} to get some more structure of the graph $G$.
\begin{definition}
Let $Z \subseteq V(G)$ be a set of size at most $\cutsize$ and let $S \subseteq V(G)$.
We say that $S$ {\em{interrogates}} $Z$ if the following holds:
\begin{enumerate}
\item $S \cap Z = \emptyset$;
\item for every connected component $C$ of $G \setminus Z$ with at most $q$ vertices,
  all vertices of $C$ belong to $S$;
\item for every $v \in Z$, such that $N_G(v) \cap \bigcc(Z) \neq \emptyset$,
  there exists a connected component of $G[S]$ that has more than $q$ vertices
  and contains at least one neighbour of $v$.
\end{enumerate}
We say that $S$ \emph{interrogates} a labeling $\Psi$ if it interrogates the deletion set $\Psi^{-1}(\skull)$.
\end{definition}

Note that in the third point, the considered component has to be entirely contained in $\bigcc(Z)$ due to its size.

\begin{lemma}\label{lem:ulc-random-applicable}
Let $\randfamily$ be a family obtained
by the algorithm of Lemma \ref{lem:random} for universe $U=V(G)$ and constants $a=qt + (q+1)\cutsize$ and $b=\cutsize$,
Then, for every $Z \subseteq V(G)$ with $1 \leq |Z| \leq \cutsize$, there exists a set
$S \in \randfamily$ that interrogates $Z$.
\end{lemma}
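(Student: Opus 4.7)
The plan is to exhibit explicit sets $A$ and $B$ witnessing applicability of Lemma~\ref{lem:random}, so that any $S \in \randfamily$ with $A \subseteq S$ and $B \cap S = \emptyset$ automatically interrogates $Z$. Fix $Z$ with $1 \leq |Z| \leq \cutsize$. Recall that we are in Section~\ref{sec:ulc-high}, so Step~\ref{step:ulc-brute} is not applicable and hence $|V(G)| > qt + \cutsize$; by the discussion following Step~\ref{step:ulc-rekur} combined with Lemma~\ref{lem:node-no-separation} applied to $Z$, the graph $G \setminus Z$ has at most $t$ connected components and exactly one of them, namely $G[\bigcc(Z)]$, has more than $q$ vertices.

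First I would define the ``forced'' set $A$. Let $A_{\mathrm{small}}$ be the union of vertex sets of all connected components of $G \setminus Z$ of size at most $q$. Since there are at most $t-1$ such components and each has at most $q$ vertices, $|A_{\mathrm{small}}| \leq q(t-1) \leq qt$. Next, for each $v \in Z$ with $N_G(v) \cap \bigcc(Z) \neq \emptyset$, fix an arbitrary neighbour $u_v \in N_G(v) \cap \bigcc(Z)$ and, using that $G[\bigcc(Z)]$ is connected with $|\bigcc(Z)| > q$, pick a set $A^v \subseteq \bigcc(Z)$ with $u_v \in A^v$, $|A^v| = q+1$, and $G[A^v]$ connected (e.g.\ by growing a BFS tree in $G[\bigcc(Z)]$ from $u_v$). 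Set $A = A_{\mathrm{small}} \cup \bigcup_v A^v$ and $B = Z$. Note $A \subseteq V(G) \setminus Z$, so $A \cap B = \emptyset$, $|A| \leq qt + (q+1)\cutsize = a$ and $|B| \leq \cutsize = b$. By Lemma~\ref{lem:random} there exists $S \in \randfamily$ with $A \subseteq S$ and $S \cap B = \emptyset$.

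I would then verify the three conditions of interrogation. Condition (1) is immediate since $S \cap Z = S \cap B = \emptyset$. For (2), any connected component $C$ of $G \setminus Z$ with $|V(C)| \leq q$ has $V(C) \subseteq A_{\mathrm{small}} \subseteq A \subseteq S$. For (3), take any $v \in Z$ with $N_G(v) \cap \bigcc(Z) \neq \emptyset$; the set $A^v \subseteq S$ induces a connected subgraph of size $q+1 > q$ containing the neighbour $u_v$ of $v$, so the connected component of $G[S]$ containing $A^v$ has more than $q$ vertices and contains a neighbour of $v$, as required.

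The only mildly delicate point is the accounting that $|A| \leq a$, which hinges on the bound on the number of small components provided by Lemma~\ref{lem:node-no-separation} (equivalently, the $t$ from the discussion after Step~\ref{step:ulc-rekur}); otherwise the argument is a clean application of Lemma~\ref{lem:random} to a carefully chosen ``must-contain/must-avoid'' pair.
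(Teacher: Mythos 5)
Your proof is correct and matches the paper's argument almost exactly: the paper likewise sets $A_1$ to be the union of the small components of $G\setminus Z$ (bounded by $qt$ via Lemma~\ref{lem:node-no-separation}), picks for each $v\in Z$ adjacent to $\bigcc(Z)$ a tree $T_v$ on $q+1$ vertices in $\bigcc(Z)$ containing a neighbour of $v$, sets $A_2=\bigcup_v V(T_v)$ (bounded by $(q+1)\cutsize$), and applies Lemma~\ref{lem:random} with $A=A_1\cup A_2$, $B=Z$. Your extra care in checking that $|V(G)|>qt+\cutsize$ guarantees the existence of $\bigcc(Z)$, and your explicit verification of the three interrogation conditions, are both sound and merely spell out what the paper leaves implicit.
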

\begin{proof}
Fix $Z \subseteq V(G)$ with $|Z| \leq \cutsize$. Let $A_1$ be the union of vertex
sets of all connected components of $G \setminus Z$ that have at most $q$ vertices;
by Lemma \ref{lem:node-no-separation}, $|A_1| \leq qt$.
For each $v \in Z$ such that $N_G(v) \cap \bigcc(Z) \neq \emptyset$,
fix $w_v \in N_G(v) \cap \bigcc(Z)$ and a tree $T_v$ with exactly $q+1$ vertices
that contains $w_v$ and is contained in $\bigcc(Z)$; note that this is possible due to $|\bigcc(Z)|>q$. Let $A_2$ be the union
of vertex sets of all trees $T_v$ for $v \in Z$; clearly $|A_2| \leq (q+1)\cutsize$.
By Lemma \ref{lem:random}, as $|A_1 \cup A_2| \leq qt+(q+1)\cutsize$ and $|Z| \leq \cutsize$,
there exists a set $S \in \randfamily$ that contains $A_1 \cup A_2$ and 
is disjoint with $Z$. By the construction of the sets $A_1$ and $A_2$, the set
$S$ interrogates $Z$ and the lemma is proven.
\end{proof}

Note that, as $q,t = 2^{O(\cutsize \log s)}$,
     the family $\randfamily$ of Lemma \ref{lem:ulc-random-applicable}
is of size $2^{O(\cutsize^2 \log s)} \log n$ and can be computed
in $O(2^{O(\cutsize^2 \log s)} n \log n)$ time.
Therefore we may branch, guessing a set $S$ that interrogates
the solution $\sol_\cP = \Psi_\cP$ we are looking for.
\begin{step}\label{step:ulc-guess-S}
Compute the family $\randfamily$ from Lemma \ref{lem:ulc-random-applicable}
and branch into $|\randfamily|$ subcases, indexed by sets $S \in \randfamily$.
In a branch $S$ we seek for a solution $\Psi_\cP$ with minimum possible number of deleted vertices,
that not only is a solution to $\instance_b$ consistent with $\cP$,
but is also interrogated by $S$.
\end{step}
Lemma \ref{lem:ulc-random-applicable} verifies the correctness of the branching
of Step \ref{step:ulc-guess-S}; as discussed, the step is applied in
$O(2^{O(\cutsize^2 \log s)} n \log n)$ time and leads to
$O(2^{O(\cutsize^2 \log s)} \log n)$ subcases.

After choosing a set $S$, we may now slightly modify the set $S$ to make it more regular.
\begin{definition}
A vertex $v \in V(G)$ is said to be {\em{forsaken}}, if
\begin{itemize}
\item $v \in \bterms$ and $\cP(v) = \skull$; or
\item $\phi_v = \emptyset$.
\end{itemize}
\end{definition}
The forsaken vertices are those that are necessarily deleted by
any solution $\Psi_\cP$ consistent with $\cP$.
\begin{step}\label{step:ulc-clean-S}
As long as there exists a vertex $v \in V(G)$ that is not forsaken and $N_G[v] \cap S = \emptyset$,
add $v$ to $S$.
\end{step}
Step \ref{step:ulc-clean-S} can clearly be applied in $O(sn^2)$ time
(for all vertices it is applied to; note that Step \ref{step:ulc-clean-S}
 is applied to one vertex $v$ at a time and, by its application to the vertex $v$,
 it may become not applicable to the neighbours of $v$).
We now discuss its correctness.
Let $\Psi_\cP$ be a solution to $\instance_b$
that is interrogated by $S$ and consistent with $\cP$. Then $\Psi_\cP$ is interrogated
by $S \cup \{v\}$ unless $\Psi_\cP(v) = \skull$.
If this is the case, then since $N_G[v] \cap S = \emptyset$, by the last property of an interrogating set
$v$ is not adjacent to any vertex of $\bigcc(\Psi_\cP)$. Moreover, by the second property
of an interrogating set, $v$ is not adjacent to any vertex of connected component
of $G \setminus \Psi^{-1}_\cP(\skull)$ of size at most $q$. We infer that all
vertices of $N_G[v]$ are deleted by the labeling $\Psi_\cP$.
Since $v$ is not a forsaken vertex, there exists a value $\alpha \in \phi_v$ such that
if we assing $\alpha$ to $v$ in the labeling $\Psi_\cP$, we obtain a solution to $\instance_b$
that is consistent with $\cP$ (but not necessarily interrogated by $S$).
Therefore $\Psi_\cP$ is not a solution to $\instance_b$ consistent with $\cP$
with minimum possible number of deleted vertices, and we may omit it from consideration.

Step \ref{step:ulc-clean-S} gives us the following property of the set $S$.
\begin{lemma}\label{lem:ulc-clean-S}
After Step \ref{step:ulc-clean-S} is applied, any vertex $v$ that is not forsaken
is contained in $N_G[S]$.
\end{lemma}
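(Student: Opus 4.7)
The plan is to derive the lemma directly from the termination condition of the while loop in Step \ref{step:ulc-clean-S}. I would argue by contradiction: suppose that after Step \ref{step:ulc-clean-S} terminates there still exists a vertex $v \in V(G)$ that is not forsaken and satisfies $v \notin N_G[S]$. Unfolding the definition of the closed neighborhood, $v \notin N_G[S]$ means both that $v \notin S$ and that no neighbor of $v$ lies in $S$; in other words, $N_G[v] \cap S = \emptyset$. But this is precisely the condition under which the loop in Step \ref{step:ulc-clean-S} would fire on $v$, contradicting the assumption that the loop has terminated.

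For this argument to go through, I need only the observation that the non-forsaken status of $v$ does not change across the execution of Step \ref{step:ulc-clean-S}: the step only modifies the set $S$, while the definition of being forsaken depends solely on membership of $v$ in $\bterms$, on the value $\cP(v)$, and on the vertex constraint $\phi_v$, none of which are touched. Likewise, $N_G[v]$ is invariant since the graph is not modified in this step. Hence the condition checked by the loop is evaluated against the same data as in the hypothetical contradiction above, which makes the contradiction valid.

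The only mild subtlety — which I do not really regard as an obstacle, but which is worth mentioning for completeness — is verifying that Step \ref{step:ulc-clean-S} actually terminates, so that the phrase ``after Step \ref{step:ulc-clean-S} is applied'' is meaningful. This is immediate from the monotonicity of the process: each iteration strictly enlarges $S$ by one vertex, and $S \subseteq V(G)$ is finite, so the loop performs at most $|V(G)|$ iterations. Once termination is established, the contradiction argument above completes the proof.
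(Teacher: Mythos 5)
Your proof is correct and is essentially the same argument as the paper's: if a non-forsaken vertex $v$ satisfied $N_G[v]\cap S=\emptyset$, Step~\ref{step:ulc-clean-S} would still be applicable to $v$, contradicting termination. The extra remarks on termination and on the invariance of the forsaken status and of $N_G[v]$ are correct and merely spell out what the paper leaves implicit.
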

\begin{proof}
If $v$ is not forsaken and $v \notin N_G[S]$, then $N_G[v] \cap S = \emptyset$
and Step \ref{step:ulc-clean-S} is applicable to $v$.
\end{proof}

\subsubsection{Labelings of big stains}

Let us now focus on a fixed branch $S \in \randfamily$.
\begin{definition}
Each connected component of $G[S]$ is called a {\em{stain}}.
A stain is {\em{big}} if it has more than $q$ vertices, and {\em{small}} otherwise.
\end{definition}

Let $S^\bigcc \subseteq S$ be the union of all vertex sets of big stains of $G[S]$.
We now establish a crucial observation that the fact that $G$ admits no $(q,2\cutsize)$-good
node separations implies that there are only very few reasonable labelings for $S^\bigcc$.
\begin{lemma}\label{lem:ulc-big-labelings}
One can in $O(\cutsize sn^3 + \cutsize s^2 n^2)$ time compute a family $\Psifamily$
of at most $s$ deletion-free labelings of $S^\bigcc$,
such that for every solution $\Psi$
to $\instance_b$
such that $S$ interrogates $\Psi$,
there exists $\Psi^\bigcc \in \Psifamily$ with $\Psi|_{S^\bigcc} = \Psi^\bigcc$.
\end{lemma}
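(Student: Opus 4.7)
The plan is to fix one big stain $C_0$ as a reference (if no big stain exists, return the empty family), enumerate via Lemma~\ref{lem:ulc-propagate-labeling} all of its (at most $s$) consistent labelings, and for each such reference labeling propagate it to every other big stain $C$ by a majority-vote argument over a large family of internally vertex-disjoint paths. The resulting family $\Psifamily$ consists of at most $s$ labelings of $S^\bigcc$, one per labeling of $C_0$ assembled from the per-stain propagations.

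The combinatorial engine is the following Menger-type observation: for any two distinct big stains $C_0$ and $C$, there exist $2\cutsize+1$ pairwise internally vertex-disjoint paths in $G$ from $C_0$ to $C$. Indeed, otherwise a vertex set $Z \subseteq V(G)\setminus(C_0\cup C)$ with $|Z|\leq 2\cutsize$ would separate $C_0$ from $C$ in $G$; but then, since $|C_0|,|C|>q$, the components of $G\setminus Z$ containing (the connected sets) $C_0$ and $C$ respectively would form a $(q,2\cutsize)$-good node separation of $G$, contradicting the fact that the high connectivity phase has been entered. These paths can be computed in $O(\cutsize n^2)$ per stain $C$ by a standard Ford--Fulkerson routine after merging $C_0$ and $C$ into a source and a sink and splitting every other vertex into an in- and out-copy.

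For each reference labeling $\Psi^{C_0}$ of $C_0$ and each other big stain $C$, on every path $P^i$ among the $2\cutsize+1$ we propagate the value $\Psi^{C_0}$ assigns to the $C_0$-endpoint of $P^i$ sequentially through the edge constraints, landing on a candidate value at the $C$-endpoint of $P^i$ (or marking $P^i$ as inconsistent). We take the majority of the candidate values, and feed that value into Lemma~\ref{lem:ulc-propagate-labeling} to extend it to a labeling of the whole $C$, which we adjoin to the partial labeling of $S^\bigcc$. Correctness relies on the fact that if $(X,\Psi)$ is a solution interrogated by $S$, then $S\cap X=\emptyset$, so $\Psi|_{S^\bigcc}$ is a valid labeling; since $|X|\leq \cutsize$ and the paths are internally vertex-disjoint, at most $\cutsize$ of the $2\cutsize+1$ paths meet $X$ in their interior, so at least $\cutsize+1$ paths are contained in $G\setminus X$ and their propagations agree with $\Psi$ and thus deliver the correct value $\Psi(v)$ at the $C$-endpoint $v$. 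A strict majority therefore recovers $\Psi(v)$, and by Lemma~\ref{lem:ulc-propagate-labeling} the extension coincides with $\Psi|_C$, as claimed.

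For complexity, enumeration of the reference labelings costs $O(s^2 n^2)$; path computations for all big stains cost $O(\cutsize n^3)$ in total; for each of the at most $s$ reference labelings and each big stain we spend $O(\cutsize n)$ on the $2\cutsize+1$ path propagations and $O(s|C|^2)$ on the final extension, which sums to $O(\cutsize s n^2 + s^2 n^2)$ since the big stains are disjoint. The overall bound is thus $O(\cutsize s n^3 + \cutsize s^2 n^2)$, matching the claim. The main obstacle I expect is the careful reduction of the "internally vertex-disjoint" requirement to ordinary max-flow (via vertex splitting) and the argument that a single scalar value propagated along a path is enough to pin down the labeling of the connected component $C$, which is precisely the content of Lemma~\ref{lem:ulc-propagate-labeling} and is the reason the plan works with only $s$ candidate labelings instead of $s^{|S^\bigcc|}$.
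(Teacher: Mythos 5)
Your proposal is correct and takes essentially the same route as the paper's proof: fix a reference big stain, enumerate its at most $s$ labelings, and for each other big stain use Menger's theorem (justified by the absence of $(q,2\cutsize)$-good node separations) to get $2\cutsize+1$ internally vertex-disjoint paths, propagate along each, and take the majority, then extend via Lemma~\ref{lem:ulc-propagate-labeling}. Your unpacking of the Menger step (why a separator of size $\leq 2\cutsize$ would yield a good node separation) and the observation that a single propagated scalar pins down the whole stain's labeling are both present, implicitly or explicitly, in the paper's argument.
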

\begin{proof}
If $S^\bigcc = \emptyset$ the lemma is trivial; assume then $S^\bigcc \neq \emptyset$.
For any big stain with vertex set $C$ in $G[S]$, by Lemma \ref{lem:ulc-propagate-labeling}
there are at most $s$ deletion-free labelings of $C$ and all these labelings can be computed in $O(s^2|C|^2)$ time.
Moreover, we know that any such a labeling is induced by fixing a label of one vertex of $C$;
in other words, for every two different labelings $\Psi',\Psi''$ of $G[C]$
and for every $v \in C$, we have $\Psi'(v) \neq \Psi''(v)$.

Let $C_1$ and $C_2$ be vertex sets of two different big stains in $G[S]$.
As $G$ admits no $(q,2\cutsize)$-good node separations, by Menger's theorem
there exists a sequence
$P^0,P^1,\ldots,P^{2\cutsize}$ of $2\cutsize+1$ paths, where each path starts in
$C_1$, ends in $C_2$ and the sets of internal vertices of those paths are pairwise disjoint.
Moreover, such a sequence of paths $P^0,P^1,\ldots,P^{2\cutsize}$ can be found
in $O(\cutsize n^2)$ time by the classic Ford-Fulkerson algorithm.

Let $\Psi$ be a solution to $\instance_b$ that is interrogated by $S$.
The crucial observation is that for at least $\cutsize+1$ indices $0 \leq i \leq 2\cutsize$ (i.e., for majority of them),
the path $P^i$ does not contain a vertex deleted by $\Psi$ (note that the endpoints
   of $P^i$ are in $C_1 \cup C_2 \subseteq S$, and thus not deleted by $\Psi$).
Denote the endpoints of $P^i$ as $v^i_1 \in C_1$ and $v^i_2 \in C_2$.
If $P^i$ does not contain any vertex deleted by $\Psi$, the composition of all edge constraints on $P^i$
(denote it by $\psi^i$) is a partial permutation such that
$(\Psi(v^i_1),\Psi(v^i_2)) \in \psi^i$. We infer that
for every $0 \leq i \leq 2\cutsize$ and every labeling $\Psi'$ of $G[C_1]$
there exists at most one labeling $\Psi'_{C_2,i}$ of $G[C_2]$ such that
$(\Psi'(v^i_1),\Psi'_{C_2,i}(v^i_2)) \in \psi^i$. Moreover, given $\Psi'$, all labelings
$\Psi'_{C_2,i}$ for all $0 \leq i \leq 2\cutsize$ can be computed in 
$O(\cutsize s(n + s|C_2|^2))$ time using Lemma~\ref{lem:ulc-propagate-labeling}.

Let $\Psi' = \Psi|_{C_1}$. As at least $\cutsize+1$ paths $P^i$ do not contain any vertices deleted by $\Psi$,
for a majority of indices $0 \leq i \leq 2\cutsize$, the labelings
$\Psi'_{i,C_2}$ are the same labelings. For a fixed big stain $C_1$ and for each
labeling $\Psi'$ of $G[C_1]$, we can compute this majority labeling $\Psi'_{\textrm{maj},C_2}$
of $G[C_2]$ for every big stain $C_2 \neq C_1$
in time $O(\cutsize n^2 + \cutsize s(n+s|C_2|^2))$ (including the time needed to compute paths $P^i$).
As there are at most $n$ big stains, and $s$ labelings of a fixed big stain $C_1$,
   the lemma follows.
\end{proof}

Note that for every $Z \subseteq V(G)$, $1 \leq |Z| \leq \cutsize$,
there exists the component $\bigcc(Z)$ and, if $S$ interrogates $Z$, then
there exists at least one big stain in $G[S]$ (note that we require here that $Z$ is nonempty;
the solutions with empty deletion sets are found by Step \ref{step:ulc-empty-X}).
This observation, together with Lemma \ref{lem:ulc-big-labelings}, justifies the following
step.
\begin{step}\label{step:ulc-big-labelings}
For each $\cP \in \mathbb{\instance_b}$, in a branch with index $S$,
if $G[S]$ contains no big stains, terminate this branch, and otherwise
invoke Lemma \ref{lem:ulc-big-labelings} to obtain a family $\Psifamily$
and branch into at most $s$ subcases, indexed by labelings $\Psi^\bigcc \in \Psifamily$.
For fixed $\cP$, in a branch with indices $S$ and $\Psi^\bigcc$,
we seek for a solution $\Psi_\cP$ with minimum possible size of the deletion set,
such that $\Psi_\cP$ is a solution to $\instance_b$ consistent with $\cP$,
interrogated by $S$ and $\Psi_\cP|_{S^\bigcc} = \Psi^\bigcc$.
\end{step}
Each application of Step \ref{step:ulc-big-labelings}
takes $O(\cutsize sn^3+\cutsize s^2n^2)$ time and leads to $O(2^{O(\cutsize^2 \log s)} \log n)$ subcases
in total.

\subsubsection{Final bounded search tree algorithm}

In this section we show how to finish the search for an appropriate output $\sol_\cP$
for $\cP \in \mathbb{P}(\instance_b)$, in a fixed branch with indices $S$ and $\Psi^\bigcc$.
This is done in a standard framework of a bounded search tree algorithm.
Informally speaking, the goal is to look at all connected components of $G$ after removal
of all already deleted vertices and $N_G[S^\bigcc]$, and decide, one by one, whether it should
be merged into the $\bigcc(\sol_\cP)$ or not. 

Formally speaking, we maintain a labeling $\Psi$, initiated as $\Psi^\bigcc \cup \cP$ together with all forsaken vertices deleted,
and our goal is to extend it to the desired solution via a bounded search algorithm. 
That is, at every recursive call of the branching algorithm, we look for a solution with minimum number of deleted vertices that additionally extends $\Psi$.

At every step, either no branching is performed and some new value is added to $\Psi$,
or we branch into $O(\Sigma)$ directions, in every branch deleting at least one new vertex.

The branching algorithm is described as a set of four {\em{rules}}.
At each moment, we apply the first applicable rule.

First, let us define the stopping condition for the branching algorithm.

\begin{algrule}[Finishing Rule]
If there is some inconsistency in $\Psi$: it deletes more than $\cutsize$ vertices or violates one of the constraints, terminate the current branch.
If $\Psi$ can be extended to $V(G)$ without deleting any additional vertex, then return such an extension as a solution for the current branch.
\end{algrule}
Note that recognizing whether Finishing Rule can be applied takes $O(s^2n^2)$ time using Lemma~\ref{lem:ulc-propagate-labeling}.

Given the labeling $\Psi$, let $N(\Psi)$ be the set of vertices that do not belong to the domain of $\Psi$,
but have a neighbor that is assigned a value from $\Sigma$ by $\Psi$ (i.e., in the domain of $\Psi$ but not deleted by $\Psi$).
Furthermore define $N[\Psi] = N(\Psi) \cup \domain(\Psi)$.
Consider now the following task: given a labeling $\Psi$, we would like to extend $\Psi$ to $N(\Psi)$ without deleting any new vertex.
Observe that there is essentially at most only one candidate $\Psi^N$ for such an extension: for every $v \in N(\Psi)$, we fix a neighbor $w(v)$ of $v$ with $\Psi(w(v)) \in \Sigma$,
and assign to $v$ the unique value that satisfies the constraint $\psi_{vw(v),v}$; note that such a value may not exist as $\psi_{vw(v),v}$ is a partial permutation or may not belong to $\phi_v$.

We use this observation in the following two rules.
First, since we are looking for a solution interrogated by $S$, we can immediately extend $\Psi$ to vertices in $N(\Psi) \cap S$.
\begin{algrule}[Extension Rule]
If there exists a vertex $v \in N(\Psi) \cap S$ with a neighbor $w(v)$ satisfying $\Psi(w(v)) \in \Sigma$, 
then assign to $v$ the unique value that satisfies the constraint $\psi_{vw(v),v}$.
\end{algrule}
Note that we can check if the Extension Rule can be applied in $O(sn^2)$ time, and in total it cannot be applied more than $n$ times in a single root-to-leaf path in the bounded search tree.

For the second rule, we observe in the next lemma that if $\Psi^N$ does not exist (i.e., it violates some constraint), then 
there is a witnessing contradiction on at most two vertices of $N(\Psi)$.

\begin{lemma}\label{lem:ulc-neighbourhood}
If $\Psi^N$ is undefined or violates some constraint (and hence is not a labeling), then 
there exists a set $B \subseteq N(\Psi)$ of size at most two
such that any labeling extending $\Psi$ needs to delete at least one vertex of $B$.
Moreover, such a set $B$ can be found on $O(sn^2)$ time.
\end{lemma}
\begin{proof}
First, if for some vertex $v \in N(\Psi)$, the value $\Psi^N(v)$ is undefined 
(there is no value matching $\Psi(w(v))$ in the constraint $\psi_{vw(v),v}$)
or such a value does not belong to $\psi_v$, we can take $B = \{v\}$.
Otherwise, $\Psi^N$ needs to violate some edge constraint, say $\psi_{vu,v}$ for some $v,u \in N[\Psi]$.
As $\Psi$ satisfies all edge constraints, either $u$ or $v$ is not in $\domain(\Psi)$, assume then $v \notin \domain(\Psi)$. If $u \in \domain(\Psi)$, then
$B = \{v\}$ satisfies the conditions of the lemma: any assignment of a value from $\Sigma$ to $v$
would violate either $\psi_{vu,v}$ or $\psi_{vw(v),v}$.
If $u \notin \domain(\Psi)$, then $B = \{u,v\}$ satisfies the conditions of the lemma:
however we assign values from $\Sigma$ to $u$ and $v$, we would either violate $\psi_{vw(v),v}$,
violate $\psi_{uw(u),u}$, or assign the values as in $\Psi^N$ and violate $\psi_{uv,v}$.

We note that we can compute $\Psi^N$ and find a constraint not satisfied by $\Psi^N$ in $O(sn^2)$ time.
\end{proof}

\begin{algrule}[Neighborhood Branching Rule]
Find $\Psi^N$. If it is undefined or is not a labeling, apply Lemma~\ref{lem:ulc-neighbourhood}
obtaining a set $B$, and branch into $|B|$ subcases, deleting one of the vertices of $B$ (i.e., assigning it value $\skull$ in $\Psi$).
\end{algrule}

Unfortunately, the Neighborhood Branching Rule is not always applicable.
However, we now show that, if it is not applicable, then we can make decisions on different components of $G\setminus N[\Psi]$ nearly independently.
Henceforth we assume that $\Psi^N$ is well-defined and is a labeling; see Figure~\ref{fig:ulc-final} for an illustration.

\begin{figure}[ht]
\begin{center}
\includegraphics{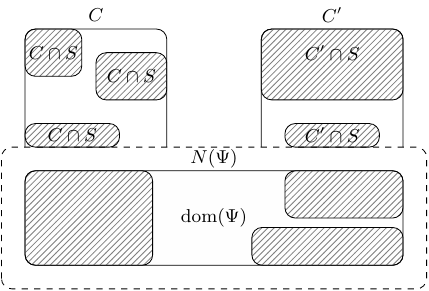}%
\caption{Situation before the last step of the branching algorithm. 
  The stripped rectangles represent the set $S$.}\label{fig:ulc-final}
\end{center}
\end{figure}

\begin{lemma}\label{lem:ulc-branch-smallcc}
Let $C$ be a vertex set of an arbitrary connected component of $G \setminus N[\Psi]$
and let $\Psi_0$ be a labeling of $V(G)$ that extends $\Psi$ and is interrogated by $S$.
Then either:
\begin{itemize}
\item $C \subseteq \bigcc(\Psi_0)$, in particular no vertex of $C$ is deleted by $\Psi_0$, or
\item $C \setminus S$ is exactly the set of vertices from $C$ that are deleted by $\Psi_0$ and all vertices of $N(C)$ are deleted by $\Psi_0$.
\end{itemize}
Furthermore, if one can extend $\Psi^N$ to a labeling $\Psi^C$ of $N[\Psi] \cup C$ without deleting any new vertex, then 
a function $\Psi_1$ defined as equal $\Psi^C$ on $C$ and equal $\Psi_0$ on $V(G) \setminus C$ is a labeling of $V(G)$
extending $\Psi$ that deletes not more vertices than $\Psi_0$ does.
\end{lemma}
\begin{proof}
For the first part of the lemma, assume there exists $v \in C \cap \bigcc(\Psi_0)$.
From the connectivity of $G[C]$, we have that either $C \subseteq \bigcc(\Psi_0)$ or
there exists $uw \in E(G[C])$ with $\Psi_0(w) = \skull$ and $u \in \bigcc(\Psi_0)$.
Recall that $S$ interrogates $\Psi_0$; from the last property of the interrogating set
we infer that $w$ is adjacent to a big stain of $S$, a contradiction with the fact that all big stains are in $\domain(\Psi)$ but $w \notin N[\Psi]$.
Therefore either $C \subseteq \bigcc(\Psi_0)$ or $C \cap \bigcc(\Psi_0) = \emptyset$.

In the latter case, pick any $v \in C$ that is not deleted by $\Psi_0$.
As $C \cap \bigcc(\Psi_0) = \emptyset$, $v$ is contained in a connected component $C(v)$
of $G \setminus \Psi_0^{-1}(\skull)$ with at most $q$ vertices. By the second property of the interrogating
set $S$, $C(v)$ is a small stain of $S$. Consequently, we have $C(v) \subseteq C$, as any vertex of $C(v) \cap N(\Psi)$ would be amenable to the Extension Rule.

Consider now a vertex $w \in N(C)$ with a neighbor $u \in C$.
Assume that $w$ is not deleted by $\Psi_0$, that is, $\Psi_0(w) \neq \skull$.
If $w \in \bigcc(\Psi_0)$, then $\Psi_0(u) = \skull$ as we have assumed that $C \cap \bigcc(\Psi_0) = \emptyset$.
However, then, from the last property of the interrogating set, we infer that $u$ is adjacent to a big stain of $S$,
a contradiction with the fact that all big stains are in $\domain(\Psi)$ but $u \notin N[\Psi]$. 
In the other case, if $w \notin \bigcc(\Psi_0)$, then the Extension Rule is applicable to $w$ if $w \in N(\Psi)$
or to $u$ if $w \in \domain(\Psi)$. As we have reached a contradiction in all subcases, we infer that $w$ is deleted by $\Psi_0$.
Since the choice of $w$ was arbitrary, we have that all vertices of $N(C)$ are deleted by $\Psi_0$.
This finishes
the proof of the first part of the lemma.

For the second part of the lemma, first note that $\Psi_1$ deletes a subset of the vertices deleted by $\Psi_0$,
as $\Psi^C$ does not delete any vertex of $C$. Furthermore, since both $\Psi_0$ and $\Psi_C$ are labelings, $\Psi_1$ satisfies all vertex contraints.

As for edge constraints, clearly $\Psi_1$ satisfies all edge constraints for edges completely contained either in $G[C]$ or $G\setminus C$. 
Consider now an edge $uv$ with $u \in C$ and $v \in N(C)$. If $\Psi_0(v) = \skull$ then we are done. Otherwise, as $C$ is a connected component of $G \setminus N[\Psi]$,
we have $v \in N(\Psi)$. Since $\Psi_0$ extends $\Psi$, we have $\Psi_0(v) = \Psi^N(v)$. Since $\Psi^C$ extends $\Psi^N$ and is a labeling, we have that $\Psi_1$
satisfies the constaint on the edge $uv$. This completes the proof of the lemma.
\end{proof}

Consider now the following step: for every connected component $C$ of $G \setminus N[\Psi]$ we apply Lemma~\ref{lem:ulc-propagate-labeling} to check if there exists an extension $\Psi^C$ of $\Psi^N$ to $N[\Psi] \cup C$.
Note that if such an extension exists, Lemma~\ref{lem:ulc-branch-smallcc}, together with the fact that $\Psi^N$ exists and is a labeling, implies that the union of all such labelings $\Psi^C$ is a labeling of $V(G)$
that extends $\Psi$ and does not delete any new vertex, and the Finishing Rule is applicable. Note that such an output labeling may not be interrogated by $S$, but it is not an issue at this point.
Otherwise, we can use the limited options given by Lemma \ref{lem:ulc-branch-smallcc} to perform branching.

\begin{algrule}[Small Stains Rule]\label{rule:ulc-branch-smallcc}
For every connected component $C$ of $G \setminus N[\Psi]$
apply Lemma~\ref{lem:ulc-propagate-labeling} to check if there exists an extension $\Psi^C$ of $\Psi^N$ to $N[\Psi] \cup C$.
Since the Finishing Rule is not applicable, there exists a component for which such an extension does not exist; we pick such a component $C$ and branch into at most $s+1$ cases.
\begin{enumerate}
\item Apply Lemma~\ref{lem:ulc-propagate-labeling} to find at most $s$ deletion-free labelings of $G[C]$, and for every such labeling consider a branch with $\Psi$ extended with this labeling.
\item Furthermore, in an additional branch assign $\skull$ to every vertex of $N(C) \cup (C \setminus S)$, and for every connected component
of $G[C \cap S]$ apply Lemma~\ref{lem:ulc-propagate-labeling} to find a deletion-free labeling of this component.
\end{enumerate}
\end{algrule}
First, note that the Small Stains Rule is applicable in $O(s^2n^2)$ time.

The first option of the Small Stains Rule corresponds to the case $C \subseteq \bigcc(\Psi_0)$ of Lemma~\ref{lem:ulc-branch-smallcc}.
Note that the inexistence of $\Psi^C$ implies that, in every subcase of the first option, the Neighborhood Branching Rule will execute on some vertices of $N(C)$, leading to an increase in the number of deleted vertices.

The second option corresponds to the second case of Lemma~\ref{lem:ulc-branch-smallcc}. Note that the inexistence of $\Psi^C$ implies that either this branch is not executed at all
(due to the fact that some component of $G[C \cap S]$ does not admit any deletion-free labeling) or either $C \setminus S$ or $N(C)$ is nonempty, leading to an increase in the number of deleted vertices.
Note that after all vertices $N(C) \cup (C \setminus S)$ are assigned $\skull$, every connected component $C'$ of $G[C \setminus S]$ has all its neighbors deleted, and can freely choose any deletion-free
labeling output by Lemma~\ref{lem:ulc-propagate-labeling}.

Consequently, the Small Stains Rule, coupled with a possible following application of the Neighborhood Branching Rule, gives at most $2s+1$ subcases, and in every such subcase at least one new vertex is deleted.
As the Small Stains Rule is always applicable if no previous rule is applicable, we have concluded the description of the branching algorithm.
Recall that we terminate the algorithm after more than $\cutsize$ vertices are deleted.
As each rule is applicable in $O(s^2n^2)$ time and the number of rule applications before reaching a leaf is bounded by $n$ the whole branching algorithm  works in $O(2^{O(\cutsize \log s)} n^3)$ time.
This finishes the description of the fixed-parameter algorithm for \bulc{} and the proof of Theorem~\ref{thm:ulc-main}.

\section{The algorithm for \steinercut}\label{sec:full-steiner}

This section is devoted to the proof of Theorem~\ref{thm:steiner-main}, i.e., to the \steinercut problem, parameterized by the size of the cutset.

\defparproblemu{\steinercut}{A graph $G$, a set of terminals $T \subseteq V(G)$,
and integers $\numcc$ and $\cutsize$.}{$\cutsize$}{Does there exist a set $X$ of at most $\cutsize$
edges of $G$, such that in $G \setminus X$ at least $\numcc$ connected components
contain at least one terminal?}

For a \steinercut instance $(G,T,\numcc,\cutsize)$,
a set $X \subseteq E(G)$ is called a {\em{solution}}
if $|X| \leq \cutsize$ and $G \setminus X$ contains at least $\numcc$ connected
components that contain at least one terminal.

First, observe that by Lemma~\ref{lem:japanese} in $O(\cutsize n^2)$ time we can ensure 
that the graph $G$ has $O(\cutsize n)$ edges, by removing the edges outside of the set $E_0$.
Correctness of this step follows from the fact that in this operation
all cuts of size at most $k$ are preserved and moreover no new cut of size
at most $\cutsize$ appears, since for each of the removed edges at least $\cutsize+1$
edge disjoint paths between its endpoints remain.
We are going to use the assumption that there are $O(\cutsize n)$ edges in the graph
during the course of our algorithm.
To this end, we use Lemma~\ref{lem:japanese} after each reduction, thus always ensuring that the graph has at most $O(\cutsize n)$ edges, where $n$ is the current number of vertices.
We note that the only reason of using Lemma~\ref{lem:japanese} is caring about the polynomial factor.

Second, we observe that in the \steinercut{} problem we may assume that the graph $G$
is connected. Indeed, otherwise we may add to $G$ a clique on $\cutsize+2$ vertices
(so that the clique cannot be split with removal of $\cutsize$ edges),
make all vertices of the clique adjacent to exactly one vertex of each connected
component of $G$, and decrease $\numcc$ by the number of connected components of $G$
containing a terminal minus one.

If $G$ is connected, removal of $\cutsize$ edges may lead to at most $\cutsize+1$
connected components. Thus, we may assume that $\numcc \leq \cutsize + 1$, as otherwise
the answer is trivially negative.

Moreover, in the course of the algorithm we repetitively contract 
some edges of $G$. In the process of contraction, we remove loops, but we keep multiple edges.
Thus we allow $G$ to be a multigraph with multiple edges, but without loops.
Note that, if an edge $uv \in E(G)$ has multiplicity more than $\cutsize$,
the aforementioned sparsification process using Lemma~\ref{lem:japanese} reduces
the multiplicity down to at most $\cutsize+1$.

The algorithm performs a number of steps. Description of each step is
accompanied by discussion of correctness and analysis of running time.

\subsection{Operations on the input graph}

The basic operation the algorithm performs on the graph is an edge contraction. As mentioned in the last section, we assume that after performing a series of contractions we reduce all multiplicities of the multiedges that exceed $\cutsize+1$ down to $\cutsize+1$. Moreover, if in $G$ a set of terminals $T \subseteq V(G)$ is given, if $T \cap \{u,v\} \neq \emptyset$, we replace $T$ with $(T \setminus \{u,v\}) \cup \{w_{uv}\}$, i.e., we put $w_{uv}$ into $T$ if and only if $u$ or $v$ belongs to $T$.

The following straightforward corollary of Lemma \ref{lem:contract-properties}
shows when we may contract an edge of $G$ in the \steinercut case.
\begin{lemma}\label{lem:steiner-contract}
Let $\instance = (G,T,\numcc,\cutsize)$ be a \steinercut instance,
let $D \subseteq E(G)$ and let $\instance' = (G',T',\numcc,\cutsize)$
be the instance $\instance$ with the edges $D$ contracted in an arbitrary order.
Then:
\begin{enumerate}
\item Any solution $X$ to $\instance'$ is a solution to $\instance$ as well
(recall that we treat $E(G')$ as a subset of $E(G)$).
\item For any solution $X$ to $\instance$ that is disjoint with $D$,
the set
$$X' = \{\iota(u)\iota(v): uv \in X, \iota(u) \neq \iota(v)\} \subseteq E(G')$$
is a solution to $\instance'$.
\end{enumerate}
\end{lemma}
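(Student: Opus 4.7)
The plan is to reduce both parts to routine applications of Lemma~\ref{lem:contract-properties}, together with the convention that the terminal set $T'$ of $\instance'$ is defined so that $\iota(v) \in T'$ if and only if $\iota^{-1}(\iota(v))$ meets $T$. The key identity I would establish at the outset is: for any $F \subseteq E(G)$ with $F \cap D = \emptyset$, the graph $G' \setminus F$ (where we view $F \subseteq E(G')$ via the identification from Lemma~\ref{lem:contract-properties}) equals the graph obtained by contracting $D$ in $G \setminus F$. This is precisely the last item of Lemma~\ref{lem:contract-properties}. Combined with the fifth item of that lemma, it gives a bijection between the connected components of $G \setminus F$ and of $G' \setminus F$ via $\iota$.

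For part (1), I take a solution $X$ to $\instance'$, so $X \subseteq E(G') \subseteq E(G) \setminus D$, hence $X \cap D = \emptyset$ and $|X| \le \cutsize$. By the key identity, $G' \setminus X$ arises from $G \setminus X$ by contracting $D$, so the bijection $\iota$ maps connected components of $G \setminus X$ onto those of $G' \setminus X$. By the definition of $T'$, a component $C$ of $G \setminus X$ contains a vertex of $T$ if and only if $\iota(C)$ contains a vertex of $T'$. Therefore at least $\numcc$ components of $G \setminus X$ contain a terminal, and $X$ is a solution to $\instance$.

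For part (2), I take a solution $X$ to $\instance$ with $X \cap D = \emptyset$. By item (iv) of Lemma~\ref{lem:contract-properties}, an edge $uv \in E(G)$ with $uv \notin D$ survives in $G'$ exactly when $\iota(u) \neq \iota(v)$, so $X' = X \cap E(G')$ under our identification, and in particular $|X'| \le |X| \le \cutsize$. Again by the key identity (applied to $F = X$), $G' \setminus X'$ equals $G \setminus X$ with the edges $D$ contracted. The bijection $\iota$ on components, together with the definition of $T'$, transfers the lower bound on the number of terminal-containing components from $G \setminus X$ to $G' \setminus X'$.

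I do not expect a real obstacle here: the content of the lemma is almost entirely bookkeeping, and the only point requiring care is the treatment of edges $uv \in X$ with $\iota(u) = \iota(v)$ --- these exist precisely when $u$ and $v$ are joined by a $D$-path in $G$, in which case $uv$ becomes a loop and is discarded when forming $X'$, which is why the $\iota(u) \neq \iota(v)$ condition is built into the definition of $X'$. Once that is noted, both parts follow directly from Lemma~\ref{lem:contract-properties}.
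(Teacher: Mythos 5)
Your proof is correct and follows exactly the route the paper intends: the paper states this lemma without proof, calling it a ``straightforward corollary of Lemma~\ref{lem:contract-properties},'' and your argument is precisely the bookkeeping needed to make that explicit, using the last item (contraction commutes with removing an edge set disjoint from $D$) and the fifth item (the component bijection via $\iota$). The one point worth making fully explicit --- that the component-level bijection also transfers the terminal-containing property in both directions because $\iota^{-1}(\iota(c))$ lies inside the component of $c$ whenever $D$ is disjoint from the deleted set --- is implicit in your phrasing but is exactly what the paper's convention for $T'$ is designed to deliver.
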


We also use the notion of {\emph{identifying}} two vertices.

\begin{definition}
Given a multigraph $G$ and two vertices $u,v$, {\emph{identification}} of $u$ and $v$ is the operation of adding an edge $uv$ and contracting it.
\end{definition}

As identification is modelled by edge addition and contraction, we apply the same terminology also to this notion.

\subsection{Borders and recursive understanding}

In the border problem the graph is additionally equipped with at most $2\cutsize$
border terminals $\bterms$. For a border \steinercut{} instance
$\instance_b = (G,T,\cutsize,\bterms)$,
we need to remember an equivalence relation $\rel_b$ on $\bterms$,
that corresponds to how the border terminals are to be distributed among
connected components, a set $Y_b \subseteq \bterms$,
that carries information which border terminal is in connected component with
some terminal of $T$, and an integer $\numbcc_b$, which means that
in $\instance_b$ we are to obtain $\numbcc_b$ connected components
that contain a terminal. Formally speaking,
we define $\mathbb{P}(\instance_b)$ as the set of all triples $\cP = (\rel_b,Y_b,\numbcc_b)$,
where $\rel_b$ is an equivalence relation on $\bterms$,
$Y_b \subseteq \bterms$ and $0 \leq \numbcc_b \leq \cutsize+1$ is an integer.
Moreover, we require that if $(u,v) \in \rel_b$, then $u \in Y_b$ if and only if $v \in Y_b$.

We say that a set $X \subseteq E(G)$ is {\em{a solution}}
to $(\instance_b,\cP)$ for a triple $\cP = (\rel_b,Y_b,\numbcc_b) \in \mathbb{P}(\instance_b)$
if 
\begin{itemize}
\item two border terminals $u,v \in \bterms$
are in the same connected component of $G \setminus X$ if and only if $(u,v) \in \rel_b$;
\item for any border terminal $u \in \bterms$, the connected
component of $G \setminus X$ which contains $u$, contains a vertex of $T$ if
and only if $u \in Y_b$;
\item $G \setminus X$ contains exactly $\numbcc_b$ connected components that
contain a vertex of $T$;
\item $|X| \leq \cutsize$.
\end{itemize}

We formally define the border problem as follows.

\defproblemoutput{\bsteinercut}{A \steinercut{} instance
$\instance = (G,T,\numcc,\cutsize)$ with $G$ being connected,
and a set $\bterms \subseteq V(G)$ of size at most $2\cutsize$;
denote $\instance_b = (G,T,\cutsize,\bterms)$.}{
For each $\cP \in \mathbb{P}(\instance_b)$ output a solution $\sol_\cP = X_\cP$
  to $(\instance_b,\cP)$ with minimum possible $|X_\cP|$,
  or $\sol_\cP = \bot$ if such a solution does not exist.}

Observe that \bsteinercut generalizes \steinercut as we may ask for $\bterms=\emptyset$
and check the value of a solution consistent
with $(\emptyset,\emptyset,\numcc)$, as we can assume that after removing the minimum
size solution there are exactly $\numcc$ connected components containing a terminal.

Note that $|\mathbb{P}(\instance_b)| \leq (2\cutsize)^{2\cutsize} \cdot 2^{2\cutsize}
\cdot (\cutsize+2)$, as there are at most $|\bterms|^{|\bterms|}$ choices
for an equivalence relation $\rel_b$, $2^{|\bterms|}$ choices for $Y_b$
and $\cutsize+2$ choices for the value of $\numbcc_b$.
Denote
$$q = \cutsize (2\cutsize)^{2\cutsize} 2^{2\cutsize} (\cutsize + 2) + 1 = 2^{O(\cutsize \log \cutsize)}.$$

Let $\instance_b=(G,T,\cutsize,\bterms)$ be the given instance of \bsteinercut.
Assume that $G$ admits a $(q,\cutsize)$-good edge separation $(V_1,V_2)$.

As $V_1$ and $V_2$ are disjoint, at least one of them contains at most $\cutsize$ border
terminals from $\bterms$.
Without loss of generality assume that $|\bterms \cap V_1|\leq \cutsize$.
Let $\newinst{G} = G[V_1]$, $\newinst{T} = T \cap V_1$ and
$\newinst{\bterms} = (\bterms \cap V_1) \cup N_G(V_2)$.
Consider an instance $\newinst{\instance}_b = (\newinst{G},\newinst{T},\cutsize,\newinst{\bterms})$.
Note that $\newinst{\instance}_b$ is a correct instance of \bsteinercut,
as $|(\bterms\cap V_1)\cup N_G(V_2)|\leq 2\cutsize$.
Apply the algorithm recursively to the instance $\newinst{\instance}_b$
(note that it is strictly smaller instance as the vertex set $V_2$ is removed)
and let $U(\newinst{\instance}_b)$ be the set of edges that are contained in any output solution
for any behaviour on the border terminals of $\newinst{\instance}_b$.
Observe that $|U(\newinst{\instance}_b)|\leq q-1$.
Let $R=E(\newinst{G})\setminus U(\newinst{\instance}_b)$ be the set of remaining edges in
$\newinst{G} = G[V_1]$.
Contract the edges of $R$ in $G$ to obtain the new graph $G'$ with terminals $T'$
and border terminals $\bterms'$.
Let $V_1'$ be the set of vertices of $G'$ onto which vertices of $V_1$ were contracted.
Observe that $G'[V_1']$ is still connected as a contraction of a connected graph,
and has at most $q-1$ edges, as $|U(\newinst{\instance}_b)|\leq q-1$.
Therefore, $|V_1'|\leq q$. The contraction induces a mapping $\iota:V(G)\to V(G')$
that maps every vertex of $G$ to the vertex of $G'$ onto which it is contracted. 

The following lemma is useful in arguing safeness of the described operation.

\begin{lemma}\label{lem:equiv-steiner}
Let $\cP \in \mathbb{P}(\instance_b)$ and let $X_\cP$ be a solution to $(\instance_b,\cP)$.
Then there exists a second solution $X_\cP'$ to $(\instance_b,\cP')$, such that
$|X_\cP'| \leq |X_\cP|$ and additionally $X_\cP' \cap R = \emptyset$.
\end{lemma}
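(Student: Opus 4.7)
The plan is to keep the part of $X_\cP$ lying outside $V_1$ untouched and substitute the part inside $V_1$ with the optimum solution for $\instance_b^\ast$ under the matching behavior on $\bterms^\ast$. Concretely, I would write $X_\cP = X_1 \cup X_2$ where $X_1 = X_\cP \cap E(G^\ast)$ collects the edges of $X_\cP$ interior to $V_1$ and $X_2 = X_\cP \setminus X_1$ consists of the remaining edges (those in $G[V_2]$ or crossing $\edges(V_1,V_2)$). Reading off $X_1$ as a cut of $G^\ast$ yields a triple $\cP^\ast = (\rel_b^\ast, Y_b^\ast, \numbcc_b^\ast) \in \mathbb{P}(\instance_b^\ast)$ for which $X_1$ is a solution to $(\instance_b^\ast, \cP^\ast)$ of size at most $\cutsize$: $\rel_b^\ast$ identifies two vertices of $\bterms^\ast$ iff they lie in the same component of $G^\ast \setminus X_1$, $Y_b^\ast$ flags border terminals whose component meets $T^\ast = T \cap V_1$, and $\numbcc_b^\ast$ counts the components of $G^\ast \setminus X_1$ meeting $T^\ast$. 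The recursive call has therefore produced $\sol_{\cP^\ast}^\ast = X_{\cP^\ast}^\ast \neq \bot$ with $|X_{\cP^\ast}^\ast| \le |X_1|$, and since $X_{\cP^\ast}^\ast \subseteq U(\instance_b^\ast)$ we get $X_{\cP^\ast}^\ast \cap R = \emptyset$. Setting $X_\cP' = X_2 \cup X_{\cP^\ast}^\ast$ gives $|X_\cP'| \le |X_\cP|$ and $X_\cP' \cap R = \emptyset$ immediately, so it remains to check that $X_\cP'$ is itself a solution to $(\instance_b, \cP)$.

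For this verification, the key geometric fact is that $\bterms^\ast$ was defined to include every vertex of $V_1$ incident to a crossing edge, so in both $G \setminus X_\cP$ and $G \setminus X_\cP'$ every walk entering $V_1$ from outside must pass through a vertex of $\bterms^\ast$. I would make this rigorous via an amalgamation graph $H$ whose nodes are the components of $G^\ast \setminus X_1$ together with the components of $(G \setminus (X_2 \cup E(G^\ast)))[V_2]$, and where each crossing edge $uv \notin X_2$ with $u \in V_1$, $v \in V_2$ adds an edge between the $V_1$-component containing $u$ and the $V_2$-component containing $v$. The components of $G \setminus X_\cP$ then correspond bijectively to components of $H$; such a component contains a terminal iff some of its nodes is either a $V_2$-component meeting $T \cap V_2$ or a $V_1$-component flagged by $Y_b^\ast$; and the equivalence pattern on $\bterms$ is readable from the node-membership of border terminals. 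Building the analogous graph $H'$ from $X_{\cP^\ast}^\ast$ in place of $X_1$, the $V_2$-side and crossing edges coincide with those of $H$, and since $X_{\cP^\ast}^\ast$ realizes the same $\rel_b^\ast$ and $Y_b^\ast$ on $\bterms^\ast$ as $X_1$ does, there is a canonical label-preserving isomorphism $H \cong H'$, from which the $\rel_b$ and $Y_b$ conditions of $\cP$ on $X_\cP'$ follow.

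The step I expect to require most care is the exact count $\numbcc_b$ of terminal-containing components, since the \bsteinercut definition demands equality, not just a lower bound. A component of $G^\ast$ minus the chosen cut that avoids $\bterms^\ast$ altogether remains an isolated $V_1$-node in the amalgamation and contributes to $\numbcc_b$ precisely when it contains a terminal of $T^\ast$; the number of such components equals $\numbcc_b^\ast - |\{[u]_{\rel_b^\ast} : u \in Y_b^\ast\}|$, a quantity determined solely by $\cP^\ast$ and hence identical for $X_1$ and $X_{\cP^\ast}^\ast$. Together with the identical $V_2$-side contribution and the merged components (which the isomorphism $H \cong H'$ counts equally), this yields the same $\numbcc_b$ on both sides and completes the proof.
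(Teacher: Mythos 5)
Your proof is correct and follows essentially the same route as the paper: decompose $X_\cP$ into its intersection with $E(G^\ast)$ and the rest, read off the induced behavior $\cP^\ast$ on $\bterms^\ast$, substitute the recursive output $X_{\cP^\ast}^\ast$, and verify that the interface information in $\cP^\ast$ is enough to reconstruct connectivity, terminal reachability, and the exact component count. The amalgamation graph $H$ is a cleaner packaging of what the paper does by hand with path-subdivision and a three-way case split on terminal-containing components, but the underlying argument is the same.
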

\begin{proof}
Consider the graph $\newinst{G} = G[V_1]$ and the set $X_\cP \cap E(\newinst{G})$.
Define:
\begin{itemize}
\item $\newinst{\rel}_b$ to be an equivalence relation on $\newinst{\bterms}$
such that for any $u,v \in \newinst{\bterms}$ we have $(u,v) \in \newinst{\rel}_b$
if and only if $u$ and $v$ are in the same connected component of $\newinst{G} \setminus X_\cP$.
\item $\newinst{Y}_b$ to be a set of those vertices $v \in \newinst{\bterms}$,
  such that the connected component of $\newinst{G} \setminus X_\cP$ that contains $v$
  contains a terminal from $\newinst{T} = T \cap V_1$ as well.
\item $\newinst{\numbcc}_b$ to be the number of connected components of
$\newinst{G} \setminus X_\cP$ that contain a vertex of $\newinst{T}$.
\end{itemize}
Let $\newinst{\cP} = (\newinst{\rel}_b,\newinst{Y}_b,\newinst{\numbcc}_b)$.
Clearly, $\newinst{\cP} \in \mathbb{P}(\newinst{\instance}_b)$
and $X_\cP \cap E(\newinst{G})$ is a solution to $(\newinst{\instance}_b,\newinst{\cP})$
(note that $\newinst{\numbcc}_b \leq |X_\cP \cap E(\newinst{G})| + 1\leq \cutsize + 2$,
 as $\newinst{G}$ is connected).
Therefore $\sol_{\newinst{\cP}} = \newinst{X}_{\newinst{\cP}} \neq \bot$,
that is, there exists a solution $\newinst{X}_{\newinst{\cP}}$ to $(\newinst{\instance}_b,\newinst{\cP})$,
such that $|\newinst{X}_{\newinst{\cP}}| \leq |X_\cP \cap E(\newinst{G})|$
and $\newinst{X}_{\newinst{\cP}} \cap R = \emptyset$.

Define $X_\cP' = (X_\cP \setminus E(\newinst{G})) \cup \newinst{X}_{\newinst{\cP}}$.
Clearly $|X_\cP'| \leq |X_\cP|$.
To finish the proof of the lemma we need to show that
$X_\cP'$ is a solution to $(\instance_b,\cP)$.

First, we show the following claim: for any $u,v \in \bterms \cup \newinst{\bterms}$,
  $u$ and $v$ are in the same connected component
of $G \setminus X_\cP$ if and only if $u$ and $v$ are in the same connected component
of $G \setminus X_\cP'$. We show only a proof in one direction, as proofs
in both directions are totally symmetric and use only the facts
that $X_\cP \setminus E(\newinst{G}) = X_\cP' \setminus E(\newinst{G})$
and that both $X_\cP \cap E(\newinst{G})$ and $X_\cP' \cap E(\newinst{G})$
are solutions to $(\newinst{\instance}_b,\newinst{\cP})$.

Let $u,v \in \bterms \cup \newinst{\bterms}$ be two vertices that are connected by
a path $P$ in $G \setminus X_\cP$. Let $u=v_0,v_1,\ldots,v_r=v$ be the
sequence of all vertices on $P$ that belong to $\bterms \cup \newinst{\bterms}$,
in the order they appear on $P$.
To prove the claim we need to show that for any $0 \leq i < r$, the vertices
$v_i$ and $v_{i+1}$ belong to the same connected component of $G \setminus X_\cP'$.
By definition, as $N_G(V_2) \subseteq \newinst{\bterms}$,
the subpath $P_i$ of $P$ between $v_i$ and $v_{i+1}$ lies
entirely in $\newinst{G}$ or entirely in $G \setminus E(\newinst{G})$.
In the first case, we infer that $v_i,v_{i+1} \in \newinst{\bterms}$,
$(v_i,v_{i+1}) \in \newinst{\rel}_b$
and $v_i$ and $v_{i+1}$ are in the same connected component of $\newinst{G} \setminus X_\cP'$
as $X_\cP' \cap E(\newinst{G})$ is a solution to $(\newinst{\instance}_b,\newinst{\cP})$.
In the second case, we infer that $v_i$ and $v_{i+1}$ are in the same connected
component of $(G \setminus E(\newinst{G})) \setminus X_\cP'$,
as $X_\cP \setminus E(\newinst{G}) = X_\cP' \setminus E(\newinst{G})$.
This finishes the proof of the claim.

As a straightforward corollary of the aforementioned claim,
we infer that for any $u,v \in \bterms$, we have $(u,v) \in \rel_b$
if and only if $u$ and $v$ are in the same connected component of $G \setminus X_\cP'$.
We now show that for any $v \in \bterms \cup \newinst{\bterms}$,
   its connected component of $G \setminus X_\cP$
contains a vertex of $T$ if and only if its connected component of $G \setminus X_\cP'$
contains a vertex of $T$.
The proofs in both directions are again totally symmetric, thus we present only
the forward implication.

Let $P$ be a path that connects the vertex $v$ with a terminal $w \in T$ in $G \setminus X_\cP$.
Let $u$ be the last (closest to $w$) vertex on $P$ that belongs to $\bterms \cup \newinst{\bterms}$
(as $v \in \bterms \cup \newinst{\bterms}$, such a vertex $u$ exists).
From the claim we infer that $u$ and $v$ are in the same connected component
of $G \setminus X_\cP'$. Let $P_u$ be the subpath of $P$ from $u$ to $w$.
We have two cases: either $P_u$ is contained in $\newinst{G}$, or in $G \setminus E(\newinst{G})$.
In the first case, we infer that $u \in \newinst{\bterms}$, $u \in \newinst{Y}_b$
(as $X_\cP \cap E(\newinst{G})$ is a solution to $(\newinst{\instance}_b,\newinst{\cP})$)
and that the connected component of $\newinst{G} \setminus X_\cP'$ that contains
$u$ contains a vertex from $\newinst{T} = T \cap V_1$ (not necessarily the vertex $w$).
In the second case, we infer that the path $P_u$ is present in
$(G \setminus E(\newinst{G})) \setminus X_\cP'$.
This finishes the proof that $v \in Y_b$ if and only if there exists a terminal
in the connected component of $G \setminus X_\cP'$ that contains $v$.

To finish the proof of the lemma we need to show that the number of connected components
of $G \setminus X_\cP'$ that contain a terminal equals $\numbcc_b$.
To this end, we partition the connected components of $G \setminus X_\cP$
and $G \setminus X_\cP'$ containing terminals into three types:
\begin{enumerate}
\item those that contain a vertex from $\bterms \cup \newinst{\bterms}$;
\item those that do not contain such a vertex, but are contained in $\newinst{G}$;
\item and the rest --- those that do not contain a vertex from $\bterms \cup \newinst{\bterms}$,
and are contained in $G \setminus V_1$.
\end{enumerate}
Our goal is to prove that for each type, the numbers of connected components containing terminals of corresponding
types in $G \setminus X_\cP$ and $G \setminus X_\cP'$ are equal.

For the first type, the claim is a straightforward corollary of
already proven facts that {\emph{(i)}} any two vertices
$u,v \in \bterms \cup \newinst{\bterms}$ are in the same connected component
of $G \setminus X_\cP$ if and only if they are in the same connected component
of $G \setminus X_\cP'$, and {\emph{(ii)}} for every vertex $u\in \bterms \cup \newinst{\bterms}$, the connected component of $G \setminus X_\cP$ containing $u$ contains a terminal if and only if the connected component of $G \setminus X_\cP'$ containing $u$ contains a terminal.

For the second type, note that we are to count the number of connected components
of $\newinst{G} \setminus X_\cP$ and $\newinst{G} \setminus X_\cP$ that do not contain a vertex
of $\bterms \cup \newinst{\bterms}$, or, equivalently, $\newinst{\bterms}$, but contain a terminal from $\newinst{T}$.
As both $X_\cP \cap E(\newinst{G})$ and $X_\cP' \cap E(\newinst{G})$ are solutions
to $(\newinst{\instance}_b,\newinst{\cP})$, this number is equal to $\newinst{\numbcc}_b$
minus the number of equivalence classes of $\newinst{\rel}_b$ that contain vertices from $Y_b$.

For the third type, recall that $X_\cP \setminus E(\newinst{G}) = X_\cP' \setminus E(\newinst{G})$,
so the sets of connected components of the third type in $G \setminus X_\cP$
and $G \setminus X_\cP'$ are equal.
This finishes the proof of the lemma.
\end{proof}

We now show that the output for the new instance $\instance_b'=(G',T',\cutsize,\bterms')$ can be easily transformed to the output for the original instance $\instance_b$.

\begin{lemma}\label{lem:steinercut-transform}
Let $(\sol'_\cP)_{\cP\in \mathbb{P}(\instance'_b)}$ be a correct output for the instance
$\instance'_b$.
For any $\cP=(\rel_b,Y_b,\numbcc_b) \in \mathbb{P}(\instance_b)$ define $\sol_\cP$ as follows.
\begin{itemize}
\item If $\iota$ maps 
two border terminals $u,v \in \bterms$ with $(u,v) \notin \rel_b$ to the same vertex of $\bterms'$,
then we take $\sol_\cP=\bot$.
\item Otherwise, we define $\cP' = (\rel_b',Y_b',\numbcc_b)$ as follows:
$(u',v') \in \rel_b'$ if $\iota^{-1}(u') \cap \bterms$ are contained in the same equivalence
class as $\iota^{-1}(v') \cap \bterms$, and $v' \in Y_b'$ if $\iota^{-1}(v') \cap \bterms \subseteq Y_b$;
      and take $\sol_\cP = \sol'_{\cP'}$.
\end{itemize}
Then the sequence $(\sol_\cP)_{\cP\in \mathbb{P}(\instance_b)}$
is a correct output to the instance $\instance_b$.
\end{lemma}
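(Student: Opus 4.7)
The plan is to proceed by fixing an arbitrary $\cP = (\rel_b, Y_b, \numbcc_b) \in \mathbb{P}(\instance_b)$ and showing that the prescribed $\sol_\cP$ is indeed a minimum-size solution to $(\instance_b, \cP)$, or $\bot$ if and only if no such solution exists. The essential tools are Lemma~\ref{lem:equiv-steiner} (any solution can be replaced by one of the same or smaller size that avoids $R$) and Lemma~\ref{lem:contract-properties} (contraction establishes a bijection between connected components of $G \setminus X$ and $G' \setminus X'$ whenever $X \cap R = \emptyset$).

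First I would handle the degenerate case: suppose there exist $u, v \in \bterms$ with $(u, v) \notin \rel_b$ but $\iota(u) = \iota(v)$. Then by Lemma~\ref{lem:contract-properties} there is a path from $u$ to $v$ in $G$ consisting entirely of edges of $R$. Now assume for contradiction that some solution $X_\cP$ to $(\instance_b, \cP)$ exists; by Lemma~\ref{lem:equiv-steiner} we may assume $X_\cP \cap R = \emptyset$. Then the $R$-path from $u$ to $v$ survives in $G \setminus X_\cP$, so $u$ and $v$ lie in the same connected component, contradicting $(u, v) \notin \rel_b$. Hence $\bot$ is the correct output.

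In the remaining case, I first need to verify $\cP' \in \mathbb{P}(\instance'_b)$. The precondition rules out the only way the definition of $\rel_b'$ could fail: since $\iota$ never identifies two border terminals from different $\rel_b$-classes, for each $u' \in \bterms'$ the set $\iota^{-1}(u') \cap \bterms$ lies in a single $\rel_b$-class, which makes $\rel_b'$ well-defined; checking reflexivity, symmetry, transitivity, and the consistency of $Y_b'$ with $\rel_b'$ is then a straightforward unfolding of definitions (using that if two $\bterms'$-vertices preimage into the same $\rel_b$-class then, by the compatibility requirement on $Y_b$, their membership in $Y_b'$ agrees). Once $\cP'$ is a valid behavior, I establish a size-preserving bijection between solutions of $(\instance_b, \cP)$ that are disjoint from $R$ and solutions of $(\instance'_b, \cP')$, via the map $X \mapsto X' = \{\iota(u)\iota(v) : uv \in X,\ \iota(u) \neq \iota(v)\}$ of Lemma~\ref{lem:steiner-contract}; since $X \cap R = \emptyset$, contraction does not delete any edge of $X$, so this map is a size-preserving bijection with inverse $X' \mapsto X'$ (viewing $E(G') \subseteq E(G)$).

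The heart of the argument is showing that $X$ is a solution to $(\instance_b, \cP)$ if and only if $X'$ is a solution to $(\instance'_b, \cP')$. By the last bullet of Lemma~\ref{lem:contract-properties}, $G' \setminus X'$ is exactly $G \setminus X$ with the edges $R$ contracted, so connected components of $G \setminus X$ and $G' \setminus X'$ are in bijection via $\iota$; moreover a component of $G \setminus X$ contains a vertex from $T$ iff the corresponding component of $G' \setminus X'$ contains a vertex from $T'$ (since $T' = \iota(T)$). This immediately gives: (i) the number of terminal-containing components is preserved, matching $\numbcc_b$; (ii) two border terminals $u, v \in \bterms$ are in the same component of $G \setminus X$ iff $\iota(u)$ and $\iota(v)$ are in the same component of $G' \setminus X'$, matching the definition of $\rel_b'$ against $\rel_b$; and (iii) a border terminal's component contains a terminal iff its image's component does, matching $Y_b'$ against $Y_b$. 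Combining this bijection with Lemma~\ref{lem:equiv-steiner}, the minimum cardinality of a solution to $(\instance_b, \cP)$ equals the minimum cardinality of a solution to $(\instance'_b, \cP')$, so $\sol_\cP = \sol'_{\cP'}$ is indeed optimal (and equals $\bot$ exactly when no solution exists).

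The main obstacle I expect is not conceptual but bookkeeping: one must check carefully that the bijection between components respects \emph{all three} behavior components $(\rel_b, Y_b, \numbcc_b)$ simultaneously, and that the well-definedness of $\cP'$ uses precisely the hypothesis excluded in the first bullet of the lemma. The technical subtlety in (i) is that when a terminal of $T$ is contracted onto another terminal, the component count is unaffected since both contributions collapse to one image vertex inside the same component; this is implicit in the Lemma~\ref{lem:contract-properties} bijection, but deserves explicit mention.
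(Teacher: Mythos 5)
Your proposal is correct and matches the paper's intended approach: the paper itself gives only a one-line proof ("straightforward corollary of Lemmas~\ref{lem:contract-properties}, \ref{lem:steiner-contract} and \ref{lem:equiv-steiner}"), and you have supplied exactly the missing details using those three lemmas. One small imprecision worth fixing: the claim "since $X \cap R = \emptyset$, contraction does not delete any edge of $X$" is not quite right — by Lemma~\ref{lem:contract-properties}, an edge $uv \in X \setminus R$ can still disappear as a loop if $u$ and $v$ happen to be joined by a path of $R$-edges, so the map $X \mapsto X'$ is only cardinality-nonincreasing, not a size-preserving bijection on all $R$-disjoint solutions. This does not damage the argument, since a minimum solution disjoint from $R$ never contains such a redundant edge (deleting it leaves the component structure unchanged, contradicting minimality), and in any case the two inequalities between optima follow separately from the two parts of Lemma~\ref{lem:steiner-contract} together with Lemma~\ref{lem:equiv-steiner}, without needing a literal bijection.
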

\begin{proof}
The lemma is a straightforward corollary
of the contraction properties of Lemmas \ref{lem:contract-properties} and \ref{lem:steiner-contract}, as well as Lemma \ref{lem:equiv-steiner}.
\end{proof}

We can now formally define the first step of the algorithm.

\begin{step}\label{step:steiner-rekur}
Using Lemma~\ref{lem:find-separation} we check, whether $G$ admits a $(q,\cutsize)$-good edge separation.
If this is not the case, we proceed to the second phase, i.e., high connectivity phase.
Otherwise let $(V_1,V_2)$ be this separation and without loss of generality assume
that $|\bterms\cap V_1|\leq \cutsize$. Construct the instance
$\newinst{\instance}_b=(G[V_1],T\cap V_1,\cutsize,(\bterms \cap V_1)\cup N_G(V_2))$, apply Lemma~\ref{lem:japanese} to it, solve it recursively
and compute $U(\newinst{\instance}_b)$, the set of edges that appear in any
solution given in the output. Contract all the remaining edges of $G[V_1]$ in $G$
to obtain new graph $G'$ with terminals $T'$ and border terminals $\bterms'$.
Define $\instance_b'=(G',T',\cutsize,T_b')$; recall that a vertex belongs to (border)
terminals if and only if some (border) terminal was contracted onto it.
Apply Lemma~\ref{lem:japanese} to $\instance_b'$, recursively solve the instance $\instance_b'$ and transform
the output according to Lemma~\ref{lem:steinercut-transform}.
\end{step}

Let us now estimate the running time. First, we spend $O(2^{O(\cutsize^2\log \cutsize)}n^3\log n)$
time to check, whether there exists a $(q,\cutsize)$-good edge separation. We apply the algorithm recursively to the instance $\newinst{\instance}_b$, which has $n'$ vertices for some $q+1\leq n'\leq n-q-1$. Construction of the instance $\newinst{\instance}_b$ takes $O(\cutsize n)$ time, construction of $U(\newinst{\instance}_b)$ takes $O(2^{O(\cutsize\log \cutsize)}n)$ time,
and construction of the instance $\instance_b'$ takes $O(\cutsize n)$ time.
Then, we apply the algorithm recursively to the instance $\instance_b'$ that has at most $n-n'+q$ vertices.
Therefore, we can derive the following recurrential inequality:
\begin{equation}
T(n)\leq \max_{q+1\leq n'\leq n-q-1} \Big( O(2^{{O(\cutsize^2\log \cutsize)}}n^{3}\log n) + T(n') + T(n-n'+q)   \Big)\,,
\end{equation}
Note that the function $p(t)=t^3\log t$ is convex, so the maximum of the expression is attained at one of the ends. A straightforward inductive check of both of the ends proves that we have indeed the claimed bound on the complexity, i.e., $O(2^{{O(\cutsize^2\log \cutsize)}}n^{4}\log n)$.

Observe, that if we use the randomized algorithm for finding good edge separations from Lemma~\ref{lem:find-separation-fast}, we obtain $T(v) \leq \tilde{O}(2^{{O(\cutsize^2\log \cutsize)}}n^{2})$ time complexity with success probability at least $(1-1/n)$,
since the graph is partitioned using good edge separations less than $n$ times.

\subsection{Brute force approach}

If the graph returned by Step~\ref{step:steiner-rekur} turns out to be small, we apply a brute-force approach. In this section we describe this step formally.

\begin{lemma}\label{lem:steiner-brute}
A correct output to a \bsteinercut{} instance $\instance_b = (G,T,\cutsize,\bterms)$
can be computed in $O(2^{O(\cutsize \log \cutsize)} n^{2\cutsize+2})$ time.
\end{lemma}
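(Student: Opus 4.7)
The plan is to proceed by exhaustive enumeration of candidate deletion sets, since once $|V(G)|$ is reduced to a size polynomial in $\cutsize$, brute force is affordable. First I would initialize $\sol_\cP = \bot$ for every $\cP \in \mathbb{P}(\instance_b)$; recall from the earlier discussion that $|\mathbb{P}(\instance_b)| = 2^{O(\cutsize \log \cutsize)}$, so this initialization is cheap. Then I would enumerate all edge subsets $X \subseteq E(G)$ with $|X|\leq \cutsize$. Since $G$ may be a multigraph but with every multiplicity capped at $\cutsize+1$, the total number of edges is $O(\cutsize n^2)$, hence the number of such subsets is bounded by $\sum_{i=0}^{\cutsize}\binom{O(\cutsize n^2)}{i} = O(2^{O(\cutsize \log \cutsize)} n^{2\cutsize})$.

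For each enumerated $X$, I would process it as follows. First, compute the connected components of $G \setminus X$ by a breadth-first search in $O(n^2)$ time. Second, read off the unique candidate behavior $\cP_X = (\rel_X, Y_X, \numbcc_X) \in \mathbb{P}(\instance_b) \cup \{\textrm{invalid}\}$ that $X$ realizes: define $(u,v)\in\rel_X$ iff $u,v\in\bterms$ lie in the same component of $G\setminus X$; put $v\in Y_X$ iff $v\in\bterms$ and its component contains a vertex of $T$; and let $\numbcc_X$ be the number of components of $G\setminus X$ meeting $T$. If $\numbcc_X > \cutsize+1$, discard $X$ (no valid $\cP$ requires so many terminal components). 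Otherwise, $X$ is a solution to $(\instance_b,\cP_X)$, and I update $\sol_{\cP_X} \leftarrow X$ whenever either the current value is $\bot$ or $|X|$ is strictly smaller than the size of the currently stored set. Computing $\cP_X$ from the components takes $O(n^2)$ time since $|\bterms|\leq 2\cutsize$.

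Correctness follows immediately from the definition of solution to $(\instance_b,\cP)$: every potential solution of size at most $\cutsize$ is examined, each is classified into the unique $\cP$ it serves, and we retain only the smallest such $X$ per $\cP$. For the time analysis, the main loop performs $O(2^{O(\cutsize \log \cutsize)} n^{2\cutsize})$ iterations, each costing $O(n^2)$ for component computation and classification, giving the claimed bound of $O(2^{O(\cutsize \log \cutsize)} n^{2\cutsize+2})$. There is no real obstacle here; the only subtlety is that we must iterate edge subsets in the multigraph sense (with multiplicities) rather than subsets of distinct unordered pairs, but since multiplicities are capped at $\cutsize+1$ this only contributes a factor absorbed into $2^{O(\cutsize\log\cutsize)}$.
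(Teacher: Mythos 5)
Your proof is correct and follows essentially the same brute-force strategy as the paper: enumerate all edge subsets $X$ of size at most $\cutsize$ and pair each with the right $\cP$, with an $O(n^2)$ check per candidate. The only cosmetic difference is the direction of the loop: the paper iterates over all pairs $(\cP, X)$ and tests whether $X$ is a solution to $(\instance_b,\cP)$, whereas you iterate over $X$ alone, read off the unique realized triple $\cP_X$, and keep a running minimum per $\cP$; both give the same bound. One small note: the paper restricts the enumeration to subsets that take all or none of the parallel edges between each vertex pair, which is justified since a minimum-size solution is always of this form; this gives the cleaner count $(\cutsize+1)n^{2\cutsize}$. Your enumeration over all multigraph edge subsets is also fine, since the extra $\cutsize^{\cutsize}$ factor coming from $m=O(\cutsize n^2)$ edges is indeed absorbed into the $2^{O(\cutsize\log\cutsize)}$ term, as you observe.
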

\begin{proof}
For every $\cP\in \mathbb{P}(\instance_b)$, and for every set $X \subseteq E(G)$
of at most $\cutsize$ edges that, for all $u,v \in V(G)$, takes either all or zero edges
$uv$, in $O(n^2)$ time we verify whether $X$ is a solution to $(\instance_b,\cP)$.
The time bound follows from the fact that $|\mathbb{P}(\instance_b)| \leq 2^{O(\cutsize \log \cutsize)}$
and there are at most $(\cutsize+1) n^{2\cutsize}$ choices of the set $X$.
\end{proof}

We are ready to provide the step of the algorithm that finishes resolving the problem, providing that the graph is sufficiently small.

\begin{step}\label{step:steiner-brute}
If $|V(G)|\leq (\cutsize+1)q$, then apply Lemma~\ref{lem:steiner-brute} to resolve the given \bsteinercut instance $\instance_b = (G,T,\cutsize,\bterms)$.
\end{step}

The correctness of this step is obvious, while from Lemma~\ref{lem:steiner-brute} we find that the running time is $O(2^{O(\cutsize^2 \log \cutsize)})$ as $q = 2^{O(\cutsize \log \cutsize)}$.
Therefore, from now on we can assume that $|V(G)|>(\cutsize+1)q$.

\subsection{High connectivity phase}

We now show how to solve \bsteinercut in $\tilde{O}(2^{{O(\cutsize^2\log \cutsize)}}n \log n)$ time
for the remaining case, when the graph $G$ does not admit a $(q,\cutsize)$-good edge
separation, yet is still too big to apply brute-force.
We need to output answers for all the possible triples $\cP\in \mathbb{P}(\instance_b)$.
We iterate through all such $\cP$; note that this gives only $2^{O(\cutsize\log \cutsize)}$
overhead in the running time.
Therefore, from now on we may assume that $\cP = (\rel_b,Y_b,\numbcc_b)$ is fixed.

Firstly, we make a quick check whether an empty deletion set is sufficient for our needs.
We formally need this step in order to be able to use nontriviality
of the solution in some technical reasonings.

\begin{step}\label{step:steiner-empty}
Given \bsteinercut instance $\instance_b = (G,T,\cutsize,\bterms)$ and
$\cP\in \mathbb{P}(\instance_b)$, verify in $O(\cutsize n)$ time
whether $\emptyset$ is a solution to $(\instance_b,\cP)$. If this is the case,
        output $\sol_\cP = \emptyset$.
\end{step}

The described step requires $O(\cutsize n)$ time and its correctness is obvious. From now on we may assume that the minimum deletion set is nonempty.

\subsubsection{Interrogating sets}

We now prepare ourselves to use Lemma~\ref{lem:random} to extract more structure of the graph $G$.

\begin{definition}
Let $X\subseteq E(G)$, $1\leq |X|\leq \cutsize$, and let $C_0,C_1,\ldots,C_\ell$ be connected components of $G\setminus X$, where, due to Lemma~\ref{lem:high-structure}, $\ell\leq \cutsize$ and $|V(C_i)|\leq q$ for $i\geq 1$. We say that a set of edges $S\subseteq E(G)$ {\emph{interrogates}} $X$ if the following properties are satisfied:
\begin{itemize}
\item $X\cap S=\emptyset$;
\item for every component $C_i$, $i\geq 1$, $S$ contains a spanning tree of $C_i$;
\item for every vertex $u\in V(C_0)\cap V(X)$, $u$ is contained in a connected component of $(V(G),S)$ of size at least $q+1$.
\end{itemize}
\end{definition}

Note that the first property together with $|V(C_i)|\leq q$ for $i\geq 1$ imply that the connected component considered in the third property has to be entirely contained in $C_0$. We now prove that a sufficiently large family given by Lemma~\ref{lem:random} contains a set interrogating a solution.

\begin{lemma}\label{lem:interrogation}
Let $\randfamily$ be a family obtained
by an application of the algorithm of Lemma \ref{lem:random} for universe $U=E(G)$ and constants $a=3q\cutsize$ and $b=\cutsize$,
Then, for any $X \subseteq E(G)$ with $1 \leq |X| \leq \cutsize$, there exists a set
$S \in \randfamily$ that interrogates $X$.
\end{lemma}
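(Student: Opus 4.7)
The plan is to construct, for any prospective $X$ with $1 \le |X| \le \cutsize$, a witness set $A \subseteq E(G) \setminus X$ of size at most $3q\cutsize = a$, and then invoke Lemma~\ref{lem:random} with $A$ and $X$ to extract the desired $S \in \randfamily$.

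First I would invoke Lemma~\ref{lem:high-structure} to pin down the structure of $G \setminus X$. Since Step~\ref{step:steiner-rekur} has left us with a graph admitting no $(q,\cutsize)$-good edge separation, that lemma guarantees at most $\ell \le \cutsize$ components, and at most one of them, say $C_0$, has more than $q$ vertices. Moreover, combining the assumption $|V(G)| > (\cutsize+1)q$ left after Step~\ref{step:steiner-brute} with the bound $|V(C_i)| \le q$ for $i \ge 1$ forces $|V(C_0)| \ge q+1$, so the big component is genuinely big.

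Next, I would build $A$ in two pieces. Let $A_1$ be the union of the edge sets of arbitrary spanning trees of the small components $C_1, \ldots, C_\ell$; then $|A_1| \le \ell(q-1) \le \cutsize(q-1)$. For the second piece, note that $|V(X) \cap V(C_0)| \le 2|X| \le 2\cutsize$; for each such vertex $u$, use the fact that $C_0$ is connected and has at least $q+1$ vertices to pick an arbitrary subtree $T_0^u$ of $C_0$ containing $u$ with exactly $q+1$ vertices (grow a BFS tree from $u$ inside $C_0$ and stop after $q+1$ vertices). Let $A_2 = \bigcup_{u \in V(X) \cap V(C_0)} E(T_0^u)$, so $|A_2| \le 2\cutsize \cdot q$. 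Setting $A = A_1 \cup A_2$ gives $|A| \le \cutsize(q-1) + 2q\cutsize \le 3q\cutsize = a$, and by construction $A \subseteq E(G) \setminus X$ (spanning trees of $C_i$ lie inside $C_i$ and $T_0^u$ lies inside $C_0$, all disjoint from the cut $X$).

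Finally, since $|A| \le a$ and $|X| \le \cutsize = b$ with $A \cap X = \emptyset$, Lemma~\ref{lem:random} produces some $S \in \randfamily$ with $A \subseteq S$ and $X \cap S = \emptyset$. Verifying the three conditions of interrogation is then immediate: $S \cap X = \emptyset$ by choice of $S$; each small component $C_i$ has its spanning tree inside $A_1 \subseteq S$, hence $S$ contains a spanning tree of $C_i$; and for each $u \in V(X) \cap V(C_0)$, the tree $T_0^u \subseteq S$ lies entirely in $C_0$ (so it avoids $X$) and contains $u$ together with $q$ further vertices, hence $u$ sits in a connected subgraph of $(V(G),S)$ of size at least $q+1$. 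There is really no obstacle here beyond the bookkeeping; the only point requiring slight care is confirming $|V(C_0)| \ge q+1$ so that the subtrees $T_0^u$ of the required size exist, which is exactly where the bound $|V(G)| > (\cutsize+1)q$ enforced by Step~\ref{step:steiner-brute} is used.
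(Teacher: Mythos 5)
Your proof is correct and follows essentially the same route as the paper's: invoke Lemma~\ref{lem:high-structure} for the component structure, build $A_1$ from spanning trees of small components and $A_2$ from size-$(q+1)$ trees around each endpoint of $X$ in $C_0$, bound $|A_1 \cup A_2| \le 3q\cutsize$, and apply Lemma~\ref{lem:random}. The only differences are cosmetic bookkeeping (you explicitly spell out $|V(X) \cap V(C_0)| \le 2\cutsize$ and construct $T_0^u$ as a BFS tree in $C_0$ rather than as a subtree of a fixed spanning tree $T_0$, but either choice works and the argument is unchanged).
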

\begin{proof}
Let $C_0,C_1,\ldots,C_\ell$ be connected components of $G\setminus X$, where,
    due to Lemma~\ref{lem:high-structure}, $\ell\leq \cutsize$ and $|V(C_i)|\leq q$
    for $i\geq 1$.
As the algorithm did not finish when performing Step~\ref{step:steiner-brute},
we have that $|V(G)|>(\cutsize+1)q$, so $|V(C_0)|\geq q+1$.
Fix a spanning tree $T_i$ of each component $C_i$.
Let $A_1=\bigcup_{i=1}^\ell E(T_i)$, note that $|A_1|\leq q\cutsize$.
For every vertex $u\in V(C_0)\cap V(X)$ fix an arbitrary subtree $T_0^u$ of $T_0$
that contains exactly $q+1$ vertices, and define $A_2=\bigcup_{u\in V(C_0)\cap V(X)} E(T_0^u)$;
this is possible due to $|V(C_0)|\geq q+1$.
By Lemma~\ref{lem:random}, there exists a set $S\in \randfamily$
such that $A_1\cup A_2\subseteq S$ and $S\cap X=\emptyset$. It follows from the construction that $S$ interrogates $X$.
\end{proof}

This gives raise to the following branching step.

\begin{step}\label{step:steinercut-branching}
Using Lemma~\ref{lem:random} generate family $\randfamily$ for universe $U=E(G)$,
      and constants $a=3q\cutsize$ and $b=\cutsize$.
      Branch into $|\randfamily|$ subcases, labeled with $S\in \randfamily$.
      In branch $S$ we seek a solution $X$ to $(\instance_b,\cP)$
      such that $S$ interrogates $X$ and, moreover, $|X|$ is minimum among these.
\end{step}

Lemma~\ref{lem:interrogation} asserts that the deletion set of an optimal solution is interrogated by some set from family $\randfamily$. Therefore, in order to find a solution with minimum possible size of deletion set it suffices to take minimum over solutions given by the branches. Note that in this manner we introduce $O(2^{O(\min(q,\cutsize)\log(q+\cutsize))}\log n)=O(2^{O(\cutsize^2\log \cutsize)}\log n)$ branches. Moreover, as the family $\randfamily$ can be computed in $O(2^{O(\cutsize^2\log \cutsize)}n\log n)$ time and the construction of every branch takes $O(\cutsize n)$ time, the whole branching procedure takes $O(2^{O(\cutsize^2\log \cutsize)}n \log n)$ time. We now describe the subroutine performed in each branch, let it be labeled by $S\in \randfamily$. To simplify the presentation we assume that $S$ interrogates $X_0$ for some minimum solution $X_0$ and examine what happens with $X_0$ during the operations performed on the graph.

Let us contract all the edges from $S$ to obtain a new graph $H_0$. Let $\iota_0$ be the mapping from $V(G)$ to $V(H_0)$ corresponding to these contractions. Then, we obtain the new graph $H$ by identifying all the vertices $u\in H_0$ for which $|\iota_0^{-1}(u)| > q$ into a single vertex; such vertices $u$ are called {\emph{heavy}}. If there are no heavy vertices, we can safely terminate the branch, as a set that interrogates an nonempty solution must induce at least one connected component that has at least $q+1$ vertices. Otherwise, denote $b$ the vertex resulting in their identification; we will further refer to it as to the {\emph{core}} vertex. Let $\iota_1$ be the mapping from $V(H_0)$ to $V(H)$ corresponding to these identifications. Moreover, let $\iota=\iota_1\circ \iota_0$ be the mapping from $V(G)$ to $V(H)$ corresponding to the composition of these operations. 

We claim that the feasible deletion set $X_0$ 'survives' both steps.

\begin{lemma}\label{lem:steinercut-not-crossing}
Let $v,w\in V(G)$ such that $\iota(v)=\iota(w)$. Then $v$ and $w$ are in the same connected component of $G\setminus X$ for any set $X \subseteq E(G)$ interrogated by $S$.
\end{lemma}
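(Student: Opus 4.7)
The plan is to split the claim into two cases depending on which of the two operations (edge contraction or heavy-vertex identification) is responsible for identifying $v$ and $w$ under $\iota$. Recall that $\iota = \iota_1 \circ \iota_0$. First suppose $\iota_0(v) = \iota_0(w)$, so $v$ and $w$ already get merged in $H_0$. By Lemma~\ref{lem:contract-properties}, this means there is a path from $v$ to $w$ in $G$ consisting entirely of edges of $S$. Since the first property of interrogation gives $X \cap S = \emptyset$, this path survives in $G \setminus X$, and therefore $v$ and $w$ lie in the same connected component of $G \setminus X$.

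For the remaining case we have $\iota_0(v) \neq \iota_0(w)$ but $\iota_1(\iota_0(v)) = \iota_1(\iota_0(w)) = b$. By the definition of the identification step, both $\iota_0(v)$ and $\iota_0(w)$ must be heavy vertices of $H_0$, i.e., $|\iota_0^{-1}(\iota_0(v))| > q$ and $|\iota_0^{-1}(\iota_0(w))| > q$. Applying the argument from the first case to every pair of vertices inside $\iota_0^{-1}(\iota_0(v))$, we see that all vertices of this preimage lie in a single connected component of $G \setminus X$; the same holds for $\iota_0^{-1}(\iota_0(w))$.

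To conclude, we invoke the structural result of Lemma~\ref{lem:high-structure}: since we are in the high-connectivity phase, the graph $G$ admits no $(q,k)$-good edge separation, so after removing any set $X$ of at most $k$ edges, at most one connected component of $G \setminus X$ has more than $q$ vertices. Hence both $\iota_0^{-1}(\iota_0(v))$ and $\iota_0^{-1}(\iota_0(w))$, each of size exceeding $q$, must lie in this unique big component $C_0$. In particular $v, w \in V(C_0)$, so they belong to the same connected component of $G \setminus X$, which finishes the proof. The only subtle point is that we need $|X| \leq k$ for Lemma~\ref{lem:high-structure} to apply, which is ensured by the interrogation definition (we only consider sets $X$ with $1 \leq |X| \leq k$).
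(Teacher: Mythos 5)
Your proof is correct and takes essentially the same approach as the paper's: split on whether the identification happens at the contraction step or the heavy-vertex merging step, use $S\cap X=\emptyset$ to show $S$-paths survive in $G\setminus X$, and use the "at most one big component" consequence of high connectivity. The only cosmetic difference is that the paper argues by contradiction (placing $v$ in a small $C_i$, $i\neq 0$, and deriving $|V(C_i)|>q$ from heaviness of $\iota_0(v)$), whereas you argue directly that both preimages are forced into the unique large component $C_0$; these are the same observation packaged differently.
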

\begin{proof}
Assume otherwise, that is, that we have $\iota(v)=\iota(w)$ but $v\in V(C_i)$ and $w\in V(C_j)$ for $i\neq j$. Without loss of generality assume that $i\neq 0$.

Assume first that $\iota_0(v)=\iota_0(w)$. As $v$ and $w$ are contracted onto the same vertex, there exists a path from $v$ to $w$ in $G$ that consists of edges of $S$. As $S\cap X=\emptyset$, this means that $v$ and $w$ are in the same connected component of $G\setminus X$, which is a contradiction.

Assume now that $\iota_0(v)\neq \iota_0(w)$. This means that $\iota_0(v)$ and $\iota_0(w)$ have to be identified while constructing graph $H$ from $H_0$. It follows that $|\iota_0^{-1}(v)|\geq q+1$. Therefore, there are at least $q$ vertices of $G$ that are reachable from $v$ via paths contained in $S$, hence disjoint with $X$. However, $i\neq 0$ so $|V(C_i)|\leq q$, which is a contradiction.
\end{proof}

From Lemma~\ref{lem:steinercut-not-crossing} we infer that all the edges of $X_0$
are still present in $H$, as, from the minimality of $X_0$ we may assume
that the edges of $X_0$ connect different connected components of $G \setminus X_0$.
Let us define the sets of terminals $T'$ and border terminals $\bterms'$ in $H$
by setting $u\in T'$ if and only if $\iota^{-1}(u)\cap T\neq \emptyset$
and $u\in \bterms'$ if and only if $\iota^{-1}(u)\cap \bterms \neq \emptyset$.

Moreover, due to Lemma \ref{lem:steinercut-not-crossing} and the fact
that $X_0$ is a solution to $(\instance_b,\cP)$,
we infer that for any $v,w \in \bterms$ with $\iota(v) = \iota(w)$,
we have $(v,w) \in \rel_b$ and $v \in Y_b$ iff $w \in Y_b$.
Thus we can project $\rel_b$ and $Y_b$ on $\bterms' \subseteq V(H)$,
by defining a relation $\rel_b'$ by taking $(\iota(v),\iota(w)) \in \rel_b'$ iff $(v,w) \in \rel_b$
and a set $Y_b'$ by taking $\iota(v) \in Y_b'$ iff $v \in Y_b$.

Define $\instance_b' = (H,T',\cutsize,\bterms')$ and $\cP' = (\rel_b',Y_b',\numbcc_b)$;
note that $\cP' \in \mathbb{P}(\instance_b')$.
The next lemma can be proven by a straightforward check of the definition of the solution, as Lemma~\ref{lem:steinercut-not-crossing} asserts connected components of $G\setminus X_0$ correspond in a one-to-one manner to connected components of $H\setminus X_0$.

\begin{lemma}\label{lem:steinercut-new-solution}
The set $X_0$ is a solution to $(\instance_b',\cP')$.
\end{lemma}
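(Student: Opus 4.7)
The plan is to verify the four defining properties of $X_0$ being a solution to $(\instance_b',\cP')$: the size bound $|X_0| \leq \cutsize$, the equivalence of border terminals under $\rel_b'$, the reachability of $T'$ expressed by $Y_b'$, and the count $\numbcc_b$ of components touching $T'$. The size bound is immediate, and the rest reduce to a single structural claim: that there is a natural bijection between the connected components of $G \setminus X_0$ and those of $H \setminus X_0$, induced by the map $\iota$.

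To set up the bijection, label the components of $G \setminus X_0$ as $C_0,C_1,\dots,C_\ell$ as in Lemma~\ref{lem:high-structure}, with $|V(C_i)| \leq q$ for $i \geq 1$. For $i \geq 1$, the spanning tree of $C_i$ contained in $S$ gets contracted in $H_0$ to a single vertex $c_i$; since edges of $G$ joining distinct components of $G \setminus X_0$ lie in $X_0$ and hence not in $S$, no further contractions or identifications involve $c_i$, so $c_i$ remains a vertex of $H$ whose only incident edges belong to $X_0$. In particular the $c_i$ are pairwise distinct, not heavy, and each forms an isolated vertex of $H \setminus X_0$, so the assignment $C_i \mapsto \{c_i\}$ is well-posed. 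For $C_0$: Step~\ref{step:steiner-empty} has already disposed of $X_0 = \emptyset$, and since $G$ is connected, every component of $G \setminus X_0$ is incident to some edge of $X_0$, so $V(C_0) \cap V(X_0) \neq \emptyset$. By the third interrogation property, each such endpoint lies in a connected component of $(V(G),S)$ of size at least $q+1$, hence becomes heavy in $H_0$ and is identified with $b$ in $H$. Combined with the fact that any vertex of $V(C_0)$ is joined in $G \setminus X_0$ to one of these endpoints and the corresponding path projects through $\iota$ to a walk in $H \setminus X_0$, we conclude that $\iota(V(C_0))$ is contained in (and in fact equals) the connected component of $H \setminus X_0$ containing $b$. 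Thus $C_0 \mapsto$ (component of $b$) completes the bijection.

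With the bijection in hand, and with the definitions of $T',\bterms',\rel_b',Y_b'$ arranged so that a component of $H \setminus X_0$ intersects $T'$ or $\bterms'$ precisely when the corresponding component of $G \setminus X_0$ intersects $T$ or $\bterms$, the three remaining conditions follow directly from the corresponding facts for $X_0$ and $(\instance_b,\cP)$. The one point needing care is well-definedness of $\rel_b'$ and $Y_b'$: whenever $u', u'' \in \bterms$ satisfy $\iota(u') = \iota(u'')$, they must belong to a common equivalence class of $\rel_b$ and be simultaneously in or out of $Y_b$. Lemma~\ref{lem:steinercut-not-crossing} places such $u',u''$ in the same connected component of $G \setminus X_0$, after which the fact that $X_0$ solves $(\instance_b,\cP)$ immediately yields both coincidences. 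The only real obstacle throughout is verifying that $b$ indeed lies in $\iota(V(C_0))$; this is where the third interrogation property is essential, while all other steps are routine checks on how $\iota$ interacts with connectivity.
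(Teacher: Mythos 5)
Your proof is correct and follows essentially the same route the paper intends: the paper gives only a one-sentence justification, deferring to Lemma~\ref{lem:steinercut-not-crossing} and the observation that components of $G\setminus X_0$ correspond bijectively to components of $H\setminus X_0$; you spell out that bijection explicitly ($C_i \mapsto \{c_i\}$ for $i\geq 1$ via the contracted spanning trees, and $C_0 \mapsto$ the component of the core vertex $b$ via the third interrogation property), and you correctly identify that Lemma~\ref{lem:steinercut-not-crossing} is both what prevents the $\iota$-preimages from straddling distinct components and what justifies the well-definedness of $\rel_b'$ and $Y_b'$, exactly as the paper states in the surrounding text. The one point you flag as needing care --- that $b$ lies in $\iota(V(C_0))$ --- is indeed the substantive use of interrogation and of the non-emptiness of $X_0$, and you handle it correctly by invoking Step~\ref{step:steiner-empty} and connectivity of $G$.
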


In the same manner we can also obtain a converse implication.

\begin{lemma}\label{lem:steinercut-new-solution-con}
If a set $X' \subseteq E(H)$ is a solution to $(\instance_b',\cP')$ of minimum
possible size, then $X'$ is a solution to $(\instance_b,\cP)$ as well.
\end{lemma}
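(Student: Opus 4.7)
My plan is to verify the three nontrivial conditions for $X'$ being a solution to $(\instance_b,\cP)$ by establishing a bijection between the connected components of $G \setminus X'$ and those of $H \setminus X'$ that preserves both the location of border terminals and the property of containing a vertex of $T$. The bound $|X'|\leq \cutsize$ is immediate from $X'$ being a solution to $(\instance_b',\cP')$, and $X' \cap S = \emptyset$ follows because edges of $S$ were contracted away and are thus not present in $E(H)$.

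The structural heart of the argument will be the claim that for every $w \in V(H)$, all vertices of $\iota^{-1}(w)$ belong to a single connected component of $G \setminus X'$. For a non-core $w$, the preimage $\iota^{-1}(w) = \iota_0^{-1}(w)$ is $S$-connected by construction, and since $S \cap X' = \emptyset$ it remains connected in $G \setminus X'$. For the core $w = b$, the preimage is a disjoint union of $S$-connected sets $\iota_0^{-1}(u)$, one for each heavy vertex $u$ of $H_0$ identified to $b$, each of size more than $q$. Here I will use the assumption that $G$ admits no $(q,\cutsize)$-good edge separation: by Menger's theorem, any two such heavy subgraphs are joined by at least $\cutsize+1$ edge-disjoint paths in $G$, and since $|X'|\leq \cutsize$, at least one such path survives in $G \setminus X'$, so all preimages of $b$ coalesce into one component of $G \setminus X'$.

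With the structural claim in hand, I will define $\tilde\iota$ sending a component $C$ of $G \setminus X'$ to the component of $H \setminus X'$ containing $\iota(C)$. Well-definedness of $\tilde\iota$ follows from the observation that any path in $G \setminus X'$ projects to a walk in $H \setminus X'$: an edge of $G \setminus X'$ crossing two different $\iota$-preimages either lies directly in $E(H) \setminus X'$, or was dropped as a redundant parallel copy, in which case it may be substituted by a retained copy since a multi-edge of multiplicity $\cutsize+1$ in $H$ cannot be entirely contained in $X'$. Surjectivity of $\tilde\iota$ is immediate from taking a preimage, and injectivity follows by combining the ``paths project to walks'' argument in the reverse direction with the structural claim that preimages of a single vertex of $V(H)$ do not split across components of $G \setminus X'$. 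By construction of $T'$ and $\bterms'$, the bijection $\tilde\iota$ preserves both ``contains a vertex of $T$'' (vs.~$T'$) and ``contains a given $u \in \bterms$'' (vs.~$\iota(u) \in \bterms'$). Translating the properties of $X'$ as a solution to $(\instance_b',\cP')$ through $\tilde\iota$, using the definitions of $\rel_b'$, $Y_b'$, and $\numbcc_b' = \numbcc_b$ from $\rel_b$, $Y_b$, and $\numbcc_b$, yields the three remaining solution conditions. The main obstacle is the structural claim at the core vertex $b$: this is exactly where the assumption of no $(q,\cutsize)$-good edge separation is crucially used, through Menger's theorem; the remaining steps are bookkeeping.
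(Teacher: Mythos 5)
Your proof is correct, and it is the natural way to complete the argument that the paper leaves implicit with the remark ``in the same manner.'' One point worth noting: the forward direction (Lemma~\ref{lem:steinercut-new-solution}) relies on Lemma~\ref{lem:steinercut-not-crossing}, which crucially uses that $X_0$ is \emph{interrogated} by $S$, and that property is not known for $X'$ — so a literal ``same manner'' does not quite work for the core vertex. You correctly substitute the missing ingredient: the absence of a $(q,\cutsize)$-good edge separation gives $\cutsize+1$ edge-disjoint paths between any two heavy preimages $\iota_0^{-1}(u)$ via Menger, so one survives $X'$; this is precisely the argument the paper already invokes to justify the identification into the core vertex (cf.\ the illustration in Section~\ref{sec:illustration}). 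The rest — $S$-preimages of non-core vertices stay connected because $S \subseteq E(G)\setminus E(H)$ and hence $S \cap X' = \emptyset$, the component bijection $\tilde\iota$, the multi-edge caveat, and transporting $\rel_b'$, $Y_b'$, and $\numbcc_b$ through the bijection — is all as intended. You also correctly observe that the hypothesis ``of minimum possible size'' is not actually used in this direction of the implication; it is carried along because the algorithm only applies the lemma to the minimum solution returned by the subroutine.
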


Lemmas \ref{lem:steinercut-new-solution} and \ref{lem:steinercut-new-solution-con} justify correctness of the following step, which we now state formally.

\begin{step}\label{step:steinercut-contraction}
Contract the edges of $S$ to obtain the graph $H_0$.
If there are no heavy vertices in $H_0$, terminate the branch as $S$ cannot interrogate any feasible deletion set. Otherwise, identify all heavy vertices into one core vertex $b$ and
denote by $H$ the resulting graph.
Define the instance $\instance_b'=(H,T',\cutsize,\bterms')$
and $\cP'$ in a natural manner, described in this section.
Run the remaining part of the algorithm on the instance $\instance_b'$ and triple $\cP'$
to obtain a solution $\sol$, which then output as $\sol_\cP$.
\end{step}

Note that Step~\ref{step:steinercut-contraction} can be performed in $O(\cutsize n)$ time.

\subsubsection{Connected components of $H \setminus \{b\}$ and dynamic programming}

We now establish some structural properties of the behaviour of $X_0$ in $H$.
The goal is to limit the class of possible solutions in $\instance_b'$
we need to search through.
Note that in the constructed graph $H$ the vertex $b$ plays a special role,
as we know that $V(X_0)\cap V(C_0)\subseteq \iota^{-1}(b)$.

Let $B_1',B_2',\ldots,B_p'$ be the components of $H\setminus \{b\}$ and let $B_i=H[V(B_i')\cup \{b\}]$ for $i=1,2,\ldots,p$. Observe that $B_i$ are connected, edge-disjoint and $b$ separates them. Moreover, we can compute them in $O(\cutsize n)$ time. We now claim that for each component $B_i$, the solution either takes $E(B_i)$ entirely, or is disjoint with it.

\begin{lemma}\label{lem:steinercut-full-or-empty}
For each $i=1,2,\ldots,p$, either $E(B_i)\cap X_0=\emptyset$ or $E(B_i)\subseteq X_0$.
\end{lemma}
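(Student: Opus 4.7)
The plan is to analyze each component $B_i'$ by tracing what the vertices in it came from under the composite map $\iota:V(G)\to V(H)$, using the properties of $S$ interrogating $X_0$ established in the definition. Let $C_0,C_1,\ldots,C_\ell$ be the components of $G\setminus X_0$, where $|V(C_j)|\leq q$ for $j\geq 1$. By the interrogation property, $S$ contains a spanning tree of each small $C_j$, so each such $C_j$ is entirely contracted onto a single non-heavy vertex $c_j$ of $H_0$ (hence a non-$b$ vertex of $H$); and every vertex of $V(X_0)\cap V(C_0)$ lies in an $S$-component of size at least $q+1$, hence gets identified into the core vertex $b$. Thus each non-$b$ vertex of $H$ is of exactly one of two types: (a) some $c_j$ with $j\geq 1$, or (b) the image of a non-heavy $S$-component of $C_0$.

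The key structural step will be to prove that no edge of $H$ incident to two non-$b$ vertices can join a type (a) vertex to a type (b) vertex. For this I consider the two possibilities for an edge $e=uw\in E(H)$: either $e\in X_0$, or $e\in E(G)\setminus(S\cup X_0)$. If $e\in X_0$, then by minimality of $X_0$ its endpoints lie in different components of $G\setminus X_0$; if one endpoint lies in $C_0$, it belongs to $V(X_0)\cap V(C_0)$ and is therefore mapped to $b$, so the only non-$b$-to-non-$b$ $X_0$-edges of $H$ connect pairs $c_j,c_{j'}$ of type (a). If instead $e\notin X_0$, then both endpoints lie in the same component $C_j$ of $G\setminus X_0$; if $j\geq 1$, both endpoints map to $c_j$ and $e$ becomes a loop that disappears, so the only non-$b$-to-non-$b$ non-$X_0$ edges of $H$ join two type (b) vertices. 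This dichotomy immediately forces each component $B_i'$ of $H\setminus\{b\}$ to be entirely of type (a) or entirely of type (b).

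In the type (a) case, every vertex of $B_i'$ is some $c_j$, and I argue that every edge of $B_i$ lies in $X_0$: edges inside $B_i'$ are $X_0$-edges by the above, and edges from $c_j$ to $b$ must also be $X_0$-edges, because a non-$X_0$ edge in $G$ between $C_j$ and $C_0$ would have its endpoints in different $G\setminus X_0$ components, contradicting that it is not in $X_0$. In the type (b) case, every vertex of $B_i'$ comes from $C_0$, and every edge incident to $B_i'$ in $H$ originates from an edge of $G$ inside $C_0$, hence cannot be in $X_0$; so $E(B_i)\cap X_0=\emptyset$.

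The only subtle point, and the one I would take most care over, is the bookkeeping of which edges of $G$ survive to $H$ and between which types of vertices they can lie; in particular the use of minimality of $X_0$ to guarantee that each edge of $X_0$ truly separates two distinct components of $G\setminus X_0$ must be explicit, since otherwise edges of $X_0$ with both endpoints in $C_0$ would survive as loops and spoil the clean classification. Once this dichotomy is established, the lemma is immediate.
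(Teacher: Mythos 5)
Your proof is correct and takes essentially the same approach as the paper's. The paper does a direct two-case analysis on whether $B_i'$ contains a vertex in the same component of $H \setminus X_0$ as $b$, which corresponds exactly to your type (a)/type (b) dichotomy; the key ingredients (small components of $G \setminus X_0$ contract to single non-heavy vertices, $V(X_0)\cap V(C_0)$ maps to $b$, minimality of $X_0$ forces its edges to cross components) are identical, and your explicit classification of surviving edges of $H \setminus\{b\}$ is just a slightly more modular packaging of the same argument.
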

\begin{proof}
We consider two cases.
Assume first that no vertex in $V(B_i')$ is in the same connected component
of $H \setminus X_0$ as the core $b$. In particular, this implies that edges connecting $b$ with $V(B_i')$ belong to $X_0$.
Moreover, in $G$ the set $\iota^{-1}(V(B_i'))$
is a union of vertex sets of some components $C_i$ for $i \geq 1$.
As $S$ interrogates $X_0$, each component $C_i$ for $i \geq 1$
is projected by $\iota$ onto a single vertex in $H$.
Consider an edge $e \in E(B_i')$. We infer that $e$ connects two different connected components of $G \setminus X_0$,
thus $e \in X_0$. Therefore $E(B_i) \subseteq X_0$ in this case.

Assume now that there exists a vertex $v \in V(B_i')$ that is in the same connected
component of $H \setminus X_0$ as $b$, i.e., $\iota^{-1}(v) \subseteq V(C_0)$.
Let $w \in N_{B_i'}(v)$.
As $v \neq b$, we have $|\iota^{-1}(v)| \leq q$ and $|\iota^{-1}(w)|\leq q$, so we have $vw \notin X_0$,
as otherwise $S$ does not interrogate $X_0$.
Since the choice of $v$ and its neighbour $w$ was arbitrary, and since
$B_i'$ is connected, we infer that all vertices of $B_i'$ belong to the same
connected component of $H \setminus X_0$. As $\iota^{-1}(v) \subseteq V(C_0)$, we find that $\iota^{-1}(V(B_i')) \subseteq V(C_0)$.
From the minimality of $X_0$ we infer that $E(B_i) \cap X_0 = \emptyset$.
\end{proof}

Lemma \ref{lem:steinercut-full-or-empty} ensures us that we may
seek the optimal solution among the ones that do not intersect edge
sets of components $B_i$ nontrivially.
We now show how to construct the optimal solution in the
remaining instance in $O(\cutsize^2 n)$ time.
First, we resolve components $B_i'$ that contain border terminals.

\begin{step}\label{step:steinercut-borders}
We define $D \subseteq \bterms'$ as follows.
If $b \in \bterms'$, we define $D$ to be the equivalence class
of $\rel_b'$ that contains $b$.
Otherwise, we branch into at most $1+|\bterms'| \leq 1 + 2\cutsize$ subcases,
taking $D$ to be an empty set or one of the equivalence classes of $\rel_b'$.
Given $D$, we seek for a solution $X$ where the set of border terminals
being in the in the same connected component of $H \setminus X$ as $b$
equals $D$.

For a fixed choice of $D$, we may immediately resolve
the connected components $B_i'$ that contain a border terminal of $\bterms'$.
Initiate a counter $\numcc_0 = 0$.
For each component $B_i'$ with $V(B_i') \cap \bterms' \neq \emptyset$ perform
the following.
\begin{enumerate}
\item If there exists a border terminal in $V(B_i') \cap D$, as well as
a border terminal in $(V(B_i') \cap \bterms') \setminus D$, terminate this branch,
  as for any of the two cases given by Lemma \ref{lem:steinercut-full-or-empty},
  we cannot satisfy the conditions implied by $\rel_b'$ and the set $D$.
\item If all border terminals in $V(B_i') \cap \bterms'$ belong to $D$,
  contract all edges of $B_i$
  (we have $E(B_i) \cap X = \emptyset$ in this case).
\item If all border terminals in $V(B_i') \cap \bterms'$ do not belong to $D$,
  include $E(B_i)$ into the constructed solution: decrease $\cutsize$ by $|E(B_i)|$
  and increase $\numcc_0$ by $|V(B_i') \cap T'|$ (by including $E(B_i)$ into a solution,
  we delete $|E(B_i)|$ edges and create $|V(B_i') \cap T'|$ new connected components
  that contain a terminal).
\end{enumerate}
\end{step}

Note that Step~\ref{step:steinercut-borders} can be performed in $O(\cutsize^2 n)$ time and
results in at most $2\cutsize+1$ branches. Its correctness is asserted by Lemma~\ref{lem:steinercut-full-or-empty}.
After Step \ref{step:steinercut-borders} is applied, all terminals of $\bterms'$ are either contracted onto $b$
(if they belong to $D$), or became isolated vertices after the removal of edges included to the
constructed solution. If some equivalence class of $\rel_b'$ different than $D$ is larger than
a single vertex of $H$, or we do not satisfy the conditions implied by the set $Y_b'$
for some vertex of $\bterms'$, we may immediately reject the current branch. Otherwise,
we may forget the relation $\rel_b'$ (as all conditions imposed by it are already satisfied).
Moreover we may also forget almost all information carried by the set $Y_b'$, except for the
fact whether $D \subseteq Y_b'$. This is done in the following step.

\begin{step}\label{steinercut-clean-rel}
Terminate the current branch if one of the following conditions is satisfied:
\begin{enumerate}
\item There exists a equivalence class of $\rel_b'$ that is different than $D$ and contains at least two border terminals.
\item There is a vertex $v \in \bterms' \setminus D$, such that $v$ is exactly in one of the sets $T'$ and $Y_b'$.
\item $D \neq \emptyset$, $D \cap Y_b' = \emptyset$ and $b \in T'$ after Step \ref{step:steinercut-borders} is applied.
\end{enumerate}
Otherwise, denote $\alpha = \bot$ if $D \neq \emptyset$ and $D \cap Y_b' = \emptyset$, and $\alpha = \top$ otherwise.
\end{step}
We note that, from the minimality of $X$ and the connectivity of $G$,
we have that any connected component of $G \setminus X$ that does not contain a border terminal, contains a terminal from $T$.
Indeed, otherwise, if $G \setminus X$ contains a connected component $C$ that does not contain a terminal nor a border terminal,
one edge incident to $C$ may be removed from $X$ and still $X$ would be a solution to $(\instance_b,\cP)$, a contradiction.
Therefore, if $D = \emptyset$, we may assume that the connected component of $G \setminus X$ that contains $\iota^{-1}(b)$,
contains at least one terminal.
 
From now on we know that all the remaining components $B_i'$ do not contain border terminals.
Without loss of generality, let $B_1',B_2',\ldots,B_{p'}'$ be the remaining components.
For every remaining component $B_i$ we have two numbers: $a_i=|E(B_i)|$, the cost of incorporating it to the solution, and $b_i=|V(B_i')\cap T'|$,
the number of separated terminals.
Computation of $a_i,b_i$ can be done in $O(\cutsize n)$ time.
We would like to know what is the optimal number of edges needed to separate exactly $\numcc_1 = \numbcc_b - \numcc_0$ connected components with terminals,
with the additional constraint that the connected component containing $b$ contains a terminal if and only if $\alpha = \top$.

This can be solved in time $O(\numbcc_b p')$ via a standard dynamic programming routine.
We create a 3-dimensional table $T[j,\ell,\ett]$ for $\ell=0,1,\ldots,\numcc_1$, $j=0,1,\ldots,p'$, $\ett=\{\bot,\top\}$ with the following meaning: $T[j,\ell,\ett]$
is the minimum cost of a solution contained in the prefix $B_1,B_2,\ldots,B_j$ that separates exactly $\ell$
isolated vertices being terminals and $\ett$ denotes whether the remaining connected component with $b$ contains a terminal different than $b$ (or $+\infty$ if such a solution does not exist).
Formally,
$$T[j,\ell,\ett]=\min\left\{\sum_{\gamma \in \Gamma}a_\gamma\ |\ \Gamma \subseteq \{1,2,\ldots,j\} \wedge \sum_{\gamma \in \Gamma} b_\gamma=\ell \wedge (\ett=\top \Leftrightarrow \sum_{\gamma \in \{1,2,\ldots,j\}\setminus \Gamma} b_\gamma >0) \right\}.$$
Observe that $T$ admits the following recurrential formula (by somewhat abusing notation, we assume that cells of $T$ with negative coordinates contain $+\infty$):
$$T[j,\ell,\bot]=\begin{cases}+\infty \textrm{ if } j=0 \textrm{ and } \ell > 0, \\ 0 \textrm{ if } j=\ell=0, \\ \min(T[j-1,\ell,\bot],a_j+T[j-1,\ell-b_j,\bot]) \textrm{ if } j>0 \textrm{ and } b_j=0, \\ a_j+T[j-1,\ell-b_j,\bot] \textrm{ otherwise.}\end{cases}$$
$$T[j,\ell,\top]=\begin{cases}+\infty \textrm{ if } j=0, \\ \min(T[j-1,\ell,\top],a_j+T[j-1,\ell-b_j,\top]) \textrm{ if } j>0, \textrm{ and } b_j=0,  \\ \min(T[j-1,\ell,\bot],T[j-1,\ell,\top],a_j+T[j-1,\ell-b_j,\top]) \textrm{ otherwise.}\end{cases}$$
Hence, we can fill the table $T$ in time $O(\numcc_1 p')=O(\cutsize n)$; the optimal value can be deduced from the cells $T[p',\numcc_1,\bot]$ and $T[p',\numcc_1,\top]$. Although we presented here only the algorithm for computing the optimal value, it is straightforward to implement the dynamic program so that it also maintains backlinks via which one can retrieve the corresponding set $\Gamma$ from the definition of $T$. Thus, we can formally present the final step of our algorithm.

\begin{step}\label{step:steinercut-core-term}
Compute numbers $a_i$ and $b_i$ and fill table $T$ in $O(\cutsize n)$ time. Let $\ett \in \{\bot,\top\}$ be defined as: $\ett = \bot$ if $\alpha = \bot$,
  $\ett = \top$ if $\alpha=\top$ and $b \notin T'$, and otherwise pick $\ett \in \{\bot,\top\}$ to minimize the value $T[p',\numcc_1,\ett]$.
  Let $\Gamma \subseteq \{1,2,\ldots,p'\}$ be the set from the definition of the value $T[p',\numcc_1,\ett]$, computed in $O(n)$ time by following backlinks in the table $T$.
  If the value $T[p',\numcc_1,\ett]$ exceeds the remaining budget $\cutsize$, terminate the branch.
  Otherwise, incorporate the set $\bigcup_{\gamma\in \Gamma} E(B_\gamma)$ to the constructed solution.
\end{step}

Step \ref{step:steinercut-core-term} can be performed in $O(\cutsize n)$ time and its correctness follows from the definition of the table $T$ and the previous steps of the algorithm.

This finishes the description of the fixed-parameter algorithm for \steinercut and the proof of Theorem~\ref{thm:steiner-main}.

\newcommand{\opccl}{\mathtt{cc}}
\newcommand{\ccrel}{\ell}

\section{The algorithm for \nmwcu}\label{sec:full-uncut}

In this section we show an FPT algorithm for the following generalization of the well-known \textsc{Multiway Cut} problem.

\defparproblemu{\nmwcu (\nmwcushort)}{A graph $G$ together with a set of terminals $T\subseteq V(G)$, an equivalence relation $\rel$ on the set $T$, and an integer $\cutsize$.}{$\cutsize$}{Does there exist a set $X \subseteq V(G) \setminus T$ of at most $\cutsize$ nonterminals such that for any $u,v \in T$, the vertices $u$ and $v$ belong to the same connected component of $G \setminus X$ if and only if $(u,v) \in \rel$?}

In other words, we are to delete at most $k$ vertices from the graph, so that the terminals are split between connected components
exactly as it is given by the equivalence relation $\rel$.
Given a \nmwcushort instance $\instance=(G,T,\rel,\cutsize)$, a set of vertices $X$ is 
called a {\em{solution}} to $\instance$, if $|X| \leq \cutsize$ and for any $u,v \in T$,
the vertices $u$ and $v$ belong to the same connected component of $G \setminus X$
if and only if $(u,v) \in \rel$.

Our algorithm not only resolves \nmwcushort{} instance, but, in the case of a YES answer,
it returns a solution $X$ with minimum possible $|X|$. This property will be used
in the course of the algorithm.


\subsection{Reduction of the number of equivalence classes}

We now show how to reduce the number of equivalence classes of $\rel$
in an \nmwcushort instance $\instance = (G,T,\rel,\cutsize)$.
We use a reduction similar to the one used by Razgon~\cite[Theorem 5]{razgon:mwc-k2-terms}.

\begin{lemma}\label{lem:uncut-manypaths}
Let $\instance = (G,T,\rel,\cutsize)$ be an \nmwcushort instance
and let $v \in V(G) \setminus T$. Assume that there exist $\cutsize+2$ paths
$P_1,P_2,\ldots,P_{\cutsize+2}$ in $G$, such that:
\begin{itemize}
\item for each $1 \leq i \leq \cutsize+2$, the path $P_i$ is a simple path that starts at $v$ and ends at $v_i \in T$;
\item the paths $P_i$ have pairwise disjoint sets of vertices, except for the vertex $v$;
\item for any $i \neq j$, $(v_i,v_j) \notin \rel$.
\end{itemize}
Then for any solution $X$ in $\instance$ we have $v \in X$.
\end{lemma}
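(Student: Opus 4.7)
The plan is to prove the contrapositive by a straightforward pigeonhole argument on the $\cutsize+2$ paths. Suppose, toward a contradiction, that $X$ is a solution to $\instance$ and $v \notin X$. I would then argue that at least two of the paths $P_i$ survive the deletion of $X$, and use them to produce a pair of terminals that are wrongly identified by $X$.

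More concretely, first I would note that the endpoints $v_i$ of the paths lie in $T$ and hence cannot belong to $X$, since by the definition of \nmwcushort we have $X \subseteq V(G) \setminus T$; likewise $v \notin X$ by assumption. Therefore $X$ can only intersect the $P_i$ in their internal vertices distinct from $v$. Since the paths $P_1,\ldots,P_{\cutsize+2}$ are internally vertex-disjoint (the only shared vertex being $v$), the sets of internal vertices other than $v$ are pairwise disjoint, so each vertex of $X$ lies on at most one path $P_i$. Because $|X|\le \cutsize$, this means that at most $\cutsize$ of the $\cutsize+2$ paths contain a vertex of $X$, so there exist distinct indices $i\neq j$ with $V(P_i)\cap X = V(P_j)\cap X = \emptyset$.

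Then $P_i$ and $P_j$ are entirely contained in $G\setminus X$, so $v_i$ is connected to $v$ and $v$ is connected to $v_j$ in $G\setminus X$, witnessing that $v_i$ and $v_j$ lie in a common connected component of $G\setminus X$. However $(v_i,v_j)\notin \rel$ by hypothesis, which contradicts the assumption that $X$ is a solution. Hence $v \in X$ for every solution $X$, as required.

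The argument is essentially bookkeeping; there is no real obstacle. The only subtlety worth double-checking is that $v$ itself is excluded from $X$ not by fiat but by our contrapositive hypothesis, and that the vertices $v_i$ are excluded automatically because solutions in \nmwcushort never delete terminals; both points are immediate from the definitions introduced just before the lemma.
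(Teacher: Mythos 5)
Your proof is correct and follows the same argument as the paper: pigeonhole on the internally disjoint paths to find two $P_i, P_j$ avoiding $X$, then concatenate them at $v$ to connect $v_i$ and $v_j$ in $G\setminus X$, contradicting $(v_i,v_j)\notin\rel$. You merely add explicit bookkeeping (that terminals and $v$ are automatically outside $X$), which the paper leaves implicit.
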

\begin{proof}
Let $X \subseteq V(G) \setminus T$ with $v \notin X$ and $|X| \leq \cutsize$.
As the paths $P_i$ are disjoint (except for $v$), there exist two indices $1 \leq i < j \leq \cutsize+2$, such that $P_i$ and $P_j$ does not contain any vertex from $X$. A concatenation
of $P_i$ and $P_j$ is a path from $v_i$ to $v_j$ that avoids $X$. As $(v_i,v_j) \notin \rel$,
   we infer that $X$ is not a solution to $\instance$.
\end{proof}

\begin{lemma}\label{lem:uncut-find-manypaths}
Let $\instance = (G,T,\rel,\cutsize)$ be an \nmwcushort instance.
For any $v \in V(G)$, we can verify if $v$ satisfies the conditions of
Lemma \ref{lem:uncut-manypaths} in $O(\cutsize n^2)$ time.
\end{lemma}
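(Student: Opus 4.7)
The plan is to reduce the question to a single maximum-flow computation in a carefully constructed auxiliary network, and then run Ford--Fulkerson with early termination.

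First I would build a directed auxiliary network $G'$ as follows. Keep $v$ as the unique source, without splitting it. For every other vertex $u \in V(G) \setminus \{v\}$, split $u$ into $u^{\rm in}$ and $u^{\rm out}$ joined by an arc of capacity $1$; every edge $uw \in E(G)$ becomes two arcs $u^{\rm out} \to w^{\rm in}$ and $w^{\rm out} \to u^{\rm in}$ of capacity $\infty$. For each equivalence class $C$ of $\rel$ that meets $T$, introduce a class gadget consisting of nodes $c_C^{\rm in}, c_C^{\rm out}$ joined by an arc of capacity $1$; for each $t \in C$ add an arc $t^{\rm out} \to c_C^{\rm in}$ of capacity $\infty$, and from every $c_C^{\rm out}$ add an arc of capacity $\infty$ to a global sink $s$.

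Next I would verify that the maximum $v$-to-$s$ integral flow in $G'$ equals the largest number of simple paths in $G$ from $v$ to $T$ that are pairwise internally vertex disjoint and end in pairwise non-$\rel$-equivalent terminals. The unit capacities on the vertex-splits enforce internal vertex-disjointness, while the unit-capacity arcs inside the class gadgets enforce that at most one path ends in each equivalence class. By integrality of maximum flow, a flow of value $r$ decomposes into $r$ such paths, and conversely any collection of $r$ such paths yields a feasible flow of value $r$. Thus $v$ satisfies the hypotheses of Lemma~\ref{lem:uncut-manypaths} if and only if the maximum $v$-to-$s$ flow in $G'$ is at least $\cutsize+2$.

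Finally I would bound the running time. Since we only need to decide whether the maximum flow reaches the threshold $\cutsize+2$, it suffices to run Ford--Fulkerson for at most $\cutsize+2$ augmenting-path iterations and stop the moment the threshold is attained (or no augmenting path exists). The network $G'$ has $O(n)$ vertices and $O(n^2)$ arcs, so each augmenting path is found by a single BFS in $O(n^2)$ time. The total cost is therefore $O(\cutsize \cdot n^2)$, matching the bound in the statement. The only mild subtlety is bookkeeping around the class gadgets, in particular that the unit arc $c_C^{\rm in}\to c_C^{\rm out}$ is the bottleneck in each gadget, which is immediate from the construction; all other details are routine.
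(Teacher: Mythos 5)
Your proposal is correct and follows essentially the same approach as the paper: build an auxiliary network that enforces vertex-disjointness via unit vertex capacities and "one path per equivalence class" via per-class bottleneck gadgets, observe that the conditions of Lemma~\ref{lem:uncut-manypaths} hold iff the max $v$-to-sink flow is at least $\cutsize+2$, and run Ford--Fulkerson stopping after $\cutsize+2$ augmentations, giving $O(\cutsize n^2)$. The paper simply states unit vertex capacities rather than carrying out the in/out vertex split explicitly, but this is cosmetic.
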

\begin{proof}
Consider the following auxiliary graph $H$. For each equivalence class $A \subseteq T$ of
$\rel$, we attach a new vertex $t_A$ that is a adjacent to all vertices of $A$.
We make $v$ an infinite-capacity source and each vertex $t_A$ a unit-capacity sink;
each other vertex of $H$ has unit capacity. Clearly, $v$ satisfies the conditions
of Lemma \ref{lem:uncut-manypaths} iff there exists a flow of size at least $\cutsize+2$
in $H$. As vertices in $H$ have unit capacities (except for $v$), this can be done
in $O(\cutsize n^2)$ time by the classic Ford-Fulkerson algorithm.
\end{proof}

Lemma \ref{lem:uncut-manypaths} justifies the following step.
\begin{step}\label{step:uncut-manypaths}
For each $v \in V(G)$, if $v$ satisfies the conditions of Lemma \ref{lem:uncut-manypaths},
delete $v$ from the graph and decrease $\cutsize$ by one; if $\cutsize$
becomes negative by this operation, return NO. Afterwards, restart the algorithm.
\end{step}
By Lemma \ref{lem:uncut-find-manypaths}, each application of Step \ref{step:uncut-manypaths}
takes $O(\cutsize n^3)$ time. As we cannot apply Step \ref{step:uncut-manypaths}
more than $\cutsize$ times, all applications of this step take $O(\cutsize^2 n^3)$ time.

Let us now show that Step \ref{step:uncut-manypaths} leads to a bound
on the number of equivalence classes of $\rel$.

\begin{lemma}\label{lem:uncut-after-manypaths}
Let $\instance = (G,T,\rel,\cutsize)$ be a \nmwcushort instance
where Step \ref{step:uncut-manypaths} is not applicable.
If there exists a connected component of $G$ that contains terminals
of more than $\cutsize^2 + \cutsize$ equivalence classes of $\rel$,
then $\instance$ is a NO-instance to \nmwcushort.
\end{lemma}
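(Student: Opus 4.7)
The plan is to argue by contradiction. Assume there is a solution $X$ with $|X| \leq \cutsize$, and let $C$ be the connected component of $G$ containing terminals of at least $\cutsize^2 + \cutsize + 1$ equivalence classes of $\rel$. The first step is to show that $C \setminus X$ has many components carrying distinct equivalence classes of terminals. Because $X$ is a valid solution, every connected component of $G \setminus X$ hosts terminals of exactly one equivalence class; because $C$ is itself a component of $G$, any two vertices of $C$ that are connected in $G \setminus X$ are connected by a path contained in $C$, so each component of $G \setminus X$ meeting $C$ coincides with a component of $C \setminus X$. Consequently at least $\cutsize^2 + \cutsize + 1$ components of $C \setminus X$ contain a terminal, and each of them hosts terminals of a distinct equivalence class.

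Next, I would observe that each such component $C'$ of $C \setminus X$ satisfies $N_G(C') \subseteq X$, and that $N_G(C') \neq \emptyset$ unless $C' = C$ (in which case there would be only one relevant component, contradicting the count). Hence each of the $\cutsize^2 + \cutsize + 1$ relevant components has a neighbor in $X \cap V(C)$. Since $|X \cap V(C)| \leq \cutsize$, by pigeonhole there exists $v \in X \cap V(C)$ with neighbors in at least $\lceil (\cutsize^2 + \cutsize + 1)/\cutsize \rceil = \cutsize + 2$ such components, call them $C_1, C_2, \ldots, C_{\cutsize+2}$. Note that $v \in V(G) \setminus T$ since $X \subseteq V(G) \setminus T$.

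The final step is to construct the $\cutsize + 2$ paths demanded by Lemma \ref{lem:uncut-manypaths}. For each $i$, pick a neighbor $u_i \in C_i$ of $v$, a terminal $v_i \in C_i$ in the equivalence class associated with $C_i$, and a simple $u_i$-$v_i$ path $Q_i$ inside the connected subgraph induced by $C_i$. Since $v \in X$ and $C_i \subseteq C \setminus X$, the vertex $v$ lies outside $C_i$, so prepending $v$ to $Q_i$ yields a simple $v$-$v_i$ path $P_i$ whose internal vertices lie entirely in $C_i$. The sets $C_i$ are pairwise disjoint, so the paths $P_i$ are pairwise internally vertex-disjoint, and their endpoints $v_i$ belong to pairwise distinct equivalence classes of $\rel$. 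Thus $v$ satisfies the hypotheses of Lemma \ref{lem:uncut-manypaths}, contradicting the fact that Step \ref{step:uncut-manypaths} is no longer applicable.

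The main obstacle is really just careful bookkeeping: verifying the one-to-one correspondence between components of $C \setminus X$ containing terminals and the equivalence classes present in $C$, and confirming the pigeonhole constant $\cutsize^2 + \cutsize + 1$ is tight for extracting $\cutsize + 2$ adjacent components from a single vertex of $X$. No further combinatorial tools beyond Lemma \ref{lem:uncut-manypaths} and elementary pigeonhole are needed.
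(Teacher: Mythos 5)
Your proof is correct and follows essentially the same approach as the paper. The paper argues in the contrapositive direction: it defines when a vertex $v \in X$ \emph{sees} an equivalence class (there is a component $C_A$ of $G[C\setminus X]$ containing a terminal of that class and adjacent to $v$), shows each $v \in X$ sees at most $\cutsize+1$ classes (else Lemma~\ref{lem:uncut-manypaths} applies), and concludes by connectivity of $G[C]$ that every relevant component is seen by some $v \in X$, giving the bound $\cutsize(\cutsize+1)$. You run the same counting argument as a direct pigeonhole from the assumption of $\cutsize^2+\cutsize+1$ classes, and you spell out the path construction that the paper delegates to Lemma~\ref{lem:uncut-manypaths}. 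These are logically the same argument; both (correctly) rely on the fact that distinct equivalence classes occupy distinct components of $C\setminus X$ and that each such component's neighborhood is a nonempty subset of $X\cap V(C)$.
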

\begin{proof}
Let $C$ be the vertex set of any connected component of $G$,
and let $X$ be a solution to $\instance$. Fix arbitrary $v \in X \cap C$.
We say that $v$ {\em{sees}} an equivalence class $A$ of $\rel$
if there exists a connected component $C_A$ of $G[C \setminus X]$ that contains
a terminal of $A$ and such that $N_G(v) \cap C_A \neq \emptyset$. Note that if 
$v$ sees $\cutsize+2$ equivalence classes of $\rel$, then in each component $C_A$
for each equivalence class $A$ seen by $v$ we can find a path from $v$
to a terminal of $A$. Thus $v$ satisfies the assumptions of Lemma \ref{lem:uncut-manypaths},
as $C_A \neq C_B$ for $A \neq B$ (recall that $X$ is a solution to $\instance$).
Therefore, each vertex $v \in X$ sees at most $\cutsize+1$ equivalence classes of $\rel$.
From connectivity of $G[C]$ we infer that each component $C_A$ must be seen by some element of $X$, so $C \setminus X$ may contain vertices
of at most $\cutsize(\cutsize+1)$ equivalence classes of $\rel$.
\end{proof}

\begin{step}\label{step:uncut-after-manypaths}
If there exist $u,v \in T$ such that $u$ and $v$ lie in different connected components
of $G$, but $(u,v) \in \rel$, or there exists a connected component of $G$
with terminals of more than $\cutsize^2+\cutsize$ equivalence classes or $\rel$,
return NO.
\end{step}
Clearly, Step \ref{step:uncut-after-manypaths} can be applied in $O(n^2)$ time.

We now ensure connectivity of $G$, by considering separately all connected components.
Recall that we are developing an algorithm that not only resolves the given
\nmwcushort{} instance, but also, in case of the positive answer, returns 
a solution of minimum possible size.
\begin{step}\label{step:uncut-cc}
For each connected component of $G$ with vertex set $C$,
pass the instance $(G,T \cap C, \rel|_{T \cap C},\cutsize)$ to the next step.
If any of the subinstances returns NO, or if the union of the solutions to the subcases
is larger than $\cutsize$, return NO. Otherwise, return
YES and the union of the solutions for the connected components as the solution
to the given instance.
\end{step}
The correctness of Step \ref{step:uncut-cc} is straightforward
(note that Step \ref{step:uncut-after-manypaths} refutes instances
 where one equivalence class of $\rel$ is scattered among more than one
 connected component of $G$) and splitting $G$ into connected components
takes linear time in the size of $G$.
Thus, from this point we may assume that $G$ is connected and that the
number of equivalence classes or $\rel$ is bounded by $\ccrel := \cutsize^2 + \cutsize$.

\subsection{Operations on the input graph}

In this section we show basic operations the algorithm repetitively applies
to the graph.

\begin{definition}\label{def:nmwcu-bypass}
Let $\instance = (G,T,\rel,\cutsize)$ be an \nmwcushort
instance and let $v \in V(G) \setminus T$.
By {\em{bypassing}} a vertex $v$ we mean the following operation:
we delete the vertex $v$ from the graph and, for any $u_1,u_2 \in N_G(v)$,
we add an edge $u_1u_2$ if it is not already present in $G$.
\end{definition}

We now state the properties of the bypassing operation.

\begin{lemma}\label{lem:nmwcu-bypass}
Let $\instance = (G,T,\rel,\cutsize)$ be an \nmwcushort instance,
let $v \in V(G) \setminus T$ and
let $\instance'= (G',T,\rel,\cutsize)$ be the instance
$\instance$ with $v$ bypassed. Then:
\begin{itemize}
\item if $X$ is a solution to $\instance'$, then
$X$ is a solution to $\instance$ as well;
\item if $X$ is a solution to $\instance$ and $v \notin X$
then $X$ is a solution to $\instance'$ as well.
\end{itemize}
\end{lemma}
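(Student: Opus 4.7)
The plan is to reduce both implications to a single structural claim: for any set $X \subseteq V(G) \setminus (T \cup \{v\})$ with $|X| \leq \cutsize$, the partition of $T$ into connected components is identical in $G \setminus X$ and $G' \setminus X$. Since being a solution to either $\instance$ or $\instance'$ depends only on this partition agreeing with the partition induced by $\rel$, and since $V(G') = V(G) \setminus \{v\}$ (so $X \subseteq V(G') \setminus T$ is equivalent to $X \subseteq V(G) \setminus T$ together with $v \notin X$), both bullets of the lemma will follow immediately.

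To establish the structural claim, I would fix two terminals $u, w \in T$ and argue equivalence of reachability. For the forward direction, take any $u$--$w$ path $P$ in $G \setminus X$. If $v \notin V(P)$, then $P$ uses only edges of $G$ that survive in $G'$, so $P$ is itself a path in $G' \setminus X$. Otherwise, $v$ has exactly two neighbors on $P$, call them $u_1$ and $u_2$; by Definition~\ref{def:nmwcu-bypass}, the edge $u_1u_2$ belongs to $E(G')$, so replacing the subpath $u_1vu_2$ by this single edge yields a path in $G' \setminus X$ from $u$ to $w$. (If $P$ passes through $v$ more than once we iterate this shortcutting, or equivalently pick a shortest such path to begin with.)

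For the reverse direction, take any $u$--$w$ path $P'$ in $G' \setminus X$ and process its edges one by one. Each edge $u_1u_2 \in E(P')$ is either already an edge of $G$, in which case it stays, or was introduced by the bypassing of $v$, in which case $u_1, u_2 \in N_G(v)$. In the latter case replace $u_1u_2$ by the length-two walk $u_1vu_2$; this is legal in $G \setminus X$ because $v \notin X$ (recall $X \subseteq V(G') \setminus T$, hence $v \notin X$). Concatenating gives a walk from $u$ to $w$ in $G \setminus X$, hence a path.

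There is essentially no obstacle here; the only point that requires a moment of care is keeping track of the ambient vertex sets to ensure that the same subset $X$ is a valid candidate solution in both instances, which is handled by the observation $V(G') = V(G) \setminus \{v\}$ and the hypothesis $v \notin X$ in the second bullet (in the first bullet $v \notin X$ is automatic because $v \notin V(G')$). No changes to $T$, $\rel$, or $\cutsize$ occur in the bypassing operation, so once reachability between terminals is shown to be preserved, the definition of a solution transfers verbatim in both directions.
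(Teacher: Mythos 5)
Your argument is correct and follows essentially the same route as the paper: both establish a bijective correspondence between $u$--$w$ paths in $G \setminus X$ and in $G' \setminus X$ by shortcutting occurrences of $v$ via the bypass edges in one direction and expanding bypass edges into length-two walks through $v$ in the other. Your write-up is somewhat more detailed (explicitly treating the ambient-vertex-set bookkeeping and the case of multiple visits to $v$), but the substance matches the paper's proof.
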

\begin{proof}
The claim follows from the following correspondence of the paths in $G$ and $G'$:
any path $P'$ in $G'$ has a corresponding walk $P$ in $G$,
where each occurrence of an edge of $E(G') \setminus E(G)$ is replaced with a length-2
subpath via $v$. Moreover, any path $P$ in $G$ that does not start nor end in $v$
has a corresponding path $P'$ in $G'$,
where a possible occurrence of $v$ is circumvented by an edge in $G'$ between two neighbours
of $v$.
\end{proof}

Apart from the bypassing operation, we need to show a way to reduce the number of terminals.
\begin{definition}
Let $\instance = (G,T,\rel,\cutsize)$ be an \nmwcushort instance and let $u,v \in T$ be two terminals
with $u \neq v$, $(u,v) \in \rel$. By {\em{identifying}} $u$ and $v$ we mean the following operation:
we replace vertices $u$ and $v$ with a new vertex $w_{uv}$ that is adjacent to all vertices of $N_G(u) \cup N_G(v)$.
Moreover, we update $\rel$ by substituting $u$ and $v$ with $w$ in the equivalence class they belong to.
\end{definition}

\begin{lemma}\label{lem:nmwcu-ident-terms}
Let $\instance = (G,T,\rel,\cutsize)$ be an \nmwcushort instance and let $u,v \in T$ be two different terminals
with $(u,v) \in \rel$, such that $uv \in E(G)$ or $|N_G(u) \cap N_G(v)| > \cutsize$. Let $\instance'$ be instance $\instance$ with terminals
$u$ and $v$ identified. Then the set of solutions to $\instance'$ and $\instance$ are equal.
\end{lemma}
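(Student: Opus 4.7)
The plan is to verify the equality of solution sets by establishing a correspondence between connected components of $G\setminus X$ and $G'\setminus X$ for any candidate $X$. First I would observe that since $X\subseteq V(G)\setminus T$ avoids both $u$ and $v$, and since $V(G')\setminus T' = V(G)\setminus T$ (the only change in terminals is replacing $u,v$ by $w_{uv}$), the candidate sets coincide.

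The heart of the argument is the following claim: under the hypothesis of the lemma, for every $X\subseteq V(G)\setminus T$ with $|X|\leq \cutsize$, the vertices $u$ and $v$ lie in the same connected component of $G\setminus X$. This follows immediately in two cases: if $uv\in E(G)$, the edge survives; otherwise $|N_G(u)\cap N_G(v)| > \cutsize \geq |X|$, so some common neighbor $z\notin X$ provides the path $u{-}z{-}v$.

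Given this, I would then prove a two-way walk-translation lemma. In one direction, any walk $P$ in $G\setminus X$ can be mapped to a walk $P'$ in $G'\setminus X$ by substituting each occurrence of $u$ or $v$ with $w_{uv}$ (and collapsing consecutive duplicates that arise if $uv\in E(G)$); this walk is valid because $N_{G'}(w_{uv})\supseteq (N_G(u)\cup N_G(v))\setminus\{u,v\}$ by Definition~\ref{def:nmwcu-bypass}'s analogue for identification. In the reverse direction, any walk $P'$ in $G'\setminus X$ is lifted to a walk $P$ in $G\setminus X$: edges not incident to $w_{uv}$ transfer directly, while a visit $x{-}w_{uv}{-}y$ is replaced by $x{-}u{-}y$ if $x,y\in N_G(u)$, by $x{-}v{-}y$ if $x,y\in N_G(v)$, and in the mixed case we insert a $u$-to-$v$ connection in $G\setminus X$ guaranteed by the key claim above.

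These translations show that for any two vertices $a,b\in V(G)\setminus\{u,v\}$, the pair is connected in $G\setminus X$ iff it is connected in $G'\setminus X$; moreover, $u$ and $v$ lie in the very component of $G\setminus X$ that corresponds to the component of $w_{uv}$ in $G'\setminus X$. Consequently, identifying $u$ and $v$ in $\rel$ (which is legal since $(u,v)\in\rel$) produces exactly the same terminal-separation conditions that $X$ must satisfy in $\instance'$ as in $\instance$, finishing the proof. The main obstacle will be the mixed-case walk lifting, which is precisely the reason the hypothesis ``$uv\in E(G)$ or $|N_G(u)\cap N_G(v)|>\cutsize$'' is needed; without it the identification could falsely merge two distinct components across $w_{uv}$.
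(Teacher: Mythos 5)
Your proposal is correct and follows essentially the same approach as the paper: the paper's entire proof is the single observation that, for any $X \subseteq V(G) \setminus T$ with $|X| \leq \cutsize$, $u$ and $v$ lie in the same connected component of $G \setminus X$, which you identify as the ``heart of the argument.'' Your walk-translation lemma just makes explicit the routine verification that the paper leaves implicit, and your closing remark about the mixed-case lifting correctly pinpoints exactly where the hypothesis on $uv \in E(G)$ or $|N_G(u) \cap N_G(v)| > \cutsize$ is used.
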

\begin{proof}
The lemma follows from the fact that, for any $X \subseteq V(G) \setminus T$ of size at most $\cutsize$, in $G \setminus X$ the vertices
$u$ and $v$ lie in the same connected component.
\end{proof}

\begin{lemma}\label{lem:nmwcu-del-terms}
Let $\instance = (G,T,\rel,\cutsize)$ be a \nmwcushort instance and let $u_1,u_2,u_3 \in T$ be three different terminals
of the same equivalence class of $\rel$, pairwise nonadjacent and such that $N_G(u_1) = N_G(u_2) = N_G(u_3) \subseteq V(G) \setminus T$.
Let $\instance'$ be obtained from $\instance$ by deleting the terminal $u_3$ (and all pairs that contain $u_3$ in $\rel$).
Then the set of solutions to $\instance'$ and $\instance$ are equal.
\end{lemma}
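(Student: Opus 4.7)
The plan is as follows. First I would observe that $V(G')\setminus T' = V(G)\setminus T$, so the families of candidate deletion sets for $\instance'$ and $\instance$ coincide, and it remains only to match up the validity conditions. Since the equivalence relation of $\instance'$ is simply the restriction $\rel|_{T'}$, it suffices to show that a set $X\subseteq V(G)\setminus T$ of size at most $\cutsize$ realises $\rel$ on $T$ in $G\setminus X$ if and only if it realises $\rel|_{T'}$ on $T'$ in $G'\setminus X$.

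The key structural observation I would establish first is that for any solution $X$ to either instance, $N_G(u_1)\setminus X \neq \emptyset$. Indeed, $u_1$ and $u_2$ remain as terminals both in $T$ and in $T'$; they are distinct, nonadjacent, share the common nonterminal neighbourhood $N_G(u_1)=N_G(u_2)$, and lie in the same equivalence class of $\rel$, and hence of $\rel|_{T'}$. The only way to connect them in the corresponding deleted graph is therefore via an element of their common neighbourhood not contained in $X$. This is precisely why the statement requires \emph{three} terminals sharing the neighbourhood: the third serves as a witness that guarantees the common neighbourhood of the remaining two survives any valid deletion set, which is essential on both sides of the equivalence.

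With this observation, the forward direction ($\instance\Rightarrow\instance'$) follows by a walk-rerouting argument. Given a solution $X$ to $\instance$ and two terminals $t_1,t_2\in T'$, pick a walk from $t_1$ to $t_2$ in $G\setminus X$ and replace every occurrence of $u_3$ on this walk by a detour through $u_1$: the consecutive edges $p\,u_3,u_3\,s$ are substituted by $p\,u_1,u_1\,s$, which is legal since $N_G(u_1)=N_G(u_3)$ and $u_1\notin X$. The resulting walk lies entirely in $G'\setminus X$, and existence of a walk implies existence of a path, so $\rel|_{T'}$-connectivity on $T'$ is preserved. For the backward direction ($\instance'\Rightarrow\instance$), suppose $X$ is a solution to $\instance'$. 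Passing from $G'\setminus X$ to $G\setminus X$ amounts to re-adding the vertex $u_3$ with neighbourhood $N_G(u_1)\setminus X$, which is nonempty by the key claim and already contained in the connected component of $u_1$ (and $u_2$). Hence $u_3$ is placed in precisely the same component as $u_1$ and $u_2$, matching the equivalence class of $u_3$ in $\rel$, while the components of all other terminals stay unchanged.

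The main obstacle is making the detour argument watertight; once one accepts that walks suffice to certify connectivity and that the key claim delivers $u_1\notin X$, everything else is routine bookkeeping on $\rel$ and $\rel|_{T'}$. Every equivalence-class constraint involving $u_3$ is determined by the corresponding constraint for $u_1$ via transitivity, so no extra work is needed to translate conditions between the two instances.
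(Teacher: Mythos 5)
Your proof is correct and follows essentially the same route as the paper's: both establish that the component structure of $V(G)\setminus\{u_3\}$ is unchanged when passing between $G\setminus X$ and $G'\setminus X$ via a rerouting argument (the paper reroutes simple paths ``via $u_1$ or $u_2$'' to avoid the case that one of them is already on the path, you reroute walks via $u_1$ only, which sidesteps the same issue), and both use the observation that since $(u_1,u_2)\in\rel$ some common neighbour must survive $X$, so $u_3$ lands in $u_1$'s component. One small overstatement: you say the nonemptiness of $N_G(u_1)\setminus X$ is ``essential on both sides,'' but the forward direction only needs $u_1\notin X$ (automatic, since $u_1\in T$ and $X\subseteq V(G)\setminus T$); the key claim is genuinely needed only for the backward direction. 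This does not affect correctness.
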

\begin{proof}
Let $X \subseteq V(G) \setminus T$. We claim that for any $u,v \in V(G) \setminus \{u_3\}$, $u$ and $v$ are in the same connected component of $G \setminus X$
if and only if they are in the same connected component of $G' \setminus X$. Indeed, the backward implication is trivial, whereas for the forward implication observe that any path
from $u$ to $v$ in $G \setminus X$ that visits $u_3$ can be redirected via $u_1$ or $u_2$.

The proven equivalence already shows that any solution to $\instance$ is a solution to $\instance'$ as well. For the other direction, we need to additionally verify that
for any $v \in T$, we have $(u_3,v) \in \rel$ if and only if $v$ and $u_3$ are in the same connected component of $G \setminus X$, assuming that $X$ is a solution to $\instance'$. As $X$ is a solution to $\instance'$
and $(u_1,u_2) \in \rel$, there exists $w \in N_G(u_1) \setminus X$. Therefore $u_1$, $u_2$ and $u_3$ are in the same connected component of $G \setminus X$.
As $(u_1,v) \in \rel$ iff $(u_3,v) \in \rel$, the claim follows.
\end{proof}

\subsection{Borders and recursive understanding}

For the recursive understanding phase, we need to define the bordered problem.
Let $\instance=(G,T,\rel,\cutsize)$ be an \nmwcushort instance
and let $\bterms \subseteq V(G) \setminus T$ be a set of border terminals; we assume
$|\bterms| \leq 2\cutsize$.
Define $\instance_b = (G,T,\rel,\cutsize,\bterms)$ to be an instance of the
bordered problem.
By $\mathbb{P}(\instance_b)$ we define the set of all triples $\cP=(X_b,E_b,\rel_b)$, such that
$X_b \subseteq \bterms$, $E_b$ is an equivalence relation on $\bterms \setminus X_b$
and $\rel_b$ is an equivalence
relation on $T \cup (\bterms \setminus X_b)$ such that
$E_b \subseteq \rel_b$ and $\rel_b|_T = \rel$.
For a triple $\cP = (X_b,E_b,\rel_b)$, by $G_\cP$ we denote the graph $G \cup E_b$, that is, the graph $G$ with additional edges $E_b$.

We say that a set $X \subseteq V(G) \setminus T$
is a solution to $(\instance_b,\cP)$ if
$|X| \leq \cutsize$, $X \cap \bterms = X_b$ and for any $u,v \in T \cup (\bterms \setminus X_b)$,
the vertices $u$ and $v$ are in the same connected component of the graph
$G_\cP \setminus X$ (i.e., we delete vertices $X$ and add edges $E_b$)
if and only if $(u,v) \in \rel_b$.

We also say that $X$ is a solution to $\instance_b = (G,T,\rel,\cutsize,\bterms)$ whenever $X$ is a solution to $\instance = (G,T,\rel,\cutsize)$
Note that, if $X$ is a solution to $(\instance_b,\cP)$,
the set $X$ is not necessarily a solution to $\instance_b$;
however, $X$ is a solution to the \nmwcushort instance $(G_\cP,T,\rel,\cutsize)$.

One may think of the set of edges $E_b$ as the ``prediction'' which vertices will be connected \emph{outside} the currently considered part of the graph, after the solution edges has been deleted.
Since such a definition may not be very intuitive, we provide detailed proofs of all equivalences in this section; note that a corresponding equivalence claims in the previous sections were nearly straightforward.

We formally define the bordered problem as follows.

\defproblemoutput{\bnmwcu}{An \nmwcushort instance $\instance=(G,T,\rel,\cutsize)$
with $G$ being connected and a set $\bterms \subseteq V(G) \setminus T$ of size at most $2\cutsize$;
denote $\instance_b = (G,T,\rel,\cutsize,\bterms)$}{
For each $\cP = (X_b,E_b,\rel_b) \in \mathbb{P}(\instance_b)$, output
a $\sol_\cP = X_\cP$ being a solution to
$(\instance_b,\cP)$ with minimum possible $|X_\cP|$,
or $\sol_\cP = \bot$ if such a solution does not exist.}

Clearly, \nmwcushort reduces to \bnmwcu, as we may ask for an instance
with $\bterms = \emptyset$. Moreover, in this case the single
answer to \bnmwcu for $\cP = (\emptyset,\emptyset,\rel)$ returns a solution of minimum possible size.

We note that
$$|\mathbb{P}(\instance_b)| \leq (1+|\bterms|(|\bterms| + \ccrel))^{|\bterms|}
\leq (2\cutsize^3+6\cutsize^2+1)^{2\cutsize} = 2^{O(\cutsize \log \cutsize)},$$
as $\rel_b$ has at most $\ccrel + |\bterms|$
equivalence classes, $E_b$ has at most $|\bterms|$ equivalence classes,
and each $v \in \bterms$ can go either to $X_b$ or choose an equivalence
class in $\rel_b$ and $E_b$.
Let $q = k(2\cutsize^3+6\cutsize^2+1)^{2\cutsize} + \cutsize$;
all output solutions to a \bnmwcu instance $\instance_b$ contain
at most $q-\cutsize$ vertices in total.

Armed with the previous definitions, we are now ready to proof a somewhat expected at this point lemma showing
that if a \bnmwcu instance $\instance_b$ contains another, smaller subinstance $\newinst{\instance}_b$,
then it suffices to restrict our attention to only one, fixed output to \bnmwcu on $\newinst{\instance}_b$.
In other words, we show that there exists a valid output to \bnmwcu on $\instance_b$ that, on $\newinst{\instance}_b$,
behaves as prescribed by the aforementioned fixed output.
Due to an involved definition of $E_b$ as a connectivity prediction, the formal proof is quite involved.

\begin{lemma}\label{lem:nmwcu-rekur}
Assume we are given a \bnmwcu instance
$\instance_b = (G,T,\rel,\cutsize,\bterms)$,
a set $Z \subseteq V(G) \setminus T$ with $|Z| \leq \cutsize$, 
and a connected component $\newinst{V}$ of $G-Z$.
Denote $Z_W := N_G(\newinst{V}) \subseteq Z$ and $W :=\newinst{V} \cup Z_W$,
and assume furthermore that $G[W]$ is connected and that $|\newinst{V} \cap \bterms| \leq \cutsize$.

Denote
$\newinst{G} = G[W]$,
$\newinst{\bterms} = (\bterms \cup Z_W) \cap W$,
$\newinst{T} = T \cap W$, $\newinst{\rel} = \rel|_{T \cap W}$ and
$\newinst{\instance}_b = (\newinst{G}, \newinst{T}, \newinst{\rel},\cutsize,\newinst{\bterms})$.
Then $\newinst{\instance}_b$ is a proper \bnmwcu instance.
Moreover, if we denote by 
$(\newinst{\sol}_{\newinst{\cP}})_{\newinst{\cP} \in \mathbb{P}(\newinst{\instance}_b)}$ an arbitrary output
to the \bnmwcu{} instance $\newinst{\instance}_b$ and
$$U(\newinst{\instance}_b) = \newinst{\bterms} \cup \bigcup \{\newinst{X}_{\newinst{\cP}} : \newinst{\cP} \in \mathbb{P}(\newinst{\instance}_b), \newinst{\sol}_{\newinst{\cP}} = \newinst{X}_{\newinst{\cP}} \neq \bot\}$$
to be a set of vertices used by the solutions of
$(\newinst{\sol}_{\newinst{\cP}})_{\newinst{\cP} \in \mathbb{P}(\newinst{\instance}_b)}$,
then there exists a correct output $(\sol_\cP)_{\cP \in \mathbb{P}(\instance_b)}$
to the \bnmwcu instance $\instance_b$ such that whenever $\sol_\cP=X_\cP \neq \bot$
then $X_\cP \cap \newinst{V} \subseteq U(\newinst{\instance}_b)$.
\end{lemma}
\begin{proof}
The claim that $\newinst{\instance}_b$ is a proper \bnmwcu instance follows directly
from the assumptions that $\newinst{G} = G[W]$ is connected, $|Z_W| \leq |Z| \leq\cutsize$
and $|\newinst{V} \cap \bterms| \leq \cutsize$. In the rest of the proof we justify the second claim
of the lemma.

Fix $\cP = (X_b,E_b,\rel_b) \in \mathbb{P}(\instance_b)$ and recall that $G_\cP = G \cup E_b$.
Assume that there exists a solution
to the instance $(\instance_b,\cP)$; let $X_\cP$
be such a solution with minimum possible $|X_\cP|$. To prove the lemma we need to show
a second solution $X_\cP'$ to $(\instance_b,\cP)$,
$|X_\cP'| \leq |X_\cP|$ and $X_\cP' \cap \newinst{V} \subseteq U(\newinst{\instance}_b)$.

Let us first give an intuition of the proof. We are given a solution $X_\cP$; our goal is to modify it on $\newinst{V}$
so that it behaves on $\newinst{\instance}_b$ as predicted by the output
$(\newinst{\sol}_{\newinst{\cP}})_{\newinst{\cP} \in \mathbb{P}(\newinst{\instance}_b)}$.
To this end, we observe how $X_\cP \cap W$ behaves on $\newinst{\instance}_b$
and define a triple $\newinst{\cP} \in \mathbb{P}(\newinst{\instance}_b)$
so that $X_\cP \cap W$ is a solution to $(\newinst{\instance}_b,\newinst{\cP})$.
The information stored in the triple $\newinst{\cP}$ should be enough to argue
that swapping $X_\cP \cap W$ with $\newinst{\sol}_{\newinst{\cP}}$ does not invalidate $X_\cP$
as a solution to $(\instance_b,\cP)$.

Let us now proceed with this strategy in full detail.
We start by defining a triple $\newinst{\cP} = (\newinst{X}_b,\newinst{E}_b,\newinst{\rel}_b)$ that represents how $X_\cP \cap W$ behaves on $\newinst{\instance}_b$.
\begin{itemize}
\item As $\newinst{X}_b$ is meant to keep information on deleted border terminals,
its definition is straightforward. We take $\newinst{X}_b = X_\cP \cap \newinst{\bterms}$; note that $X_b \cap W \subseteq \newinst{X}_b$
since $X_\cP$ is a solution to $(\instance_b,\cP)$.
\item Recall that $\newinst{E}_b$ is meant to predict connectivity between border terminals outside the
instance $\newinst{\instance}_b$. Consequently, in the definition of $\newinst{E}_b$ we need to take into account both the graph $G-\newinst{V}$ after the deletion
of $X_\cP$, as well as the edges $E_b$.
Formally, we define $\newinst{E}_b$ to be the following relation
on $\newinst{\bterms} \setminus \newinst{X}_b$: $(u,v) \in \newinst{E}_b$
iff $u$ and $v$ are in the same connected component of $G_\cP \setminus ((\newinst{V} \setminus \bterms) \cup X_\cP)$
(in particular, if $(u,v) \in E_b$).
\item Recall that $\newinst{\rel}_b$ is meant as a requirement on the final connectivity of the terminals and border terminals in the entire graph;
to this end, we can use connectivity in $G_\cP \setminus X_\cP$. 
That is, we define $\newinst{\rel}_b$ to be the following relation on $\newinst{T} \cup (\newinst{\bterms}\setminus \newinst{X}_b)$: $(u,v) \in \rel_b$ iff $u$ and $v$ are in the same connected
component of $G_\cP \setminus X_\cP$.
As $X_\cP$ is a solution to $(\instance_b,\cP)$,
$\newinst{\rel}_b |_{\newinst{T}} = \newinst{\rel} = \rel|_{\newinst{T}}$.
\end{itemize}
Note that $\newinst{E}_b \subseteq \newinst{\rel}_b$, as both $\newinst{E}_b$ and $\newinst{\rel}_b$
corresponds to the relation of being in the same connected component, but in $\newinst{E}_b$ we consider a smaller graph than in $\newinst{\rel}_b$.
This justifies that $\newinst{\cP} \in \mathbb{P}(\newinst{\instance}_b)$.

The main idea of the definition of $\newinst{\cP}$ is that $X_\cP \cap W$ is a solution to $(\newinst{\instance}_b,\newinst{\cP})$.
Let us now verify it formally.
Clearly, $|X_\cP \cap W| \leq \cutsize$.
By the definition of $\newinst{X}_b$, we have $X_\cP \cap W \cap \newinst{\bterms} = \newinst{X}_b$.
Consider two vertices $u,v \in \newinst{T} \cup (\newinst{\bterms} \setminus \newinst{X}_b)$.
We have $(u,v) \in \newinst{\rel}_b$ iff there exists a path $P$ between $u$ and $v$ in
$G_\cP \setminus X_\cP$. By the definition of $\newinst{E}_b$, such a path $P$ exists iff
there exists a path $\newinst{P}$ connecting $u$ and $v$ in $\newinst{G}_{\newinst{\cP}} \setminus X_\cP$:
each subpath of $P$ with internal vertices in $V(G) \setminus W$ corresponds
to an edge in $\newinst{E}_b$ and vice versa. Thus, $u$ and $v$ are in the same connected
component of $\newinst{G}_{\newinst{\cP}} \setminus X_\cP$ if and only if $(u,v) \in \newinst{\rel}_b$
and the claim is proven.

To wrap up, we have defined an element $\newinst{\cP} \in \mathbb{P}(\newinst{\instance}_b)$ that represents the behavior
of $X_\cP$ on $\newinst{\instance}_b$. Our goal now is to show that if we swap $X_\cP \cap W$ with 
$\newinst{\sol}_{\newinst{\cP}}$, that is, the prescribed solution to $(\newinst{\instance}_b,\newinst{\cP})$, we obtain another solution
to $(\instance_b,\cP)$ that fulfills our requirements.

Since $X_\cP \cap W$ is a solution to $(\newinst{\instance}_b,\newinst{\cP})$,
we infer that $\newinst{\sol}_{\newinst{\cP}} = \newinst{X}_{\newinst{\cP}} \neq \bot$
and $|\newinst{X}_{\newinst{\cP}}| \leq X_\cP \cap W$.
Let us define our new solution to $(\instance_b,\cP)$ as $X_\cP' = (X_\cP \setminus W) \cup \newinst{X}_{\newinst{\cP}}$.
Clearly, $|X_\cP'| \leq |X_\cP|$.
To finish the proof of the lemma, we need to formally show that indeed $X_\cP'$ is a solution to $(\instance_b,\cP)$.

It is straightforward to verify that $X_\cP'$ satisfies the constaint imposed by $X_b$:
As $\newinst{X}_b$ is defined as $X_\cP \cap \newinst{\bterms}$ and $X_\cP$
is a solution to $(\instance_b,\cP)$, we have $X_\cP' \cap \bterms = X_b$.

It remains to check the connectivity requirement imposed by $\rel_b$.
Let $u,v \in T \cup (\bterms \setminus X_b)$.
Our goal is to show that $u$ and $v$ lie in the same connected component of $G_\cP \setminus X_\cP'$ if and only
if they lie in the same connected component of $G_\cP \setminus X_\cP$.
We present the proof only in one direction, as the proofs in both directions are totally symmetric:
we use only the facts that both $X_\cP \cap W$ and $X_\cP' \cap W$ are solutions to $(\newinst{\instance}_b, \newinst{\cP})$ and
that $X_\cP \setminus \newinst{V} = X_\cP' \setminus \newinst{V}$. The last equality holds because $Z_W\subseteq \newinst{\bterms}$, so $Z_W\cap X_\cP=Z_W\cap \newinst{X}_{\newinst{\cP}}$.

Thus, assume that $u,v \in T \cup (\bterms \setminus X_b)$ and $u$ and $v$ lie in the same connected component of $G_\cP \setminus X_\cP$.
Let $P$ be a path that connects $u$ and $v$ in $G_\cP \setminus X_\cP$. 
The remainder of the proof works as follows: we partition $P$ into parts that live entirely in $G_\cP \setminus \newinst{V}$
and in $G_\cP[W]$, and argue that each such part of $P$ has its counterpart in $G_\cP \setminus X_\cP'$.

Formally, let $u=v_0,v_1,v_2,\ldots,v_r=v$ be a sequence of vertices that lie on the path $P$ and belong to $D := T \cup (\bterms \setminus X_b) \cup Z_W$,
in the order they appear on $P$.
First note that, since $X_\cP \setminus \newinst{V} = X_\cP' \setminus \newinst{V}$ and both $X_\cP \cap W$ and $X_\cP' \cap W$ are solutions to $(\newinst{\instance}_b, \newinst{\cP})$,
we have that $X_\cP \cap D = X_\cP' \cap D = X_b \cup \newinst{X}_b$. Thus, for each $0 \leq i \leq r$, we have $v_i \notin X_\cP'$.
To finish the proof of the lemma we need to show that for any $0 \leq i < r$, the vertices $v_i$ and $v_{i+1}$ lie in the same connected component of $G_\cP \setminus X_\cP'$.

Let $P_i$ be the subpath of $P$ between $v_i$ and $v_{i+1}$. As $Z_W \subseteq D$, $P_i$ is either a path in $G_\cP \setminus \newinst{V}$ or a path in $G_\cP[W]$.
In the first case, since $X_\cP \setminus \newinst{V} = X_\cP' \setminus \newinst{V}$, we infer that the path $P_i$ is present in $G_\cP \setminus X_\cP'$ and the claim is proven.
In the second case, note that we have $(v_i,v_{i+1}) \in \newinst{\rel}_b$. As $X_\cP' \cap W$ is a solution to $(\newinst{\instance}_b, \newinst{\cP})$, we infer that
$v_i$ and $v_{i+1}$ are connected via a path $\newinst{P}_i$ in $\newinst{G}_{\newinst{\cP}} \setminus (X_\cP' \cap W)$. However, by the definition of $\newinst{E}_b$,
  for any edge $w_1w_2 \in \newinst{E}_b$ on $\newinst{P}_i$, the vertices $w_1$ and $w_2$ are in the same connected component of $G_\cP \setminus ((\newinst{V} \setminus \bterms) \cup X_\cP)$.
Since $X_\cP \setminus \newinst{V} = X_\cP' \setminus \newinst{V}$ and $X_\cP \cap \bterms = X_\cP' \cap \bterms$, we have that $X_\cP \setminus (\newinst{V} \setminus \bterms) = X_\cP' \setminus (\newinst{V} \setminus \bterms)$
and the claim is proven. This finishes the proof of the lemma.
\end{proof}

Note that in Lemma \ref{lem:nmwcu-rekur} we have $|U(\newinst{\instance}_b) \cap \newinst{V}|  \leq q$.

A recursive call due to an application of Lemma \ref{lem:nmwcu-rekur} allows us to reduce the number of nonterminal vertices in $\newinst{V}$ to at most $q = 2^{O(\cutsize \log \cutsize)}$.
To make the recursion work in FPT time, we need to reduce the number of terminals as well. Fortunately, this is quite easy, due to the identifying operation and Lemma \ref{lem:nmwcu-ident-terms}.

We are now ready to present the recursive step of the algorithm.

\begin{step}\label{step:nmwcu-rekur}
Assume we are given a \bnmwcu{} instance $\instance_b = (G,T,\rel,\cutsize,\bterms)$.
Invoke first the algorithm of Lemma \ref{lem:detect-good-node} in a search for $(q,\cutsize)$-good node separation (with $\undelV = T$).
If it returns a good node separation $(Z,V_1,V_2)$, let $j \in \{1,2\}$ be such that $|V_j \cap \bterms| \leq \cutsize$ and denote $\newinst{Z} = Z$, $\newinst{V} = V_j$.
Otherwise, if it returns that no such good node separation exists in $G$,
invoke the algorithm of Lemma \ref{lem:detect-flower-cut} in a search for $(q,\cutsize)$-flower separation w.r.t. $\bterms$ (with $\undelV = T$ again).
If it returns that no such flower separation exists in $G$,
pass the instance $\instance_b$ to the next step. Otherwise, if it returns a flower separation $(Z,(V_i)_{i=1}^\ell)$, denote $\newinst{Z} = Z$ and $\newinst{V} = \bigcup_{i=1}^\ell V_i$.

In the case we have obtained $\newinst{Z}$ and $\newinst{V}$ (either from Lemma \ref{lem:detect-good-node} or Lemma \ref{lem:detect-flower-cut}), 
invoke the algorithm recursively for the \bnmwcu{} instance $\newinst{\instance}_b$ defined as in the statement
of Lemma \ref{lem:nmwcu-rekur} for separator $\newinst{Z}$ and set $\newinst{V}$, obtaining an output $(\sol_{\newinst{\cP}})_{\cP \in \mathbb{P}(\newinst{\instance}_b)}$.
Compute the set $U(\newinst{\instance}_b)$. Bypass (in an arbitrary order) all vertices of $\newinst{V} \setminus (T \cup U(\newinst{\instance}_b))$. Recall that $\newinst{\bterms}\subseteq U(\newinst{\instance}_b)$, so no border terminal gets bypassed.

After all vertices of $\newinst{V} \setminus U(\newinst{\instance}_b)$ are bypassed, perform the following operations on terminals of $\newinst{V} \cap T$:
\begin{enumerate}
\item As long as there exist two different $u,v \in \newinst{V} \cap T$ that satisfy $uv \in E(G)$ or $|N_G(u) \cap N_G(v)| > \cutsize$ do as follows:
if $(u,v) \in \rel$, identify $u$ and $v$, and otherwise output $\bot$ for all $\cP \in \mathbb{P}(\instance_b)$.
\item If the above is not applicable, then, as long as there exist three pairwise distinct terminals $u_1,u_2,u_3 \in T$
of the same equivalence class of $\rel$ that have the same neighborhood, delete $u_3$ from the graph (and delete all pairs containing $u_3$ from $\rel$).
\end{enumerate}
Let $\instance_b'$ be the outcome instance.

Finally, restart this step on the new instance $\instance_b'$
and obtain a family of solutions $(\sol_\cP')_{\cP \in \mathbb{P}(\instance_b)}$ and return this family as an output to the instance $\instance_b$.
\end{step}

Let us first verify that the application of Lemma \ref{lem:nmwcu-rekur} is justified. Indeed, by the definitions of
the good node separation and the flower separation, as well as the choice of $\newinst{V}$, we have in both cases $|\newinst{V} \cap \bterms| \leq \cutsize$
and that $G[\newinst{V} \cup N_G(\newinst{V})]$ is connected. Moreover, note that the recursive call is applied to a graph with strictly smaller number of vertices
than $G$: in the case of a good node separation, $V_2$ is removed from the graph, and in the case of a flower separation, recall that the definition
of the flower separation requires $Z \cup \bigcup_{i=1}^\ell V_i$ to be a proper subset of $V(G)$.

We have that, after the bypassing operations, $\newinst{V}$ contains at most $q$ vertices that are not terminals (at most $\cutsize$ border terminals and at most $q-\cutsize$ vertices which are neither terminals nor border terminals). Let us now bound the number of terminal vertices
once Step \ref{step:nmwcu-rekur} is applied. Note that, after Step \ref{step:nmwcu-rekur} is applied, for any $v \in T \cap \newinst{V}$, we have $N_G(v) \subseteq (\newinst{V} \setminus T) \cup Z$
and $|(\newinst{V} \setminus T) \cup Z| \leq (q+\cutsize)$. Due to the first rule in Step \ref{step:nmwcu-rekur}, for any set $A \subseteq (\newinst{V} \setminus T) \cup Z$ of
size $\cutsize+1$, at most one terminal of $T \cap \newinst{V}$ is adjacent to all vertices of $A$. Due to the second rule in Step \ref{step:nmwcu-rekur},
for any set $B \subseteq (\newinst{V} \setminus T) \cup Z$ of size at most $\cutsize$ and for each equivalence class of $\rel$, there are at most two terminals
of this equivalence class with neighborhood exactly $B$. We infer that
$$|T \cap \newinst{V}| \leq (q+\cutsize)^{\cutsize+1} + 2\ell \sum_{i=1}^\cutsize (q+\cutsize)^i =: q'.$$
Note that $q' = 2^{O(\cutsize^2 \log \cutsize)}$.

The following lemma verifies the correctness of Step \ref{step:nmwcu-rekur}.

\begin{lemma}\label{lem:nmwcu-rekur-correctness}
Assume we are given a \bnmwcu{} instance $\instance_b = (G,T,\rel,\cutsize,\bterms)$
on which Step \ref{step:nmwcu-rekur} is applied, and let $\instance_b'$ be an instance after Step \ref{step:nmwcu-rekur} is applied.
Then any correct output to the instance $\instance_b'$ is a correct output to the instance $\instance_b$ as well.
Moreover, if Step \ref{step:nmwcu-rekur} outputs $\bot$ for all $\cP \in \mathbb{P}(\instance_b)$, then this is a correct output to $\instance_b$.
\end{lemma}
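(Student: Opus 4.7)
The plan is to verify that each of the three kinds of modifications performed by Step~\ref{step:nmwcu-rekur} — bypassing a vertex in $V^\ast\setminus(T\cup U(\instance_b^\ast))$, identifying two terminals, and deleting a redundant third copy of a terminal — preserves, for every $\cP\in\mathbb{P}(\instance_b)$, both the minimum achievable value of $|X_\cP|$ and whether a solution to $(\instance_b,\cP)$ exists at all. Since $\mathbb{P}(\instance_b)$ depends on $\bterms$ and on the equivalence-class structure of $\rel$, and none of the three operations touches $\bterms$ or alters the set of equivalence classes of $\rel$, there is a canonical bijection $\mathbb{P}(\instance_b)\cong\mathbb{P}(\instance_b')$ obtained by renaming identified or deleted terminals; from now on we tacitly identify $\cP$ with its image under this bijection.

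For the bypassing step, I would invoke Lemma~\ref{lem:nmwcu-rekur} on the separator $Z^\ast$ and set $V^\ast$ supplied by Lemma~\ref{lem:detect-good-node} or Lemma~\ref{lem:detect-flower-cut}. This yields a correct output to $\instance_b$ in which every $X_\cP\neq\bot$ satisfies $X_\cP\cap V^\ast\subseteq U(\instance_b^\ast)$, so each vertex $v$ that Step~\ref{step:nmwcu-rekur} bypasses avoids some minimum-size solution for every $\cP$. Now observe that since $v\in V(G)\setminus(T\cup\bterms)$ and $E_b$ consists only of edges between border terminals, bypassing $v$ commutes with forming $G_\cP$, i.e., $(G')_\cP$ coincides with the graph obtained by bypassing $v$ in $G_\cP$. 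Consequently Lemma~\ref{lem:nmwcu-bypass} applies verbatim to the bordered problem: the second bullet shows that a minimum-size solution to $(\instance_b,\cP)$ avoiding $v$ remains a solution, of the same size, to $(\instance_b',\cP)$, while the first bullet shows that every solution to $(\instance_b',\cP)$ lifts back to one in $(\instance_b,\cP)$. Iterating over all bypassed vertices preserves minimum sizes and preserves the property $\sol_\cP=\bot$ in both directions.

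For the terminal cleanup, both sub-rules rest on the reductions already proved. When $(u,v)\in\rel$ and either $uv\in E(G)$ or $|N_G(u)\cap N_G(v)|>\cutsize$, Lemma~\ref{lem:nmwcu-ident-terms} applied on the graph $G_\cP$ (again using that $u,v\notin\bterms$, so identification commutes with adding the $E_b$-edges) gives an equality of solution sets for $(\instance_b,\cP)$ and $(\instance_b',\cP)$ under the natural bijection above. Analogously, the deletion of a redundant third terminal $u_3$ is justified by Lemma~\ref{lem:nmwcu-del-terms}. In the conflicting case $(u,v)\notin\rel$ with $uv\in E(G)$ or $|N_G(u)\cap N_G(v)|>\cutsize$, the edge $uv$ itself or one unpicked common neighbour witnesses that $u$ and $v$ are connected in $G\setminus X$ for every admissible $X$, contradicting $\rel$; hence $\instance_b$ has no solution at all and $\sol_\cP=\bot$ for every $\cP\in\mathbb{P}(\instance_b)$ is the correct output, matching what Step~\ref{step:nmwcu-rekur} emits.

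The main subtlety I anticipate is precisely the bookkeeping described above: we must be sure that the natural bijection $\mathbb{P}(\instance_b)\to\mathbb{P}(\instance_b')$ is well-defined after the terminal cleanup, that is, no triple $\cP=(X_b,E_b,\rel_b)$ becomes inadmissible because its restriction to $T$ fails to match the new relation. This is immediate because identification replaces two equivalent terminals by one in the same class, and deletion removes one terminal from a class while leaving at least two, so the equivalence classes of $\rel$ on $T$ are preserved up to relabeling. Termination of the inner loops follows since every identification or deletion strictly decreases $|T|$, and termination of the self-restart of Step~\ref{step:nmwcu-rekur} follows since each successful application strictly decreases $|V(G)|$ (at least one vertex is bypassed, identified, or deleted, as noted after the statement of the step). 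Once no good node or flower separation exists, control passes to the next step, at which point the inductive hypothesis supplies a correct output for $\instance_b'$, which the bijection transports back to a correct output for $\instance_b$.
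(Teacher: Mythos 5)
Your proposal is correct and follows essentially the same route as the paper's proof, which likewise derives the lemma as a corollary of Lemma~\ref{lem:nmwcu-rekur} (for the bypassing step), Lemma~\ref{lem:nmwcu-bypass}, and Lemmas~\ref{lem:nmwcu-ident-terms} and~\ref{lem:nmwcu-del-terms} (for the terminal cleanup), together with the observation that the conflicting case forces a global NO. You simply fill in two details the paper leaves implicit — that bypassing commutes with adding the $E_b$-edges (so the node-deletion lemma transfers to $G_\cP$), and that the terminal reductions preserve the equivalence-class structure of $\rel$ and hence induce the needed bijection $\mathbb{P}(\instance_b)\cong\mathbb{P}(\instance_b')$.
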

\begin{proof}
The lemma is a straightforward corollary of Lemma \ref{lem:nmwcu-rekur}, the properties of the bypassing operation described in Lemma \ref{lem:nmwcu-bypass}, and Lemmas \ref{lem:nmwcu-ident-terms} and \ref{lem:nmwcu-del-terms}.
Lemma \ref{lem:nmwcu-rekur} ensures us that each vertex not in $U(\newinst{\instance}_b)$ is omitted by some optimal solution for every $\cP\in \mathbb{P}(\instance_b)$, which enables us to use Lemma \ref{lem:nmwcu-bypass}.
Finally, if for any terminals $u,v \in T$, we have $uv \in E(G)$ or $|N_G(u) \cap N_G(v)| > \cutsize$, then $u$ and $v$ are in the same connected component of $G \setminus X$
for any set $X$ of at most $\cutsize$ nonterminals and, if $(u,v) \notin \rel$, for any $\cP \in \mathbb{P}(\instance_b)$, there is no solution to $(\instance_b,\cP)$.
\end{proof}

We are left with the analysis of the time complexity of Step \ref{step:nmwcu-rekur}.
The applications of Lemmas \ref{lem:detect-good-node} and \ref{lem:detect-flower-cut} use $O(2^{O(\min(q,\cutsize) \log(q+\cutsize))} n^3 \log n) = O(2^{O(\cutsize^2 \log \cutsize)} n^3 \log n)$
time. Let $n' = |\newinst{V}|$; the recursive step is applied to a graph with at most $n' + \cutsize$ vertices and, after bypassing, there are at most $\min(n-1,n-n'+q+q')$ vertices
left. Moreover, each bypassing operation takes $O(n^2)$ time, the computation of $U(\newinst{\instance}_b)$
takes $O(2^{O(\cutsize \log \cutsize)} n)$ time.
Application of Lemma \ref{lem:nmwcu-ident-terms} takes $O(\cutsize n^2)$ time per operation, which can be implemented by having a counter for each pair of terminals
and increasing those counters accordingly by considering every pair of terminals of $N_G(x)$, for each $x \in V$.
Since when a counter reaches value $k+1$ for vertices $u,v$, we know that $|N_G(u) \cap N_G(v)| > k$, the total time consumed is bounded by $O(\cutsize n^2)$.
Application of Lemma~\ref{lem:nmwcu-del-terms} takes $O(n^2 \log n)$ time per one operation, since we can sort terminals from one equivalence class
according to their sets of neighbours.
Thus all applications of Lemmata \ref{lem:nmwcu-ident-terms} and \ref{lem:nmwcu-del-terms} take $O(n^3(\cutsize+\log n))$ time in total.
The value of $\cutsize$ do not change in this step.
Therefore, we have the following recursive formula for time complexity as a function of the number of vertices of $G$:
\begin{equation}
T(n)\leq \max_{q+1\leq n'\leq n-q-1} \Big( O(2^{O(\cutsize^2 \log \cutsize)} n^3\log n) + T(n'+\cutsize) + T(\min(n-1,n-n'+q+q'))   \Big).
\end{equation}
Note that the function $p(t)=t^4\log t$ is convex, so it is easy to see that the maximum is attained either when $n'=q+q'-1$, or when $n'=n-q-1$. A straightforward inductive check of both of the ends proves that we have indeed the claimed bound on the complexity, i.e., $T(n) = O(2^{O(\cutsize^2 \log \cutsize)} n^4 \log n)$.

We conclude this section with a note that Lemma \ref{lem:node-no-separation} asserts that,
   if Step \ref{step:nmwcu-rekur} is not applicable, then for any set
   $Z \subseteq V(G) \setminus T$
  of size at most $\cutsize$, the graph $G \setminus Z$ contains at most
 $t := (2q+1)(2^\cutsize-1) + 2\cutsize + 1$ connected components containing a non-terminal, out of which at most one
has more than $q$ vertices not from $T$.

\subsection{Brute force approach}\label{sec:nmwcu-brute}
If the graph output by Step \ref{step:nmwcu-rekur} has small number of vertices
outside $T$,
   the algorithm may apply a straightforward brute-force approach
to the \bnmwcu{} problem. In this section we describe this method formally.
\begin{lemma}\label{lem:nmwcu-brute}
A correct output to a \bnmwcu{} instance $\instance_b = (G,T,\rel,\cutsize,\bterms)$
can be computed in $O(2^{O(\cutsize \log \cutsize)} n^2n_{\neg T}^\cutsize)$ time,
    where $n_{\neg T} = |V(G) \setminus T|$.
\end{lemma}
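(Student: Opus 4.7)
The plan is to iterate exhaustively over all behaviors on the border terminals and over all candidate deletion sets, verify each pair, and retain the best $X$ for every $\cP$. Concretely, for each $\cP \in \mathbb{P}(\instance_b)$ we initialize $\sol_\cP = \bot$, and then for every subset $X \subseteq V(G) \setminus T$ with $|X| \leq \cutsize$ and every $\cP \in \mathbb{P}(\instance_b)$ we test whether $X$ is a solution to $(\instance_b,\cP)$; whenever this holds and $|X|$ is smaller than the currently stored $|X_\cP|$, we overwrite $\sol_\cP$ with $X$.

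The verification subroutine for a fixed pair $(X,\cP)$ with $\cP = (X_b,E_b,\rel_b)$ can be implemented in $O(n^2)$ time as follows. First, we check in $O(\cutsize)$ time that $X \cap \bterms = X_b$; if not, we reject. Next, we construct the graph $G_\cP = (V(G), E(G) \cup E_b)$ explicitly; since $|E_b| \leq |\bterms|^2 = O(\cutsize^2)$, this adds only $O(\cutsize^2)$ edges. A standard BFS/DFS then computes the connected components of $G_\cP \setminus X$ in $O(|V(G)| + |E(G_\cP)|) = O(n^2)$ time. Finally, using the component labels we check in $O(n^2)$ time whether for every pair $u,v \in T \cup (\bterms \setminus X_b)$ the vertices $u$ and $v$ lie in the same component if and only if $(u,v) \in \rel_b$; this exactly matches the definition of a solution to $(\instance_b,\cP)$ from the preceding section.

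For the running time, recall that earlier in the section we have already observed that $|\mathbb{P}(\instance_b)| \leq (2\cutsize^3+6\cutsize^2+1)^{2\cutsize} = 2^{O(\cutsize \log \cutsize)}$. The number of candidate deletion sets is at most $\sum_{i=0}^{\cutsize} \binom{n_{\neg T}}{i} = O(n_{\neg T}^\cutsize)$, as $X$ is required to be a subset of $V(G) \setminus T$ of size at most $\cutsize$. Multiplying by the $O(n^2)$ cost of each verification yields the claimed bound $O(2^{O(\cutsize \log \cutsize)} n^2 n_{\neg T}^\cutsize)$.

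Correctness is immediate: the outer loop considers every $\cP \in \mathbb{P}(\instance_b)$, the inner loop considers every potential deletion set of size at most $\cutsize$ (including, in particular, any solution to $(\instance_b,\cP)$ of minimum size), and the verification subroutine is a literal implementation of the definition; hence for each $\cP$ we output a solution of minimum possible $|X_\cP|$ whenever one exists, and $\bot$ otherwise. The only mild point worth pinning down in the full write-up is that the construction of $G_\cP$ need not be redone from scratch for every $X$: for a fixed $\cP$ we may build $G_\cP$ once, at cost $O(n^2)$, which does not affect the final bound.
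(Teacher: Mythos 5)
Your proposal is correct and follows essentially the same brute-force approach as the paper: iterate over all $\cP \in \mathbb{P}(\instance_b)$ and all subsets $X \subseteq V(G) \setminus T$ of size at most $\cutsize$, verify each pair in $O(n^2)$ time, and keep the minimum-size solution per $\cP$. You merely spell out the $O(n^2)$ verification subroutine in more detail than the paper, which is fine.
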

\begin{proof}
Simply, for each $\cP \in \mathbb{P}(\instance_b)$ (at most $2^{O(\cutsize \log \cutsize)}$ choices)
for each deletion set $X \subseteq V(G) \setminus T$
with  $|X| \leq \cutsize$ (at most $(\cutsize + 1) n_{\neg T}^\cutsize$ choices)
we verify in $O(n^2)$ time if $X$ is a solution to $(\instance_b,\cP)$.
\end{proof}
\begin{step}\label{step:nmwcu-brute}
If $|V(G) \setminus T| \leq qt + \cutsize$, apply
Lemma \ref{lem:nmwcu-brute} to find a correct output
to a \bnmwcu{} instance $\instance_b = (G,T,\rel,\cutsize,\bterms)$.
\end{step}
Recall that $q,t \leq 2^{O(\cutsize \log \cutsize)}$.
Thus, if Step \ref{step:nmwcu-brute} is applicable,
its running time is $O(2^{O(\cutsize^2 \log \cutsize)} n^2)$.

\subsection{High connectivity phase}\label{sec:nmwcu-high}

Assume we have a \bnmwcu{} instance $\instance_b = (G,T,\rel,\cutsize,\bterms)$
where Steps \ref{step:nmwcu-rekur} and \ref{step:nmwcu-brute} are not applicable.
In this section we show that high connectivity of $G$ makes the problem much easier.
To this end, fix $\cP=(X_b,E_b,\rel_b) \in \mathbb{P}(\instance_b)$.
We focus on finding the solution $\sol_\cP$; iterating through all the possible $\cP$ gives additional $2^{O(\cutsize\log \cutsize)}$ overhead to the running time. Recall that $G_\cP = G \cup E_b$.

Note that, if $|V(G) \setminus T|$ is too large for Step \ref{step:nmwcu-brute} to be applicable,
for any set $Z \subseteq V(G) \setminus T$ of size at most $\cutsize$, 
the bound on the number of connected components from Lemma \ref{lem:node-no-separation}
implies that there exists exactly one connected component of $G \setminus Z$ with
more than $q$ vertices outside $T$; denote its vertex set by $\bigcc(Z)$.

We now use Lemma \ref{lem:random} to get some more structure of the graph $G$.
\begin{definition}
Let $Z \subseteq V(G) \setminus T$ be a set of size at most $\cutsize$ and let $S \subseteq V(G) \setminus T$.
We say that $S$ {\em{interrogates}} $Z$ if the following holds:
\begin{enumerate}
\item $S \cap Z = \emptyset$;
\item for any connected component $C$ of $G \setminus Z$ with at most $q$ vertices outside $T$,
  all vertices of $C$ belong to $S \cup T$.
\end{enumerate}
\end{definition}

\begin{lemma}\label{lem:nmwcu-random-applicable}
Let $\randfamily$ be a family obtained
by the algorithm of Lemma \ref{lem:random} for universe $U=V(G) \setminus T$ and constants $a=qt$ and $b=\cutsize$,
Then, for any $Z \subseteq V(G) \setminus T$ with $|Z| \leq \cutsize$, there exists a set
$S \in \randfamily$ that interrogates $Z$.
\end{lemma}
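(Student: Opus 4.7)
The plan is to directly construct an appropriate set $A \subseteq V(G) \setminus T$ of size at most $a = qt$ disjoint from $Z$, and then apply Lemma \ref{lem:random} to extract a suitable $S \in \randfamily$ that contains $A$ and is disjoint from $Z$.

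First I would invoke Lemma \ref{lem:node-no-separation} to exploit the fact that Step \ref{step:nmwcu-rekur} is no longer applicable, so $G$ admits neither a $(q,\cutsize)$-good node separation nor a $(q,\cutsize)$-flower separation with respect to $\bterms$ (using $\undelV = T$). Hence, for the given $Z$, the graph $G \setminus Z$ has at most $t$ connected components that contain a vertex of $V(G) \setminus T$, out of which at most one has more than $q$ vertices outside $T$. Let $\mathcal{C}$ denote the collection of connected components of $G\setminus Z$ whose non-terminal vertex count is at most $q$, and set
\[
A \;=\; \bigcup_{C \in \mathcal{C}} \big(V(C) \setminus T\big).
\]
Then $|A| \leq q \cdot t = a$, and clearly $A \subseteq V(G) \setminus T$ and $A \cap Z = \emptyset$.

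Next, since $|A| \leq a$, $|Z| \leq \cutsize = b$, $A \cap Z = \emptyset$, and both $A$ and $Z$ are subsets of the universe $U = V(G) \setminus T$, Lemma \ref{lem:random} guarantees the existence of a set $S \in \randfamily$ satisfying $A \subseteq S$ and $S \cap Z = \emptyset$. It remains to check that this $S$ interrogates $Z$: the first condition $S \cap Z = \emptyset$ holds by construction, and for any connected component $C$ of $G \setminus Z$ with at most $q$ vertices outside $T$, all of $V(C) \setminus T$ is contained in $A \subseteq S$ by the definition of $A$, so $V(C) \subseteq S \cup T$, giving the second condition.

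I do not anticipate any real obstacle here. The lemma is essentially a direct translation of the structural bound from Lemma \ref{lem:node-no-separation} into the language of the splitter family $\randfamily$; the only point deserving slight care is to make sure that $A$ is defined as a subset of $V(G) \setminus T$ (not of $V(G)$), so that it lies inside the universe $U$ used in the construction of $\randfamily$, which is why the definition of ``interrogates'' requires membership in $S \cup T$ rather than just $S$.
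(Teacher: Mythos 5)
Your proof is correct and follows essentially the same route as the paper's: bound the number and size of the small components via Lemma~\ref{lem:node-no-separation}, take their non-terminal vertices as the set $A$ (the paper takes the full vertex sets and then considers $A\setminus T$, a purely cosmetic difference), and apply Lemma~\ref{lem:random} to find $S\in\randfamily$ containing $A$ and disjoint from $Z$.
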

\begin{proof}
Fix $Z \subseteq V(G) \setminus T$ with $|Z| \leq \cutsize$. Let $A$ be the union of vertex
sets of all connected components of $G \setminus Z$ that have at most $q$ vertices outside $T$;
by Lemma \ref{lem:node-no-separation}, $|A \setminus T| \leq qt$.
By Lemma \ref{lem:random}, as $|A \setminus T| \leq qt$ and $|Z| \leq \cutsize$,
there exists a set $S \in \randfamily$ that contains $A \setminus T$ and 
is disjoint with $Z$. By the construction of the set $A$, $S$ interrogates $Z$
and the lemma is proven.
\end{proof}

Note that, as $q,t = 2^{O(\cutsize \log \cutsize)}$,
     the family $\randfamily$ of Lemma \ref{lem:nmwcu-random-applicable}
is of size $2^{O(\cutsize^2 \log \cutsize)} \log n$ and can be computed
in $O(2^{O(\cutsize^2 \log \cutsize)} n \log n)$ time.
Therefore we may branch, guessing a set $S$ that interrogates
a solution $\sol_\cP=X_\cP$ we are looking for. Formally, we perform computations in each branch and return the minimum size solution from those obtained in the branches.

\begin{step}\label{step:nmwcu-guess-S}
Compute the family $\randfamily$ from Lemma \ref{lem:nmwcu-random-applicable}
and branch into $|\randfamily|$ subcases, indexed by sets $S \in \randfamily$.
In a branch $S$ we seek for a set $X_\cP$ with minimum possible $|X_\cP|$
that not only is a solution to $(\instance_b,\cP)$,
    but also is interrogated by $S$.
\end{step}
Lemma \ref{lem:nmwcu-random-applicable} verifies the correctness of the branching
of Step \ref{step:nmwcu-guess-S}; as discussed, the step is applied in
$O(2^{O(\cutsize^2 \log \cutsize)} n \log n)$ time and leads to
$O(2^{O(\cutsize^2 \log \cutsize)} \log n)$ subcases.

The following observation is crucial to for the final step.
\begin{lemma}\label{lem:nmwcu-finish}
Let $X_\cP$ be a minimum size set that is
a solution to $(\instance_b,\cP)$ interrogated by $S$.
Then there exists a set $T^\bigcc \subseteq T \cup (\bterms \setminus X_b)$
that is empty or contains all vertices of exactly
one equivalence class of $\rel_b$, such that $X_\cP = X_b \cup N_G(S(T^\bigcc))$, where
$S(T^\bigcc)$ is the union of vertex sets of all connected components of
$G[S \cup T \cup (\bterms \setminus X_b)]$ that contain a vertex
of $(T \cup (\bterms \setminus X_b)) \setminus T^\bigcc$.
\end{lemma}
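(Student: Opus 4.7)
My plan is to exploit the highly connected structure of $G\setminus X_\cP$ provided by the interrogating set $S$, combined with the minimality of $X_\cP$. First, invoke Lemma~\ref{lem:node-no-separation} together with the non-applicability of Step~\ref{step:nmwcu-brute} to conclude that $G\setminus X_\cP$ consists of a unique big component $C_0=\bigcc(X_\cP)$ and small components $C_1,\ldots,C_\ell$, each satisfying $V(C_i)\subseteq S\cup T$ by the interrogation property (in particular, border terminals in small components lie in $S$). Since $C_0$ is connected in $G\setminus X_\cP\subseteq G_\cP\setminus X_\cP$, all of its terminals and border terminals belong to a single class of $\rel_b$; I set $T^\bigcc$ to this class if $V(C_0)\cap(T\cup(\bterms\setminus X_b))\neq\emptyset$ and $T^\bigcc=\emptyset$ otherwise, immediately giving the required form of $T^\bigcc$.

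The next step is to describe $S(T^\bigcc)$ concretely. The graph $H:=G[S\cup T\cup(\bterms\setminus X_b)]$ is a subgraph of $G\setminus X_\cP$ (as $X_\cP$ is disjoint from each of $S$, $T$ and $\bterms\setminus X_b$), and a short argument shows that its connected components are exactly the small components $C_i$ together with the connected pieces of $V(C_0)\cap V(H)$. Since every terminal or border terminal in the latter pieces lies in $T^\bigcc$, they do not contribute to $S(T^\bigcc)$. Consequently, $S(T^\bigcc)$ equals the union of those small components $C_i$ whose $\rel_b$-class $[C_i]$ is nonempty and different from $T^\bigcc$.

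For the inclusion $X_\cP\subseteq X_b\cup N_G(S(T^\bigcc))$, take $v\in X_\cP\setminus X_b$; then $v\notin\bterms$, so $v$'s $G_\cP$-neighbours coincide with its $G$-neighbours. Minimality of $X_\cP$ implies that $X_\cP\setminus\{v\}$ is not a solution consistent with $\cP$: re-inserting $v$ merges exactly those components of $G_\cP\setminus X_\cP$ containing a $G$-neighbour of $v$, and unless this merge joins two distinct nonempty $\rel_b$-classes it produces a strictly smaller solution. Hence $v$ is adjacent to two components of $G\setminus X_\cP$ with distinct nonempty classes; at most one of them is $C_0$ (of class $T^\bigcc$), so at least one, say $C_a$, is small with $[C_a]\neq T^\bigcc$. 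Then $V(C_a)\subseteq S(T^\bigcc)$ and $v\in N_G(S(T^\bigcc))$.

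The reverse inclusion $X_b\cup N_G(S(T^\bigcc))\subseteq X_\cP$ is easy: $X_b\subseteq X_\cP$ by the definition of a solution, and for any $v\in N_G(S(T^\bigcc))$ with $v\notin X_\cP$, the vertex $v$ and its neighbour $w\in V(C_i)\subseteq S(T^\bigcc)$ would lie in the same component of $G\setminus X_\cP$, forcing $v\in V(C_i)\subseteq S(T^\bigcc)$, a contradiction. The main obstacle will be correctly handling the distinction between components of $G\setminus X_\cP$ and those of $G_\cP\setminus X_\cP$: one $\rel_b$-class can be scattered across several small components that are glued to each other, or to $C_0$, only through $E_b$-edges between border terminals. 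Taking $T^\bigcc$ to be the full $\rel_b$-class (rather than merely the intersection with $V(C_0)$) is precisely what absorbs these $E_b$-connections, ensuring that every small component linked to $C_0$ via $E_b$ is excluded from $S(T^\bigcc)$ and that the minimality argument in paragraph three is not derailed by such spurious merges.
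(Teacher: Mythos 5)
Your proof is correct and takes a genuinely different route from the paper's. Both approaches start by fixing $T^\bigcc$ to be the $\rel_b$-class attached to the component of $G_\cP \setminus X_\cP$ containing $\bigcc(X_\cP)$, and both observe — using interrogation — that the pieces of $S(T^\bigcc)$ are exactly the small components of $G\setminus X_\cP$ carrying a terminal outside $T^\bigcc$, which immediately gives the easy inclusion $X_b\cup N_G(S(T^\bigcc))\subseteq X_\cP$. For the hard direction, however, the two proofs diverge. The paper argues that $X_\cP':=X_b\cup N_G(S(T^\bigcc))$ is itself a solution to $(\instance_b,\cP)$; the delicate step is a shortest-violating-path argument in $G_\cP\setminus X_\cP'$ showing that no pair with $(u,v)\notin\rel_b$ can become connected, because such a path would have to leave a component of $S(T^\bigcc)$ through an $E_b$-edge and one could then take a strictly shorter violating subpath. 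Minimality of $X_\cP$ then forces $X_\cP=X_\cP'$. You instead prove $X_\cP\subseteq X_\cP'$ directly, one vertex at a time: for $v\in X_\cP\setminus X_b$, minimality prevents $X_\cP\setminus\{v\}$ from being a solution, so re-inserting $v$ must fuse two $G_\cP$-components with distinct nonempty $\rel_b$-classes, and at least one of them yields a small $G$-component contained in $S(T^\bigcc)$ adjacent to $v$. Your handling of the $E_b$-gluing when passing from $G_\cP$-components to $G$-components is correct, and you flag this as the key subtlety. The two proofs thus invoke minimality on opposite sides of the equality $X_\cP=X_\cP'$: the paper builds a smaller candidate solution and appeals to minimality at the end, while you rule out each superfluous vertex up front. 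One shared point worth noting: the step ``$X_\cP\setminus\{v\}$ is not a solution'' in your proof, like the paper's implicit conclusion ``$X_\cP'\subseteq X_\cP$ is a solution, hence $X_\cP=X_\cP'$'', uses that $X_\cP$ is minimum among \emph{all} solutions, not merely among solutions interrogated by $S$; this is the setting in which the lemma is actually invoked in Step~\ref{step:nmwcu-finish} (via the branch $S$ interrogating a globally optimal solution), so both proofs are valid in context, but neither literally establishes the lemma under the weaker ``minimum interrogated'' reading alone.
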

\begin{proof}
Consider the graph $G_\cP \setminus X_\cP$ and let $\bigcc_\cP(X_\cP)$ be the vertex
set of the connected component of $G_\cP \setminus X_\cP$ that contain $\bigcc(X_\cP)$
(recall that $G_\cP$ is the graph $G$ with additional edges $E_p$; thus
 $\bigcc_\cP(X_\cP)$ may be significantly larger than $\bigcc(X_\cP)$).
As $X_\cP$ is a solution to $(\instance_b,\cP)$, we have $X_\cP \cap \bterms = X_b$.
Define $T^\bigcc = (T \cup (\bterms \setminus X_b)) \cap \bigcc_\cP(X_\cP)$;
as $X_\cP$ is a solution to $(\instance_b,\cP)$, $T^\bigcc$ is empty or
contains vertices of exactly one equivalence class or $\rel_b$.

Now let $C$ be the vertex set of a connected component of $G \setminus X_\cP$
that contains a vertex $v \in (T \cup (\bterms \setminus X_b)) \setminus T^\bigcc$.
Clearly, $v \notin \bigcc_\cP(X_\cP)$. As $S$ interrogates $X_\cP$, $\bigcc_\cP(X_\cP)$ contains $\bigcc(X_\cP)$ and
$X_\cP \cap (T \cup \bterms) = X_b \subseteq \bterms$,
we infer that $C$ is the vertex set of
a connected component of $G[S \cup T \cup (\bterms \setminus X_b)]$ as well.
As $v \in C$, $C$ is a connected component of $G[S(T^\bigcc)]$.
Since the choice of $C$ was arbitrary, we infer that $N_G(S(T^\bigcc)) \subseteq X_\cP$.
Denote $X_\cP' = X_b \cup N_G(S(T^\bigcc)) \subseteq X_\cP$.
To finish the proof of the lemma we need to show that $X_\cP'$ is a solution
to $(\instance_b,\cP)$ as well.

Clearly, $X_\cP' \cap (T \cup \bterms) = X_b$, as $N_G(S(T^\bigcc))\cap (T \cup (\bterms \setminus X_b))=\emptyset$ by the definition of $S(T^\bigcc)$. Moreover, as $X_\cP' \subseteq X_\cP$ and $X_\cP$ is a solution to $(\instance_b,\cP)$,
if $(u,v) \in \rel_b$ then $u$ and $v$ are in the same connected component
of $G_\cP \setminus X_\cP'$. We now show that for any $(u,v) \notin \rel_b$ the vertices $u$ and $v$ are in different connected components
of $G_\cP \setminus X_\cP'$.
Assume the contrary, and let $u,v \in T \cup (\bterms \setminus X_b)$ be such 
that $(u,v) \notin \rel_b$, $u$ and $v$ are in the same connected component of $G_\cP \setminus X_\cP'$
and that the distance between $u$ and $v$ in $G_\cP \setminus X_\cP'$ is minimum possible.
Let $P$ be a shortest path between $u$ and $v$ in $G_\cP \setminus X_\cP'$.

As $X_\cP$ is a solution to $(\instance_b,X_b)$, $u$ and $v$ are in different connected components
of $G_\cP \setminus X_\cP$; without loss of generality assume $v \notin \bigcc_\cP(X_\cP)$
and let $C$ be the vertex set of the connected component of $G \setminus X_\cP$ that contains
$v$. Clearly, since $(u,v) \notin \rel_b$, we have $u \notin C$.
Moreover, $v \in (T \cup (\bterms \setminus X_b)) \setminus T^\bigcc$ and $C$
is a connected component of $G[S(T^\bigcc)]$. Therefore $N_G(C) \subseteq X_\cP'$.
Since $u \notin C$, the path $P$ needs to go via an edge $v_1u_1 \in E_b$, where $v_1 \in C$
but $u_1 \notin C$. Note that then $u_1,v_1\in \bterms$. As $v_1 \in C$ and $X_\cP$ is a solution to $(\instance_b,\cP)$,
we have $(v,v_1) \in \rel_b$. As $E_b \subseteq \rel_b$, we have that
$(v,u_1) \in \rel_b$. As $(u,v) \notin \rel_b$, we infer than $(u_1,u) \notin \rel_b$,
but $u_1$ and $u$ are connected via a proper subpath of $P$ in $G_\cP \setminus X_\cP'$,
a contradiction to the choice of $u$, $v$ and $P$. This finishes the proof of the lemma.
\end{proof}

Lemma \ref{lem:nmwcu-finish} justifies the final step.
\begin{step}\label{step:nmwcu-finish}
In each branch, let $S$ be the corresponding guess, for each set $T^\bigcc$ that is empty or contains all vertices
of one equivalence class of $\rel_b$, check if $X_b \cup N_G(S(T^\bigcc))$
is a solution to $(\instance_b,\cP)$ that is interrogated by $S$.
For given $\cP$, output the smallest solution to $(\instance_b,\cP)$ found,
    or $\bot$ if no solution is found for any choice of $S$ and $T^\bigcc$.
\end{step}
Note that $\rel$ has at most $\ccrel = \cutsize^2 + \cutsize$ equivalence classes.
As $|\bterms| \leq 2\cutsize$, there are at most $1 + 3\cutsize + \cutsize^2$ choices
of the set $T^\bigcc$. For each $T^\bigcc$,
computing $X_b \cup N_G(S(T^\bigcc))$ and verifying if it is a solution
to $(\instance_b,\cP)$ interrogated by $S$ takes $O(n^2)$ time.
Therefore Step \ref{step:nmwcu-finish} takes $O(2^{O(\cutsize^2 \log \cutsize)} n^2 \log n)$
time for all subcases.

This finishes the description of fixed-parameter algorithm for \nmwcu.

\newcommand{\ap}{\gamma}
\newcommand{\nxt}{\mathrm{next}}

\section{Lower bound for big alphabet size}\label{sec:lower}

In this section we prove that the dependence on $s$ in the algorithm from Theorem~\ref{thm:ulc-main} is probably essential, even for the edge deletion case and in the classical setting, when every vertex has a full list of possible labels and the partial permutations on edges are required to be permutations. We define formally the problem as follows:

\defparproblemu{\eulc($\cutsize$)}{An undirected graph $G$, a finite alphabet $\Sigma$ of size $s$,
 an integer $\cutsize$, and: for each vertex $v \in V(G)$ a set $\phi_v \subseteq \Sigma$ and
 for each edge $e \in E(G)$ and each its endpoint $v$ a partial permutation $\psi_{e,v}$ of $\Sigma$, such that
if $e = uv$ then $\psi_{e,u} = \psi_{e,v}^{-1}$.}{$\cutsize$}{Does there exist
a set $F \subseteq E(G)$ of size at most $\cutsize$ and a function
$\Psi: V(G) \to \Sigma$ such that
for any $v \in V(G)$ we have $\Psi(v) \in \phi_v$ and for any $uv \in E(G) \setminus F$ we have $(\Psi(u),\Psi(v)) \in \psi_{uv,u}$?}

\begin{theorem}\label{thm:hardness}
\eulc($\cutsize$) is $W[1]$-hard, even in the restricted case, when $\phi_v=\Sigma$ for all $v\in V(G)$ and $\psi_{uv,u},\psi_{uv,v}$ are permutations for all $uv\in E(G)$.
\end{theorem}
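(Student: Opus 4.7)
I plan to prove $W[1]$-hardness by reducing from the \mclique problem parameterized by the clique size $k$. Given an instance $(G, V_1, \ldots, V_k)$ of \mclique with $|V_i|=n$, I will construct an \eulc instance of polynomial size in $n$ with full vertex lists $\phi_v = \Sigma$, full-permutation edge constraints, and cutset parameter $\cutsize = \Theta(k^2)$. The construction will contain (i) for each color class $i$, a selector vertex $v_i$ with intended labeling $\Psi(v_i) \in V_i$ (identifying $V_i$ with a subset of $\Sigma$), and (ii) for each pair $(i,j)$ with $i<j$, a pair gadget $H_{ij}$ enforcing $(\Psi(v_i),\Psi(v_j)) \in E(G) \cap V_i \times V_j$. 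Forcing $\Psi(v_i) \in V_i$ will be done by the standard fixed-point trick: attach to $v_i$ a short cycle whose edge permutations compose to some $\tau_i$ with fixed-point set exactly $V_i$; padding $\Sigma$ with at least two auxiliary symbols outside $\bigcup_i V_i$ guarantees that a derangement on $\Sigma \setminus V_i$ exists, so $\tau_i$ is realizable. The budget $\cutsize$ will be chosen small enough that class restriction cycles are effectively undeletable, so all deletions must happen inside the pair gadgets.

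The heart of the reduction is the pair gadget $H_{ij}$. Since edge constraints are full permutations, the binary relation realized by any permutation-constrained subgraph between two vertices is an intersection of graphs of permutations, i.e., a partial matching in $\Sigma \times \Sigma$, which cannot directly express the multi-valued relation $E_{ij}$. To work around this, I plan to enrich $\Sigma$ with a fresh symbol $\sigma_e$ for each edge $e \in E(G)$ and introduce an edge-selector vertex $e_{ij}$ whose label is forced by a cycle gadget into $\{\sigma_e : e \in E(G) \cap V_i \times V_j\}$. Consistency between $\Psi(e_{ij})$ and $\Psi(v_i), \Psi(v_j)$ will be enforced through short paths whose permutations propagate the endpoints of the selected edge. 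Because many edges share an endpoint, the endpoint-projection map cannot be bijective on a naive alphabet, so I will use per-pair copies $v_i^{(j)}$ of each selector vertex---linked to $v_i$ by small equality gadgets---and let each pair gadget interact with its own copy. This way all relevant permutations stay bijective. I will design $H_{ij}$ so that any honest labeling costs exactly $d$ internal deletions, for some small constant $d$ independent of $n$ and $k$, and then set $\cutsize = d\binom{k}{2}$.

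The main obstacle I anticipate is the backward correctness direction: showing that any solution to the constructed \eulc instance using at most $\cutsize$ deletions corresponds to a multicolored $k$-clique in $G$. This will require a rigidity lemma asserting that the deletion budget is too tight to simultaneously (a) break a class restriction cycle, (b) desynchronize a selector from one of its copies, and (c) overspend more than $d$ deletions on any single pair gadget; each such failure would permit dishonest labelings. Because the pair gadgets are the only place where $E(G)$ is encoded, rigidity will force the $k$-tuple of selector labels to form a multicolored clique. The forward direction---converting a multicolored clique into a valid labeling with exactly $d\binom{k}{2}$ deletions---is expected to follow directly from the construction. Finally, since \eulc reduces to \ulc by the polynomial-parameter transformation sketched in Appendix~\ref{sec:full-ulc}, this will also yield $W[1]$-hardness of \ulc parameterized by $\cutsize$ alone.
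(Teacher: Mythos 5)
Your high-level plan (reduce from \mclique, restrict vertex labels via cycle gadgets whose permutations fix exactly the intended values, and encode adjacency in pairwise gadgets) is in the right neighborhood, but the central step --- the pair gadget $H_{ij}$ --- has a genuine gap that the "per-pair copies" do not repair. You correctly observe that the endpoint-projection map $\sigma_e \mapsto u$, for $e=uv$ with $u \in V_i$, is not injective once some vertex of $V_i$ has two or more neighbours in $V_j$, and hence cannot be the restriction of a permutation. But making a fresh copy $v_i^{(j)}$ of the selector for each pair $j$ only decouples the \emph{different} pair gadgets from each other; it does nothing to make the projection from $\{\sigma_e : e\in E\cap(V_i\times V_j)\}$ to $V_i$ \emph{within} a single pair gadget injective. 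The claim ``this way all relevant permutations stay bijective'' is therefore unsupported, and as written the gadget cannot enforce the intended consistency between $\Psi(e_{ij})$ and $\Psi(v_i)$. You would need an entirely different mechanism for transmitting a many-to-one piece of information through permutation-only constraints, and that is precisely where the difficulty of the restricted variant lives.

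The paper sidesteps this obstacle by first reducing \mclique to the \emph{general} \eulc{} problem, where partial permutations and vertex lists are allowed, and only then applying a separate general-to-restricted reduction (Lemma~\ref{lem:restricted}) that replaces each partial permutation by $\cutsize+2$ full permutations extending it (and each list by $\cutsize'+1$ loops with a permutation fixing exactly the list), at the cost of inflating $\cutsize$ to $\cutsize(\cutsize+2)$ and $|\Sigma|$ by $\cutsize+2$. With partial permutations available, the adjacency of $V_i$ and $V_j$ can be encoded directly on a single edge via the coordinate-swap $\sigma_{ij}=\{((p,q),(q,p)) : v^i_p v^j_q \in E(H)\}$ over the pair alphabet $\Sigma=\{0,\ldots,n\}^2$, and the cycle $C_i$ of length $kn$ with the first-coordinate shift $\pi_0$ forces exactly $k$ deletions per cycle and lets the deletion positions encode the chosen vertex of $V_i$; no edge-selector vertex, endpoint projection, or rigidity lemma is needed. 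If you want to keep your direct-to-restricted architecture, you should first prove an analogue of Lemma~\ref{lem:restricted} and then build the simpler \mclique reduction against the general problem; attempting to do both at once is what creates the injectivity obstruction your proposal leaves open.
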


Before we proceed to the proof, we state that this restricted case is not easier than the general one.

\begin{lemma}\label{lem:restricted}
There exists a polynomial time algorithm that, given an instance 
$$I=(G,\Sigma,\cutsize,(\phi_v)_{v \in V(G)},(\psi_{e,v})_{e \in E(G),v \in e})$$ of \eulc, outputs an equivalent instance
$$I'=(G',\Sigma',\cutsize',(\phi'_v)_{v \in V(G)},(\psi'_{e,v})_{e \in E(G),v \in e})$$ where $\cutsize'=\cutsize(\cutsize+2)$, $|\Sigma'|=|\Sigma|+\cutsize+2$, $\phi'_v=\Sigma'$ for all $v\in V(G)$ and $\psi'_{e,v}$ is a permutation for all $e \in E(G),v \in e$.
\end{lemma}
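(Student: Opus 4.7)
The plan is to construct $I'$ by adding $\cutsize+2$ fresh ``dummy'' labels $\Sigma'=\Sigma\cup\{d_0,\ldots,d_{\cutsize+1}\}$ and attaching two kinds of gadgets: \emph{constraint gadgets} enforcing $\Psi'(v)\in\phi_v$ at every original vertex, and \emph{edge gadgets} of $\cutsize+2$ parallel copies simulating each partial permutation $\psi_{e,u}$ by full permutations. The blow-up factor $\cutsize+2$ in the budget will be absorbed precisely by these $\cutsize+2$ parallel copies.

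For the constraint gadget at every $v$ with $\phi_v\subsetneq\Sigma$, I would add a clique on polynomially many fresh vertices with identity permutations on its edges, together with many tester edges from $v$ to the clique; each tester permutation fixes $\phi_v$ pointwise and acts as a distinct cyclic shift of $\Sigma'\setminus\phi_v$. The clique is sized so that its minimum edge cut exceeds $\cutsize'=\cutsize(\cutsize+2)$, forcing all clique vertices to share a single label $\beta$ in any valid solution. If $\Psi'(v)\notin\phi_v$ the images $\rho^v_i(\Psi'(v))$ cycle through many distinct targets, so no choice of $\beta$ can match more than a $1/(\cutsize+2)$ fraction of testers, and a wrong label costs strictly more than $\cutsize'$ deletions inside the gadget. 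For the edge gadget, after first restricting (without loss of generality) $\dom(\psi_{e,u})\subseteq\phi_u$ and the image of $\psi_{e,u}$ inside $\phi_v$, I would replace each edge $e=uv$ by $\cutsize+2$ parallel length-two paths $u-x^{(i)}_e-v$ equipped with full permutations on the two halves whose composition agrees with $\psi_{e,u}$ on $\dom(\psi_{e,u})$ and sends $\phi_u\setminus\dom(\psi_{e,u})$ outside $\phi_v$. A single path can then be satisfied without deletion exactly when the original partial constraint holds on $(\Psi(u),\Psi(v))$; in every other case at least one deletion per path is forced.

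The correctness argument is then straightforward in the forward direction: from a solution $(F,\Psi)$ of $I$ with $|F|\le\cutsize$ I take $\Psi'|_{V(G)}=\Psi$, label every intermediate vertex canonically, satisfy every constraint gadget for free, and delete one edge in each of the $\cutsize+2$ copies of every edge of $F$, yielding $|F'|\le\cutsize(\cutsize+2)$. For the backward direction, a solution $(F',\Psi')$ with $|F'|\le\cutsize'$ must respect every constraint gadget, so $\Psi'|_{V(G)}$ lies in $\prod_v\phi_v$; then for each original edge $e$ either all $\cutsize+2$ copies are preserved (so $e$ is satisfied in $I$ by $\Psi'|_{V(G)}$) or at least $\cutsize+2$ deletions in $F'$ are charged to $e$, giving $|F|\le\cutsize'/(\cutsize+2)=\cutsize$.

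I expect the main obstacle to be a combinatorial fit in the edge gadget when the alphabet $\Sigma$ is much larger than $\cutsize$: forcing the composition $\psi^{x^{(i)}_e v}\circ\psi^{u x^{(i)}_e}$ to send $\phi_u\setminus\dom(\psi_{e,u})$ outside $\phi_v$ requires $|\phi_u|+|\phi_v|-|\dom(\psi_{e,u})|\le|\Sigma'|=|\Sigma|+\cutsize+2$, which need not hold a priori. Circumventing this will likely require a preprocessing step that shrinks each list $\phi_v$ down to at most $\cutsize+2$ labels by splitting $v$ into single-label copies linked by an equivalence clique, or a more clever routing of the ``bad'' labels of $u$ through the dummy alphabet. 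A second delicate calibration is the constraint gadget itself: a naive choice of $\cutsize+2$ testers over $\cutsize+2$ dummies yields a violation cost of only $\cutsize+1$, so one must take roughly $(\cutsize+1)(\cutsize+2)$ testers and a clique of size $\cutsize(\cutsize+2)+2$ before a pigeonhole argument pushes the violation cost up to $(\cutsize+1)^2>\cutsize(\cutsize+2)$.
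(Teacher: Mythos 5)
Your proposal identifies the right scaffolding---$\cutsize+2$ dummy labels, a parallel-copy edge gadget, and a budget of $\cutsize(\cutsize+2)$---but the edge gadget rests on a property that you correctly flag as possibly unobtainable, and the workaround you sketch is not carried out, so the proof as written has a genuine gap.

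The gap is in the edge gadget. You want \emph{every} one of the $\cutsize+2$ parallel copies to fail when the original edge constraint is violated, and to arrange this you require the composed permutations to send $\phi_u\setminus\dom(\psi_{e,u})$ outside $\phi_v$. As you note yourself, this need not be possible: it requires roughly $|\phi_u|+|\phi_v|-|\dom(\psi_{e,u})|\le|\Sigma'|$, which can fail for large alphabets. The paper sidesteps this entirely with a weaker requirement that is \emph{always} achievable: the $\cutsize+2$ permutations on the copies are all full permutations of $\Sigma'$ that \emph{extend} $\psi_{e,u}$, and on the complement (the undefined labels of $\psi_{e,u}$ together with the $\cutsize+2$ dummies) they are chosen \emph{pairwise disjoint}---concretely, $\cutsize+2$ cyclic shifts of one fixed bijection between the two complements, which both have size at least $\cutsize+2$ regardless of $|\Sigma|$. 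The payoff is that when the constraint is violated, at least $\cutsize+1$ (not necessarily all $\cutsize+2$) of the copies must be cut: either $\Psi(u)\in\dom(\psi_{e,u})$ and all $\cutsize+2$ extensions agree on $\Psi(u)$, or $\Psi(u)\notin\dom(\psi_{e,u})$ and the $\cutsize+2$ images of $\Psi(u)$ are pairwise distinct, so at most one equals $\Psi(v)$. Since $(\cutsize+1)^2 = \cutsize(\cutsize+2)+1 > \cutsize'$, this weaker guarantee already pushes $|F|$ down to $\cutsize$. This is exactly the pigeonhole calibration you invoke for your vertex gadget; you simply did not apply the same idea to the edge gadget, where it is the key to making the construction size-oblivious.

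As a secondary point, your vertex gadget (a clique plus $\Theta(\cutsize^2)$ testers) is much heavier than the paper's: the paper simply adds $\cutsize'+1$ loops at $v$, each carrying one permutation of $\Sigma'$ whose fixed-point set is exactly $\phi_v$ (such a permutation exists because $|\Sigma'\setminus\phi_v|\ge\cutsize+2\ge 2$), and then removes multi-edges and loops by subdivision with an identity half. Any labeling with $\Psi'(v)\notin\phi_v$ would then force all $\cutsize'+1$ loops into the cut, exceeding the budget.
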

\begin{proof}
The graph $G'$ we are going to construct will be a multigraph, possibly with loops. Note that we can easily get rid of multiple edges and loops by subdividing every edge and loop, and for each subdivision preserving the constraint on one of the obtained edges while setting the constraint on the other edge to be identity.

We start with setting $\cutsize'=\cutsize(\cutsize+2)$ and $\Sigma'=\Sigma\cup \Gamma$, where $\Gamma=\{\ap_1,\ap_2,\ldots,\ap_{\cutsize+2}\}$ is the set of $\cutsize+2$ new symbols that do not belong to $\Sigma$. Now we construct the multigraph $G'$ as follows. Firstly, $V(G')=V(G)$. For every vertex $v\in V(G)$ we take an arbitrary permutation $\pi_v$ of $\Sigma'$ such that $\phi_v$ is exactly the set of labels that $\pi_v$ stabilizes; note that this is possible due to $k+2\geq 2$. We create $\cutsize'+1$ loops in $v$ with $\pi_v$ as the constraint. Then, for every edge $uv\in E(G)$ denote by $X_{uv,u}$ the set of labels from $\Sigma$ that do not have an image in $\psi_{uv,u}$, and similarly denote by $X_{uv,v}$ the set of labels from $\Sigma$ that do not have an image in $\psi_{uv,v}$. Let $\{\psi^i_{uv,u}\}_{i=1,\ldots,\cutsize+2}$ be an arbitrary family of permutations of $\Sigma'$, such that:
\begin{itemize}
\item each $\psi^i_{uv,u}$ extends $\psi_{uv,u}$;
\item each label $\alpha\in X_{uv,u}\cup \Gamma$ is mapped to pairwise different labels in $\psi^i_{uv,u}$ for $i=1,2,\ldots,\cutsize+2$;
\item each label $\beta\in X_{uv,v}\cup \Gamma$ is mapped to pairwise different labels in $\psi^i_{uv,v}=\left(\psi^i_{uv,u}\right)^{-1}$ for $i=1,2,\ldots,\cutsize+2$.
\end{itemize}
Observe that as $|\Gamma|=\cutsize+2$, one can find such family $\{\psi^i_{uv,u}\}_{i=1,\ldots,\cutsize+2}$ by enumerating $X_{uv,u}\cup \Gamma$ and $X_{uv,v}\cup \Gamma$ in arbitrary orders, fixing one bijection between them and shifting it cyclicly $\cutsize+1$ times. Between $u$ and $v$ we insert the set of $\cutsize+2$ edges $P_{uv}=\{uv^i\}_{i=1,2,\ldots,\cutsize+2}$, imposing the constraints $(\psi^i_{uv,u},\psi^i_{uv,v})$ on $uv^i$. Finally, we set $\phi'_v=\Sigma'$ for all $v\in V(G)$. This concludes the construction. We are left with a formal proof of the equivalence.

Assume first that there exists a set of edges $F\subseteq E(G)$, $|F|\leq \cutsize$, such that $G\setminus F$ admits a labeling $\Psi$ respecting constraints in the input instance $I$. Let $F'=\{e^i: e\in F\}$; note that $|F'|=(\cutsize+2)|F|\leq \cutsize'$. A direct check shows that $\Psi$ is also a correct labeling in $G'\setminus F'$, which proves that $F'$ is a solution to the instance $I'$.

Now assume that there exists a set of edges $F'\subseteq E(G')$, $|F'|\leq \cutsize'$, such that $G'\setminus F'$ admits a labeling $\Psi'$ respecting constraints in the output instance $I'$. Note that for each $v\in V(G)$ we have that $\Psi'(v)\in \phi_v$, as otherwise the set $F'$ would need to contain $\cutsize'+1$ loops at $v$. Let $F\subseteq E(G)$ be the set of edges $uv$ of $G$ such that $(\Psi'(u),\Psi'(v))\notin \psi_{uv,u}$. Clearly, $F$ is a solution in the instance $I$ as $\Psi'$ is a correct labeling of $G\setminus F$. It remains to prove that $|F|\leq \cutsize$. Assume otherwise, i.e., $|F|\geq\cutsize+1$.

Consider an edge $uv\in E(G)$ such that $(\Psi'(u),\Psi'(v))\notin \psi_{uv,u}$. We claim that $|F'\cap P_{uv}|\geq \cutsize+1$. If $\Psi'(u)$ belongs to the domain of $\psi_{uv,u}$, then all the constraints $\psi^i_{uv,u}$ map $\Psi'(u)$ to a label different that $\Psi'(v)$. Hence $P_{uv}\subseteq F'$ and the claim holds. Otherwise, $\Psi'(v)$ is mapped to $\cutsize+2$ different images in constraints $\psi^i_{uv,u}$, which means that at least $\cutsize+1$ of them must be different than $\Psi'(v)$. The corresponding edges have to be contained in $F'$ and the claim holds in this case as well. As $|F|\geq\cutsize+1$, we have that $|F'|\geq (\cutsize+1)^2=\cutsize'+1$, which is a contradiction.
\end{proof}

We are now ready to prove Theorem~\ref{thm:hardness}.

\renewcommand{\cutsize}{k'}

\begin{proof}[Proof of Theorem~\ref{thm:hardness}]
By Lemma~\ref{lem:restricted}, we may consider the general problem definition, where we allow lists in vertices and partial permutations as constraints imposed on edges.

We provide a parameterized reduction from the \mclique problem, which is known to be W[1]-hard~\cite{fellows-hermelin-rosamond-vialette-multicolored-hardness}.

\defparproblemu{\mclique}{An undirected graph $H$ with vertices partitioned into $k$ parts $V_0,V_1,\ldots,V_{k-1}$, such that $H$ does not contain edges connecting vertices from the same part $V_i$, for $i=0,1,\ldots,k-1$.}{$k$}{Is there a clique $C$ in $G$ of size $k$?}

Observe that by the assumption on the structure of $H$, the clique $C$ has to contain exactly one vertex from each part $V_i$. Moreover, by adding independent vertices we can assume that each part $V_i$ is of the same size $n$. In each part $V_i$ fix an arbitrary ordering of vertices $v^i_0,v^i_1,\ldots,v^i_{n-1}$.

Now, we are going to construct an instance $(G,\Sigma,\cutsize,(\phi_v)_{v \in V(G)},(\psi_{e,v})_{e \in E(G),v \in e})$ that is a YES instance of \eulc if and only if $H$ contains a clique of size $k$. As the construction will be performed in polynomial time and $\cutsize=k^2$, this gives the promised parameterized reduction.

We take $\Sigma=\{0,1,2,\ldots,n\}\times \{0,1,2,\ldots,n\}$, and let $\Lambda=\{0,1,\ldots,n-1\}\times \{0,1,\ldots,n-1\}\subseteq \Sigma$. For every part $V_i$ we create a cycle $C_i$ of length $kn$. Denote the vertices of $C_i$ by $u^i_0,u^i_1,\ldots,u^i_{kn-2},u^i_{kn-1}$ in the order of their appearance on the cycle. For every vertex $u^i_p$ let $\nxt(u^i_p)$ be the next vertex on the cycle $C_i$, i.e., $u^i_{p+1}$ if $p<kn-1$ and $u^i_{0}$ if $p=kn-1$. Let $e(u^i_p)$ be the edge connecting $u^i_p$ with $\nxt(u^i_p)$.

On every edge of the cycle $C_i$ we impose a constraint given by the permutation $\pi_0((a,b))=(a-1,b)$, where the numbers behave cyclicly modulo $n+1$. More precisely, the constraint on the edge $e(u^i_p)$ states that the label of $\nxt(u^i_p)$ has the first coordinate decremented by $1$ modulo $n+1$ comparing to the label of $u^i_p$. Now, for every $i\neq j$, $0\leq i,j<k$, we create an edge $u^i_{j\cdot n}u^j_{i\cdot n}$ with constraint given by the partial permutation $\sigma_{i,j}=\{((p,q),(q,p))\ |\ v^i_pv^j_q\in E(H)\}$. In other words, from the domain of the permutation $\sigma((a,b))=(b,a)$ we remove out all the pairs that contain $n+1$ and all the pairs that correspond to nonedges between $V_i$ and $V_j$. Finally, we set $\phi_v=\Lambda$ for every $v\in V(G)$ and $\cutsize=k^2$. This concludes the construction.

Let us firstly assume that $C$ is a clique of size $k$ in $H$ and let $\{v^i_{c_i}\}=V(C)\cap V_i$. We construct 
\begin{itemize}
\item a set of edges $F=\{e(u^i_{jn+c_i})\ |\ 0\leq i,j<k \}$;
\item a labeling $\Psi(u^i_p)=((c_i-p)\mod n,c_{q/n})$, where $u^i_q$ is the closest next vertex on the cycle that has lower index being a multiplicity of $n$, i.e., $q/n=\lceil p/n\rceil \mod n$.
\end{itemize}
Obviously, $|F|=\cutsize$. Let us check that $\Psi$ is a correct labeling of $G\setminus F$. Clearly, $\Psi(v)\in \Lambda=\phi_v$ for any $v\in V(G)$. Consider any edge $e(u^i_p)\notin F$. As $p\mod n\neq c_i$, we have that $\Psi(u^i_p)=(x,y)$ for some $x>0$ and $\Psi(\nxt(u^i_p))=(x-1,y)$; hence, these constraints are satisfied. Now consider any edge of the form $u^i_{j\cdot n}u^j_{i\cdot n}$ for $i\neq j$, $0\leq i,j<k$. By the construction of $\Psi$ we have that $\Psi(u^i_{j\cdot n})=
(c_i,c_j)$ and $\Psi(u^j_{i\cdot n})=(c_j,c_i)$. Recall that $C$ is a clique, so $v^i_{c_i}v^j_{c_j}\in E(H)$. Hence, $(c_i,c_j)$ lies in the domain of $\sigma_{ij}$ and the constraint imposed on this edge is satisfied as well.

Let us now assume that there is a set of edges $F\subseteq E(G)$, $|F|\leq \cutsize$, such that there exists a correct labeling $\Psi$ of $G\setminus F$. Firstly, we claim that for every $n$ consecutive edges of every cycle $C_i$, $F$ has to contain at least one of these edge. Otherwise there would be $n+1$ consecutive vertices $u^i_p,u^i_{p+1},\ldots,u^i_{p+n}$ such that edges $u^i_{p+i}u^i_{p+i+1}$ do not belong to $F$ for $i=0,1,\ldots,n-1$ (indices behave cyclicly). It follows that if $\Psi(u^i_p)=(\ell,d)$ for some $\ell<n$, then we would have $\Psi(u^i_{p+\ell+1})=(n,d)$, but $n$ is a forbidden value in a label for every vertex. As every cycle $C_i$ has length $kn$, it has to contain at least $k$ edges from $F$. As $\cutsize=k^2$, it has to contain exactly $k$ edges from $F$. We can use again the claim to infer that between every two subsequent edges from $F$ there must be exactly $n-1$ edges not from $F$, as otherwise there would be $n$ consecutive edges not belonging to $F$. Moreover, the same argumentation yields that the vertices of each interval on the cycle between the two subsequent edges from $F$ have to be labeled with $(n-1,d),(n-2,d),\ldots,(0,d)$, in this order, for some $d$ depending on the interval, but constant within. Hence, for every cycle $C_i$ we can find an integer $c_i\in \{0,1,\ldots,n-1\}$, such that $F=\{e(u^i_{jn+c_i})\ |\ 1\leq i,j<k \}$ and $\Psi(u^i_{jn})=(c_i,d^i_j)$ for all $j=0,1,\ldots,k-1$ and some numbers $d^i_j$. 

We are going to prove that vertices $v^i_{c_i}$ for $i=0,1\ldots,k-1$ induce a clique in $H$. Take parts $V_i,V_j$ for $i\neq j$ and examine the edge $u^i_{j\cdot n}u^j_{i\cdot n}$ with constraint $\sigma_{ij}$. As $\Psi(u^i_{j\cdot n})=(c_i,d^i_j)$, $\Psi(u^j_{i\cdot n})=(c_j,d^j_i)$, and $\sigma_{ij}$ swaps the elements of the pair, we find that $d^i_j=c_j$ and $d^j_i=c_i$. Moreover, $(c_i,c_j)$ is in the domain of $\sigma_{ij}$ if and only if $v^i_{c_i}v^j_{c_j}\in E(H)$. Therefore, $v^i_{c_i}$ and $v^j_{c_j}$ are adjacent for all $i\neq j$ and we are done.
\end{proof}

\section{Weights}\label{sec:weights}

We would like to note that using our technique we can solve a more general problem,
where the graph is edge-weighted (or vertex-weighted, in the vertex-deletion setting),
and the goal is, instead of minimizing the cardinality of the cutset, to find
a cutset of size at most $k$, having minimum sum of weights of the edges (or vertices) it contains.
For example for the problem considered in Section~\ref{sec:illustration}, the formal definition is as follows.

\defparproblemu{\textsc{Weighted} \eulc (\wulc)}{An undirected (multi)graph $G$ together with a weight function $\weight:E(G) \rightarrow \mathbb{R}_+$, a finite alphabet $\Sigma$ of size $s$,
 an integer $k$, and for each edge $e \in E(G)$ and each of its endpoints $v$ a permutation $\psi_{e,v}$ of $\Sigma$, such that
if $e = uv$ then $\psi_{e,u} = \psi_{e,v}^{-1}$.}{$k+s$}{What is the minimum weight of
a set $X \subseteq E(G)$ of size at most $k$ such that there exists a function $\Psi: V(G) \to \Sigma$ satisfying that
for any $uv \in E(G) \setminus X$ we have $(\Psi(u),\Psi(v)) \in \psi_{uv,u}$?}

Note, that now we have to reformulate the bordered problem definition as well,
because solutions to the bordered problem need to have a prescribed cardinality
in order to make them comparable.
Let us see it on the example of \wulc{}.

By $\mathbb{P}(\instance_b)$ we define the set of all pairs $\cP=(\Psi_b,k_b)$,
   such that $\Psi_b$ is a function from $\bterms$ to $\Sigma$ and $0 \le k_b \le k$.
We say that a set $X \subseteq E(G)$
is a solution to $(\instance_b,\cP)$ if
$|X| \leq k_b$, there exists a function $\Psi:V(G) \to \Sigma$ extending $\Psi_b$
such that for any $uv \in E(G) \setminus X$ we have $(\Psi(u),\Psi(v)) \in \psi_{uv,u}$.
and the sum of weights of edges in $X$ is minimum possible (comparing to all
other sets $X'$ satisfying the remaining constraints).
The border problem is defined as follows.

\defproblemoutput{\bwulc}{An \wulc instance $\instance = (G,\weight,\Sigma,k,(\psi_{e,v})_{e \in E(G), v \in e})$ with $G$ being connected,
and a set $\bterms \subseteq V(G)$ of size at most $4k$; denote $\instance_b = (\instance, \bterms)$.}{For each $\cP \in \mathbb{P}(\instance_b)$
output a solution $\sol_{\cP} = X_\cP$ to $(\instance_b,\cP)$ or
output $\sol_{\cP} = \bot$ if such a solution does not exist.}

Since, while finding a good separation, our algorithm does not perform any greedy choices, we almost leave the algorithm unchanged.
Similarly, the recursive understanding step in the node-deletion problems is not affected significantly by this change.
However, when solving a weighted problem, we need to be more careful in the final, high connectivity phase,
as the existence of weights limits our possibilities of being greedy.
In the following paragraphs we shortly argument that the high connectivity phases of the
algorithms presented in this paper can be adjusted to the weighted variants without greater effort.

\paragraph{\ulc.}
In the case of the \ulc problem, the high connectivity phase
remains almost unchanged; however, we need to argue that
all greedy steps used in this part of the algorithm are justified
also in the weighted case.
As in the case of the other problems, we start with
guessing an interrogating set that
is {\emph{(i)}} disjoint with the solution $Z$ we are looking for,
{\emph{(ii)}} contains all vertices of all small connected components of $G \setminus Z$,
{\emph{(iii)}} contains a large connected set adjacent to each
vertex of $Z$ that is adjacent to the large connected component of $G \setminus Z$.
The algorithm performs now two simple greedy steps: it checks whether $Z = \emptyset$
is a solution, and looks for not forsaken vertices without neighbours in $S$.
Both steps can be easily justified in the weighted case,
as we assume nonnegative weights and we require only $|X| \leq k_b$
in the border problem definition.
The crucial observation --- that there are only at most $s$ reasonable labelings
of the big stains (big connected components) of $S$ --- does not interfere with weights.
In the final bounded search tree algorithm we argue that there is a limited
number of vertices, out of which we need to delete at least one
(the Neighbourhood Branching Rule) or that there are only limited number
of ways a small stain can be handled (the Small Stains Rule).
Both argumentations are oblivious to weights;
note that this is also true in the second part of
Lemma \ref{lem:ulc-branch-smallcc}, where we argue about a greedy choice
of a labeling in case when the chosen labeling of the big stains
can be consistently extended to a connected component of $G \setminus N[\Psi]$.

\paragraph{\steinercut.}
In the case of the \steinercut problem, we need to slightly change the
final dynamic programming routine.
Recall that in the high connectivity phase for this problem we first guess a set of edges
$S$ that is {\emph{(i)}} disjoint with the solution $Z$ we are looking for, {\emph{(ii)}}
contains a spanning tree of each small connected component of $G \setminus Z$,
{\emph{(iii)}} contains a large spanning tree with an endpoint of an edge of the solution $Z$,
for each such endpoint contained in the large connected component of $G \setminus Z$.
Then we obtain a graph $H$ by contracting the edges of $S$ and identifying the images
of the large trees of $S$ (assumed in point {\emph{(iii)}}) into the core vertex $b$.
For each connected component $B_i'$ of $H \setminus b$ we have two choices: either
we delete all edges, or no edges from $B_i' \cup \{b\}$.
The choices between different components $B_i'$ are independent,
and we find the optimal solution via a simple dynamic programming routine.
In the weighted case we need to add to the dynamic programming table one more dimension
responsible for storing the cardinality
of the constructed cutset, and the value in the table $T$
is the minimum weight of a cutset of the prescribed cardinality.

\paragraph{\nmwcu.}
The simplicity of the high connectivity phase of the \nmwcushort{} algorithm
allows us to solve the weighted variant with almost no changes.
Recall that in this phase we first guess a set $S$ that is {\emph{(i)}} disjoint with
the solution $Z$ we are looking for, {\emph{(ii)}} covers all nonterminal vertices
of small connected components of $G \setminus Z$. Then we argue
that any inclusion-wise minimal solution chooses at most one equivalence
relation of $\rel_b$ to be the set of terminals contained in the big connected
component of $G \setminus Z$, and takes as the solution the neighbourhood of all connected
components of $G[S \cup T]$ that contain a terminal not contained in the selected
equivalence class.
The same argumentation holds in the case of nonnegative weights; note that
in the border problem we require $|X| \leq k_b$ (instead of maybe more natural $|X| = k_b$),
thus we may consider only inclusion-wise minimal solutions.

\bibliographystyle{abbrv}
\bibliography{rand-contractions}

\end{document}